\newenvironment{compactwrapfigure}
{\setlength{\intextsep}{.5\intextsep}%
\wrapfloat{figure}}
{\endwrapfloat}
\definecolor{vdarkgreen}{RGB}{0,51,0}
\definecolor{vdarkblue}{RGB}{0,0,51}
\definecolor{vdarkred}{RGB}{0,100,0}
\newcommand{\vertiii}[1]{{\left\vert\kern-0.25ex\left\vert\kern-0.25ex\left\vert #1 
    \right\vert\kern-0.25ex\right\vert\kern-0.25ex\right\vert}}
\newcommand{\abs}[1]{\left| #1 \right|} 
\newcommand{\ket}[1]{| #1 \rangle} 
\newcommand{\bra}[1]{\langle #1 |} 
\newcommand{\braket}[2]{\left< #1 \vphantom{#2} \right|
 \left. #2 \vphantom{#1} \right>} 
\newcommand{\f}[2]{\textstyle{\frac{#1}{#2}}}
\newcommand{\g}[1]{\mathbf{#1}}
\newcommand{\proj}[1]{| #1 \rangle\langle #1 |}
\newcommand{\m}[1]{\mathcal{#1}}
\newcommand{\tr}[1]{\mathrm{tr}\! \left[#1\right]}
\newcommand{\norm}[1]{\|#1\|}
\newcommand{\av}[1]{\left\langle #1 \right\rangle}
\newcommand{\eff}{\rm eff}
\newcommand{\mM}{\mathcal{M}}
\newcommand{\mH}{\mathcal{H}}
\newcommand{\cM}{\mathcal{M}}
\newtheorem{theorem}{Theorem}[chapter] 
\newtheorem{lemma}[theorem]{Lemma}
\newtheorem{corollary}[theorem]{Corollary}
\newtheorem{definition}{Definition}[chapter] 
\newtheorem{Definition}{Definition}
\newcommand{\Qed}{\nobreak \ifvmode \relax \else
      \ifdim\lastskip<1.5em \hskip-\lastskip
      \hskip1.5em plus0em minus0.5em \fi \nobreak
      \vrule height0.75em width0.5em depth0.25em\fi}
\begin{document}

\begin{titlepage}
    \begin{center}
        \vspace*{1cm}
        \Huge
          \textbf{Insights from Quantum Information into Fundamental Physics}
          
        \vspace{0.5cm}
        \LARGE        
        \vspace{1.5cm}

        \textbf{Terence C.\ Farrelly}
        \vfill
        
        A thesis submitted for the degree of\\
        {\em Doctor of Philosophy}
        
        \vspace{0.8cm}
        
        \includegraphics[width=0.2\textwidth]{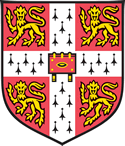}
        
        \Large
        
        Department of Applied Mathematics and Theoretical Physics\\
        St John's College, University of Cambridge\\
        United Kingdom\\
        
        $16^{\mathrm{th}}$ June 2015
    \end{center}
\end{titlepage}

\cleardoublepage
\phantomsection
\addcontentsline{toc}{chapter}{Abstract}
\printindex

\onehalfspacing 

\begin{center}
    \Large
    \textbf{Insights from Quantum Information into Fundamental Physics}    

    \vspace{0.4cm}
    \large
    
    \vspace{0.4cm}
    \textbf{Terence C.\ Farrelly}
    
    \vspace{0.9cm}
    \textbf{Abstract}
\end{center}
This thesis is split into two parts, which are united in the sense that they involve applying ideas from quantum information to fundamental physics.

The first part is focused on examining discrete-time models in quantum computation (discrete-time quantum walks and quantum cellular automata) as discretized models of relativistic systems.  One of the results here is a theorem demonstrating that a large class of discrete-time quantum walks have relativistic dynamics in the continuum limit.  Additionally, the problem of fermion doubling for these models is investigated, and it is seen that the problem can be circumvented in two dimensional space.  This was already known for one dimensional systems.

Another result involves taking the limits of causal free field theories in discrete spacetime to recover continuum field theories, something that is not straightforward because of the nontrivial nature of the vacuum in quantum field theory.  Additionally, it is shown that general systems of fermions evolving causally in discrete spacetime can be represented by quantum cellular automata, which makes them efficiently simulable by quantum computers.  A related result is that quantum cellular automata composed of fermions are equivalent to regular quantum cellular automata.

In the second part of this thesis, the focus is on the foundations of statistical physics.  The main result of this part is a general bound on the time it takes a quantum system to effectively reach equilibrium.  The discussion also includes a practical definition of equilibration that takes our measurement capabilities into account.  Finally, the nature of the equilibrium state is also discussed, with a focus on initial state independence, which relates to the important question of when the equilibrium state is a Gibbs state.



\cleardoublepage
\phantomsection
\addcontentsline{toc}{chapter}{Declaration}
\printindex

\begin{center}
    
    \vspace{9cm}
    \Large
    \textbf{Declaration}
\end{center}
This dissertation is the result of work that I have done as part of my Ph.D.\ studies in the Department of Applied Mathematics and Theoretical Physics at the University of Cambridge.  Any work done in collaboration is declared in the \hyperlink{prol}{prologue} and also specified in the text.

This dissertation is not substantially the same as any that I have submitted, or, is being concurrently submitted for a degree or diploma or other qualification at the University of Cambridge or any other University or similar institution. I further state that no substantial part of my dissertation has already been submitted, or, is being concurrently submitted for any such degree, diploma or other qualification at the University of Cambridge or any other University or similar institution.

\vfill

\noindent\begin{tabular}{ll}
Signature: \makebox[2.2in]{\hrulefill}\ \  & Date: \makebox[1.8in]{\hrulefill}
\end{tabular}



\cleardoublepage
\phantomsection
\addcontentsline{toc}{chapter}{Contents}
\tableofcontents

\cleardoublepage
\phantomsection
\addcontentsline{toc}{chapter}{List of Figures}
\listoffigures

\markboth{}{}  
\cleardoublepage
\phantomsection
\thispagestyle{plain}
\addcontentsline{toc}{chapter}{Prologue}
\printindex

\begin{center}
    \vspace{9cm}
    \Large
    \textbf{\hypertarget{prol}{Prologue}}
\end{center}
\label{Pub}
The unifying theme throughout this thesis is the application of ideas and tools from quantum information to understand fundamental physics.  This is not a new idea, and we will discuss some previous work along these lines, but there remain plenty of unexplored possibilities.  In this thesis, we will look at applying ideas from quantum information to relativistic and statistical physics.  Because these subjects are quite disparate, this thesis is divided into two parts.  The first part looks at models from quantum computation, namely quantum walks and quantum cellular automata, and uses them as discrete models of relativistic quantum systems.  The second part uses tools from quantum information theory to study the problem of equilibration in quantum systems.

The first part (specifically section \ref{sec:Two Dimensional Coins are Special} in chapter \ref{chap:Quantum Walks and Relativistic Particles} and sections \ref{sec:Preparing the Majorana state} and \ref{sec:Causal Discrete-Time Models on a Lattice} in chapter \ref{chap:Quantum Cellular Automata and Field Theory}) of this thesis includes material from the following two papers.
\begin{itemize}
 \item \cite{FS14} T.~C. Farrelly and A.~J. Short. Discrete spacetime and relativistic quantum particles. {\em Phys. Rev. A}, 89:062109, 2014.
 
 \item \cite{FS13} T.~C. Farrelly and A.~J. Short. Causal fermions in discrete space-time. {\em Phys. Rev. A}, 89:012302, 2014.
\end{itemize}
The main results from \cite{FS14} are theorem \ref{th:13}, corollary \ref{cor:12}, theorem \ref{th:14} and the counterexample in section \ref{sec:Counterexample with a Three Dimensional Coin}.  The main results from \cite{FS13} are theorem \ref{th:1}, corollary \ref{cor:1} and theorem \ref{th:CS1}.

The second part (specifically sections \ref{sec:Timescales for Equilibration of Expectation Values}, \ref{sec:Finite Time Equilibration for Systems and Subsystems} and \ref{sec:Slow Equilibration} in chapter \ref{chap:Equilibration}) of this thesis includes material from the two papers below.
\begin{itemize}
 \item \cite{SF12} A.~J. Short and T.~C. Farrelly. Quantum equilibration in finite time. {\em New Journal of Physics}, 14(1):013063, 2012.
 \item \cite{MGLFS14} A.~S.~L. Malabarba, L.~P. Garc\'{i}a-Pintos, N.~Linden, T.~C. Farrelly, and
  A.~J. Short. Quantum systems equilibrate rapidly for most observables.  {\em Phys. Rev. E}, 90:012121, 2014.
\end{itemize}
The main results from \cite{SF12} are theorems \ref{th:exptval}, \ref{th:syseqmeas} and \ref{th:subeq}.  As my role in the \cite{MGLFS14} was not as significant as in the others, most of its results have been omitted, except for the slow equilibration result in section \ref{sec:Slow Equilibration}.  The result is relevant to the discussion of equilibration, though it is not my own work.

Naturally, while writing this, some of the results were expanded and may be a little different from the originals in the papers above.  An example is theorem \ref{cor:2}, which is partly an extension of a result from \cite{FS13}.  Some proofs have also been improved.

Additionally, many of the results presented in this thesis are unpublished.  In chapter \ref{chap:Quantum Walks and Relativistic Particles}, sections \ref{sec:Faster Converging Quantum Walks}, \ref{sec:Symmetries on the Lattice}, \ref{sec:Fermion Doubling in Two Dimensions}, \ref{sec:Fermion Doubling in Three Dimensions} and \ref{sec:Quantum Walks on a Line: Abstract Theory} are unpublished.  In chapter \ref{chap:Quantum Cellular Automata and Field Theory} sections \ref{sec:Discrete Fermion Fields and the Vacuum}, \ref{sec:Continuum Limits2}, \ref{sec:U(1) Lattice Gauge theory as a spin model} and \ref{sec:Structure of the evolution} are unpublished.  In chapter \ref{chap:Equilibration}, section \ref{sec:The Equilibrium state} is unpublished.

Finally, although section \ref{sec:Quantum Cellular Automata as Quantum Field Simulators} is unpublished, I believe that the main idea is probably known to some in quantum information.  That said, I am unaware of any specific references to it.  Sections \ref{sec:A General Recipe} and \ref{sec:Equilibration Times with Physical Spectra} are also unpublished, though the ideas, particularly in the latter, are probably well known.

\part{Quantum Computation and Relativistic Physics}

\chapter{Introduction}
\label{chap:Introduction1}
\begin{center}
{\em A child of five would understand this. Send someone to fetch a child of five.}\\
 - Groucho Marx
 \end{center}
There is certainly a precedent for the idea that quantum information can tell us something new about relativistic physics.  Tools like tensor network states, which allow a concise description of particular classes of states, have been applied to lattice quantum field theory to perform numerical simulations \cite{BHVVV14,TCL14}.  Practical knowledge arising from the experimental side of quantum information has inspired proposals for analogue simulations of quantum field theories on systems of trapped ions and cold atoms  \cite{CMP10,CLEGRGS11,CMLS12}.  So why stop there?  There are other tools in quantum information that could have promising applications to the study of relativistic physics.  Our goal in part one of this thesis is to look at some of these tools to see what they can tell us about relativistic physical systems.  The tools we will look at come from quantum computation.

It is quite natural that we should look to quantum computation in particular for insights into relativistic physics.  Great advances were made in the seventies by considering the problem of simulating quantum field theories on a computer by discretizing space \cite{Wilson93}.  This was made necessary by the realization that the strong force could not be treated perturbatively at low energies.  The end result was lattice quantum field theory, which has not only allowed calculations of physical quantities numerically, but has also provided a concrete mathematical structure underpinning quantum field theory itself.  In fact, interacting quantum field theories are typically defined as the continuum limits of the lattice theories \cite{Creutz83}.  Another development that arose from the study of lattice quantum field theory was the modern viewpoint of renormalization, which gave it a more physically tangible status than that of a mere trick used to remove infinities in perturbative calculations.  This all lends a lot of credence to the idea that {\it quantum} computation could also help us to better understand relativistic physics.

In retrospect, the idea that computation can tell us something about nature is not so unreasonable.  Asking what is possible with a computer can tell us about nature itself.  An example that is especially relevant for us is the question of what physical systems we can efficiently simulate on a computer.\footnote{We will define efficient simulation and some of the other concepts mentioned here in chapter \ref{chap:Background1}.}  If we could show that some physical model was not efficiently simulable, this would have very interesting consequences for our understanding of it.  It would also suggest the possibility that that model could offer an inequivalent framework for computation.

In the opposite direction, our best theory of physics provides us with limits on what kind of computations are possible:\ in reality, we must actually build computers, and physics tells us what we can and cannot build.  As far as we know, nature is quantum, which opens up the possibility of constructing a computer that is based on quantum theory.

One of the key factors for the genesis of quantum computation was the belief that a quantum computer should be better equipped to simulate nature.  This point highlights the possibility that quantum computation should also be useful for simulations of relativistic physics, particularly quantum field theories.  Some progress has been made along these lines in \cite{JLP12,JLP14}, where two particular quantum field theories were shown to be efficiently simulable by quantum computers.

Our goal will be to use models from quantum computation as discretizations of relativistic systems.  The hope is that this may facilitate simulations of physics or even help us to understand relativistic physics better.  It is also compelling to speculate about whether in reality spacetime is continuous or discrete at some small scale.

There are two models in quantum computation that we will look at.  They are discrete-time quantum walks and quantum cellular automata.  Both of these have some properties in common.  One is that they live in discrete spacetime.  The second is that they are causal, which means that there is a maximum speed of propagation of information.  We will go into the finer details in chapter \ref{chap:Background1}, but now let us ask why causal systems in discrete spacetime might be a sensible option for discretizing relativistic physics.

Take causality first.  It seems like a very natural property for discrete models to have, especially if we wish to use them to approximate relativistic systems in the continuum, which are themselves causal.  Perhaps surprisingly, if we want to discretize space and retain causality, then we must also discretize time.  A consequence is that lattice quantum field theories with local Hamiltonians in continuous time are not truly causal.  Let us convince ourselves that local Hamiltonians in discrete space lead to unbounded propagation speeds.

Take a particle on a line that evolves according to the time independent Hamiltonian $H$.  Suppose that at $t=0$ the particle is at position $0$.  If this model were strictly causal, then there would be a $T$ such that for all $t<T$ there is zero probability of finding the particle outside of some finite region $R$ that contains the point $0$.

It follows from causality that, given a position $n\notin R$ and all $t<T$, we have $\bra{n}e^{-iHt}\ket{0}=0$.  Now, we can Taylor expand $e^{-iHt}$ to get
\begin{equation}
\begin{split}
 \bra{n}e^{-iHt}\ket{0} & =\bra{n}-iHt+O(t^2)\ket{0}=0\\
 \Rightarrow\ &\bra{n}iH\ket{0}-\bra{n}O(t)\ket{0}=0.
\end{split}
\end{equation}
By taking $t$ to be small, the second term can be made arbitrarily small.  Therefore, the first term must be zero.  And, by considering higher order terms, we see that $\bra{n}H^l\ket{0}=0$ for all $l$.  But this tells us that $\bra{n}e^{-iHt}\ket{0}=0$ for \emph{any} $t$.  This means that the particle cannot be found outside $R$ at any time.  If the particle is ever going to propagate to the point $n$, it happens instantaneously, though probably with a very small amplitude.

So causality on a lattice requires a completely discrete spacetime.  There are some more arguments for (but also against) discretizing spacetime.  Let us go through some of them, starting with arguments in support of discrete spacetime.  If simulating physics is our goal, an obvious answer is that whatever we do will require truncation of some degrees of freedom.  After all, our computers, whether classical or quantum, only have a finite memory.  So discretizing space and time is a natural course to take.\footnote{In some cases it may be useful to discretize in other ways.  For example, for quantum particles in a box, it makes more sense to introduce a momentum cutoff, which would also render the number of degrees of freedom finite.}

Another particularly good reason for discretizing spacetime is that it allows us to avoid the infinities that plague quantum field theory calculations in continuous spacetime.  This is because discretizing space introduces a momentum cutoff.  But there are hints from more exotic areas in physics too.  The Bekenstein bound, for example, says that the number of degrees of freedom in a finite region of space is finite \cite{Bekenstein81}.

Another interesting point is that discrete spacetime provides a natural scale:\ the lattice spacing.  This is encouraging because nature does not appear to have scale invariance as a symmetry.  There is also a natural maximum speed of propagation of information, which is one lattice spacing per timestep.  Incidentally, it is good to keep in mind that, while evolution in continuous time is determined by the Schr\"{o}dinger equation, in discrete time, there is no Schr\"{o}dinger equation.  Instead, in discrete time, it is more natural to work directly with the unitary operator that updates the state every timestep.

A potential use for causal models in discrete spacetime parallels that of lattice quantum field theory.  Lattice quantum field theory is often used to make mathematical sense of the corresponding continuum quantum field theory.  So maybe we could do the same with causal quantum models in discrete spacetime.  This would provide a different viewpoint, so it is conceivable that such models could further our understanding of quantum field theory.

A final point favouring discrete spacetime is that the discrete spacetime models we will analyze are often particularly simple.  For example, when written as a quantum circuit, the evolution in time of discretized Dirac fermions on a line is remarkably simple.  This could be useful for quantum simulations of nature, but it is also a useful picture to have in mind for intuition about these systems.  Similarly, the discrete-time quantum walk that becomes a particle obeying the Dirac equation on a line, which we will see in chapter \ref{chap:Quantum Walks and Relativistic Particles}, provides an intuitive picture for how the mass in the Dirac equation effectively slows the particle down.

Of course, there are good arguments against discrete spacetime too.  The first, and a very convincing one, is the issue of spacetime symmetries.  Discrete spacetime models do not have the continuous symmetries associated with continuous spacetime.  The situation is actually worse for the models we will look at.  In general, they do not even have the rotational symmetry of the lattice.\footnote{It is actually possible to construct models that do have the symmetries of the lattice, but these are far from natural.}

Another point is the naturalness of the models.  Because discrete spacetime models, at least the type we will look at, live on cubic lattices, they do not seem as natural as continuous spacetime models.  To be fair, we could argue that this and the previous point are purely aesthetic.  After all, if the model gets the job done, meaning it accurately describes nature, and it has some degree of simplicity to it, then that should be good enough.  Still, it is hard to shake the belief that theories of nature should also have some beauty to them.

It almost goes without saying that the biggest reason to favour continuous spacetime is how well physical theories in the continuum have worked so far.

There is one final item that must be addressed when discretizing spacetime, though so far it is not clear how it affects the models we will look at here.  It is the issue of fermion doubling \cite{Tong12}.  This is a problem that occurs in lattice quantum field theory:\ there is a famous theorem that says that local Hamiltonians on the lattice, if they describe massless fermions, have extra low energy modes \cite{NN81,DGDT06}.  These modes are referred to as doublers.  Actually, there is some cause to be optimistic that this fate may not befall the models we will study here.  For one thing, these models do not evolve via local Hamiltonians, rather they evolve in a causal way via a unitary operator that acts over each timestep.  And so these models do not satisfy the hypotheses of the fermion doubling theorem.

If these discrete models are successful, then it is natural to ask how fundamental they are.  This question can be asked about any approximation of a successful physical theory.  If the approximation can be made good enough, then we cannot tell whether it is merely an approximation or how nature operates.  So it is hard to resist speculating about the ontological status of discrete spacetime models when they are good approximations of physical models in continuous spacetime.  Furthermore, we should keep in mind that there are other ideas proposing that space and time are fundamentally discrete.  One example is causal set theory \cite{BLMS87}.

Nevertheless, it is reassuring that, regardless of whether spacetime is discrete or not, the models we will study ought to be useful for performing quantum simulations of physics.  In fact, we will take this as our primary motivation.

The remainder of the first part of this thesis is broken up into four chapters.  Chapter \ref{chap:Background1} provides some background material, including the basics of discrete-time quantum walks and quantum cellular automata.  We will also look at the idea behind simulating quantum dynamics on a quantum computer.  Then we will briefly discuss the relevant physical models:\ relativistic particles and their equations of motion, as well as fermion fields.  At the end, we will look at the fermion doubling problem, which affects local fermion Hamiltonians on lattices.

Chapter \ref{chap:Quantum Walks and Relativistic Particles} considers discrete-time quantum walks as discretized relativistic particles.  We start by looking at how to take continuum limits.  This allows us to see that a large class of these discrete-time quantum walks become relativistic particles in the continuum limit.  This requires the construction of a general scheme for taking the continuum limits of discrete-time quantum walks.  These are results from \cite{FS14}.

Another new result involves quantum walks that converge to particles obeying relativistic equations of motion:\ we modify the standard examples, so that they converge significantly faster.  We also look at the issue of fermion doubling.  Preliminary results indicate that the problem can be overcome for discrete-time quantum walks in two dimensional space, and can be reduced significantly in three dimensional space.

This chapter ends with a look at the abstract theory of discrete-time quantum walks.  We ask whether the evolution of discrete-time quantum walks can be decomposed into products of coin operations and shifts, a result known to hold in one dimension \cite{Vogts09}.

Chapter \ref{chap:Quantum Cellular Automata and Field Theory} contains a study of discretized fields and quantum cellular automata, as well as more general causal models in discrete spacetime.  First, we construct causal fermion fields in discrete spacetime and take their continuum limits to recover relativistic fermion fields in continuous spacetime.  A complicating factor is that the vacuum state of fermion fields in continuous spacetime is not a simple state.  This means that the discrete system must have a nontrivial vacuum state that converges in some sense to the continuum vacuum state.

Our next task involves a digression to Hamiltonian models in continuous time, which will allow us to introduce some useful tools.  We look at a remarkable technique from \cite{Ball05,VC05} that allows us to represent local fermion Hamiltonians by local qubit Hamiltonians.  That this is not straightforward follows because fermion operators anticommute regardless of the separation between them, which implies that there is some form of nonlocality.\footnote{Of course, this nonlocality is not directly observable.}

A question we ask in this chapter is whether nature could be described by a quantum cellular automaton or similar causal system in discrete spacetime.  There are good reasons to think that the answer is yes, though with the proviso that gravity is not considered.  The reasoning goes via approximating lattice quantum field theory by causal discrete spacetime models.  While encouraging, it also highlights the fact that these causal discrete spacetime models are more general in a sense than local Hamiltonian models in continuous time.

With that in mind, we look at the properties of general causal systems of fermions and bosons in discrete spacetime.  We see that, by adding ancillary particles, the dynamics can be viewed as a constant-depth circuit, an analogue of the results of \cite{ANW11,GNVW12}, which held for quantum cellular automata.  A consequence of this is that quantum cellular automata and their analogues with fermionic modes at each site are equivalent.  Most of these results are from \cite{FS13}.

We close this chapter with something more abstract:\ a decomposition of the dynamics of quantum cellular automata on a line into a product of on-site unitaries and shift operations.

Chapter \ref{chap:Conclusions and Open Problems 1} brings part one to an end with a discussion of some open problems, as well as new questions prompted by the preceding chapters.

\chapter{Background}
\label{chap:Background1}
\section{Quantum Computing}
Over the last thirty years, quantum computing has grown to become a large field of research.  One of the driving forces behind this is the idea that quantum computers should be able to simulate physical systems much better than classical computers \cite{Feynman82}.  This probably sounds like a tautology.  After all, nature is quantum, not classical.  Proving it, however, is not at all trivial, but some progress has been made.  One major contribution was \cite{Lloyd96}, where it was shown that the dynamics of quantum systems evolving via local Hamiltonians can be efficiently (we will be precise about what efficient means in the next section) simulated by quantum computers.  It is widely believed that this is not true for classical computers \cite{Feynman82,Lloyd96}.

It is not possible to give a full introduction to quantum computing here.  Luckily, for our purposes we only need a few tools from the subject as well as a basic understanding of the theory of simulating physics with quantum computers.  For a good introduction to the subject, including details of the interesting algorithms that ignited interest in the field, the classic reference is \cite{NC00}.  Let us give a brief introduction to the main ideas behind quantum computation and how it differs from classical computation.

In classical computation information is stored as bits, which can be in one of two states, labelled $0$ or $1$.  In contrast with this, in quantum computing, information is represented by qubits, which are quantum systems with a two dimensional Hilbert space, with orthonormal basis usually denoted by $\ket{0}$ and $\ket{1}$.  One major difference with classical computing is that a qubit can have many more states than in the classical case:\ the entirety of the two dimensional Hilbert space spanned by the states $\ket{0}$ and $\ket{1}$ is available.\footnote{To be more precise, the qubit could also be in a mixed state}  Another difference between quantum and classical computation is that the former has a larger class of correlations between different systems, arising from entanglement.  These correlations lead to entirely non-classical phenomena that can also be taken advantage of for quantum information processing and computation.

In both the classical and quantum case, computers process information by applying logical operations to the bits or qubits.  These logical operations are the building blocks of the algorithms that are actually implemented on a computer.  This is the circuit model of computation.  So, if we have some problem we want to solve on our computer, like factoring an integer, we must come up with an algorithm to solve it and break that algorithm up into a sequence of logical operations.  

After implementing the algorithm, we read out the result.  In the quantum case, generally such a measurement will disturb the state of the qubits.  Still, quantum computing seems to offer more than classical computing since it includes classical computing as a subset,\footnote{To see this we simply restrict each qubit to only be in the states $\ket{0}$ and $\ket{1}$ and only allow logical operations that keep each qubit in one of these states.} and there are quantum algorithms for some problems that offer significant speedup over the best known classical algorithms \cite{NC00}.  (Factoring an integer and simulation of quantum systems are two examples.)

There are different frameworks for quantum computation, analogous to the different frameworks for classical computation.  The most accessible is the circuit model, but two others will be of more relevance to us here:\ quantum walks and quantum cellular automata.  Both of these are universal for quantum computation \cite{LCETK10,Watrous95}, which means they can efficiently simulate other models of computation.  Our interest, however, will be in their use as discretized models of relativistic systems, with one of the goals being simulation of physics.  This endeavour is well justified because one of the key properties of these models is causality, which is the property that information propagates a finite distance over a finite time.  Causality in this sense is, of course, also one of the key principles in relativity.  It is important to point out that what we mean by causality is closer in spirit to what is often referred to as relativistic causality, which is the requirement that causal influences can propagate no faster than the speed of light \cite{Butterfield07}.  This definition of causality in discrete systems was employed in \cite{GNVW12}, for example.

In section \ref{sec:Computational Complexity and Efficiency} we will discuss what efficient means in the context of computation.  Following this, in section \ref{sec:Simulating Physics on a Quantum Computer} we will see the basic idea behind simulating quantum evolution on a quantum computer.  In section \ref{sec:Discrete-Time Quantum Walks} we will encounter discrete-time quantum walks and their basic properties.  Then, before turning to physical models, section \ref{sec:Quantum Cellular Automata} introduces quantum cellular automata.

\subsection{Computational Complexity and Efficiency}
\label{sec:Computational Complexity and Efficiency}
The field of computational complexity is concerned with how difficult it is to solve computational problems, like the problem of simulating a particular physical model.  This is based on how the resources (time or memory) needed to solve a problem scale with the size of the input.  In the classical case, the size of the input is measured by the number of bits needed to express it.  For example, an integer that is written as $n$ digits in binary is an input of size $n$.  Analogously, in the quantum case, the input size is the number of qubits in the input.  Our principle concern will be the gate or time complexity of an algorithm.  The gate complexity is a measure of how the number of logical operations needed to perform the computation scales with the size of the input.  The time complexity of an algorithm measures how the number of time steps needed to implement an algorithm scales with the size of the input.  This may be smaller than the gate complexity if we can apply multiple gates at the same time.

There are various different complexity classes that capture how difficult a problem is to solve.  And there are different sets of complexity classes for quantum and classical computation.  The precise details of these will not be of importance to us here.  Of more relevance to us will be the question of whether there is an efficient quantum algorithm, meaning the problem can be solved efficiently by a quantum computer.  A problem can be solved efficiently if a polynomial time algorithm exists.  In other words, the number of logical operations for each input of size $n$ is bounded above by a polynomial in $n$.  If this is the case for a quantum computer, the problem is said to be in the complexity class BQP, which stands for bounded error quantum polynomial time.  This is the quantum version of the classical complexity class BPP, which stands for bounded error probabilistic polynomial time.  Both of these classes allow for some error in the computation, provided the probability of success is at least $2/3$ {\it for any size input} \cite{NC00}.

It is useful here to define big $O$ notation.  This notation will be helpful for discussing convergence rates of systems in discrete space to continuum systems as we shrink the spatial lattice spacing, amongst other things.  The idea is to quantify how a function behaves as its argument tends towards some limit, usually zero or infinity.  Suppose we are concerned with how a real valued function $f(x)$, where $x\in\mathbb{R}$, behaves as $x\rightarrow 0$.  We write
\begin{equation}
 f(x)=O(g(x)),
\end{equation}
if there is a $\delta>0$ such that for all $0<x<\delta$ we have $|f(x)|\leq cg(x)$, where $c\geq 0$.  Or, if we are concerned with how $f(x)$ behaves as $x\rightarrow \infty$, again we write
\begin{equation}
 f(x)=O(g(x)),
\end{equation}
but now if there is a $x_0>0$ such that for all $x>x_0$ we have $|f(x)|\leq cg(x)$, where $c\geq 0$.

As an example, suppose $f(x)=x+3x^2$.  If we are concerned with the behaviour of $f(x)$ as $x\rightarrow \infty$, then it is $O(x^2)$.  On the other hand, if we look at how it behaves as $x\rightarrow 0$, then it is $O(x)$.

\subsection{Simulating Physics on a Quantum Computer}
\label{sec:Simulating Physics on a Quantum Computer}
A concrete algorithm for simulating physics on a quantum computer was first presented in \cite{Lloyd96}, where it was shown that quantum computers can efficiently simulate a quantum system on a lattice evolving via a local Hamiltonian.\footnote{It is important to emphasize that the lattice spacing is fixed, so that this does not directly prove that quantum systems in continuous spacetime are efficiently simulable.}  It is important to reiterate that, while it is not known for certain, it is strongly believed that classical computers cannot efficiently simulate such systems.  The intuition behind this comes from the fact that the state space of a quantum system grows exponentially with the number of subsystems.  So the number of classical bits needed to describe the state of a quantum system generally grows very quickly with the system size.

The idea behind simulating evolution via a local Hamiltonian $H$ is to write it as a sum of $L$ local terms $H=\sum_lH_l$ that act on only a fixed number of sites.  For example, these local terms could describe interactions between neighbouring subsystems.
The next step is to approximate the evolution over a short time interval of length $\varepsilon$ by using the identity
\begin{equation}
 \prod_le^{-iH_l\varepsilon}=e^{-iH\varepsilon}+O(\varepsilon^2).
\end{equation}
So, to simulate evolution via $H$ over a fixed time $t$, we divide $t$ into $N$ intervals of length $\varepsilon$ and take the limit as $\varepsilon\rightarrow 0$.  Then by the Lie-Trotter product formula \cite{Trotter59,NC00}
\begin{equation}
\label{eq:TrottSim}
 (\prod_le^{-iH_l t/N})^{N}\rightarrow e^{-iHt},
\end{equation}
as $N=t/\varepsilon$ goes to infinity.  This follows because there are $N$ timesteps and the error in the approximation of the evolution over one timestep is $O(\f{1}{N^2})$, so the total error is $O(\f{1}{N})$.  We look at the Lie-Trotter formula in more detail in section \ref{sec:The Lie-Trotter Product Formula}.

The problem is now reduced to implementing the unitaries $e^{-iH_l t/N}$ on a quantum computer.  So these have to be decomposed in terms of the basic logical operations we can do.  These logical operations arise from the finite set of gates we can implement in practice.
It is crucial that each of the unitaries $e^{-iH_l t/N}$ acts on a fixed number of sites.  Because of this, the number of logical operations needed to implement each unitary to within error $\delta$ is bounded above by $O(\log^c(\f{1}{\delta}))$, where $c$ is a constant.  This follows from the Solovay-Kitaev theorem \cite{NC00}.  As a result, the total number of gates we need to apply to approximate evolution via $e^{-iHt}$ to within any fixed error can be shown to be polynomial in $N$ and the number of sites on the lattice.  See \cite{NC00} for more details.  Therefore, the algorithm is efficient.

Actually, the Hamiltonian need not be local for this algorithm to work, provided it contains interaction terms between at most $k$ sites for some fixed $k$.  Furthermore, by using higher order decompositions, one can improve the scaling further \cite{BACS07}.

There is a subtlety glossed over above.  We assumed that the system has a tensor product structure, meaning that the state space of the whole system is described by a tensor product of the Hilbert spaces of each individual site, which is the case for a spin lattice, for example.  As the simulation method above involves representing the state of the system and implementing the dynamics on qubits, it is conceivable that a local {\em fermion} Hamiltonian will not fit into this scheme.  This is because local fermion operators, when represented in terms of qubit operators, are not necessarily local.  We will go into this in more detail in section \ref{sec:The Jordan-Wigner Transformation}.  For now, it will suffice to note that there are several methods to simulate the dynamics of fermions with local Hamiltonians on a system of qubits \cite{AL97,OGKL01,SOGKL02,BK02,PBE10}.

It is also sensible to consider the option of constructing quantum simulators out of fermionic modes, which would avoid these problems entirely.  In fact, in \cite{BK02} fermionic quantum computers are compared to conventional quantum computers (those using qubits).  The main result of the paper is that each can efficiently simulate the other.  In particular, the authors show that the slowdown for simulating a conventional quantum computer by a fermionic one is constant, while in the opposite direction the slowdown is logarithmic.  This means that, when representing the fermionic system in the qubit picture, for every fermionic operation we wish to implement we must perform an additional $O(\log(n))$ operations, where $n$ is the number of fermionic modes.

It is probably a little dishonest not to mention the sobering fact that building a quantum computer capable of outperforming a classical one is, with current technology, a long way off.  Luckily, however, to show a speedup over classical computers in simulating physics a much smaller quantum computer would suffice than is needed for other algorithms \cite{Lloyd96}.  It is possible that, even if very large quantum computers useful for solving other problems (like factoring) never become a reality, simulating physics using smaller quantum computers may still be accomplished.

\subsubsection{An Aside:\ Analogue Quantum Simulators}
Analogue simulators offer another possibility for simulating physical systems.  While these are not directly relevant to the ideas we will discuss, it is interesting (and a little reassuring) to note that there are analogue quantum simulators of relativistic quantum physics that are currently feasible.  The idea is to set up a quantum system in a lab that evolves over time via some Hamiltonian that is similar to that of the physical system we want to study.  One of the main drawbacks of this is that analogue simulators are not really computers since there is no error correction.  But the upside is that they are currently closer to being realized in the lab.

There is a nice proposal for an experiment that would simulate a relativistic model of interacting electrons and positrons in \cite{CMP10}.  Although the model is fairly simple, it still exhibits interesting phenomena that occur in the standard model.  The setup for the experiment involves cold atoms trapped on an optical lattice. By tuning the laser beams, one can change the height of the potential barrier between sites.  This affects the likelihood that atoms will hop from one site to the next.  The state of the field is represented by these atoms such that, if there is an atom present at a site, there is an excitation in the field at that site.  The atoms then mimic electrons and positrons interacting via the Thirring model.  There are some other interesting proposals for analogue simulators in \cite{CLEGRGS11,CMLS12} using trapped ions.

\subsubsection{Another Aside:\ The Computational Power of Nature}
In fact, it is already known that some models in particle physics are efficiently simulable on a quantum computer.  In \cite{JLP12}, it is shown that $\phi^4$ theory is one such model.  Even though $\phi^4$ theory is conceptually simple (as quantum field theories go), showing that this model could be efficiently simulated was a complicated task.  Still, this suggests the possibility that, if nature is described by a quantum field theory, then it is something that we can simulate efficiently with a quantum computer.  It is not obvious that this should be possible.  In some ways it is not clear how quantum field theory fits into the scope of conventional quantum theory.  This is because quantum field theories involve uncountably infinite degrees of freedom, and in most interacting models this makes it unclear what the state space and Hamiltonian are:\ most interacting models are only defined in terms of perturbation theory or as the continuum limits of lattice theories \cite{Creutz83}.

Of course, whatever theory describes nature also tells us what kind of computer we can build in the first place.  If, for example, nature were classical and not quantum, then the best we could do would be to construct classical computers.  So, as the authors of \cite{JLP12} point out, this prompts the question of whether one could do more with a a quantum {\it field} computer than with a conventional quantum computer.  Showing that quantum field theories can be efficiently simulated by quantum computers would show that this is not the case.

\subsection{Discrete-Time Quantum Walks}
\label{sec:Discrete-Time Quantum Walks}
Quantum walks are often described as the quantum analogues of classical random walks, though this is not completely fair:\ random walks are inherently irreversible, whereas quantum walks evolve unitarily and so are reversible.\footnote{To be completely accurate, performing a measurement does introduce irreversibility, but this is always the case in quantum theory.}  Quantum walks come in two forms; they either evolve unitarily over discrete timesteps or via a Hamiltonian in continuous time.  See \cite{Kempe03} for a good review.  Our focus will be on discrete-time quantum walks, which from here on we will simply call quantum walks.  

Much like a classical random walk, we can think of a quantum walk as a particle that lives on a discrete set of points.  For the classical random walk, the standard example is that of a particle that lives on a discrete line.  Then every timestep we toss a coin.  If we get heads, the particle moves one step to the right; if we get tails, it moves one step to the left.  Similarly, the standard example of a quantum walk is a particle that lives on a discrete line and has a two dimensional extra degree of freedom with Hilbert space $\mathcal{H}_C$.  This plays a role analogous to the coin in the classical case and, for this reason, is often referred to as the coin.  Let $\ket{r}$ and $\ket{l}$ be an orthonormal basis for $\mathcal{H}_C$.  Then the Hilbert space of the particle is spanned by the states $\ket{l}\ket{n}$ and $\ket{r}\ket{n}$ with $n\in \mathbb{Z}$ labeling position.  One possible evolution operator is
\begin{equation}
U=S\ket{r}\bra{r}+S^{\dagger}\ket{l}\bra{l},
\end{equation}
where $S$ is the (unitary) shift operator defined by $S\ket{n}=\ket{n+1}$ for all $n$.  This evolution is not terribly exciting, however.  Over each timestep $U$ just shifts all $\ket{l}$ states to the left and all $\ket{r}$ states to the right.  Instead, we can make this more interesting by modifying the evolution operator:
\begin{equation}
 U=W\left(S\ket{r}\bra{r}+S^{\dagger}\ket{l}\bra{l}\right),
\end{equation}
where $W$ is a unitary operator on $\mathcal{H}_C$.  A typical choice is the Hadamard coin $W=\f{1}{\sqrt{2}}\left(\begin{smallmatrix} 1 & 1\\ 1 & -1 \end{smallmatrix}\right)$.  In a way, applying $W$ is like tossing the coin in the classical random walk, but, of course, $W$ is unitary and hence reversible.

We can get some understanding of how quantum walks evolve over time by looking at the evolution over a few timesteps.  Starting with the state $\ket{r}\ket{0}$, the evolution operator $U$ first shifts the position of the particle from $\ket{0}$ to $\ket{1}$.  Then $W$ maps $\ket{r}$ to a superposition of $\ket{r}$ and $\ket{l}$.  Because the state now has overlap with both $\ket{l}$ and $\ket{r}$, over the following timestep the particle spreads out.  This changing of direction caused by the coin leads to an effective slowing down of the particle.  This is a simple discrete analogue of how mass mixes chiralities in the Dirac equation, which we will discuss in section \ref{sec:Dirac and Weyl particles}.

\begin{compactwrapfigure}{r}{0.53\textwidth}
\centering
\begin{minipage}[r]{0.51\columnwidth}%
\centering
    \resizebox{6.0cm}{!}{\includegraphics{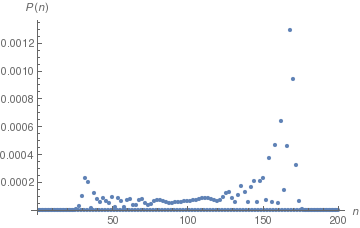}}
    \footnotesize{\caption[Position distribution of a quantum walk with the Hadamard coin]{Position probability distribution of a quantum walk with the Hadamard coin after $100$ timesteps, with $200$ sites and an initial state localized at site $n=100$.  The x-axis corresponds to the position of the particle $n$.  The y-axis shows the probability $P(n)$ of finding the particle at point $n$.\ \\ \ }}
    \label{fig:77}
\end{minipage}
\end{compactwrapfigure}

More generally, quantum walks are defined on a (possibly directed) graph, but the quantum walks we will focus on will always live on cubic lattices.  To this end, we label spatial coordinates by $d$ dimensional vectors $\vec{n}$, where each of the $d$ components of $\vec{n}$ takes integer values.  Then the orthonormal basis $\ket{\vec{n}}$ of the Hilbert space $\mathcal{H}_P$ describes the particle's position.  The particle also has an extra degree of freedom with Hilbert space $\mathcal{H}_C$.  The state space of the quantum walk is $\mathcal{H}_C\otimes\mathcal{H}_P$.  In chapter \ref{chap:Quantum Walks and Relativistic Particles}, we will see that the coin often corresponds to spin or chirality when we take the continuum limit.

We will assume that the dynamics are translationally invariant.  In fact, with this requirement, an extra degree of freedom is necessary for these particles to have nontrivial evolution, where trivial evolution means $U$ is just proportional to a shift operator \cite{Meyer96a}.  We also assume time translation invariance, so the evolution operator $U$ is the same for every timestep.  As mentioned earlier, another defining feature of quantum walks is causality.  Informally, this means that the evolution has the property that, if the particle is initially localized in a finite region of the lattice, then after applying $U$ the particle is still localized in some finite region.  Formally, causality implies that there is an $L\geq 0$ such that
\begin{equation}
 \bra{\vec{m}}U\ket{\vec{n}}=0\ \textrm{if}\ |\vec{n}-\vec{m}|>L.
\end{equation}
We end this section with a precise definition of a quantum walk.
\begin{definition}
 A quantum walk consists of a quantum particle with the properties below.
 \begin{enumerate}
  \item The particle lives on a $d$ dimensional lattice with state space $\mathcal{H}_P$ that has orthonormal basis states $\ket{\vec{n}}$, where $\vec{n}\in\mathbb{Z}^d$.
  \item It has an extra finite dimensional degree of freedom with Hilbert space $\mathcal{H}_C$, so that the total state space is $\mathcal{H}_C\otimes\mathcal{H}_P$.
  \item It evolves over discrete timesteps via a causal unitary operator that is translationally invariant in both space and time.
 \end{enumerate}
\end{definition}

\subsection{Quantum Cellular Automata}
\label{sec:Quantum Cellular Automata}
Quantum cellular automata (QCAs) are the quantum analogues of classical cellular automata \cite{SW04,Wiesner09}.  QCAs consist of a lattice with finite dimensional quantum systems at each site (for example, qubits) evolving in a causal way over discrete timesteps.  Again, causal means that there is a bound on the speed of propagation of information.  The precise definition of causal in this context is as follows.  There is an $L\geq 0$ such that, for any $\vec{n}$ and any operator $A$ that is localized\footnote{We say an operator is localized on a region if it acts like the identity on all systems outside that region.  This definition allows the possibility that the operator acts like the identity on some systems inside the region too.} on site $\vec{n}$, after one timestep the evolved operator acts like the identity on all sites $\vec{m}$ with $|\vec{n}-\vec{m}|>L$.  Something to note here is that it is convenient to work in the Heisenberg picture when dealing with QCAs.

There are some subtleties that arise when defining QCAs on infinite lattices, which are discussed in \cite{SW04,ANW11,GNVW12}, but the concepts and results here all make sense if we take the lattice to be finite.  In that case the state space is the tensor product of the individual systems' state spaces, and the evolution is a unitary operator acting on this Hilbert space every timestep.

We will assume that QCAs are translationally invariant.  Then the systems at each site are identical, and the unitary implementing the dynamics commutes with translations.  This is a good point to give the formal definition of a QCA.
\begin{definition}
 A quantum cellular automaton consists of a discrete lattice, which may have periodic boundary conditions or be $\mathbb{Z}^d$, with the properties below.
 \begin{enumerate}
  \item Each lattice site has an associated $d$ dimensional quantum system.
  \item Evolution takes place over discrete timesteps via a causal unitary that is translationally invariant in space and time.  (For infinite lattices, evolution is via an automorphism of the algebra of observables.)
  \end{enumerate}
\end{definition}

To get some feeling for QCAs, it will help to look at an example.  Suppose we have a line of qubits labelled by an integer $n$.  One way of constructing an evolution that is causal is to apply partitioned layers of unitaries.  Let $V_{n,n+1}$ be a unitary acting nontrivially only on qubits $n$ and $n+1$, then we could take
\begin{equation}
 U=\prod_{n\ \textrm{even}}V_{n,n+1}\prod_{m\ \textrm{odd}}V_{m,m+1}.
\end{equation}
Notice that the ordering within both products does not matter because $V_{n,n+1}$ and $V_{m,m+1}$ commute if they act on pairs of sites that do not overlap.  To see that this evolution is causal, suppose $A$ is an operator that is localized on the qubit at site $k$, where $k$ is even.  Then every unitary $V_{n,n+1}$ with $k\notin\{n,n+1\}$ commutes with $A$.  It follows that $U^{\dagger}AU$ is localized on the qubits at sites $k+1$, $k$, $k-1$ and $k-2$.
\begin{figure}[!ht]
{\centering
\resizebox{9cm}{!}{\input{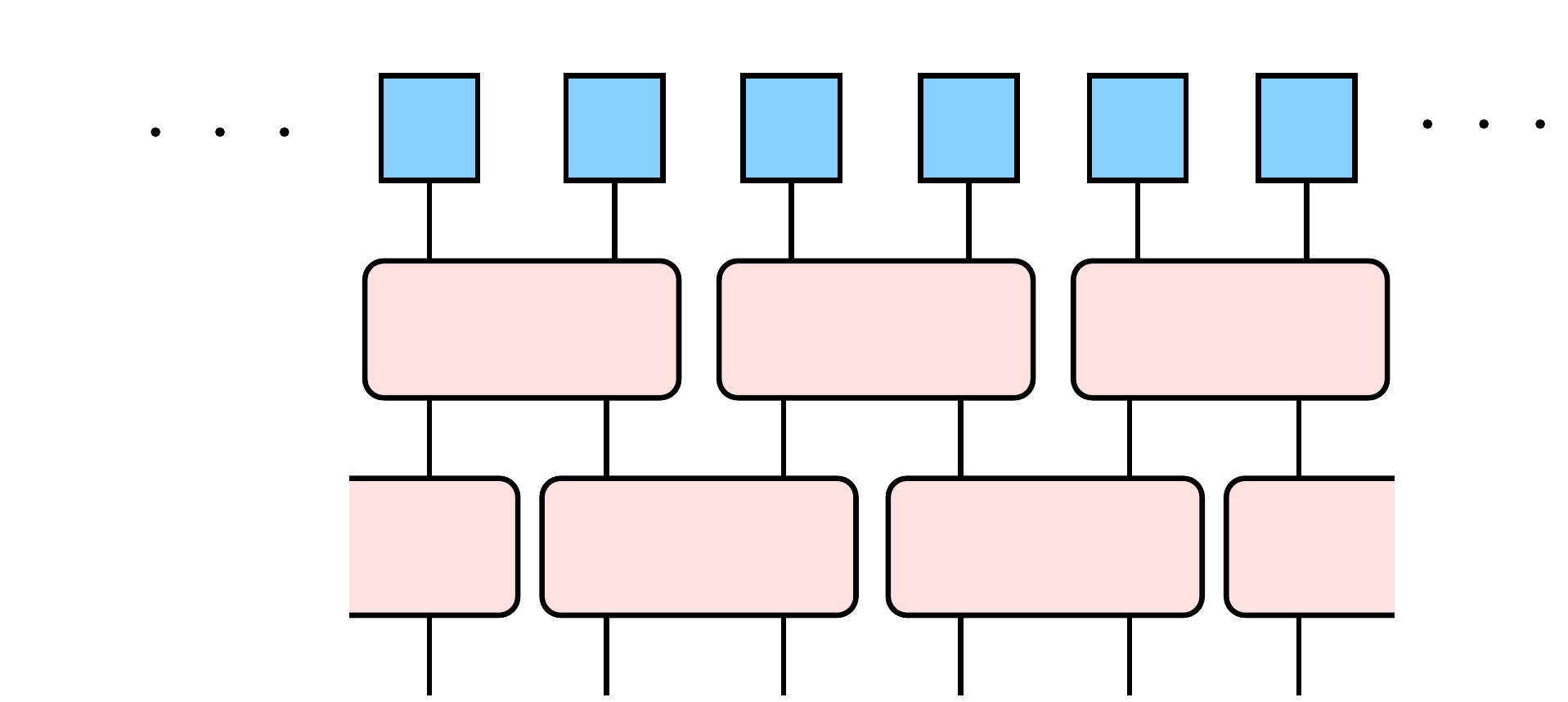_t}} \caption[Example evolution of a quantum cellular automaton]{An example of QCA evolution over one timestep.  Blue boxes represent qubits, and pink boxes represent the unitaries being applied to pairs of qubits. \label{fig:QCA}}
}
\end{figure}

As an example, we could take $V_{n,n+1}=\textrm{CNOT}_{n,n+1}$, which is the controlled-NOT unitary, where $n$ is the control qubit and $n+1$ is the target qubit:
\begin{equation}
 \textrm{CNOT}_{n,n+1}=\ket{0}_n\bra{0}\otimes\openone_{n+1}+\ket{1}_n\bra{1}\otimes X_{n+1},
\end{equation}
where $X_{n+1}$ is the Pauli $X$ operator acting on site $n+1$.  This and the other Pauli operators are given by
\begin{equation}
\begin{split}
 X & =\ket{0}\bra{1} + \ket{1}\bra{0}\\
 Y & =i(\ket{1}\bra{0} - \ket{0}\bra{1})\\
 Z & =\proj{0}- \proj{1}.
 \end{split}
\end{equation}

Not every QCA can be implemented in this way as a constant depth circuit, meaning that there is a constant number of layers of unitaries.  A simple counterexample to this is the evolution that shifts systems one step to the right every timestep \cite{GNVW12}.

It is useful to note that by grouping sites into single sites, where the systems at each new site will have larger dimension, we can view any QCA as a nearest neighbour QCA.  So, if necessary, we can just assume all QCAs are nearest neighbour without loss of generality.

Our interest in QCAs stems from simulation of physics.  But even if our interest is merely discretizing physics, QCAs seem like natural candidates for discretized models of relativistic systems as they are causal by definition.  Additionally, QCAs are universal for quantum computing.  This means that they can efficiently simulate the circuit model and vice versa \cite{Watrous95}.  From a purely practical point of view, if a QCA can efficiently simulate a physical system, then, of course, so can any other model of quantum computation.  And it is not obvious which model will provide the most efficient or practical simulators of quantum physics, but there are indications that QCA simulators can be realised by simple operations:\ for example, free Dirac fermions on a line can be simulated by a QCA that is just a simple product of nearest neighbour and on-site unitaries \cite{BDT12,FS13}.  We will see details of this in chapter \ref{chap:Quantum Cellular Automata and Field Theory}.  Perhaps this simplicity will mean that QCAs will have some practical use as simulators of physics.  The fact that QCAs are causal could also mean they are a more intuitive framework for constructing discrete models of quantum field theories.  

\section{Physical Models}
The physical models we will be interested in are relativistic quantum systems.  Combining relativity and quantum theory is a tricky business.  For free particles there are relativistic wave equations that fit nicely into the framework of quantum theory:\ there is a separable Hilbert space\footnote{Separable means that there is a countable basis.} and a Hamiltonian that acts on it.  When interactions are introduced, however, things become more complicated, and it is not so straightforward to simply write down an equation of motion and solve it.  This is typically the case when interactions are introduced in physics, but there are additional consequences for relativistic systems.  These include antiparticles, vacuum entanglement, and the appearance of infinities in calculations.  Of course, the problem of infinities is solved by renormalization.  One method for performing calculations using renormalization is perturbation theory, which has found widespread success when the interaction is weak \cite{Peskin95}.  When the interaction is strong, recourse is often made to lattice quantum field theory.  This involves discretizing space (but not always time) and studying the properties of the resulting systems as the lattice spacing goes to zero.  This has allowed the calculation of the mass of the proton \cite{DFFHHKKKLLSV08}.  As far as numerical simulation goes, lattice quantum field theory is used for simulation on classical computers.  Note that the approach we take towards simulation of relativistic systems is different.  It is geared towards constructing quantum simulators that are inherently causal.

The notation and style that is most convenient for us differs from that normally encountered in quantum field theory books and papers.  Here we will always work with Hamiltonians as opposed to Lagrangians and path integrals.  As we will see in the next section, the Dirac Hamiltonian will be written in terms of $\alpha_i$ and $\beta$ matrices, instead of the more familiar gamma matrices.  A disadvantage of this is that Lorentz symmetry is not manifest.  We use relativistic units with $\hbar=c=1$.  Aside from the exceptions mentioned, the notation is chosen to coincide as closely as possible with standard conventions in field theory \cite{Peskin95}.

First, in section \ref{sec:Dirac and Weyl particles} we will discuss free relativistic particles, specifically spin-$1/2$ particles.  Then in sections \ref{sec:Fermions} and \ref{sec:The Jordan-Wigner Transformation} we will introduce fermions and the Jordan-Wigner transformation.  The latter is a method for representing fermionic systems on qubits.  Before proceeding to field theory, we will introduce second quantization\footnote{Second quantization is a confusing name for this procedure:\ quantization transforms a classical system into a quantum one, whereas second quantization is really just an elegant way to go from a theory of a single quantum particle to a theory with many such particles, obeying bosonic or fermionic statistics.} in section \ref{sec:Second Quantization}, which will allow us to go from a single particle theory to a multiparticle theory.  We will present the quantum field theory of free fermions in section \ref{sec:Field Theory}.  Finally, in section \ref{sec:Lattice Quantum Field Theory and Fermion Doubling}, we will look at the problem of fermion doubling that occurs when fermions are discretized on a lattice.

\subsection{Dirac and Weyl particles}
\label{sec:Dirac and Weyl particles}
Here we will look at free relativistic particles with spin-$1/2$.  It is interesting that the only known fundamental fermions that exist in nature have spin-$1/2$.  And it is these, together with bosonic fields, that are the basic constituents of nature.  Their free evolution is described by either the Dirac or the Weyl equation.  These are just Schr\"{o}dinger equations with the Dirac or the Weyl Hamiltonian, as we will see.

A particle obeying the Dirac equation is not only described by a spatial wavefunction, it also has extra degrees of freedom.  So the Hilbert space of the particle is a tensor product of the position Hilbert space and the Hilbert space corresponding to the extra degrees of freedom.  In other words, the wavefunction will have multiple components and is written as $\psi_{i}(\vec{x})$, where $\vec{x}\in\mathbb{R}^d$ denotes position, and $i$ labels the components of the wavefunction.  The number of extra degrees of freedom depends on the spatial dimension.  In one and two spatial dimensions, the number of extra degrees of freedom is two, while in three spatial dimensions the number is four.

Let us define the matrices $\alpha_i$ and $\beta$, with $i\in\{1,...,d\}$.  The form of the matrices $\alpha_i$ and $\beta$ also depends on the spatial dimension.  These matrices act on the extra degrees of freedom and obey the relations
\begin{equation}
\begin{split}
 \{\alpha_i,\alpha_j\} & =2\delta_{ij}\openone\\
  \{\alpha_i,\beta\} & =0\\
   \beta^2 & =\openone.
 \end{split}
\end{equation}
Then the Dirac Hamiltonian is given by
\begin{equation}
 h=\vec{\alpha}.\vec{P}+m\beta,
\end{equation}
where $\vec{P}$ is the momentum vector operator, meaning $P_i\equiv -i\partial_{x_i}$, and $m$ is the mass.  Note that this Hamiltonian is {\it linear} in the momentum operator. 

The Hamiltonian for the one dimensional case is particularly simple.  There is no spin in one dimensional space, and the extra degree of freedom is two dimensional and is called chirality.  We can choose a representation with $\alpha=\sigma_z$ and $\beta=\sigma_x$.  Then the Dirac Hamiltonian in one dimension is
\begin{equation}
 h=P\sigma_z+m\sigma_x\equiv\begin{pmatrix}
P & m\\
m & -P\\
\end{pmatrix}.
\end{equation}
Notice that, if the mass were zero, then the two components of the particle's wavefunction would move in opposite directions at the speed of light, which explains why the two components are labelled by $l$ and $r$ for left and right.  Roughly speaking, when $m$ is not zero, its effect is to mix the two components of the wavefunction, which effectively slows the particle down.  We will see this more concretely for discretized particles that behave like Dirac particles in the continuum limit in chapter \ref{chap:Quantum Walks and Relativistic Particles}.

In two dimensional space, there is also no spin, and the extra degree of freedom is two dimensional again.  One choice \cite{Thaller92} for $\alpha_i$ and $\beta$ is $\alpha_1=\sigma_x$, $\alpha_2=\sigma_y$ and $\beta=\sigma_z$.  Then the Hamiltonian is
\begin{equation}
 h=P_1\sigma_x+P_2\sigma_y+m\sigma_z.
\end{equation}

The Hamiltonian in three dimensional space is more complicated.  In three dimensions there is spin, and so there are four extra degrees of freedom, corresponding to two for chirality tensored with two for spin.  We can choose the following representation
\begin{equation}
\begin{split}
 \alpha_i & =\sigma_i\otimes\sigma_z\\
 \beta & = \openone\otimes \sigma_x,
 \end{split}
\end{equation}
where the first tensor factor corresponds to spin and the second corresponds to chirality.  Or, equivalently,
\begin{equation}
\alpha_i=
\begin{pmatrix}
\sigma_i & 0\\
0 & -\sigma_i\\
\end{pmatrix}\ \ 
\beta=
\begin{pmatrix}
0 & \openone\\
\openone & 0\\
\end{pmatrix}.
\end{equation}
Then the Dirac Hamiltonian in three dimensional space is \cite{Messiah99}
\begin{equation}
 h=\begin{pmatrix}
\vec{P}.\vec{\sigma} & m\openone\\
m\openone & -\vec{P}.\vec{\sigma}
\end{pmatrix}.
\end{equation}
As a general rule, we call massive spin-$1/2$ particles {\it Dirac} particles.

\subsubsection{Chirality}
Let us be more clear about what chirality is.  Notice that in one dimensional space, if we set $m=0$, then there are two disconnected components in the Dirac Hamiltonian,
\begin{equation}
 h=\begin{pmatrix}
P & 0\\
0 & -P\\
\end{pmatrix}.
\end{equation}
These two sectors correspond to different chiralities.  We can consider particles that evolve by either of the Hamiltonians,
\begin{equation}
 h=\pm P.
\end{equation}
We say a particle is right-handed if it evolves via $+P$ and left-handed if it evolves via $-P$.  We call such particles {\it Weyl} particles.

This separation of the Hamiltonian into two components does not occur in two dimensional space.  One way to see this is to verify that there is no operator other than $\openone$ that commutes with both $\sigma_x$ and $\sigma_y$.  This means that there exist no rank one projectors on the extra degree of freedom that commute with the massless Hamiltonian.  So there is no notion of chirality.  On the other hand, in one space dimension, $\f{1}{2}(\openone\pm\sigma_z)$ are projectors that commute with the massless Hamiltonian, $h=P\sigma_z$.

In three dimensional space, such projectors also exist.  In our representation, these are given by $\f{1}{2}(\openone\pm i\alpha_1\alpha_2\alpha_3)$.  The massless Dirac Hamiltonian, with $m=0$, in this case is
\begin{equation}
 h=\begin{pmatrix}
\vec{P}.\vec{\sigma} & 0\\
0 & -\vec{P}.\vec{\sigma}
\end{pmatrix}.
\end{equation}
One could consider a particle evolving via either of Hamiltonians
\begin{equation}
 h=\pm \vec{P}.\vec{\sigma}.
\end{equation}
Again, we call such particles {\it Weyl} particles.  The plus sign corresponds to a right-handed particle and the minus sign corresponds to a left-handed particle.

Chiral symmetry is a symmetry that occurs when $m=0$.  It is a symmetry under the application of a phase that depends on chirality.  For example, in the one dimensional case, this is the symmetry under the unitary $e^{-i\sigma_z \theta}$.

Finally, it is interesting to note that, according to the standard model, fermions are massless and only acquire mass via their interaction with the Higgs field.  Of course, this is not to say that the Dirac equation is not useful, just that it may not be as fundamental as the Weyl equation in some sense.

\subsection{Fermions}
\label{sec:Fermions}
Let us now look at systems of fermions.  Suppose that we have a lattice where each site can be occupied by fermions.  If there is more than one type of fermion (electrons with different spin, for example), then there can be at most one fermion of each type present at a site.  We denote the state with all modes empty, meaning there are no particles present, by $\ket{0}$.  Then we define creation and annihilation operators $a^{\dagger}_{\vec{n},i}$ and $a_{\vec{n},i}$.  Here $\vec{n}$ labels the position on the lattice and $i$ labels the fermion type at a site.  These operators satisfy the canonical anticommutation relations:
\begin{equation}
\label{eq:anticomm}
 \begin{split}
 \{a^{\dagger}_{\vec{n},i},a_{\vec{m},j}\} & =\delta_{ij}\delta_{\vec{n}\vec{m}}\\
 \{a_{\vec{n},i},a_{\vec{m},j}\} & =0,
\end{split}
\end{equation}
where $\delta_{\vec{n}\vec{m}}=1$ if $\vec{n}=\vec{m}$ and is zero otherwise.  Annihilation operators are so named because they annihilate a particle of that type when acting on a state.  If there is no particle of that type present, then these operators map the state to zero.  In particular, $a_{\vec{n},i}\ket{0}=0$.  Similarly, creation operators $a^{\dagger}_{\vec{n},i}$ create particles.  For example, the state $a^{\dagger}_{\vec{n},i}a^{\dagger}_{\vec{m},j}\ket{0}$ has a fermion at $\vec{n}$ and a fermion at $\vec{m}$.  Note that it follows from the anticommutation relations that creation operators square to zero, which enforces the exclusion principle, meaning there can be at most one fermion in each mode.  The state space is spanned by states having every different combination of products of creation operators $a^{\dagger}_{\vec{n},i}$ acting on $\ket{0}$.  

A requirement that we make of systems of fermions is that, as well as being self-adjoint, physical observables are linear combinations of products of {\it even} numbers of creation and annihilation operators.  In particular, physical Hamiltonians satisfy this constraint.  As an example, the Hubbard Hamiltonian is
\begin{equation}
H=-  \alpha\sum_{\langle\vec{n}\,\vec{m}\rangle} \sum_{i} (a^{\dagger}_{\vec{n},i}a^{\ }_{\vec{m}i}+a^{\dagger}_{\vec{m},i}a^{\ }_{\vec{n},i})+
 U\sum_{\vec{n}}(a^{\dagger}_{\vec{n}\uparrow}a^{\ }_{\vec{n}\uparrow})(a^{\dagger}_{\vec{n}\downarrow}a^{\ }_{\vec{n}\downarrow}),
\end{equation}
where $i\in\{\uparrow,\downarrow\}$ labels spin in this case, $\langle \vec{n}\,\vec{m}\rangle$ denotes nearest neighbour pairs, and $\alpha,U\geq 0$ are real valued parameters. The first term describes fermions of the same type hopping to nearest neighbour sites, and the second term is an on-site Coulomb repulsion.

\subsection{The Jordan-Wigner Transformation}
\label{sec:The Jordan-Wigner Transformation}
Dealing with fermions involves nonlocality in a sense.  This is because creation and annihilation operators {\it anticommute} regardless of the spatial separation.  This does not lead to any physical nonlocality because observables are always sums of even products of creation and annihilation operators, which means that observables on two separated regions of space do commute.  Still, to mathematically represent fermions by a system of qubits (or spins), we have to account for the nonlocality of the creation and annihilation operators.  To do this, we employ the Jordan-Wigner transformation \cite{JW28}.

As an example, suppose we have a line of $N$ fermionic modes with no internal degree of freedom.  Label positions by $n\in\{0,1,...,N-1\}$, and let us represent these $N$ fermionic modes by $N$ qubits.  It is natural to take the state $\ket{00...0}$ to represent the state with no fermions present.  We can start by representing the fermionic creation operator $a^{\dagger}_0$ on the qubits by\footnote{It is a little disconcerting that in this formula $a^{\dagger}_0$ is represented by $\sigma^-_0$ as opposed to $\sigma^+_0$.  This is because it is conventional to represent the qubit state $\ket{0}$ by the spin state $\ket{\uparrow}$.  And so it is $\sigma^-_0$ that takes the state $\ket{0}$ to the state $\ket{1}$.}
\begin{equation}
\label{eq:JWT0}
\sigma^-_0=\textstyle\frac{1}{2}(X_0-iY_0),
\end{equation}
where $X$ and $Y$ are Pauli operators.  The subscript $0$ implies that the operators in equation (\ref{eq:JWT0}) only act nontrivially on the qubit at site zero, meaning they act on all other spins like the identity.  For example, we could write $X_0=X\otimes I\otimes I...\otimes I$.

Because of the anticommutation relations, we cannot simply represent $a^{\dagger}_1$ by $\sigma^-_1$.  But we can satisfy the anticommutation relations if we choose
\begin{equation}
a^{\dagger}_n\equiv \sigma^-_n\prod_{m<n}Z_m. \label{eq:ordering} 
\end{equation}
The string of $Z$s allows us to preserve the anticommutation relations.  And it is because of these strings that fermionic creation operators are manifestly nonlocal in the qubit picture.
\begin{figure}[!ht]
{\centering
\resizebox{9cm}{!}{\input{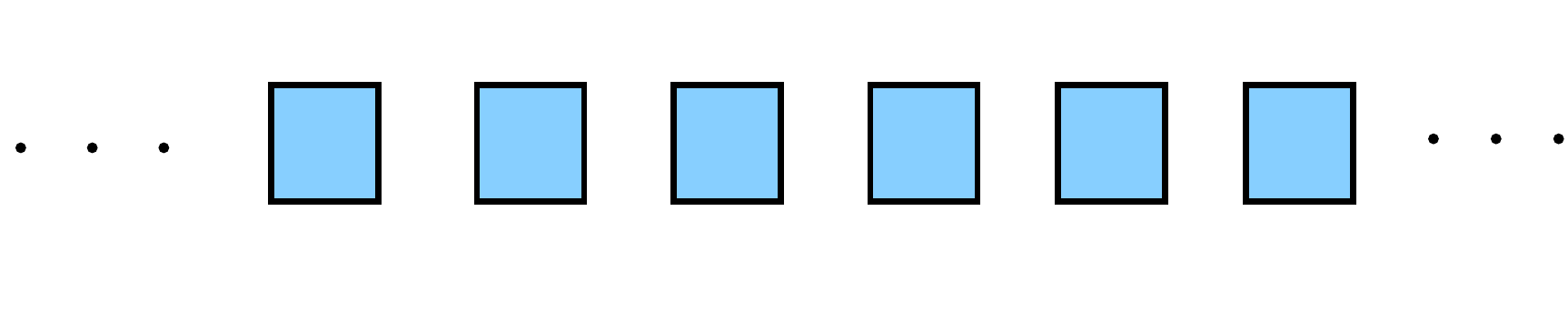_t}} \caption[The Jordan-Wigner transformation]{The infamous strings of $Z$ operators.  With the ordering in equation (\ref{eq:ordering}), the representation of $a^{\dagger}_n$ in the qubit picture has $Z$ operators at all sites $m<n$ and acts like the identity on all sites with $m>n$. \label{fig:JWT}}
}
\end{figure}

Actually, there is a lot of freedom in how we choose to order the strings of $Z$s in the Jordan-Wigner transformation.  To this end, let us write down a more general Jordan-Wigner transformation, which is useful for higher dimensional lattices.  Now we can take $\vec{n}$ and $\vec{m}$ to be vectors.  Then, given $N$ modes, we assign a qubit to each mode.  And we associate a unique label to each site, $\pi(\vec{n})\in\{0,...,N-1\}$.  Then we define
\begin{equation}
\label{eq:general ordering}
a^{\dagger}_{\vec{n}}\equiv \sigma^-_{\vec{n}}\hspace{-1em}\prod_{\pi(\vec{m})<\pi(\vec{n})}\hspace{-1em}Z_{\vec{m}},
\end{equation}
which also satisfies the anticommutation relations.  A line of fermions with $\pi(n)=n$ is a special case, equivalent to the example we saw in equation (\ref{eq:ordering}).

This all generalizes to situations with multiple fermionic modes at each lattice site (for example, fermions with different spins).  Then in the qubit representation of this we have a separate qubit at that lattice site for each mode.  So we define an ordering $\pi(\vec{n},i)$, where $i$ labels different modes at site $\vec{n}$, to assign a unique number to each mode.  It is often convenient if we choose the ordering so that $\pi(\vec{n}, i)$ for the set of modes at each site are consecutive.  This is helpful because products of even numbers of creation and annihilation operators at the same site will be local in the qubit picture.

In the example we gave, the ordering in (\ref{eq:ordering}) meant that local even fermionic operators on a line are mapped to local operators on the line of qubits.  To see this, look at $a^{\dagger}_n a_{n-1}$.  The strings of $Z$s at all sites except $n$ and $n-1$ cancel.  Unfortunately, this does not extend to higher dimensions in general.  In fact, even for a line of fermions with periodic boundary conditions this is not generally true.\footnote{One way to see this is to look at the hopping term $a^{\dagger}_{N-1} a_{0}$, where points $0$ and $N-1$ are beside each other because of the periodic boundary conditions.  Using the ordering in (\ref{eq:ordering}) will mean that this is nonlocal in the qubit picture.}

\subsection{Second Quantization}
\label{sec:Second Quantization}
Second quantization \cite{BR97} allows us to take a single particle system and map it to a multiparticle system, where the particles have the same properties as in the single particle case, and they obey fermionic or bosonic statistics.  Here we will only be interested in fermions.

Suppose we have a single quantum particle that has a state space spanned by the orthonormal basis $\ket{\vec{n}}\ket{i}$, where $\vec{n}$ labels position and $i$ labels an extra degree of freedom.  Then via second quantization we construct the state space of the corresponding fermionic system by starting first with an empty state $\ket{0}$.  Then we map single particle states to creation operators in the following way:
\begin{equation}
 \ket{\vec{n}}\ket{i}\rightarrow a^{\dagger}_{\vec{n},i},
\end{equation}
where the creation operators obey the canonical anticommutation relations, which we saw in equation (\ref{eq:anticomm}).  We extend this by linearity to get
\begin{equation}
 \sum_{\vec{n},i} c_{\vec{n},i}\ket{\vec{n}}\ket{i}\rightarrow \sum_{\vec{n},i} c_{\vec{n},i}\,a^{\dagger}_{\vec{n},i}.
\end{equation}
The way to interpret this map is that it assigns a creation operator to each state of the single particle.  This defines the fermionic multiparticle version of the single particle theory.

This also gives rise to a map on operators.  Dropping the $i$ index to make the notation simpler, this is
\begin{equation}
 \sum_{\vec{n}, \vec{m}} A_{\vec{n},\vec{m}}\ket{\vec{n}}\bra{\vec{m}}\rightarrow \sum_{\vec{n}, \vec{m}} A_{\vec{n},\vec{m}}\,a^{\dagger}_{\vec{n}}a_{\vec{m}}.
\end{equation}
An important example is the Hamiltonian of the particle in question.  Suppose the single particle Hamiltonian commutes with the momentum operator, then it takes the form
\begin{equation}
 h=\sum_{p,i,j} h^{ij}_p \ket{p,i}\bra{p,j},
\end{equation}
where $\ket{p}=\f{1}{\sqrt{N}}\sum_n e^{ipn}\ket{n}$ are momentum states.  Then, via second quantization, this is mapped to
\begin{equation}
 H=\sum_p a_{p,i}^{\dagger}h^{ij}_p a_{p,j},
\end{equation}
where $a^{\dagger}_{p,i}=\f{1}{\sqrt{N}}\sum_n e^{ipn}a^{\dagger}_{n,i}$, creates a particle with momentum $p$ and extra degree of freedom state $i$.

More generally, the recipe for second quantization for fermions is to take a state $\ket{\psi}$ in the Hilbert space $\mathcal{H}$ to a creation operator according to the linear map
\begin{equation}
\label{eq:secquant}
 \ket{\psi}\rightarrow a^{\dagger}(\psi)
\end{equation}
such that the relations below are satisfied.
\begin{equation}
 \begin{split}
 \{a(\phi),a^{\dagger}(\psi)\} & =\langle \phi|\psi\rangle\\
 \{a(\psi),a(\phi)\} & =0.
\end{split}
\end{equation}
Above, $\langle \phi|\psi\rangle$ is the inner product on the Hilbert space $\mathcal{H}$, and the induced map $\ket{\phi}\rightarrow a(\phi)$ that follows by taking the adjoint after applying the map in equation (\ref{eq:secquant}) is antilinear.

\subsection{Field Theory}
\label{sec:Field Theory}
Looking back at the Dirac and Weyl particles of section \ref{sec:Dirac and Weyl particles}, it is straightforward, via second quantization, to upgrade these to multiparticle theories of fermions.  Each particle would evolve over time as in the single particle case.  But this is not the end of the story:\ even the quantum field theory of free fermions involves more than that.  This is because the physical vacuum is not the state annihilated by all the annihilation operators.  Instead it is something more complicated.

To illustrate this, we will go through how, starting with Dirac particles in one dimension, one arrives at the field theory of free Dirac fermions.

The process of second quantization takes us from the single particle to a multiparticle setting.  Here we started with a single particle that has continuous degrees of freedom.  The creation and annihilation operators corresponding to the state $\ket{x}$ also have a continuous label.  In an analogous sense to how $\ket{x}$ is not actually a state, the corresponding annihilation operator $\psi(x)$ is not really an operator.  Instead, it is an {\em operator valued distribution}.  Only by integrating these weighted by suitable test functions do we get operators.\footnote{We will be a bit loose with terminology:\ for example, we will refer to $\psi_{\alpha}^{\dagger}(\vec{x})\ket{0}$ as a state.}  We now have the annihilation operators $\psi_\alpha(x)$, where $\alpha\in\{l,r\}$ labels the extra degree of freedom.  These satisfy the anticommutation relations
\begin{equation}
\begin{split}
 \{\psi_{\alpha}(x),\psi_{\beta}(y)\} & =0\\
 \{\psi_{\alpha}(x),\psi^{\dagger}_{\beta}(y)\} & =\delta_{\alpha\beta}\delta(x-y).
\end{split}
\end{equation}

The state with no particles $\ket{0}$ is still annihilated by $\psi_\alpha(x)$.  As before, the state $\psi_{\alpha}^{\dagger}(x)\ket{0}$ corresponds to a state with a single particle at $x$.  Similarly, $\psi_{\alpha}^{\dagger}(x)\psi_{\beta}^{\dagger}(y)\ket{0}$ is a two particle state.

Recall that in one dimensional space the Dirac Hamiltonian for a single particle is
\begin{equation}
 h=P\sigma_z +m\sigma_x.
\end{equation}
Acting on momentum states, $h$ has eigenvalues $\pm E_p$, where $E_p=\sqrt{p^2+m^2}$.  The corresponding eigenvectors are $\vec{w}^{+}$ and $\vec{w}^-$ respectively.  Note that $\vec{w}^-$ corresponds to negative energy.

We can rewrite the fermion operators in terms of positive and negative energy particles by taking the Fourier transform.  We get
\begin{equation}
\begin{split}
\psi_\alpha (x) & = \int\!\frac{\textrm{d}p}{2\pi}\psi_{\alpha} (p)e^{ipx},\\
& = \int\!\frac{\textrm{d}p}{2\pi}\left(a_{p}w^{+}_\alpha (p)+c_{p}w^{-}_{\alpha} (p)\right)e^{ipx},
\end{split}
\end{equation}
where $a^{\dagger}_{p}$ and $c^{\dagger}_{p}$ create positive and negative energy fermions respectively with momentum $p$.

The presence of negative energy particles in the theory means that the Hamiltonian is unbounded from below.  This is because we can lower the energy by creating particles with negative energy.  But this is problematic from a statistical physics point of view:\ for one thing, there is no Gibbs state for Hamiltonians unbounded from below.

The way to get around this is to define the physical ground state to be the state with all the negative energy modes filled, which we will call $\ket{\Omega}$.  Then, because these are fermions, this means we cannot create any more negative energy particles.  So the new vacuum $\ket{\Omega}$ has no $a_p$ particles, but all the $c_p$ modes are filled up, meaning $c^{\dagger}_p$ annihilates $\ket{\Omega}$.  Viewed in this way in terms of a particle sea where $c_p$ creates holes, the state $\ket{\Omega}$ is sometimes called the Dirac sea.  Note that these holes have positive energy because annihilating a negative energy particle increases the energy.

The modern perspective is to view the hole itself as a type of particle (an antiparticle), created by applying $c_p$, which has 
momentum $-p$.  So we define $b^{\dagger}_{p}=c_{-p}$.  Then the field operator becomes
\begin{equation}
\psi_\alpha (x)= \int\!\frac{\textrm{d}p}{2\pi}\left(a_{p}u_\alpha (p)e^{ipx}+b^{\dagger}_{p}v_{\alpha} (p)e^{-ipx}\right),
\end{equation}
where we have made made the substitution $p\rightarrow -p$ in the second term, and defined $u_\alpha (p)=w^+_\alpha (p)$ and
$v_\alpha (p)=w^-_\alpha (-p)$.  This is the familiar form of the Dirac field operator in one dimension, though we have chosen 
to normalize $u_{\alpha}(p)$ and $v_{\alpha}(p)$ to one, as opposed to the usual normalization of $\sqrt{2E_p}$.

The situation in two and three spatial dimensions is similar, though the notation becomes more cluttered.  The three dimensional field operator is
\begin{equation}
\psi_\alpha (\vec{x})= \sum_{s=1,2}\int\!\frac{\textrm{d}^3p}{(2\pi)^3}\left(a^s_{\vec{p}}\, u^s_\alpha (\vec{p})e^{i\vec{p}.\vec{x}}+b^{s\dagger}_{\vec{p}}v^s_{\alpha} (\vec{p})e^{-i\vec{p}.\vec{x}}\right),
\end{equation}
where $\alpha$ now labels four degrees of freedom and $s$ labels the spin state.

\subsubsection{An Aside:\ Vacuum Entanglement}
\label{sec:Vacuum Entanglement}
An interesting property of the vacuum state $\ket{\Omega}$ is that it is entangled.  This is a general property in non interacting relativistic quantum field theory \cite{SW85}.  In fact, protocols have even been introduced to locally extract this entanglement, though it is unknown whether this leads to observations of Bell inequality violations \cite{SR07}.

It is inconvenient from the point of view of simulation that the vacuum state is not simple, as this state must be prepared.  The problem becomes even worse in interacting models where the vacuum state becomes even more complicated.  One method to prepare the interacting vacuum, employed in \cite{JLP12}, was to turn on the interaction adiabatically after preparing the free vacuum.

\subsection{Lattice Fermions and the Fermion Doubling Problem}
\label{sec:Lattice Quantum Field Theory and Fermion Doubling}
Essentially, a lattice QFT model suffers from fermion doubling if there are high momentum modes of fermions that have low energy.

To avoid clutter in notation, we will omit the $i$ indices so that $\psi_n$ is a column vector of operators, meaning $\psi_{n}\equiv \psi_{ni}$.  So, for example,
\begin{equation}
 \psi^{\dagger}_n\sigma_x\psi_n=\sum_{ij} \psi^{\dagger}_{n,i}\sigma_x^{ij}\psi_{n,j},
\end{equation}
where $\sigma_x^{ij}$ are the matrix elements of $\sigma_x$.

To understand the fermion doubling problem, let us look at an example.  The simplest lattice fermions that correspond to Dirac fermions in the continuum limit are called naive fermions \cite{DGDT06}.  In one dimension the naive fermion Hamiltonian is
\begin{equation}
\label{eq:naiveHam}
 H=\frac{-i}{2a}\sum_{n=0}^{N-1} \psi^{\dagger}_n\sigma_z(\psi_{n+1}-\psi_{n-1})+m\psi^{\dagger}_n\sigma_x\psi_n,
\end{equation}
where $a$ is the lattice spacing, and $m$ is the mass.  This Hamiltonian gives rise to Dirac fermions in the continuum limit, which have the Hamiltonian
\begin{equation}
 H=\int\! \textrm{d}x\, \psi^{\dagger}(x)\big({-i}\partial_x\sigma_z+m\sigma_x\big)\psi(x).
\end{equation}
So the recipe in going from the continuum Hamiltonian above to the discrete Hamiltonian in equation (\ref{eq:naiveHam}) was to replace the spatial derivative by a discretized derivative, $(\psi_{n+1}-\psi_{n-1})/2a$, where $a$ is the lattice spacing.

These naive fermions suffer from the fermion doubling problem \cite{NN81,Kaplan09}.  To see what this means, we switch to momentum space, with $\psi_n=\f{1}{\sqrt{N}}\sum_ne^{ikn}\psi_k$, where $k=2\pi l/N$ and $l\in\{0,...,N-1\}$.  Then
\begin{equation}
 H=\sum_{k} \psi^{\dagger}_k\left(\frac{\sin(k)}{a}\sigma_z+m\sigma_x\right)\psi_k.
\end{equation}
For small $k$, $\sin(k)\simeq k$, and $H$ takes the same form as the continuum Hamiltonian.  Note that the continuum momenta correspond to $p=k/a=\f{2\pi l}{L}$, where $L=Na$ is the length of the line, and $l\in\mathbb{Z}$.  Also, the energies are given by the dispersion relation
\begin{equation}
\label{eq;11276}
E_k=\pm\sqrt{\sin^2(k)/a^2+m^2}.
\end{equation}
\begin{compactwrapfigure}{r}{0.47\textwidth}
\centering
\begin{minipage}[r]{0.43\columnwidth}%
\centering
    \resizebox{6.0cm}{!}{\includegraphics{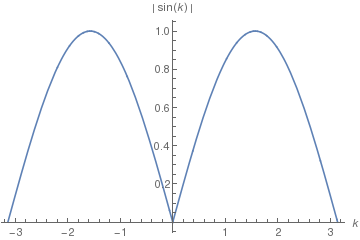}}
    \footnotesize{\caption[Fermion doubling for naive fermions]{Dispersion relation for the positive energy part of the spectrum with $m=0$ and the lattice spacing $a=1$. Here $E_k=|\sin(k)|$.  Note the low energy values for $k$ close to $\pm\pi$.}}
    \label{fig:Doubling}
\end{minipage}
\end{compactwrapfigure}
Then for small $k$, as expected, we have $\sin^2(k)/a^2\simeq p^2$, and we recover the continuum dispersion relation.  But these are not the only low energy modes.  If $k=\pi+k^{\prime}$, then for small $k^{\prime}$ we also have low energies because $\sin(\pi+k^{\prime})=-\sin(k^{\prime})\simeq -k^{\prime}$.  Therefore, there are two sets of low energy modes.  Modes with momentum close to zero and modes with momentum close to $\pi$ have low energy.

For free fermions, none of this is a problem because, if the initial state only has fermions with low momentum, the final state will only have fermions with low momentum.  It is when we include interactions that problems occur.  The interaction term may cause both sets of fermions to interact so that the theory looks like there are two types of fermions as opposed to one.

There is a general theorem stating that local fermion Hamiltonians on a lattice with chiral symmetry suffer from the fermion doubling problem \cite{NN81,DGDT06,Kaplan09}.  Recall that in the example above this is the symmetry under $\psi_n\rightarrow e^{i\sigma_z\theta}\psi_n$.  This symmetry is only present if $m=0$.\footnote{Still, doubling is seen in the dispersion relation in equation (\ref{eq;11276}) even though $m\neq 0$.}

There are several ways to alleviate the fermion doubling problem, each of which violates one of the hypotheses of the fermion doubling theorem.  Some of these involve breaking chiral symmetry.  Wilson fermions, for example, do this by adding a momentum dependent mass term, which is large for high momenta and small for low momenta.  Another example, staggered fermions, has fewer doublers than naive fermions \cite{Susskind77}.  Other approaches include modifying the definition of chiral symmetry at the discrete scale in a way that becomes equivalent to the usual definition in the continuum limit \cite{DGDT06}.

\chapter{Quantum Walks and Relativistic Particles}
\label{chap:Quantum Walks and Relativistic Particles}
\section{Introduction}
Quantum walks certainly seem like a natural model for particles in discrete spacetime, and for most of this chapter our aim will be to study their continuum limits.  So it is compelling that one can construct simple quantum walks that become relativistic particles in the continuum limit.  Of course, on a lattice there is no continuous spacetime symmetry.  And on top of this, the quantum walks we look at do not even have the symmetries of the lattice, yet their continuum limits still exhibit Lorentz symmetry.  At the end of the chapter, we will briefly discuss the more abstract side of quantum walks.

Previously, the connection between quantum walks and relativistic particles was studied in \cite{FH65,JS84,Bial94,Meyer96,Meyer97,Strauch06,BES07,Strauch07,Kurz08}.  One paper that will be of particular significance for us is \cite{Bial94}, where quantum walks\footnote{In this paper the systems studied are referred to as unitary cellular automata, but according to today's nomenclature these are indeed quantum walks.} that obey the three dimensional Weyl or Dirac equations in the continuum limit were presented.  Additionally, with some assumptions about how the evolution operator transforms under rotations, it was shown in \cite{Bial94,DP13} that quantum walks with a body centred cubic neighbourhood and two dimensional extra degrees of freedom must obey the Weyl equation in the continuum limit.  The goal in \cite{DP13} was to derive such quantum walks from a reasonable set of physical principles.

One of our concerns will be taking the continuum limit in a concrete way so that there is a well-defined notion of convergence of a quantum walk to a continuum model.  In \cite{AFN13} a rigorous treatment was given of the convergence of the quantum walks of \cite{Bial94} to their continuum limits.

It is good to keep in mind any practical or conceptual drawbacks of quantum walks as discretized relativistic particles.  One we will touch on is that of symmetries:\ if one studies quantum walks with the motivation that nature might actually be composed of such particles, then it is a little disheartening that the quantum walks we analyze here do not even have the symmetries of the lattice.  Another potential problem is that of fermion doubling, which arises for some of the quantum walks we study but can be mitigated and even removed completely in some cases.  Actually, it may be possible to completely circumvent the problem of fermion doubling with quantum walks, so if true, this would be a great boost.

The more abstract theory of quantum walks is not our primary focus, but we will look into it a little.  The idea is to see what can be said about quantum walks starting from as few assumptions as possible, as done in \cite{Meyer96a,Vogts09,GNVW12}.  Any general structure results for the evolution operator could be very useful.

The breakdown of this chapter is as follows.  First, in section \ref{sec:Continuum Limits}, we discuss how to make sense of continuum limits.  There we also look at quantum walks that converge more quickly than the standard examples to physical particles in the continuum, as well as symmetries of quantum walks.  In section \ref{sec:Two Dimensional Coins are Special}, we see that quantum walks that satisfy a masslessness condition with a two dimensional extra degree of freedom have Lorentz symmetry in the continuum limit.  Next, in section \ref{sec:Fermion Doubling in Quantum Walks}, we look at whether fermion doubling occurs in these quantum walks and what we can do about it.  We conclude the chapter by discussing the more abstract nature of quantum walks in section \ref{sec:Quantum Walks on a Line: Abstract Theory}.  Here we look at the decomposition for one dimensional quantum walk unitaries in terms of coin and shift operators and ask whether such a decomposition exists in higher spatial dimensions.

\section{Continuum Limits}
\label{sec:Continuum Limits}
Before going into the rigorous details involved in taking a continuum limit, let us first look at how the continuum limit works for a simple example.  A good place to start is with the quantum walk on a line that we saw in section \ref{sec:Discrete-Time Quantum Walks}.  Its limit was studied previously in \cite{FH65,Meyer96,Meyer97,Strauch06}.

Recall that the quantum walk can be thought of as a single particle that lives on a discrete line of points labelled by an integer $n$, so position states are $\ket{n}$.  And the particle has a two dimensional extra degree of freedom with orthonormal basis states $\ket{l}$ and $\ket{r}$.  The evolution operator acting over every timestep is
\begin{equation}
 U=W\left(S\ket{r}\bra{r}+S^{\dagger}\ket{l}\bra{l}\right),
\end{equation}
where $S\ket{n}=\ket{n+1}$ is a unitary shift of the particle's position, its inverse $S^{\dagger}$ is a shift in the opposite direction, and $W$ is a unitary that acts on the extra degree of freedom.  The term in brackets in the evolution operator is something we refer to as a conditional shift.  In this case, it shifts to the right if the particle is in the $\ket{r}$ state and to the left if the particle is in the $\ket{l}$ state.  The concept of a shift conditioned on the state of the extra degree of freedom is ubiquitous in the study of quantum walks.

A shift can be written as $S=e^{-iPa}$, where $a$ is the lattice spacing and $P$ is the discrete momentum operator.  Generally, one does not factor out the lattice spacing in the definition of the discrete momentum operator.  Here, however, it is the $P$ we have just defined that becomes the continuum momentum operator in the continuum limit.

Now define $\sigma_z=\ket{r}\bra{r}-\ket{l}\bra{l}$ and $\sigma_x=\ket{r}\bra{l}+\ket{l}\bra{r}$.  The extra degree of freedom is not spin and does not correspond to it in the continuum limit, but it is conventional to use $\sigma_z$ and $\sigma_x$ anyway.  If we choose $W=e^{-im\sigma_xa}$, then
\begin{equation}
\label{eq:1dQW}
\begin{split}
 U & =e^{-im\sigma_xa}e^{-iP\sigma_za}\\
 & = e^{-i(P\sigma_z +m\sigma_x)a}+O(a^2).
\end{split}
\end{equation}
The exponent in the second line is $-iha$, where $h=P\sigma_z +m\sigma_x$, which, as we saw in section \ref{sec:Dirac and Weyl particles}, is the Dirac Hamiltonian in one dimension.  So, as an approximation to evolution via the Dirac Hamiltonian over a short time $a$, the error in the approximation is $O(a^2)$.

Then, if we look at this over $N$ timesteps, we have
\begin{equation}
 U^N=\left(e^{-im\sigma_xa}e^{-iP\sigma_za}\right)^N\rightarrow e^{-i(P\sigma_z +m\sigma_x)t},
\end{equation}
by the Lie-Trotter product formula.  Here we let $Na=t$, which is constant, so $N=O(\f{1}{a})$.  Note that $a$ is the lattice spacing, while $t$ is time, so there is a constant $c$, which we have set to one, accounting for the apparently different dimensions in $t=Na$.  We could also consider having a timestep length $\delta t$, with $c=\f{a}{\delta t}$.  But it will be simpler to have $a=\delta t$.

We should pause for a second to note that our choice for the coin operator, $W=e^{-im\sigma_xa}$, tends to the identity as the lattice spacing goes to zero.  In fact, this is necessary to get a continuum limit.  Because we are considering the limit of infinitely many timesteps, we need the effect of a {\em single} application of the evolution operator to be small.  So, as $a\rightarrow 0$, we need $U\rightarrow \openone$.  On the other hand, something like $X^N=\left(\begin{smallmatrix} 0 & 1\\ 1 & 0 \end{smallmatrix}\right)^N$ does not even have a limit as $N$ tends to infinity.  

The next step is to make these ideas rigorous.  We start in section \ref{sec:Approximating Continuum Models by Discrete Ones} by separating the problem of taking continuum limits into two parts:\ approximating the state of the continuum system and approximating the dynamics.  Next, in section \ref{sec:The Lie-Trotter Product Formula}, we discuss the Lie-Trotter product formula applied to quantum walks in more detail.  In section \ref{sec:A General Recipe}, we look at a general recipe for constructing quantum walks that have a desired continuum limit.  Here we reproduce the quantum walks that give rise to the Dirac and Weyl equation in two and three dimensional space, proposed in \cite{Bial94}.  Following this, in section \ref{sec:Faster Converging Quantum Walks}, we introduce quantum walks that converge significantly faster to their continuum limits than the previous examples.  Next, in section \ref{sec:Symmetries on the Lattice} we look at the issue of discrete symmetries of quantum walks that have relativistic dynamics as their continuum limit.

\subsection{Approximating Continuum Models by Discrete Ones}
\label{sec:Approximating Continuum Models by Discrete Ones}
Now that we have the rough idea of what it means to take a continuum limit, we ought to be more precise.  We should view the discrete particle's dynamics as an approximation to the dynamics of a particle in continuous spacetime.  That the continuum limit of the discrete dynamics is the same as the dynamics of a continuum particle means that we can approximate that continuum particle's evolution arbitrarily well by the discrete one.  There is still the question of how quickly the discrete dynamics converges to the continuum dynamics, which is particularly important if we want to use these discrete models to simulate particles in continuous spacetime.

Our motivation may not be simulation.  We may be interested in the possibility that spacetime could fundamentally be discrete.  Then we are asking whether, on scales large compared to the lattice spacing, the discrete system looks like a continuous system.  We will mostly frame our discussion in a way geared towards simulation, but it is not hard to recast the arguments to look at these quantum walks from a more fundamental point of view.

When looking at continuum limits from a simulation point of view, it is best to think in terms of a family of quantum walks with different lattice spacings that become better approximations to the continuum theory as the lattice spacing goes to zero.

In order to be clear about how good the approximation is, we need to be able to compare the discrete and continuous systems' states in a meaningful way.  In other words, we need a notion of distance between discrete and continuum states.  One way to get this is to have some prescription for mapping the discrete particle's state space into the continuum particle's state space $\mathcal{H}$.  Then we can use the norm on the continuum particle's state space as a distance measure between discrete and continuum states.

So we have a continuum particle in state $\ket{\psi}$ with Hamiltonian $H$ and a discrete particle in some state\footnote{The subscript $d$ will denote discrete objects when there is any possibility of confusion.} $\ket{\psi_d}$ evolving via $U$.  And we need to show that $U^N\ket{\psi_d}$ approximates $e^{-iHt}\ket{\psi}$ well.  In other words, supposing we have mapped $U^N\ket{\psi_d}$ into $\mathcal{H}$, we need to look at
\begin{equation}
 \|e^{-iHt}\ket{\psi}-U^N\ket{\psi_d}\|_2,
\end{equation}
where $\|\cdot\|_2$ is the norm on $\mathcal{H}$.  By applying the triangle inequality, we arrive at
\begin{equation}
\begin{split}
 \|e^{-iHt}\ket{\psi}-U^N\ket{\psi_d}\|_2 & \leq \|e^{-iHt}\ket{\psi}-U^N\ket{\psi}\|_2+\|U^N\ket{\psi}-U^N\ket{\psi_d}\|_2\\
& = \|(e^{-iHt}-U^N)\ket{\psi}\|_2+\|\ket{\psi}-\ket{\psi_d}\|_2,
\end{split}
\end{equation}
where we used the invariance of $\|\cdot\|_2$ under unitary transformations to get the second line.  So the problem has been separated into two parts.  The fist involves showing that the quantum walk evolution operator converges to the continuum evolution operator, and is quantified by the first term above.  The second involves showing that the discrete initial state converges to the continuum initial state, and is quantified by the second term above.  The importance of the second term is mostly for simulation.

We will look at two ways to map discrete states into the continuum here.  The first option is physically satisfying if a little cumbersome.  It entails mapping the discrete position states to continuum wavefunctions localized on cubes with length of side given by the lattice spacing.  We call this the block mapping.  An advantage of this is that localized particles in the discrete picture are still localized in the continuum.  The second option is to map discrete momentum states to continuum momentum states with the same value of momentum.  We will call this the momentum mapping.  It is sometimes more convenient mathematically, but has the conceptual disadvantage that a particle localized on a single site in the discrete picture is no longer localized after being mapped to the continuum.  

\subsubsection{The Block Mapping}
For simplicity, let us start with one spatial dimension.  We map the discrete position state $\ket{n}$ into the continuum via
\begin{equation}
 \ket{n}\ \rightarrow\ \frac{1}{\sqrt{a}}\int_{na}^{(n+1)a}\!\textrm{d}x\,\ket{x}.
\end{equation}
\begin{compactwrapfigure}{r}{0.44\textwidth}
\centering
\begin{minipage}[r]{0.42\columnwidth}%
\centering
    \resizebox{5.0cm}{!}{\input{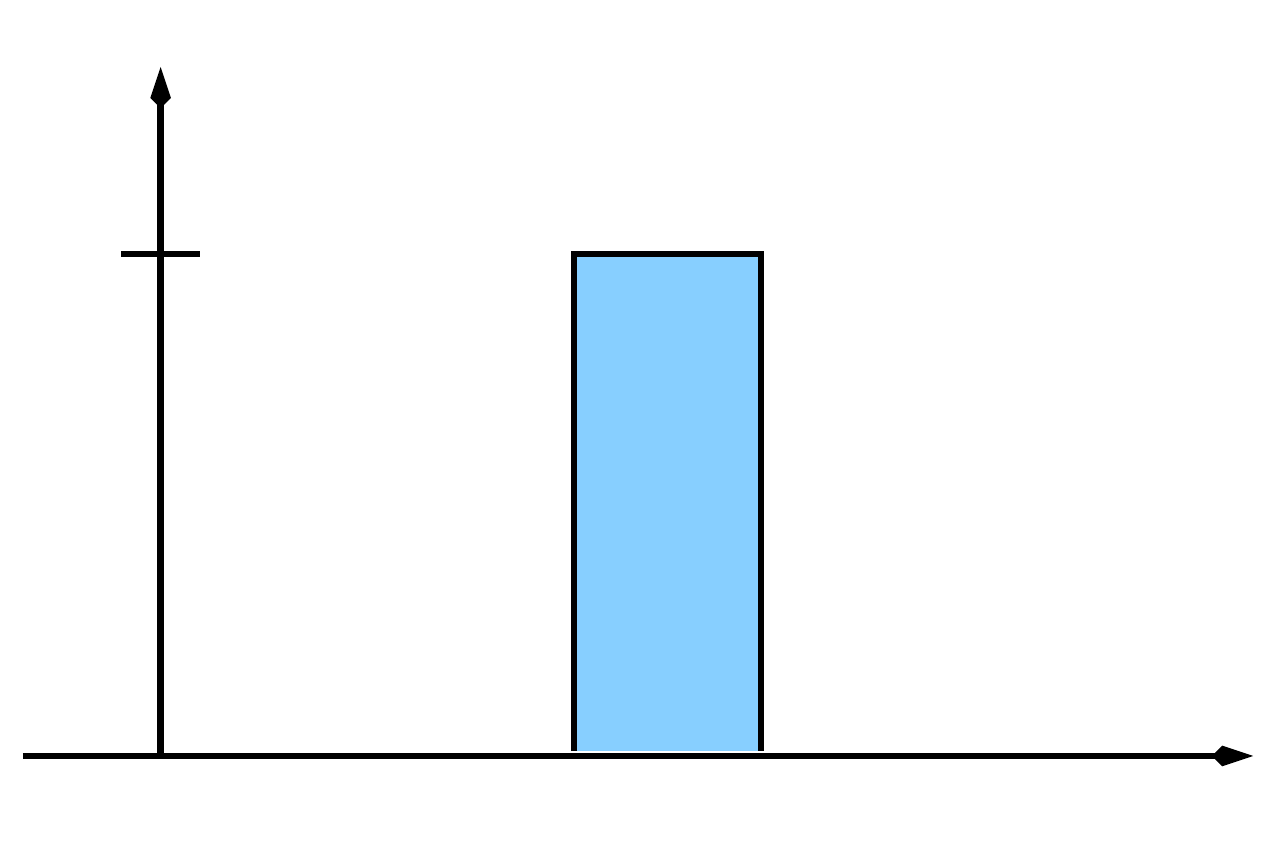_t}}
    \footnotesize{\caption[A discrete position state embedded into the continuum]{A discrete position state embedded into the continuum.}\vspace{0.0cm}}
    \label{fig:70}
\end{minipage}
\end{compactwrapfigure}
These states do not actually converge to anything as $a$ tends to zero.  So it will be convenient for us to work with discrete momentum states.  These are
\begin{equation}
 \ket{p_d}=\sqrt{a}\displaystyle\sum_{n}e^{ip_d na}\ket{n},
\end{equation}
where $p_d\in(-\f{\pi}{a},\f{\pi}{a}]$.  The inner product between two discrete momentum states is\\
\begin{equation}
\label{eq:Tupac}
 \langle p_d|q_d\rangle=(2\pi)\delta(p_d-q_d).
\end{equation}
To see this, use the identity \cite{Peres95}
\begin{equation}
\label{eq:Nas}
\sum_{n}e^{in(x-y)}=2\pi\delta(x-y),
\end{equation}
which holds when $x,y\in(-\pi,\pi]$. 

Continuum momentum states are
\begin{equation}
 \ket{p}=\int_{-\infty}^{\infty}\!\textrm{d}x\, e^{ipx}\ket{x},
\end{equation}
where now $p\in\mathbb{R}$.

The inner product of a discrete momentum state and a continuum momentum state, when its momentum is restricted to $(-\f{\pi}{a},\f{\pi}{a}]$, is
\begin{equation}
\label{eq:elvis}
 \braket{p}{q_d}=2\pi\delta(p-q_d)\left[\frac{1-e^{-ipa}}{ipa}\right]=2\pi\delta(p-q_d)\left[1+O(pa)\right].
\end{equation}
To see this, again use the identity in equation (\ref{eq:Nas}).  Looking at equation (\ref{eq:elvis}), once we restrict to momenta less than some cutoff $\Lambda$ that grows slower than $1/a$, the inner product between discrete momentum states and continuum momentum states tends towards $2\pi\delta(p-q_d)$ as $a$ goes to zero.

Now, our goal is to approximate continuum states by discrete ones, and, since we have mapped the discrete states into the continuum, we are able to quantify how good the approximation is.  To see that we can approximate continuum states arbitrarily well, first expand the continuum state we want to approximate, $\ket{\psi}$, in the momentum basis, meaning
\begin{align}
\ket{\psi} =\int_{-\infty}^{\infty}\!\frac{\textrm{d}p}{2\pi} \psi(p)\ket{p}.
\end{align}
It will help to decompose $\ket{\psi}$ into low and high momentum components:\ 
\begin{equation}
 \begin{split}
 \ket{\psi} &= \int_{-\Lambda}^{\Lambda}\!\frac{\textrm{d}p}{2\pi} \psi(p)\ket{p}+\int_{|p|>\Lambda}\!\frac{\textrm{d}p}{2\pi} \psi(p)\ket{p}\\
 &= \ket{\psi_{\Lambda}}+\ket{\psi_{\Lambda}^{\perp}},
 \end{split}
\end{equation}
where the states in the second line are not normalized.
Now we can take as our discrete approximation
\begin{equation}
\ket{\psi_d} =\alpha\int_{-\Lambda}^{\Lambda}\!\frac{\textrm{d}p_d}{2\pi} \psi(p_d)\ket{p_d},
\end{equation}
where $\alpha$ is chosen to normalize the state.  As $\Lambda$ tends to infinity, $\alpha$ tends to one.  This is because
\begin{equation}
 \alpha^2=\int_{-\Lambda}^{\Lambda}\!\frac{\textrm{d}p}{2\pi} |\psi(p)|^2,
\end{equation}
which tends to one as $a$ goes to zero.  Now, by repeated application of the triangle inequality,
\begin{equation}
\label{eq:jimi}
 \begin{split}
\|\ket{\psi}-\ket{\psi_d}\|_2 & \leq  \|\ket{\psi_{\Lambda}}-\ket{\psi_d}\|_2 + \|\ket{\psi_{\Lambda}^{\perp}}\|_2\\
& \leq \|\ket{\psi_{\Lambda}}-\frac{1}{\alpha}\ket{\psi_d}\|_2 + \|(1-\frac{1}{\alpha})\ket{\psi_d}\|_2+ \|\ket{\psi_{\Lambda}^{\perp}}\|_2,
\end{split}
\end{equation}
Now, the third term tends to zero as $a\rightarrow 0$ since $\Lambda$ tends to infinity as $a\rightarrow 0$.  Furthermore, the second term goes to zero since $\alpha\rightarrow 1$.  To see that the first term in the second line of equation (\ref{eq:jimi}) goes to zero, look at
\begin{align}
\label{eq:cont1}
\frac{1}{\alpha}\langle \psi_{\Lambda} | \psi_d \rangle=\int_{-\Lambda}^{\Lambda}\!\frac{\textrm{d}p}{2\pi} |\psi(p)|^2 +O(\Lambda a),
\end{align}
where we used equation (\ref{eq:elvis}).  Therefore, (\ref{eq:cont1}) tends to one as $a\rightarrow 0$ provided $\Lambda$ grows slowly enough.  For example, taking $\Lambda$ to be $\f{\pi}{a^{1/2}}$ would do.  This means that $\|\ket{\psi_{\Lambda}}-\frac{1}{\alpha}\ket{\psi_d}\|_2$ and hence $\|\ket{\psi}-\ket{\psi_d}\|_2$ tend to zero as $a$ tends to zero.

So we can choose the lattice spacing $a$ small enough that we can approximate a continuum state arbitrarily well.  How accurate the approximation is for a given value of $a$ depends on how smooth the continuum state $\ket{\psi}$ is, which is determined by the coefficients $\psi(p)$ corresponding to $|p|\leq\Lambda$.

Since momentum states in $d$ dimensions are simply tensor products of the one dimensional momentum states, this carries over simply to the $d$ dimensional case.  The same procedure works when the particle has an extra degree of freedom by writing
\begin{equation}
\ket{\psi}=\sum_i\ket{\psi_i}\ket{i},
\end{equation}
where $\ket{\psi_i}$ are position states and $\ket{i}$ are an orthonormal basis for the coin degree of freedom.

\subsubsection{The Momentum Mapping}
The momentum mapping is sensible from a simulation point of view, as it is easy to see how to approximate a continuum state by a discrete one.  The basic idea behind this mapping is to identify each discrete momentum state with the continuum momentum state corresponding to the same value of momentum.  First, discrete momentum states are
\begin{equation}
 \ket{\vec{p}_d}=a^{d/2}\displaystyle\sum_{\vec{n}}e^{i\vec{p}_d.\vec{n}a}\ket{\vec{n}},
\end{equation}
where the components of $\vec{p}_d$ take values in $(-\textstyle{\frac{\pi}{a}},\textstyle{\frac{\pi}{a}}]$.  Again, the inner product between two discrete momentum states is
\begin{equation}
 \langle \vec{p}_d|\vec{q}_d\rangle=(2\pi)^d\delta(\vec{p}_d-\vec{q}_d).
\end{equation}
Continuum momentum states are
\begin{equation}
 \ket{\vec{p}\,	}=\int_{-\infty}^{\infty}\!\textrm{d}^dx\, e^{i\vec{p}.\vec{x}}\ket{\vec{x}},
\end{equation}
where the components of $\vec{p}$ take values in $\mathbb{R}$.

The mapping between discrete and continuum state spaces follows by identifying discrete and continuous momentum states, meaning we map $\ket{\vec{p}_d}$ to $\ket{\vec{p}\,}$ when $\vec{p}_d=\vec{p}$.

Then to see how to approximate a continuum state by a discrete state, let us look at the one dimensional case.  We approximate the continuum state
\begin{align}
 \ket{\psi} &=\int_{-\infty}^{\infty}\!\frac{\textrm{d}p}{2\pi} \psi(p)\ket{p}
\end{align}
by
\begin{equation}
 \ket{\psi_d}=\alpha\int_{-\f{\pi}{a}}^{\f{\pi}{a}}\!\frac{\textrm{d}p_d}{2\pi} \psi(p_d)\ket{p_d}\equiv\alpha\int_{-\f{\pi}{a}}^{\f{\pi}{a}}\!\frac{\textrm{d}p}{2\pi} \psi(p)\ket{p},
\end{equation}
where $\alpha$ normalizes the state.  It is straightforward to see that this converges to the continuum state.  So $\ket{\psi_d}$ tends to $\ket{\psi}$ as $a\rightarrow 0$.

This method is convenient since convergence of discrete to continuum states is clear, but physically it is a little unsatisfying.

\subsubsection{The Embedded Evolution Operator}
It is convenient that, when mapped to an operator on $\mathcal{H}$, the quantum walk evolution operator can be chosen to be the same operator for both mappings.  Let us see why.  It helps that any quantum walk unitary can be written as $\textstyle\sum_{\vec{q}}A_{\vec{q}}S_{\vec{q}}$, where $A_{\vec{q}}$ are operators on $\mathcal{H}_C$ and $S_{\vec{q}}$ is a shift by lattice vector $\vec{q}$.  We will take this for granted for now, postponing a proof until section \ref{sec:Form of an arbitrary Quantum Walk Unitary}.

In the block mapping, an operator that implements a shift by a lattice vector $\vec{q}$ is $e^{-i\vec{q}.\vec{P}a}$, where $\vec{P}$ is the continuum momentum vector operator.

In the momentum mapping, a shift is naturally written in terms of the discrete momentum operator:\ a shift by $\vec{q}$ is given by $e^{-i\vec{q}.\vec{P}_d a}$, where $\vec{P}_d$ is the discrete momentum vector operator.  But when we embed the discrete momentum states into the continuum we do it by mapping them to continuum momentum states.  This means that applying $e^{-i\vec{q}.\vec{P}a}$ has the same effect as $e^{-i\vec{q}.\vec{P}_d a}$.

\subsection{The Lie-Trotter Product Formula}
\label{sec:The Lie-Trotter Product Formula}
Let us look again at the quantum walk from section \ref{sec:Continuum Limits}, which evolves via
\begin{equation}
 U=e^{-im\sigma_xa}e^{-iP\sigma_za}.
\end{equation}
The error in approximating the evolution of a Dirac particle over a time $t=Na$ is given by
\begin{equation}
\label{eq:LTform1}
 \|(e^{-iHt}-U^N)\ket{\psi}\|_2\leq  \|e^{-iHt}-U^N\|,
\end{equation}
where $\|\cdot\|$ is the operator norm.\footnote{Equation (\ref{eq:LTform1}) follows from the definition of the operator norm, $\|A\|^2=\displaystyle\max_{\langle\psi|\psi\rangle=1} \bra{\psi}A^{\dagger}A\ket{\psi}$.}  Now, the operator norm of the momentum operator is $\f{\pi}{a}$, which grows too quickly with $a$ to make this formula useful without more care.  So we introduce a cutoff $\Lambda$ such that the states $\ket{\psi}$ we consider have momentum restricted to values of $p$ with $|p|\leq \Lambda$.  This means it is enough to bound $\|e^{-iHt}-U^N\|_{\Lambda}$, where $\|\cdot\|_{\Lambda}$ is the operator norm restricted to the subspace with momentum cutoff $\Lambda$.  Next, we use the following useful formula for unitaries \cite{NC00}
\begin{equation}
 \|U_1...U_N-V_1...V_N\|\leq N\max_i\|U_i-V_i\|.
\end{equation}
What this says is that when approximating one product of unitaries by another the errors add.  The final thing we need is
\begin{equation}
 \|e^{-im\sigma_xa}e^{-iP\sigma_za}-e^{-i(P\sigma_z+m\sigma_x)a}\|_{\Lambda}=O(\Lambda^2a^2),
\end{equation}
which follows by Taylor expanding both terms.  Putting all these together leads to
\begin{equation}
 \|e^{-iHt}-U^N\|_{\Lambda}= O(\Lambda^2 a)
\end{equation}
since $t=Na$ is a constant.  Then, provided we allow $\Lambda$ to grow sufficiently slowly as $a\rightarrow 0$, this proves the result we need.

This is very similar to the proof of the Lie product formula \cite{Hall03,NC00}.
\begin{theorem}
\label{th:Lie}
 Given two bounded operators $A$ and $B$ on a Hilbert space.
 \begin{equation}
 \label{eq:Lie}
 \|(e^{iA/N}e^{iB/N})^N-e^{i(A+B)}\|=O(K^2/N),
\end{equation}
where $K=\max\{\|A\|,\|B\|\}$.
\end{theorem}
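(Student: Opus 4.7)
The plan is to follow the two-step strategy already implicit in the discussion preceding the theorem. Set $U = e^{iA/N}e^{iB/N}$ and $V = e^{i(A+B)/N}$, so the quantity to bound is $\|U^N - V^N\|$. Because $U$ and $V$ are unitary, the standard telescoping estimate $\|U^N - V^N\| \leq N\|U - V\|$ (already cited in the text) reduces the theorem to showing the single-step error satisfies $\|U - V\| = O(K^2/N^2)$.

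For that single-step estimate, I would Taylor expand both sides to second order in $1/N$. Writing $e^{iA/N} = I + iA/N - A^2/(2N^2) + R_A$, and similarly for $B$, the remainder bound $\|R_A\| \leq \sum_{k\geq 3}(K/N)^k/k! = O(K^3/N^3)$ follows from $\|A^k\|\leq K^k$. Multiplying out and collecting terms gives
\begin{equation*}
U = I + \tfrac{i}{N}(A+B) - \tfrac{1}{2N^2}\bigl(A^2 + 2AB + B^2\bigr) + O(K^3/N^3),
\end{equation*}
while expanding $V$ directly gives
\begin{equation*}
V = I + \tfrac{i}{N}(A+B) - \tfrac{1}{2N^2}(A+B)^2 + O(K^3/N^3) = I + \tfrac{i}{N}(A+B) - \tfrac{1}{2N^2}\bigl(A^2 + AB + BA + B^2\bigr) + O(K^3/N^3).
\end{equation*}
Subtracting, the zeroth and first order pieces cancel, the quadratic terms combine to $(BA - AB)/(2N^2)$, and one is left with $U - V = \tfrac{1}{2N^2}[B,A] + O(K^3/N^3)$. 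Since $\|[A,B]\| \leq 2K^2$, this gives $\|U - V\| = O(K^2/N^2)$.

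Combining with the telescoping inequality immediately yields $\|U^N - V^N\| = O(K^2/N)$, proving the theorem. There is no genuine conceptual obstacle here; the only care needed is bookkeeping so that the constants hidden in the Taylor remainders depend only on $K$ and not separately on $A$, $B$, or $N$. If a fully explicit cubic bound is desired, one can replace the geometric-series estimate by the integral form of the remainder, $R_A = (iA/N)^3\int_0^1 \tfrac{(1-s)^2}{2} e^{isA/N}\,ds$, whose norm is at most $K^3/(6N^3)$ by unitarity of $e^{isA/N}$; this makes the constant in the $O(K^2/N)$ bound completely concrete.
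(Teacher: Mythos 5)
Your proof is correct and takes essentially the same route the paper itself sketches in the paragraphs preceding the theorem (telescoping via $\|U_1\cdots U_N - V_1\cdots V_N\|\leq N\max_i\|U_i-V_i\|$ plus a second-order Taylor expansion of the single-step error), which the paper then delegates to \cite{Hall03,NC00} rather than writing out. The one small caveat is that your appeal to unitarity of $U$ and $V$ tacitly assumes $A$ and $B$ are self-adjoint --- for general bounded operators the telescoping step picks up factors of $\|U\|,\|V\|\leq e^{K/N}$ and the constant acquires an extra $e^{2K}$ --- but this restriction is also elided in the theorem's statement and holds in every application in the paper.
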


Although equation (\ref{eq:Lie}) is very useful, it is not applicable to some of the limits we will encounter later in section \ref{sec:The Continuum Limit}.  There we will have to use different methods to take the continuum limit.

This theorem is also valid in a sense for unbounded operators on infinite dimensional Hilbert spaces.  This is the content of Trotter's formula \cite{Trotter59}.  (See \cite{Teschl09} for a nice discussion.)
\begin{theorem}
 Let $A$, $B$ and $A+B$ be self-adjoint operators on a Hilbert space.  Then, for any state $\ket{\psi}\in \mathcal{D}(A)\cap \mathcal{D}(B)$, where $\mathcal{D}(A)$ and $\mathcal{D}(B)$ are the domains of $A$ and $B$ respectively.  
 \begin{equation}
 \|\left((e^{iA/N}e^{iB/N})^N-e^{i(A+B)}\right)\ket{\psi}\|_2\rightarrow 0
\end{equation}
as $N\rightarrow \infty$.
\end{theorem}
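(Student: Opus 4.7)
The plan is to prove Trotter's formula by combining a pointwise derivative calculation at the origin with a telescoping identity, making crucial use of the fact that $e^{iA/N}$, $e^{iB/N}$, and $e^{i(A+B)/N}$ are unitaries. Because $A$ and $B$ are unbounded the operator-norm bound used in Theorem \ref{th:Lie} is unavailable, so I have to argue vector-by-vector on $\mathcal{D}(A)\cap\mathcal{D}(B)$.

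First I would establish the basic derivative at zero. Since $A$, $B$, and $A+B$ are self-adjoint, Stone's theorem gives, for any $\ket{\psi}\in\mathcal{D}(A)\cap\mathcal{D}(B)$,
\begin{equation}
\tfrac{N}{i}\bigl(e^{iA/N}-\openone\bigr)\ket{\psi}\to A\ket{\psi},\qquad \tfrac{N}{i}\bigl(e^{iB/N}-\openone\bigr)\ket{\psi}\to B\ket{\psi},
\end{equation}
and similarly with $A+B$ in place of each. Expanding $e^{iA/N}e^{iB/N}\ket{\psi}$ and using the strong continuity of the group $e^{iA/N}$ applied to $\ket{\psi}$ (to handle the cross term), one obtains
\begin{equation}
N\bigl(e^{iA/N}e^{iB/N}-e^{i(A+B)/N}\bigr)\ket{\psi}\;\longrightarrow\;0
\end{equation}
as $N\to\infty$, for every fixed $\ket{\psi}\in\mathcal{D}(A)\cap\mathcal{D}(B)$.

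Next I would insert this into the standard telescoping identity
\begin{equation}
\bigl(e^{iA/N}e^{iB/N}\bigr)^N-e^{i(A+B)}=\sum_{k=0}^{N-1}\bigl(e^{iA/N}e^{iB/N}\bigr)^{N-1-k}\bigl(e^{iA/N}e^{iB/N}-e^{i(A+B)/N}\bigr)e^{ik(A+B)/N}.
\end{equation}
Because $(e^{iA/N}e^{iB/N})^{N-1-k}$ is unitary, applying both sides to $\ket{\psi}$ and using the triangle inequality gives
\begin{equation}
\Bigl\|\bigl((e^{iA/N}e^{iB/N})^N-e^{i(A+B)}\bigr)\ket{\psi}\Bigr\|_2\le\sum_{k=0}^{N-1}\bigl\|\bigl(e^{iA/N}e^{iB/N}-e^{i(A+B)/N}\bigr)\ket{\psi_{k,N}}\bigr\|_2,
\end{equation}
where $\ket{\psi_{k,N}}:=e^{ik(A+B)/N}\ket{\psi}$. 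Each summand is $o(1/N)$ by the first step, and there are $N$ of them, so I need the decay to be uniform in $k$.

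The main obstacle is exactly this uniformity: the bound from Step 1 is only pointwise, whereas I need it along the entire sequence of vectors $\ket{\psi_{k,N}}$. To handle this I would observe that the vectors $\ket{\psi_{k,N}}$ all lie in the image of the compact set $\{e^{is(A+B)}\ket{\psi}:s\in[0,1]\}$, which is norm-compact by strong continuity of the one-parameter unitary group generated by $A+B$. A standard equicontinuity argument then upgrades the pointwise convergence of $N(e^{iA/N}e^{iB/N}-e^{i(A+B)/N})$ to zero on this compact set into uniform convergence, provided the compact set is contained in $\mathcal{D}(A)\cap\mathcal{D}(B)$. Since $\ket{\psi}\in\mathcal{D}(A+B)$ and the group $e^{is(A+B)}$ leaves $\mathcal{D}(A+B)$ invariant, care is required to show the orbit also lies in $\mathcal{D}(A)\cap\mathcal{D}(B)$; the cleanest way around this subtlety is to invoke Chernoff's product formula, which takes $F(s):=e^{isA}e^{isB}$ (a strongly continuous family of contractions with $F(0)=\openone$ and strong derivative $i(A+B)$ on the core $\mathcal{D}(A)\cap\mathcal{D}(B)$) and directly concludes $F(1/N)^N\ket{\psi}\to e^{i(A+B)}\ket{\psi}$. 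This is the route I would take to close the proof, since it packages the uniformity issue into a known semigroup-theoretic lemma rather than redoing it by hand.
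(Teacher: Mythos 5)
The paper does not prove this theorem: it is quoted as a known result with citations to Trotter's original paper and to Teschl, so there is no in-text argument to compare yours against. Your sketch is, in outline, the standard proof from that literature and it is correct. The derivative computation in your first step is right (writing $N(e^{iA/N}e^{iB/N}-\openone)\ket{\psi}$ as $e^{iA/N}\cdot N(e^{iB/N}-\openone)\ket{\psi}+N(e^{iA/N}-\openone)\ket{\psi}$ and using Stone's theorem plus strong continuity), the telescoping identity and the unitarity of the factors are exactly what replaces the operator-norm bound of the bounded case, and you correctly identify uniformity over the orbit $\{e^{is(A+B)}\ket{\psi}\}$ as the crux. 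Two refinements: first, under the stated hypothesis that $A+B$ is self-adjoint on $\mathcal{D}(A)\cap\mathcal{D}(B)$, the group $e^{is(A+B)}$ preserves its own domain, so the orbit automatically stays in $\mathcal{D}(A)\cap\mathcal{D}(B)$ — the worry you raise is real only under the weaker hypothesis of essential self-adjointness. Second, norm-compactness of the orbit in $\mathcal{H}$ is not by itself enough to upgrade pointwise to uniform convergence, since the maps $\phi\mapsto N(e^{iA/N}e^{iB/N}-e^{i(A+B)/N})\phi$ are only bounded by $2N$ on $\mathcal{H}$; the standard direct argument (Reed--Simon, Teschl) works with compactness of the orbit in the \emph{graph norm} of $A+B$ together with the uniform boundedness principle for these maps viewed from the graph-norm space into $\mathcal{H}$. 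Deferring to Chernoff's product formula, as you do, packages exactly this and legitimately closes the proof (indeed it yields strong convergence on all of $\mathcal{H}$, slightly more than the statement claims).
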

The notion of convergence is different here:\ it is not convergence in the operator norm, but convergence of the dynamics for fixed states.  Although we will not need this formula in the main results here, it seems that it should be useful for studying convergence of quantum walks in external fields.  We will discuss this problem in more detail in chapter \ref{chap:Conclusions and Open Problems 1}.

\subsection{A General Recipe}
\label{sec:A General Recipe}
Here we will cook up a scheme for constructing quantum walks that have a desired continuum limit.  For example, suppose we want a quantum walk that has the two dimensional Dirac equation as its continuum limit.  The continuum Hamiltonian, as we saw in section \ref{sec:Dirac and Weyl particles}, is
\begin{equation}
 h=P_x\sigma_x+P_y\sigma_y+m\sigma_z,
\end{equation}
where $P_x$ and $P_y$ are momentum operators along the $x$ and $y$ axes respectively.  The quantum walk given by
\begin{equation}
\label{eq:2dQW}
\begin{split}
 U & =e^{-im\sigma_za}e^{-iP_x\sigma_xa}e^{-iP_y\sigma_ya}\\
 & = e^{-i(P_x\sigma_x+P_y\sigma_y+m\sigma_z)a}+O(a^2)
 \end{split}
\end{equation}
will do the trick.  The second two terms in the first line are conditional shifts.  For example, 
\begin{equation}
e^{-iP_x\sigma_xa}=S_{x}\ket{\uparrow_{x}}\bra{\uparrow_{x}\!}+S^{\dagger}_{x}\ket{\downarrow_{x}}\bra{\downarrow_{x}\!}=T_x,
\end{equation}
where $S_{x}=e^{-iP_xa}$ is a shift by one lattice site along the $x$ direction.  So the evolution in equation (\ref{eq:2dQW}) is indeed a quantum walk, and it is given by a product of two conditional shifts and a coin operator $e^{-im\sigma_za}$.

\begin{figure}[ht!]
\centering
    \resizebox{8.5cm}{!}{\includegraphics{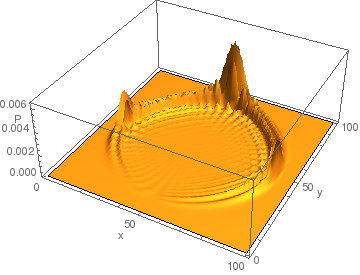}}
    \footnotesize{\caption[Position distribution of the massless Dirac quantum walk]{Position probability distribution of the massless Dirac quantum walk in two dimensional space after $40$ timesteps, with $100\times 100$ sites and an initial Gaussian state centred at site $(50,50)$.  The x- and y-axes correspond to the position of the particle.  The z-axis shows the probability $P$ of finding the particle at that point.}}
    \label{fig:7}
\end{figure}

In a remarkable paper, \cite{Bial94}, quantum walks giving rise to Weyl and Dirac particles in three dimensional space were presented.  They can be constructed in the same way.  For the case of a Weyl particle, the evolution operator is a product of conditional shifts in each direction:
\begin{align}
\label{eq:GR1}
U_R & =T_xT_yT_z\\
 & =e^{-iP_x\sigma_xa}e^{-iP_y\sigma_ya}e^{-iP_z\sigma_za}\\
 & =e^{-i\vec{P}.\vec{\sigma}a}+O(a^2).
\end{align}
Again the conditional shifts are
\begin{equation}
 T_{b}=S_{b}\ket{\uparrow_{b}}\bra{\uparrow_{b}\!}+S^{\dagger}_{b}\ket{\downarrow_{b}}\bra{\downarrow_{b}\!}
\end{equation}
where $b\in\{x,y,z\}$, $S_{b}$ shifts one lattice site in the $b$ direction and $\ket{\uparrow_{b}}$ and $\ket{\downarrow_{b}}$ are spin up and spin down along the $b$ axis.  So, for example, $T_{z}$ shifts a particle in the state $\ket{\vec{n}}\ket{\uparrow_{z}}$ one step in the $+\hat{z}$ direction.

The evolution operator in equation (\ref{eq:GR1}) approximates evolution via the Hamiltonian for a right handed Weyl fermion, $H=\vec{\sigma}.\vec{P}$.  To get a left handed fermion, we could consider the same evolution but with the signs in the exponents flipped, which we will call $U_L$.  In the continuum limit, this would result in the Hamiltonian $H=-\vec{\sigma}.\vec{P}$.

It is interesting that this discrete evolution essentially uses a body centred cubic neighbourhood.  In fact, a seemingly more natural choice, the cubic neighbourhood, cannot give the three dimensional Weyl equation in the continuum limit \cite{Bial94}.

To get a quantum walk that behaves like a Dirac particle in three dimensional space in the continuum limit, we need a particle with a four dimensional internal degree of freedom.  Suppose that the extra degree of freedom has Hilbert space $\mathcal{H}_S\otimes\mathcal{H}_H$, where $\mathcal{H}_H$ is spanned by the orthonormal states $\ket{l}$ and $\ket{r}$.  We take the evolution operator to be
\begin{align}
 U & =e^{-im\beta a}e^{-iP_x\alpha_xa}e^{-iP_y\alpha_ya}e^{-iP_z\alpha_za}\\
 & =e^{-i(\vec{P}.\vec{\alpha}+m\beta)a}+O(a^2). 
\end{align}
where
\begin{equation}
 \begin{split}
  \beta & =\openone_S\otimes(\ket{l}\bra{r}+\ket{r}\bra{l})\\
  \alpha_i & =\sigma_i\otimes(\ket{l}\bra{l}-\ket{r}\bra{r}),
 \end{split}
\end{equation}
with $\openone_S$ the identity on $\mathcal{H}_S$.  A little work shows that $U =W(U_R\ket{r}\bra{r}+U_L\ket{l}\bra{l})$, so it is really just a direct sum of both the left and right handed Weyl quantum walk, with the mass term $W=e^{-im\beta a}$.  The continuum limit of this quantum walk was treated rigorously in \cite{AFN13}.

We will often refer to these quantum walks as Dirac or Weyl quantum walks if they give the Dirac or Weyl equation in the continuum limit.  Interestingly, at the discrete level, these quantum walks do not generally have the rotational symmetry of the lattice, something we will return to in section \ref{sec:Symmetries on the Lattice}.

Clearly, this recipe will work for other Hamiltonians that are linear in the momentum operator, though we have already seen the most important examples here.

\subsection{Faster Converging Quantum Walks}
\label{sec:Faster Converging Quantum Walks}
In fact, we can do better than the examples in the previous section by modifying the dynamics.  For example, to get a discrete-time quantum walk that better approximates a particle obeying the Dirac equation in one dimension, we apply different operators over even and odd timesteps.\footnote{We could just think of two of these timesteps as one single timestep, which would allow us to keep time translation invariance of the dynamics.}  We can take $U$ to be
\begin{equation}
\begin{split}
 & e^{-i\sigma_z Pa}e^{-im\sigma_xa}\ \textrm{for}\ n\ \textrm{odd,}\\
 & e^{-im\sigma_xa}e^{-i\sigma_z Pa}\ \textrm{for}\ n\ \textrm{even.}
 \end{split}
\end{equation}
Then we have
\begin{equation}
 (e^{-im\sigma_xa}e^{-i\sigma_z Pa})(e^{-i\sigma_z Pa}e^{-im\sigma_xa})=e^{-i(P\sigma_z-m\sigma_x)2a}+O(a^3).
\end{equation}
So the error is $O(a^3)$ now, compared to $O(a^2)$ before.  This is basically just using Strang splitting to approximate $e^{i(A+B)2a}$ by $e^{iAa}e^{iB2a}e^{iAa}$, which has error $O(a^3)$.  Normally, when used as a tool for simulation, the operators in the exponents would be local.  This is not the case for us:\ $P_i$ is a nonlocal operator.

This trick works more generally, as we can see from the following theorem.
\begin{theorem}
 Suppose we have a continuum quantum particle with the Hamiltonian $H=\textstyle\sum_l H_l$,
 where each term $H_l$ is either linear in a component of the momentum vector operator ($P\sigma_z$, for example) or constant, like $m\sigma_x$.  Then the evolution operator given by
 \begin{equation}
 V=(e^{-iH_1a}...e^{-iH_na})(e^{-iH_na}...e^{-iH_1a})
 \end{equation}
converges to evolution via $H$ with error $O(\Lambda^3a^2)$, in the sense that
\begin{equation}
 \|V^{N/2}-e^{-iHt}\|_{\Lambda}= O(\Lambda^3a^2),
\end{equation}
where $t=Na$, and $\Lambda$ is the momentum cutoff.
\end{theorem}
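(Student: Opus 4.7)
The plan is to view $V$ as a palindromic (generalized Strang) splitting whose single-step error in approximating $e^{-iH(2a)}$ is $O(\Lambda^3 a^3)$, and then to use the composition estimate $\|U_1\cdots U_M - W_1\cdots W_M\|\leq M\max_i\|U_i - W_i\|$ from the previous subsection to sum $N/2$ such errors. The base case is $n=2$: writing $H=A+B$, the palindromic product collapses to $V = e^{-iAa}\,e^{-iB(2a)}\,e^{-iAa}$, which is exactly Strang splitting with effective timestep $2a$. On the subspace with momentum cutoff $\Lambda$, every $H_l$ is bounded: the momentum-linear pieces satisfy $\|H_l\|_\Lambda = O(\Lambda)$ while the constant pieces have norm $O(1)$. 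The standard Baker--Campbell--Hausdorff expansion carried to third order then yields $V = e^{-iH(2a)} + O(\Lambda^3 a^3)$ in restricted operator norm, with the $\Lambda^3$ arising from the nested commutators $[A,[A,B]]$ and $[B,[A,B]]$.

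For general $n$ I would induct on the number of summands, peeling off the outermost factors as $V_n = e^{-iH_1 a}\,V_{n-1}\,e^{-iH_1 a}$, where $V_{n-1}$ is the palindromic product built from $H_2,\ldots,H_n$. The inductive hypothesis supplies $V_{n-1} = e^{-i(H-H_1)(2a)} + O(\Lambda^3 a^3)$, and applying the base case once more with $A=H_1$ and $B=H-H_1$ picks up a further $O(\Lambda^3 a^3)$ error, giving $V_n = e^{-iH(2a)} + O(\Lambda^3 a^3)$. Composing $N/2$ copies of $V$ via the composition estimate then yields
\begin{equation*}
\|V^{N/2} - e^{-iHt}\|_\Lambda \;\leq\; \tfrac{N}{2}\cdot O(\Lambda^3 a^3) \;=\; O(\Lambda^3 a^2),
\end{equation*}
since $t = Na$ is held fixed as $a \to 0$.

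The main obstacle is the inductive Strang step: one must ensure the implicit constant does not blow up with $n$. This is fine because $n$ is a fixed parameter of the Hamiltonian rather than of the lattice spacing, so the constant depends only polynomially on $n$ and on $\max_l \|H_l\|_\Lambda$. A secondary subtlety is that the momentum-linear $H_l$ are unbounded on the full Hilbert space, but the BCH estimate is only ever needed on the cutoff subspace, which is preserved by each $e^{-iH_l a}$ since the $H_l$ act polynomially in the momenta and as constants on the internal degree of freedom.
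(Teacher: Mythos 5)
Your proposal is correct and follows essentially the same route as the paper: both rest on the single-step Strang estimate $\|e^{-iAa}e^{-iB(2a)}e^{-iAa}-e^{-i(A+B)(2a)}\|_{\Lambda}=O(K^3a^3)$ applied repeatedly to telescope the palindromic product into $e^{-iH(2a)}+O(\Lambda^3a^3)$ (your outside-in induction unrolls to exactly the paper's inside-out iteration starting from $B=H_n$, $A=H_{n-1}$), followed by the composition bound $\|V^{N/2}-e^{-iHt}\|_{\Lambda}\leq \tfrac{N}{2}\|V-e^{-iH(2a)}\|_{\Lambda}$ and the observation that $K=O(\Lambda)$ on the cutoff subspace. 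Your remark that the cutoff subspace is preserved by each $e^{-iH_la}$ is a small point the paper leaves implicit, but it does not change the argument.
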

\begin{proof}
We restrict to states in a subspace with momentum components $|p_b|\leq \Lambda$.  The next step is to use the result, which follows from Taylor expanding, that
\begin{equation}
\label{eq:fast1}
 \|e^{iAa}e^{iB2a}e^{iAa}-e^{i(A+B)2a}\|_{\Lambda}= O(K^3a^3),
\end{equation}
where now $K=\max\{\|A\|_{\Lambda},\|B\|_{\Lambda}\}$.  By using the triangle inequality and repeated application of this formula (first using equation (\ref{eq:fast1}) with $B=H_n$ and $A=H_{n-1}$, then using it again with $B=H_n+H_{n-1}$ and $A=H_{n-2}$ and so on), we get
\begin{equation}
\|V-e^{-iH(2a)}\|_{\Lambda}=O(K^3a^3),
\end{equation}
where now $K=\max\|H_l\|_{\Lambda}$.  Then, by choosing $t=Nq	a$, we have
\begin{equation}
 \|V^{N/2}-e^{-iHt}\|_{\Lambda}\leq \frac{N}{2}\|V-e^{-iH(2a)}\|_{\Lambda}= O(K^3a^2).
\end{equation}
As before, we want to let $\Lambda$ tend to infinity as $a\rightarrow 0$.  As $\Lambda$ goes to infinity, $K=O(\Lambda)$ since $H_l$ are either constant or linear in a component of the momentum operator.
\end{proof}
It may be possible to construct quantum walks that are better approximations to physical particles using higher order Suzuki-Trotter formulas \cite{Suzuki90}, though one would need to ensure that all terms like $(e^{iPa})^c$ have integer $c$, otherwise the evolution may not be causal.

\subsection{Symmetries on the Lattice}
\label{sec:Symmetries on the Lattice}
It is a little disconcerting that the quantum walks we have seen do not generally have the rotational symmetry of the lattice \cite{Short13}, despite having Lorentz symmetry in the continuum limit.  Here is an argument to see why this is the case.  Take the quantum walk
\begin{align}
U & =e^{-iP_x\sigma_xa}e^{-iP_y\sigma_ya}\\
 & =T_xT_y,
\end{align}
where again
\begin{equation}
 T_{b}=S_{b}\ket{\uparrow_{b}}\bra{\uparrow_{b}\!}+S^{\dagger}_{b}\ket{\downarrow_{b}}\bra{\downarrow_{b}\!}.
\end{equation}
Already this seems suspect.  The conditional shifts are applied in a specific order.  To check whether there is the rotational symmetry of the lattice, we can see if a rotation by $90$ degrees commutes with the evolution operator.  Whatever the unitary is it will take $S_x$ to $S_y$ and $S_y$ to $S_x^{\dagger}$.  It may also have some effect on the coin degree of freedom, in which case a unitary $R$ acts on the coin.  Then under this rotation
\begin{equation}
\begin{split}
 U=e^{-iP_x\sigma_xa}e^{-iP_y\sigma_ya} & \rightarrow R^{\dagger}e^{-iP_y\sigma_xa}e^{iP_x\sigma_ya}R\\
 & =e^{-iP_y(R^{\dagger}\sigma_xR)a}e^{iP_x(R^{\dagger}\sigma_yR)a}=V.
 \end{split}
\end{equation}
Regardless of what the effect of the unitary $R$ is on $\mathcal{H}_C$ is, this transformed operator first does a conditional shift in the $x$ direction.  On the other hand, $U$ first does a conditional shift in the $y$ direction, so $U$ is not invariant under this transformation.  To see this, consider the initial state $\ket{0,0}\ket{\uparrow_y}$.  After applying $U$, the state only has support on positions $(1,1)$ and $({-1},1)$.  In particular, the state has no support on $(1,{-1})$ and $({-1},{-1})$.  On the other hand, $V$ first does a conditional shift in the $x$ direction.  Whether the particle moves in the ${+x}$ or ${-x}$ direction depends on whether its extra degree of freedom is in a ${-1}$ or ${+1}$ eigenstate of $R^{\dagger}\sigma_yR$.  Because each eigenvector of $R^{\dagger}\sigma_xR$ has nonzero inner product with each eigenvector of $R^{\dagger}\sigma_yR$, after applying $V$ to $\ket{0,0}\ket{\uparrow_y}$, the particle will have some amplitude to move down in the $y$ direction.  So the particle's state {\em will} have support on $(1,{-1})$ and $({-1},{-1})$.  Therefore, $V$ is not the same as $U$.

There is a cheap way to construct a quantum walk that has Lorentz symmetry in the continuum, while also having lattice rotational symmetry at a discrete scale.  This works by taking a quantum walk with a four dimensional coin degree of freedom $\mathcal{H}_C=\mathcal{H}_S\otimes\mathcal{H}_H$, where $\mathcal{H}_H$ has orthonormal basis $\ket{l}$ and $\ket{r}$.  Let the evolution operator be
\begin{equation}
 U=e^{-iP_x\sigma_xa}e^{-iP_y\sigma_ya}\ket{l}\bra{l}+e^{-iP_y\sigma_ya}e^{-iP_x\sigma_xa}\ket{r}\bra{r}.
\end{equation}
Note the different order of the conditional shifts in both terms.  Under a rotation by $90$ degrees, the momentum operators will transform so that $U$ becomes
\begin{equation}
 e^{-iP_y\sigma_xa}e^{iP_x\sigma_ya}\ket{l}\bra{l}+e^{iP_x\sigma_ya}e^{-iP_y\sigma_xa}\ket{r}\bra{r}.
\end{equation}
Now, there is rotational symmetry if there is a unitary $R$ on $\mathcal{H}_C$ that undoes this transformation.  To see that such an $R$ exists, we construct it in two steps.  First, apply $\sqrt{\sigma_z}=\left(\begin{smallmatrix} 1 & 0\\ 0 & i \end{smallmatrix}\right)$, which takes $\sigma_x$ to $\sigma_y$ and $\sigma_y$ to $-\sigma_x$.  So we get
\begin{equation}
 e^{-iP_y\sigma_ya}e^{-iP_x\sigma_xa}\ket{l}\bra{l}+e^{-iP_x\sigma_xa}e^{-iP_y\sigma_ya}\ket{r}\bra{r}.
\end{equation}
Second, we apply $\ket{l}\bra{r}+\ket{r}\bra{l}$ to get back $U$.

So this quantum walk has the rotational symmetry of the lattice,\footnote{All lattice rotations here are products of the rotation by $90$ degrees, so it was enough to show that the evolution is invariant under a rotation by $90$ degrees.  Also, $R$ generates a representation of this discrete rotation group.} and, as it is just a direct sum of two quantum walks that have Lorentz symmetry in the continuum, the continuum limit has Lorentz symmetry.

One of the reasons this is a little unsatisfying is that the degrees of freedom that allow us to have rotational symmetry on the lattice are not the same as those allowing rotational symmetry in the continuum.  Also, the evolution is essentially a direct sum of two independent quantum walks, which is a little unnatural.  Still, similar tricks work for quantum walks leading to relativistic dynamics in higher spatial dimensions.

\section{Two Dimensional Coins are Special}
\label{sec:Two Dimensional Coins are Special}
In this section, we will mostly focus on quantum walks with a two dimensional coin, following \cite{FS14}.  Our first task, in section \ref{sec:Form of an arbitrary Quantum Walk Unitary} will be to come up with a simple expression for the quantum walk evolution operator.  Following this, in section \ref{sec:Mass for Quantum Walks}, we will introduce the concept of mass for general quantum walks.  Then we give a general formula for the continuum limit Hamiltonian corresponding to a quantum walk in section \ref{sec:The Continuum Limit}.  After taking the continuum limit, in section \ref{sec:The Continuum Hamiltonian}, we will see that for a massless quantum walk with a two dimensional coin these continuum limits correspond to evolution via a massless relativistic equation of motion.  This is provided that we do a simple change of coordinates or just demand rotational symmetry in the continuum limit.  It is somewhat surprising that this occurs so generally:\ it would be reasonable to assume that quantum walks that give rise to relativistic dynamics in the continuum limit would have to be very fine-tuned.  Finally, in section \ref{sec:Counterexample with a Three Dimensional Coin}, we see that this result does not hold for quantum walks with higher dimensional coins.  We do this by looking at a counterexample with a three dimensional coin that does not have relativistic symmetry in the continuum limit.

\subsection{General Form of Quantum Walks}
\label{sec:Form of an arbitrary Quantum Walk Unitary}
It is not known whether all quantum walks have a decomposition into products of shift operators and coins, like the examples we saw in section \ref{sec:A General Recipe}.  So here we will write the evolution operator in as concise a form as possible.  First, we can write the evolution operator as
\begin{equation}
 U=
\displaystyle\sum_{\vec{n},
\vec{q}}A^{\vec{n}}_{\vec{q}}\ket{\vec{n}+\vec{q}}\bra{\vec{n}},
\end{equation}
where $A^{\vec{n}}_{\vec{q}}=\bra{\vec{n}+\vec{q}}U\ket{\vec{n}}$ is an operator on $\mathcal{H}_C$.  Now we impose translational invariance.  This means that $A^{\vec{n}}_{\vec{q}}$ does not actually depend on $\vec{n}$.  Let us write $A_{\vec{q}}=A^{\vec{n}}_{\vec{q}}$ and define $S_{\vec{q}}$ to be the operator that shifts a position state by $\vec{q}$.  Then we have that
\begin{equation}
 U=\displaystyle\sum_{\vec{q}}A_{\vec{q}}S_{\vec{q}}.
\end{equation}
We also require quantum walks to be causal, so that $A_{\vec{q}}$ is only nonzero for a finite set of vectors $\vec{q}$.  We denote this set by $Q$ and refer to it as the neighbourhood.

As mentioned before, it is interesting that an extra degree of freedom is necessary for these quantum walks to have nontrivial evolution.  Trivial in this context means that $U$ is just a shift operator times a phase \cite{Meyer96a}.  Very little is known about the general properties of quantum walk unitaries.  We revisit this in section \ref{sec:Quantum Walks on a Line: Abstract Theory}.

\subsection{Mass for Quantum Walks}
\label{sec:Mass for Quantum Walks}
Recall how we took the continuum limit of the quantum walk with unitary
\begin{equation}
\label{eq:1dmass}
 U=e^{-im\sigma_xa}e^{-iP\sigma_za}
\end{equation}
in section \ref{sec:Continuum Limits}.  Although the conditional shift $e^{-iP\sigma_za}$ is written in terms of the lattice spacing, the operator does not depend on $a$.  Having $a$ here is an artifact arising from our method of taking the continuum limit, as it is $P$ that is identified with the continuum momentum operator.  And the continuum limit really requires us to look at states that are smooth compared to the lattice.  On the other hand, the coin operator $e^{-im\sigma_xa}$ does depend on $a$ and tends to the identity as $a\rightarrow 0$.  With a trivial coin operator, meaning the identity operator, this artificial looking prescription is unnecessary.  If the coin is not the identity, however, we need a general scheme for ensuring that we can take the continuum limit.  To this end, we will define mass for quantum walks.

First, unitarity implies that
\begin{equation}
 U^{\dagger}U=\displaystyle\sum_{\vec{q},\vec{p}\, \in Q}A^{\dagger}_{\vec{q}}S^{\dagger}_{\vec{q}}A_{\vec{p}}S_{\vec{p}}=\openone.
\end{equation}
For this to be possible, terms like $S^{\dagger}_{\vec{q}}S_{\vec{p}}$ with $\vec{q}\neq\vec{p}$ must be zero.  This means we need
\begin{equation}
 \displaystyle\sum_{\substack{\vec{q},\vec{p}\, \in Q \\ \vec{q}\neq\vec{p}}}A^{\dagger}_{\vec{q}}A_{\vec{p}}=0\ \textrm{and}\ 
 \displaystyle\sum_{\vec{q}\in Q}A^{\dagger}_{\vec{q}}A_{\vec{q}}=\openone,
\end{equation}
which implies that $\sum_{\vec{q}}A_{\vec{q}}$ is a unitary operator on $\mathcal{H}_C$.  So we can write
\begin{equation}
 U=W\displaystyle\sum_{\vec{q}\in Q}A^{\prime}_{\vec{q}}S_{\vec{q}},
\end{equation}
where $W=\sum_{\vec{q}}A_{\vec{q}}$ is a unitary on $\mathcal{H}_C$ and $A^{\prime}_{\vec{q}} = W^{\dagger} A_{\vec{q}}$.  This means that $\sum_{\vec{q}}A^{\prime}_{\vec{q}}=\openone$.

Then, if $W=\openone$, we say that the particle is massless.  This is reminiscent of the example above in equation (\ref{eq:1dmass}).  There, if $m=0$, then $e^{-im\sigma_x a}=\openone$.  In general, to take a continuum limit in the massive case, we will let $W$ tend to $\openone$. 

In a way, massless quantum walks are more natural from the point of view of continuum limits.  This is because only the lattice spacing and the length of the timestep goes to zero, while the unitary on the lattice does not change.  This is not the case, however, for massive quantum walks, where the unitary changes as the lattice spacing goes to zero.\footnote{Nevertheless, doing this is necessary so that we recover the Dirac equation in the continuum limit.  That said, it is reassuring that fermions are fundamentally massless in the standard model, where they only acquire mass because of their interaction with the Higgs field.}

\subsection{The Continuum Limit}
\label{sec:The Continuum Limit}
Now we will take the continuum limit of these quantum walks.  First, it will be convenient to start with the case where the quantum walk is massless.  We will prove that in this case the continuum Hamiltonian is straightforward to write down.  This is one of the main results of \cite{FS14}.
\begin{theorem}
\label{th:13}
Given a massless quantum walk with evolution operator
\begin{equation}
 U=\displaystyle\sum_{\vec{q}\in Q}A_{\vec{q}}S_{\vec{q}},
\end{equation}
with $\textstyle{\sum_{\vec{q}}A_{\vec{q}}=\openone}$, then in the continuum limit this corresponds to evolution via the Hamiltonian
\begin{equation}
  H=\displaystyle \sum_{\vec{q}\in Q}\,\vec{q}.\vec{P}A_{\vec{q}}.
\end{equation}
\end{theorem}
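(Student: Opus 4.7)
The plan is to show directly that $U$ approximates $e^{-iHa}$ to sufficient order in $a$ on states with bounded momentum, and then to iterate $N$ times using the standard telescoping bound for products of unitaries. The starting point is to rewrite each shift in terms of the momentum operator, $S_{\vec{q}} = e^{-i \vec{q} \cdot \vec{P} a}$, where $\vec{P}$ is identified with the continuum momentum via either the block or momentum embedding discussed earlier. Because the coin operators $A_{\vec{q}}$ act on $\mathcal{H}_C$ while $\vec{P}$ acts on $\mathcal{H}_P$, they commute, which lets us Taylor expand each exponential freely.

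The computational core is then the expansion
\begin{equation}
U = \sum_{\vec{q} \in Q} A_{\vec{q}} \bigl(\openone - i a\, \vec{q}\cdot\vec{P} + O(\Lambda^2 a^2)\bigr)
= \openone - i a \sum_{\vec{q}\in Q} \vec{q}\cdot\vec{P}\, A_{\vec{q}} + O(\Lambda^2 a^2),
\end{equation}
where the masslessness hypothesis $\sum_{\vec{q}} A_{\vec{q}} = \openone$ kills the zeroth-order term and identifies the coefficient of $-ia$ as $H = \sum_{\vec{q}\in Q} \vec{q}\cdot\vec{P}\, A_{\vec{q}}$. I restrict throughout to the subspace with momentum cutoff $\Lambda$ so that $\|\vec{P}\|_\Lambda = O(\Lambda)$ and the Taylor remainder is genuinely controlled. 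A side check, enforced by $U^\dagger U = \openone$ to first order in $a$, is that $H$ is self-adjoint; equivalently, one can verify $H^\dagger = H$ directly by taking the adjoint of the leading-order expansion and using $\sum_{\vec{q}} A_{\vec{q}}^\dagger = \openone$.

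From this, the same Taylor expansion for the continuum evolution gives $e^{-iHa} = \openone - iaH + O(\|H\|_\Lambda^2 a^2)$, so $\|U - e^{-iHa}\|_\Lambda = O(\Lambda^2 a^2)$. Applying the unitary telescoping bound $\|U^N - V^N\| \leq N \|U - V\|$ with $Na = t$ fixed yields $\|U^N - e^{-iHt}\|_\Lambda = O(\Lambda^2 a)$, which tends to zero as $a \to 0$ provided $\Lambda$ is allowed to grow more slowly than $a^{-1/2}$, exactly as in section \ref{sec:The Lie-Trotter Product Formula}. Combined with the embedding lemmas of section \ref{sec:Approximating Continuum Models by Discrete Ones} that let us approximate any continuum state arbitrarily well by a cut-off discrete state, this gives the desired convergence of dynamics on states.

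The main subtlety I expect is not the algebra but the role of the cutoff $\Lambda$: without it, $\|\vec{P}\|$ is unbounded and neither the Taylor remainders nor the telescoping bound give anything finite, so the theorem must really be phrased as convergence on states of bounded momentum (or, equivalently, smooth enough test states) and then extended by density. A minor related point is that the standard Lie--Trotter theorem in the bounded form (theorem \ref{th:Lie}) does not quite apply verbatim because $\vec{P}$ is unbounded on the full Hilbert space; this is why the argument is phrased as a direct expansion on the cutoff subspace rather than as an invocation of theorem \ref{th:Lie}. Once the cutoff is in place, everything else is a bookkeeping exercise in powers of $\Lambda a$.
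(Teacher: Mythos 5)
Your proposal is correct and follows essentially the same route as the paper's proof: restrict to the momentum-cutoff subspace, Taylor expand both $U=\sum_{\vec q}A_{\vec q}e^{-i\vec q\cdot\vec P a}$ and $e^{-iHa}$, use $\sum_{\vec q}A_{\vec q}=\openone$ to cancel the zeroth-order term and identify $H$ at first order, bound the remainder by $C(\Lambda a)^2$, and telescope to get $O(t\Lambda^2 a)$ with $\Lambda$ growing slower than $a^{-1/2}$. Your side remark on self-adjointness of $H$ (via order-by-order unitarity) and on why theorem \ref{th:Lie} cannot be invoked verbatim are both sound additions that the paper leaves implicit.
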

\begin{proof}
We must evaluate
\begin{equation}
 \|e^{-iHt}\ket{\psi_{\Lambda}}-U^{N}\ket{\psi_{\Lambda}}\|_2 \leq \|e^{-iHt}-U^{N}\|_{\Lambda},
\end{equation}
where $\ket{\psi_{\Lambda}}\in\mathcal{H}_{\Lambda}$ and $\|\cdot\|_{\Lambda}$ is the operator norm on $\mathcal{H}_{\Lambda}$.  Here we will drop the momentum cutoff subscript $\Lambda$ to make the notation less cluttered.

Now we use the inequality for unitaries, $U$ and $V$, $\|U^N-V^N\|\leq N\|U-V\|$ from section \ref{sec:The Lie-Trotter Product Formula}.  It follows that
\begin{equation} 
 \|e^{-iHt}-U^N\|\leq N\|e^{-iHa}-U\|.
\end{equation}
To bound the right hand side, first write
\begin{equation}
  U=\displaystyle\sum_{\vec{q}\in Q}A_{\vec{q}}S_{\vec{q}}=
 \displaystyle\sum_{\vec{q}\in Q}A_{\vec{q}}\,e^{-i(\vec{q}.\vec{P})a},
\end{equation}
where $\vec{P}$ is the momentum vector operator.  By Taylor expanding both $e^{-iHa}$ and $U$, we see that for sufficiently small values of $\Lambda a $
\begin{equation}
\label{eq:otherbound}
 \|e^{-iHa}-U\|\leq C(\Lambda a)^2,
\end{equation}
where $C$ is a constant.  Let us go through this in more detail.  After Taylor expanding, $\|e^{-i H a}-U\|$ becomes
\begin{align}
 & \|\displaystyle\sum_{m\geq 2}\frac{(-i Ha)^m}{m!}-\displaystyle\sum_{\vec{q}\in Q}A_{\vec{q}}\displaystyle\sum_{l\geq 2}\frac{(-i\vec{q}.\vec{P}a)^l}{l!}\|\\
&\leq \displaystyle\sum_{m\geq 2}\frac{1}{m!}\|(-i H a)^m-\displaystyle\sum_{\vec{q}\in Q}A_{\vec{q}}(-i\vec{q}.\vec{P}a)^m\|\\
\label{eq:ab} &\leq \displaystyle\sum_{m\geq 2}\frac{a^m}{m!}\left(\|\displaystyle\sum_{\vec{q}\in Q}A_{\vec{q}}\,\vec{q}.\vec{P}\|^m+\|\displaystyle\sum_{\vec{q}\in Q}A_{\vec{q}}(\vec{q}.\vec{P})^m \|\right)\\
&\leq \displaystyle\sum_{m\geq 2}\frac{a^m}{m!}\left((\displaystyle\sum_{\vec{q}\in Q}\|A_{\vec{q}}\|\|\vec{q}.\vec{P}\|)^m+\displaystyle\sum_{\vec{q}\in Q}\|A_{\vec{q}}\|\|\vec{q}.\vec{P}\|^m\right)\\
\label{eq:cd} &\leq \displaystyle\sum_{m\geq 2}\frac{a^m}{m!}\left((Kq\Lambda)^m+K(q\Lambda)^m\right)\\
&\leq 2 \displaystyle\sum_{m\geq 2}\frac{(Kq\Lambda a)^m}{m!}\\
&\leq C(\Lambda a)^2, 
\end{align}
where $K$ is the number of $A_{\vec{q}}\neq 0$ and $q$ is the largest value of $|\vec{q}|$ for which $A_{\vec{q}}\neq 0$.  The fifth line follows from $\|A_{\vec{q}}\|\leq 1$, which in turn follows from $\sum_{\vec{q}} A_{\vec{q}}^{\dagger}A_{\vec{q}}=\openone$.  The last line applies when $Kq\Lambda a \leq 1$ and relies on the fact that, when $\alpha \leq 1$,
\begin{equation}
 \sum_{m\geq 2} \frac{\alpha^m}{m!} \leq \alpha^2\sum_{m\geq 2} \frac{1}{m!} =(e-2) \alpha^2=C^{\prime}\alpha^2.
\end{equation}

Later, it will be useful in the proof of corollary \ref{cor:12} that in going from line (\ref{eq:ab}) to line (\ref{eq:cd}) we bounded $\|H\|$ from above by $Kq\Lambda$.

The end result is that
\begin{equation}
\label{eq:bound}
 \|e^{-iHt}\ket{\psi_{\Lambda}}-U^{N}\ket{\psi_{\Lambda}}\|_2\leq Ct\Lambda^2 a.
\end{equation}
To get the continuum limit, we let $a\rightarrow 0$.  It makes sense to take $\Lambda \rightarrow \infty$ at a slower rate than $a \rightarrow 0$ so that $\Lambda^2 a \rightarrow 0$. Then the right-hand side of equation (\ref{eq:bound}) tends to zero and the momentum cut-off tends to infinity, which tells us that the quantum walk converges to evolution via the continuum Hamiltonian $H$.
\end{proof}

Let us turn to quantum walks with mass.  Recall that the quantum walk's unitary is
\begin{equation}
 U=W\displaystyle\sum_{\vec{q}\in Q}A^{\prime}_{\vec{q}}S_{\vec{q}},
\end{equation}
where $W$ is a unitary acting on $\mathcal{H}_C$ and $\textstyle\sum_{\vec{q}}A^{\prime}_{\vec{q}}=\openone$.  As we mentioned in the previous section, one way to get a continuum limit is to let $W$ tend to the identity as $a\rightarrow 0$ with
\begin{equation}
 W=e^{-iMa},
\end{equation}
where $M$ is a fixed self-adjoint operator on $\mathcal{H}_C$.  This leads to the corollary below.
\begin{corollary}
\label{cor:12}
 The continuum Hamiltonian for a massive quantum walk, where $W=e^{-iMa}$, is
\begin{equation}
 H=\displaystyle\sum_{\vec{q}}A_{\vec{q}}^{\prime}\,(\vec{q}.\vec{P})+M.
\end{equation}
\end{corollary}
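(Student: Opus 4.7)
The plan is to reduce the proof to the massless case (Theorem 13) together with a one-step Lie--Trotter estimate, keeping track of the momentum cutoff $\Lambda$ throughout. Write $U = e^{-iMa}\, U_0$ where $U_0 = \sum_{\vec q}A'_{\vec q}S_{\vec q}$ is the massless quantum walk unitary associated to the coefficients $A'_{\vec q}$, and let $H_0 = \sum_{\vec q}A'_{\vec q}(\vec q.\vec P)$, so that the claim is $U^N \to e^{-i(H_0+M)t}$ as $a\to 0$ with $t=Na$ held fixed.

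As in the proof of Theorem 13, restrict to the subspace $\mathcal H_\Lambda$ with momentum cutoff $\Lambda$, and apply the unitary-product bound $\|V_1^N - V_2^N\|_\Lambda \le N\|V_1 - V_2\|_\Lambda$. By the triangle inequality,
\begin{equation}
\|e^{-i(H_0+M)a} - U\|_\Lambda \le \|e^{-i(H_0+M)a} - e^{-iMa}e^{-iH_0 a}\|_\Lambda + \|e^{-iMa}e^{-iH_0 a} - e^{-iMa}U_0\|_\Lambda .
\end{equation}
The first term is a standard Lie--Trotter estimate of the form $O((Ka)^2)$ with $K = \|M\| + \|H_0\|_\Lambda$; since $M$ acts on the finite-dimensional coin space, $\|M\|$ is a constant, and the proof of Theorem 13 already established $\|H_0\|_\Lambda \le Kq\Lambda$, so this term is $O((\Lambda a)^2)$. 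The second term equals $\|e^{-iH_0 a} - U_0\|_\Lambda$ by unitary invariance, and this is exactly the quantity that the proof of Theorem 13 bounded by $O((\Lambda a)^2)$. Hence $\|e^{-i(H_0+M)a} - U\|_\Lambda = O((\Lambda a)^2)$.

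Multiplying by $N = t/a$ gives $\|e^{-i(H_0+M)t} - U^N\|_\Lambda = O(t\Lambda^2 a)$, and as in Theorem 13 one then sends $a\to 0$ while letting $\Lambda\to\infty$ slowly enough that $\Lambda^2 a\to 0$, so that any fixed state $\ket{\psi}$ in the domain of $H_0+M$ can be approximated to arbitrary accuracy by states in $\mathcal H_\Lambda$. This yields convergence of $U^N\ket\psi$ to $e^{-i(H_0+M)t}\ket\psi$, which is the statement of the corollary.

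The only real subtlety, and the step I expect to require the most care, is the same one that arose in Theorem 13: the operator norm of $H_0$ (and hence of $H$) is not bounded, so the Lie--Trotter estimate $\|e^{-i(A+B)a} - e^{-iAa}e^{-iBa}\| = O((\|A\|+\|B\|)^2 a^2)$ must be applied on $\mathcal H_\Lambda$ rather than globally, and one must verify that the resulting $\Lambda$-dependence is still controlled. Because $M$ contributes only a fixed constant to $K$, it does not change the scaling of the bound, and the argument goes through verbatim from the massless case.
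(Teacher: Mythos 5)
Your proposal is correct and follows essentially the same route as the paper's own proof: the same factorisation $U=e^{-iMa}U_0$, the same two-term triangle inequality, unitary invariance to reduce the second term to the massless bound from theorem \ref{th:13}, and the same Lie--Trotter estimate on $\mathcal{H}_\Lambda$ using $\|H_0\|_\Lambda\leq Kq\Lambda$ for the first term. The only cosmetic difference is using $\|M\|+\|H_0\|$ in place of $\max\{\|M\|,\|H_0\|\}$, which does not affect the scaling.
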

\begin{proof}
It follows from the triangle inequality that
\begin{equation}
\|e^{-iHa}-U\| \leq \|e^{-iHa}-e^{-iMa}e^{-iH_0a}\|
+ \|e^{-iMa}e^{-iH_0 a}-e^{-iM a}U_0\|,
\end{equation}
where $U_0=W^{-1}U$ is a massless quantum walk operator and $H_0=H-M$.  Again, we have dropped the subscript $\Lambda$ on the operator norm, which is restricted to $\mathcal{H}_{\Lambda}$.  So $U_0$ is a massless quantum walk unitary with corresponding continuum Hamiltonian $H_0=H-M$

The second term above is equal to $\|e^{-iH_0 a}-U_0\|$ because the operator norm is unitarily invariant.  This expression was already bounded by $C(\Lambda a)^2$ in the proof of theorem \ref{th:13}.  Bounding the first term is also straightforward because we know that $H=H_0+M$.  Then we can use 
\begin{equation}
 \|e^{-iHa}-e^{-iMa}e^{-iH_0a}\|\leq C^{\prime}B^2a^2,
\end{equation}
where $C^{\prime}$ is a constant and $B=\max\{\|M\|,\|H_0\|\}$. Next we can use the result, noted in the proof of theorem \ref{th:13}, that $\|H_0\|\leq Kq\Lambda$ to get
\begin{equation}
\|e^{-iHa}-e^{-iMa}e^{-iH_0a}\|\leq C^{\prime\prime}(\Lambda a)^2
\end{equation}
since $\|M\|\leq\|H_0\|$ as $\Lambda\rightarrow \infty$.
Finally, putting this all together, we have
\begin{equation}
\|e^{-iHa}-U\| \leq (C+C^{\prime\prime})(\Lambda a)^2,
\end{equation}
which means that
\begin{equation}
 \|e^{-iHt}-U^{N}\|\leq  N\|e^{-iHa}-U\|\leq (C+C^{\prime\prime})t\Lambda^2 a.
\end{equation}
Therefore we see convergence of the quantum walk to evolution via the Hamiltonian $H$ as the lattice spacing goes to zero.
\end{proof}

\subsection{The Continuum Hamiltonian}
\label{sec:The Continuum Hamiltonian}
Now that we have a formula for the continuum Hamiltonian, let us see what it means for the dynamics.  Again, we can start with the massless case following \cite{FS14}.

\begin{theorem}
\label{th:14}
 After a change of coordinates (possibly rescaling and rotating the coordinate axes and removing a constant velocity shift), the continuum Hamiltonian of a massless quantum walk with a two dimensional coin is massless and relativistic.
\end{theorem}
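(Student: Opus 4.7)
By Theorem~\ref{th:13}, the continuum Hamiltonian of a massless quantum walk is
\begin{equation*}
H = \sum_{i=1}^{d} M_i P_i, \qquad M_i = \sum_{\vec{q}\in Q} q_i A_{\vec{q}},
\end{equation*}
with each $M_i$ acting on the two-dimensional coin space $\mathcal{H}_C$. My plan is to bring $H$ into standard massless Weyl form in three moves: a Hermitian expansion of the $M_i$ in the Pauli basis, removal of an overall drift, and a singular-value decomposition combined with a rescaling of the spatial coordinates.

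First, I would note that $U=\openone-iaH+O(a^2)$ together with $U^{\dagger}U=\openone$ forces $H=H^{\dagger}$; since the $P_i$ are linearly independent operators that commute with the coin, matching coefficients gives that each $M_i$ is Hermitian. A Hermitian $2\times 2$ matrix decomposes uniquely in the Pauli basis as
\begin{equation*}
M_i = c_i\,\openone + \vec{v}_i\cdot\vec{\sigma}, \qquad c_i\in\mathbb{R},\ \vec{v}_i\in\mathbb{R}^3.
\end{equation*}
The scalar contribution $(\vec{c}\cdot\vec{P})\openone$ generates a rigid translation of the wavefunction at velocity $\vec{c}$, and is removed by passing to the moving frame $\vec{x}\mapsto\vec{x}-\vec{c}\,t$. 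What remains is the traceless part
\begin{equation*}
H' = \sum_i P_i\,(\vec{v}_i\cdot\vec{\sigma}) = \vec{P}^{\,T} V\,\vec{\sigma},
\end{equation*}
where $V$ is the real $d\times 3$ matrix whose rows are the $\vec{v}_i^{\,T}$.

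Next, I would apply the singular-value decomposition $V = R_1\,\Sigma\, R_2^{T}$, with $R_1\in SO(d)$, $R_2\in SO(3)$ and $\Sigma$ rectangular-diagonal with entries $s_k$ (signs absorbed by the rescaling below so that we may take $R_1,R_2$ proper rotations). The action of $R_2$ on $\vec{\sigma}$ lifts, via the double cover $SU(2)\to SO(3)$, to conjugation by a unitary on $\mathcal{H}_C$, that is, a mere change of coin basis; and $R_1^{T}$ acting on $\vec{P}$ is a rotation of the spatial coordinate axes. After these two rotations,
\begin{equation*}
H' = \sum_{k=1}^{r} s_k P'_k\,\sigma'_k, \qquad r=\mathrm{rank}(V)\le\min(d,3).
\end{equation*}
The final anisotropic coordinate rescaling $x'_k\mapsto x''_k = x'_k/s_k$ for each nonzero $s_k$ then brings $H'$ into the standard massless Weyl form
\begin{equation*}
H'' = \sum_{k=1}^{r} P''_k\,\sigma'_k,
\end{equation*}
which squares to $\bigl(\sum_k (P''_k)^2\bigr)\openone$ and so realises the massless relativistic dispersion $E=\pm|\vec{P}''|$.

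The main obstacle I expect is the degenerate case in which some singular values of $V$ vanish: then certain coordinate directions decouple from the coin entirely, and the relativistic dispersion only holds on the $r$-dimensional subspace on which the coin actually acts. This is also the place where the hypothesis of a two-dimensional coin enters crucially --- the matrix $V$ is constrained to have only three columns, so it can produce at most three independent Pauli--momentum couplings. For $d\le 3$ the generic case is full rank and one recovers, after the rotations and rescaling above, the massless Weyl--Dirac Hamiltonians listed in section~\ref{sec:Dirac and Weyl particles}; a counterexample with a higher-dimensional coin (as in section~\ref{sec:Counterexample with a Three Dimensional Coin}) shows that the Pauli structure is essential to the argument.
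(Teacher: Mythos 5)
Your proof is correct and follows essentially the same route as the paper: expand the coin operators in the Pauli basis, strip off the identity part by passing to a moving frame, apply a real singular value decomposition to separate the coin rotation from the spatial rotation, and rescale the axes to reach $\pm\vec{\sigma}\cdot\vec{P}$. Your added remarks on Hermiticity of the $M_i$ and on the rank-deficient cases match the paper's treatment of vanishing $\gamma_i$.
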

\begin{proof}
We start by focusing on the case where the quantum walk takes place in three dimensional space.  At the end, we will look at other possibilities.

The continuum Hamiltonian is
\begin{equation}
 H=\sum_{\vec{q}\in Q}A_{\vec{q}}\,\vec{q}.\vec{P}.
\end{equation}
Because $A_{\vec{q}}\,q_i$ is an operator on a two dimensional Hilbert space, we can expand these terms out in the basis $\{\openone,\sigma_i\}$ to get
\begin{equation}
 H=\sigma_1\tilde{P}_{1}+\sigma_2\tilde{P}_{2}+\sigma_3 \tilde{P}_{3}+\openone \tilde{P}_4,
\end{equation}
where $\tilde{P}_{i}$ are real linear combinations of components of the momentum vector operator $\vec{P}$.

First, the overall shift term $\openone \tilde{P}_4$ is meaningless from a physical point of view.  By changing to coordinates that are moving with a constant velocity, we can remove this term.

It is important to notice that the remaining $\tilde{P}_i$ may not be momentum operators in orthogonal directions.  To deal with this, we can think of 
$\sigma_1 \tilde{P}_{1}+\sigma_2 \tilde{P}_{2}+\sigma_3 \tilde{P}_{3}$ 
as a sum of tensor products of vectors in real vector spaces.  Indeed, $\sigma_i$ are an orthonormal basis for the vector space of $2\times 2$ traceless self-adjoint matrices with the inner product $(A,B)=\f{1}{2}\tr{A^{\dagger}B}$.  And $\tilde{P}_j$ span a subspace of the real vector space that has $P_i$ as its orthonormal basis.  This has the inner product $(A,B)=\textstyle\sum_ia_ib_i$, where $A=\textstyle\sum_ia_iP_i$ and $B=\textstyle\sum_ib_iP_i$.

This allows us to use singular value decomposition to rewrite $H$.  Let us go through how this works.
If $A$ is a $n\times n$ real matrix, then it may be written in the form \cite{Horn85}
\begin{equation}
 A=VBW^{\dagger},
\end{equation}
where $V$ and $W$ are $n\times n$ rotation matrices, and $B$ is a diagonal matrix with non-negative entries.  Next, suppose we have the real vector in a tensor product space $\sum_{ij}A_{ij}\vec{v}_i\otimes\vec{w}_j$, where $\vec{v}_i$ and $\vec{w}_j$ are orthonormal bases for $\mathcal{H}_1$ and $\mathcal{H}_2$ respectively.  By applying singular value decomposition, we can rewrite this as $\sum_{ij}B_{ij}\vec{e}_i\otimes\vec{f}_j$, where $\vec{e}_i$ and $\vec{f}_j$ are orthonormal bases for $\mathcal{H}_1$ and $\mathcal{H}_2$, and $B_{ij}$ is a diagonal matrix with positive entries.\footnote{This is essentially equivalent to the proof of the Schmidt decomposition, except we need to use the fact that singular value decomposition can be restricted to real matrices.  
This is essential so that the resulting operators have the right physical interpretation.}

So, applying this to the Hamiltonian, we get
\begin{equation}
 H=\gamma_1\sigma_{1}^{\prime}P_{1}^{\prime}+\gamma_2\sigma_{2}^{\prime}P_{2}^{\prime}+\gamma_3\sigma_{3}^{\prime}P_{3}^{\prime},
\end{equation}
where $\gamma_i$ are real numbers, $P_{i}^{\prime}$ are momentum operators along orthogonal spatial axes and $\sigma_{i}^{\prime}$ are an orthonormal basis of the vector space of $2\times 2$ traceless self-adjoint matrices.  Note that such a basis is unitarily equivalent to either $\sigma_i$ or $-\sigma_i$.

When all the $\gamma_i$ are non-zero, the spatial axes can be rescaled so that $\gamma_i P^{\prime}_i\to P^{\prime}_i$, and we can drop the primes.  Then we apply a unitary change of basis to the coin to arrive at either $\sigma_i$ or $-\sigma_i$.  So, finally, we get
\begin{equation}
H=\pm\left(\sigma_{1}P_{1}+\sigma_{2}P_{2}+\sigma_{3}P_{3}\right)\equiv\pm\vec{\sigma}.\vec{P}.
\end{equation}

When some of the $\gamma_i$ are zero, the Hamiltonian corresponds to either $P\sigma_z$ or $P_x\sigma_x+P_y\sigma_y$.  This means that all massless quantum walks with a two dimensional coin obey massless relativistic equations of motion in the continuum limit.

For quantum walks in fewer than three spatial dimensions, the results still apply, but the particle evolves according to a lower dimensional equation of motion.  For quantum walks in more than three spatial dimensions, the particle still evolves according to a massless relativistic equation restricted to at most three dimensions.  This means that it does not move along some spatial directions.
\end{proof}

\begin{compactwrapfigure}{r}{0.43\textwidth}
\centering
\begin{minipage}[r]{0.40\columnwidth}%
\centering
    \resizebox{6.0cm}{!}{\begin{picture}(0,0)%
\includegraphics{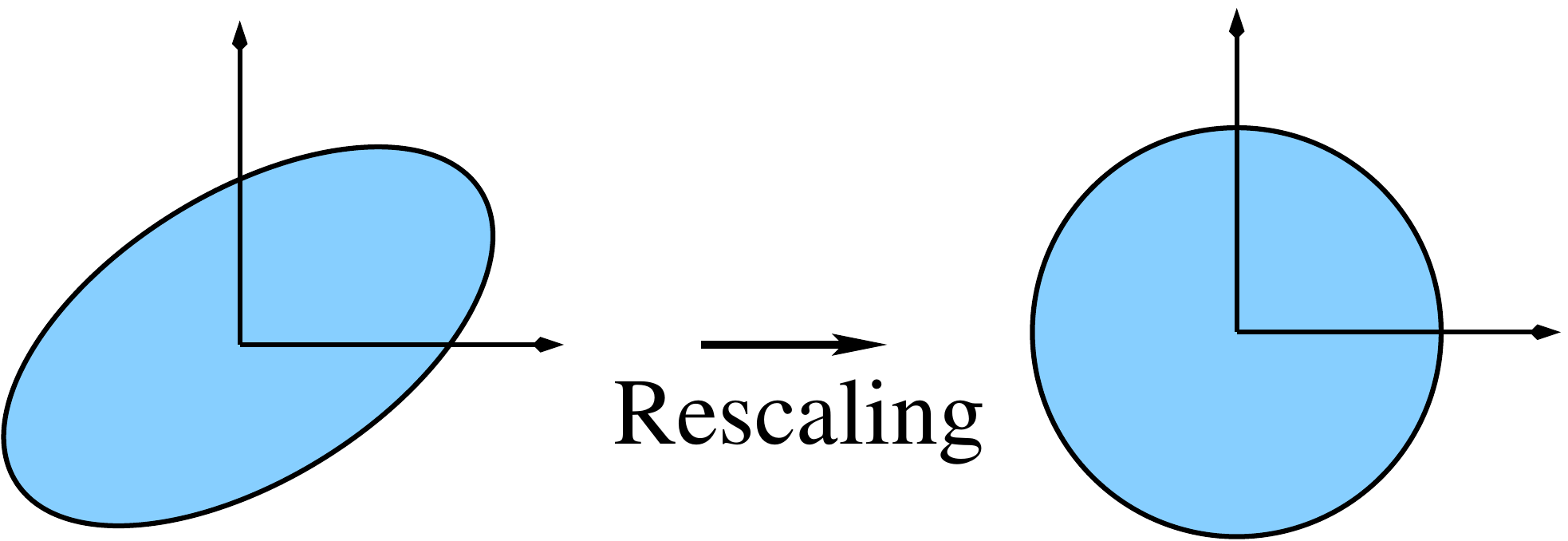}%
\end{picture}%
\setlength{\unitlength}{3947sp}%
\begingroup\makeatletter\ifx\SetFigFont\undefined%
\gdef\SetFigFont#1#2#3#4#5{%
  \reset@font\fontsize{#1}{#2pt}%
  \fontfamily{#3}\fontseries{#4}\fontshape{#5}%
  \selectfont}%
\fi\endgroup%
\begin{picture}(9427,3235)(582,-4313)
\end{picture}%
}
    \footnotesize{\caption[Rescaling coordinates to recover rotational symmetry]{Rescaling coordinates to recover rotational symmetry.}}
    \label{fig:73}
\end{minipage}
\end{compactwrapfigure}

In this proof we had to do a change of coordinates.  The only way this transformation would be physically nontrivial would be if there were different particles present with continuum limits that could not be made into the same form by the {\it same} changes of coordinates.

There is also a simple argument that, if a massless quantum walk with a two dimensional coin has the rotational symmetries of the lattice in the continuum limit, then it also has Lorentz symmetry.

Let us suppose that the continuum Hamiltonian has the rotational symmetry of the lattice.  The Hamiltonian is
\begin{equation}
 H=\vec{B}.\vec{P},
\end{equation}
where $\vec{B}=\sum_{\vec{q}}A_{\vec{q}}\,\vec{q}$.  Since $B_i$ are self-adjoint, we can write $B_i=c_i\openone+\vec{n}_i.\vec{\sigma}$, where $c_i$ and $\vec{n}_i$ are real.  Assuming that this Hamiltonian has the rotational symmetries of the lattice means there is a subgroup $G$ of $SU(2)$ whose action on $\{B_i:i=1,2,3\}$ forms a representation of these rotations.  Then, for a three dimensional lattice and some given $i$ and $j \neq i$ there must be a $V\in G$ satisfying $VB_iV^{\dagger}=-B_i$ and $VB_jV^{\dagger}=B_j$.  This means that $c_i=0$, and further that $\textrm{tr}[B_i^{\dagger}B_j]=0$.  This implies that $\vec{n_i}.\vec{\sigma}$ are an orthogonal set.  Also, for any $i$ and $j$ there must be a $V\in G$ with $VB_iV^{\dagger}=B_j$, meaning $|\vec{n_i}|=|\vec{n_j}|$.  Therefore, $B_i$ are proportional to a representation of $\sigma_i$ or $-\sigma_i$.  So the Hamiltonian will be proportional to either the left or right-handed Weyl Hamiltonian.

\subsubsection{Continuum Hamiltonians with Mass}
Let us see what changes if the quantum walks have mass.  We can transform coordinates as in the massless case so that the Hamiltonian becomes
\begin{equation}
 H=\vec{\sigma}.\vec{P}+M,
\end{equation}
where $M$ is a self-adjoint operator on $\mathcal{H}_C$.

In one dimensional space, for example, with $M=m\sigma_x$, the result is the Dirac Hamiltonian in one dimension:
\begin{equation}
 H=\sigma_z P_z+m\sigma_x.
\end{equation}

It is not the case, however, that we get relativistic behaviour in the continuum in general.  As an example, we could have $M=m_1\sigma_z+m_2\sigma_x$, which does not correspond to relativistic evolution, as observed in \cite{Kurz08}.  Nevertheless, by requiring that the continuum Hamiltonian should have symmetry under a parity transformation, this Hamiltonian would be ruled out.

\subsection{Higher Dimensional Coins are not so Special}
\label{sec:Counterexample with a Three Dimensional Coin}
Results often hold in two dimensional Hilbert spaces that do not generalize to higher dimensions, which is sometimes due to the natural orthonormal operator basis, the Pauli matrices and the identity, having such nice properties.  Here we will see that the results of the previous section do not generalize to higher dimensional coins.  To see this, we will reproduce the simple counterexample from \cite{FS14} with a three dimensional coin that has rotational symmetry in the continuum limit but not Lorentz symmetry.

We will construct a quantum walk that corresponds to the Hamiltonian below in the continuum limit.
\begin{equation}
 H=\vec{J}.\vec{P},
\end{equation}
where $J_i=-i \sum_{jk} \varepsilon_{ijk}\ket{j}\bra{k}$ form a three dimensional representation of the generators of the lie algebra of SO(3) acting on $\mathcal{H}_C$, which has orthonormal basis $\ket{1}$, $\ket{2}$ and $\ket{3}$.  This Hamilonian does have rotational symmetry, but it does not have Lorentz symmetry 

By using the recipe from section \ref{sec:A General Recipe}, a quantum walk that has the Hamiltonian above in the continuum limit can be written as a product of conditional shifts:
\begin{equation}
U=T_xT_yT_z,
\end{equation}
but now with
\begin{equation}
T_b=\exp(-iaP_bJ_b),
\end{equation}
where $P_b$ is the momentum operator in the $b$ direction, with $b\in\{x,y,z\}$.  Here we have relabelled $J_i$ by $x$, $y$ and $z$ according to the usual convention:\ $J_1=J_x$, $J_2=J_y$, and $J_3=J_z$.

Written in terms of shift operators, we can rewrite $T_b$ as
\begin{equation}
 T_{b}=S_{b}\ket{\!+\!1_b}\bra{+1_b}+\ket{0_b}\bra{0_b}+S^{\dagger}_{b}\ket{\!-\!1_b}\bra{-1_b},
\end{equation}
where $\ket{\lambda_b}$ is the eigenvector of $J_b$ with eigenvalue $\lambda$ and $S_{b}$ is a shift by one lattice site in the $b$ direction.

Let us see why the Hamiltonian above is not relativistic.  If this were the case, $H^2-\vec{P}^2$ should be a Lorentz invariant quantity.  To check, we can look at how it transforms under a boost.  First, it follows from the commutation relations for $J_i$ that
\begin{equation}
 H^2-\vec{P}^2= -\displaystyle\sum_{ij}P_iP_j\ket{j}\bra{i}.
\end{equation}
Then,
\begin{equation}
\displaystyle\sum_{i}\bra{i}H^2-\vec{P}^2\ket{i}=-\vec{P}^2.
\end{equation}
It is necessary for Lorentz invariance that
\begin{equation}
 \displaystyle\sum_{i}\bra{i}U_{\Lambda}(H^2-\vec{P}^2)U^{\dagger}_{\Lambda}\ket{i}=\displaystyle\sum_{i}\bra{i}H^2-\vec{P}^2\ket{i},
\end{equation}
where $U_{\Lambda}$ is the unitary operator implementing a Lorentz transformation $\Lambda$.  As we are talking about free particles, the effect of a Lorentz transformation is \cite{Weinberg95}
\begin{equation}
 U_{\Lambda}\ket{\vec{p}\,}\ket{k}=\sqrt{\frac{E_{\Lambda\vec{p}}}{E_{\vec{p}}}}\ket{\vec{\Lambda p}}D(\Lambda,\vec{p}\,)\ket{k},
\end{equation}
where $D(\Lambda,\vec{p}\,)$ is a unitary on the internal degree of freedom that depends on $\Lambda$ and $\vec{p}$.  Using the invariance of the trace under unitary transformations, which in this case are implemented by $D(\Lambda,\vec{p}\,)$, we get
\begin{equation}
 \sum_{i}\bra{i}U_{\Lambda}(H^2-\vec{P}^2)U^{\dagger}_{\Lambda}\ket{i}=U_{\Lambda}(-\vec{P}^2)U^{\dagger}_{\Lambda},
\end{equation}
which is not equal to $\textstyle\sum_{i}\bra{i}H^2-\vec{P}^2\ket{i}=-\vec{P}^2$ if $\Lambda$ is a boost.  Therefore, $H^2-\vec{P}^2$ is not Lorentz invariant, so the dynamics are not relativistic.

Note that we cannot add a term like $\openone \tilde{P}_4$ to $H$ as we did in section \ref{sec:The Continuum Hamiltonian} because this would break rotational symmetry, as would rescaling coordinate axes.  Additionally, any rotation of the coordinate axes or a unitary change of basis applied to the extra degrees of freedom would not change the fact that $H^2-\vec{P}^2$ is not Lorentz invariant.
\vfill

\section{Fermion Doubling in Quantum Walks}
\label{sec:Fermion Doubling in Quantum Walks}
In this section, we study the question of whether quantum simulators offer additional advantages over classical simulators for simulating quantum fields on a lattice.  In particular, we look at the problem of fermion doubling, which is guaranteed to plague lattice models with local chiral fermion Hamiltonians \cite{NN81,Kaplan09}.  We see that discrete quantum simulators may provide one way around this problem.  This is possible because the systems we discuss evolve in discrete spacetime via a unitary operator $U$ that is causal.  So a Hamiltonian $H$, defined by $e^{-iH}=U$, would be nonlocal.  In spite of this, in the continuum limit we recover relativistic fermions.  Furthermore, the systems we will discuss can be implemented by very simple quantum circuits.  

Of course, there are many proposed ways of dealing with the fermion doubling problem:\ Wilson fermions, which break chiral symmetry by introducing a momentum dependent mass; staggered fermions, which reduce the number of doubler fermions, again breaking chiral symmetry; as well as methods that redefine what chiral symmetry means on the lattice in a way that corresponds to the usual definition in the continuum limit (domain wall fermions, for example) \cite{DGDT06}.  The potential solution we discuss is conceptually simple and practical from the point of view of quantum computing.  Still, the ideas need further study if they are to be of any use.

Fermion doubling is only a problem in interacting models:\ if there are no interactions, then low energy low momentum states do not evolve to low energy high momentum states (the doublers).

With this in mind, we begin by introducing gauge interactions for quantum walks in section \ref{sec:Gauge Interactions for Quantum Walks}.  Then we define what fermion doubling means in the context of quantum walks in section \ref{sec:Fermion Doubling in Discrete Time}, where we also see that it is not a problem for the one dimensional case.  In sections \ref{sec:Fermion Doubling in Two Dimensions} and \ref{sec:Fermion Doubling in Three Dimensions} we look at fermion doubling in the cases of two and three dimensional space respectively.

We look at fermion doubling in the context of quantum walks, which are single particle systems.  These can be upgraded to many particle systems and discrete field theories, as we will see in the next chapter.  Because of this, results about fermion doubling naturally apply to these, but it is easier to work with spectra in this single particle setting.  This is mainly to reduce clutter in notation.

\subsection{Gauge Interactions for Quantum Walks}
\label{sec:Gauge Interactions for Quantum Walks}
It is important to realize that the existence of doubler modes is only a problem in the presence of interactions.  So let us introduce gauge interactions into this picture.  For simplicity, we will look at $U(1)$ gauge fields.  The approach we take here is equivalent to one suggested in \cite{Bial94}.  The idea is that the gauge interaction is manifested as a phase picked up by the particle as it moves from one point to the next.

How we introduce these gauge interactions is essentially equivalent to working in the temporal gauge, the choice usually employed in the Hamiltonian formulation of lattice gauge theory \cite{KS75}.  There the gauge is chosen so that the $\mu=0$ component of the electromagnetic field $A_\mu(x)$ is zero.  A further constraint is that one needs to restrict the Hilbert space to those states satisfying Gauss' law.

Now, suppose that we have a quantum walk with no extra degree of freedom.  And every timestep it evolves by moving one step to the right via $U=e^{-iPa}$.  This is not a gauge invariant evolution because a shift does not commute with applying a site dependent phase (a gauge transformation) $\ket{n}\rightarrow e^{-i\lambda_n}\ket{n}$.

We can fix this by introducing the operator $A=\sum_nA_n\ket{n}\bra{n}$ that under a gauge transformation transforms like $A_n a\rightarrow A_n a +(\lambda_{n+1}-\lambda_{n})$.  Because of this, it is natural to associate $A_n$ with the link between sites $n$ and $n+1$.  Then the evolution operator
\begin{equation}
 U_A=e^{-iPa}e^{-iAa}
\end{equation}
commutes with gauge transformations.  For small $a$, we have $e^{-iPa}e^{-iAa}= e^{-i(P+A)a}+O(a^2)$.  So it is $A_n$ that should become the gauge field $A(x)$ in the continuum limit.

\begin{compactwrapfigure}{r}{0.43\textwidth}
\centering
\begin{minipage}[r]{0.40\columnwidth}%
\centering
    \resizebox{6.0cm}{!}{\input{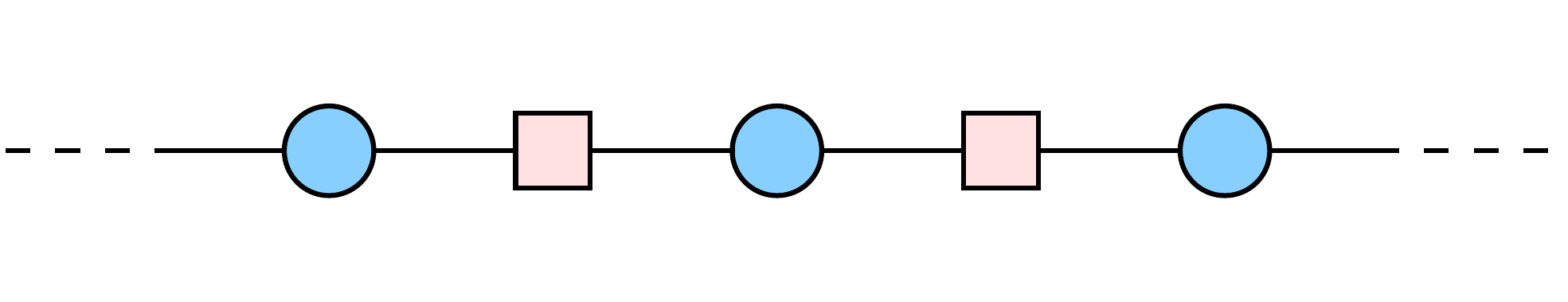_t}}
    \footnotesize{\caption[Gauge fields for quantum walks on a line]{Gauge fields are naturally associated with links between sites.}}
    \label{fig:89}
\end{minipage}
\end{compactwrapfigure}

A quantum walk evolving via $e^{iPa}=(e^{-iPa})^{\dagger}$ is also not gauge invariant, but can be made so in a similar way, with the gauge invariant evolution $U_A^{\dagger}=e^{iAa}e^{iPa}$.  The gauge invariant evolution for a quantum walk giving the massless Dirac equation in one dimension in the continuum limit is
\begin{equation}
\begin{split}
 U & =U_A\ket{r}\bra{r}+U_A^{\dagger}\ket{l}\bra{l}\\
 & = e^{-iPa}e^{-iAa}\ket{r}\bra{r}+e^{iAa}e^{iPa}\ket{l}\bra{l}\\
 & = \openone -iP\sigma_za -iA\sigma_z a +O(a^2)\\
 & = e^{-i(P+A)\sigma_z a} +O(a^2).
\end{split}
\end{equation}
So in the continuum limit we expect that we recover a massless Dirac particle interacting with the electromagnetic field.  Of course, we have not been rigorous here.  Some smoothness assumptions are probably necessary in taking such limits.  We will return to this point in chapter \ref{chap:Conclusions and Open Problems 1}.

This works similarly in higher dimensions.  As before, to make a shift $S_b=e^{-iP_ba}$ gauge invariant, we introduce a gauge field, which now takes the form $A_b=\sum_{\vec{n}}A_b(\vec{n})\ket{\vec{n}}\bra{\vec{n}}$, where $b\in\{x,y,z\}$.  Under a gauge transformation, this transforms as
\begin{equation}
 A_b(\vec{n})a\rightarrow A_b(\vec{n})a+(\lambda(\vec{n}+\vec{e}_b)-\lambda(\vec{n})),
\end{equation}
where $\vec{e}_b$ is the lattice vector pointing in the positive $b$ direction.  Then the operator
\begin{equation}
 S_be^{-iA_ba}=e^{-iP_ba}e^{-iA_ba}
\end{equation}
\begin{compactwrapfigure}{r}{0.43\textwidth}
\centering
\begin{minipage}[r]{0.40\columnwidth}%
\centering
    \resizebox{3.6cm}{!}{\input{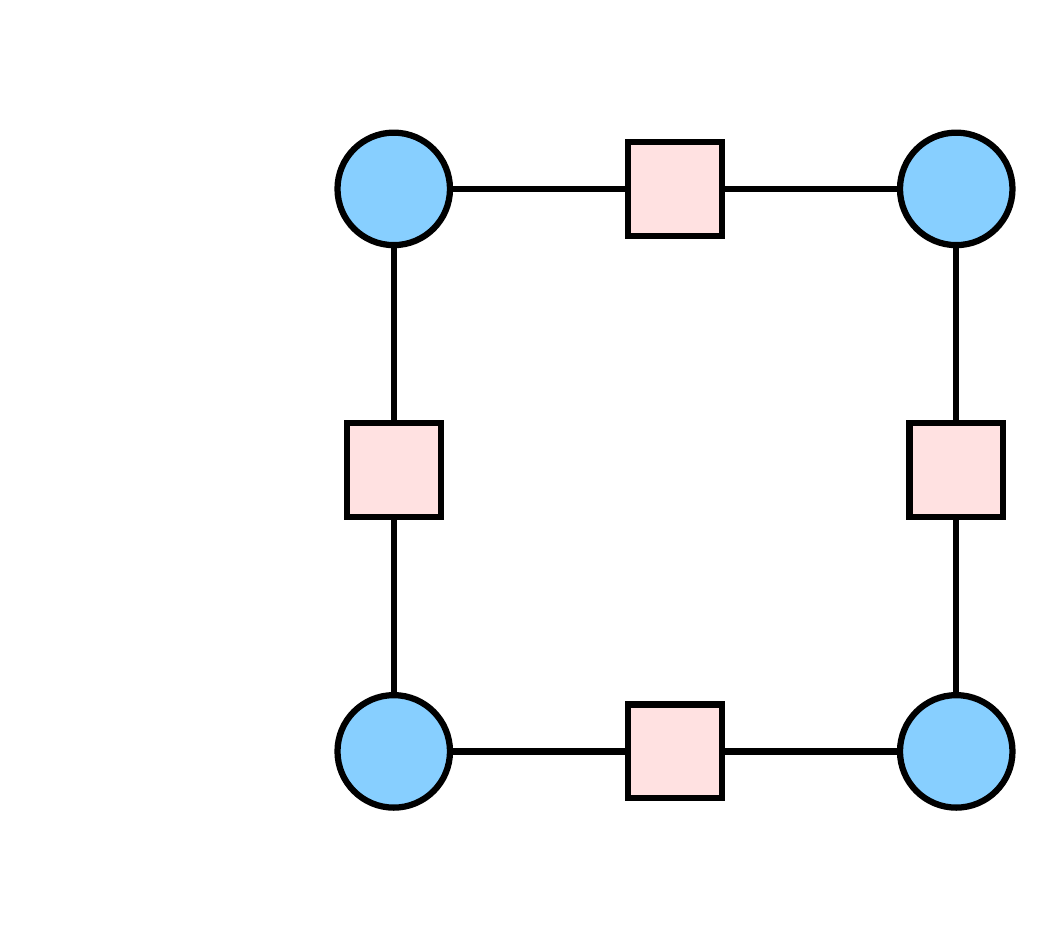_t}}
    \footnotesize{\caption[Gauge fields for quantum walks in two dimensions]{Different components of the gauge field are naturally associated with links between sites.}}
    \label{fig:72}
\end{minipage}
\end{compactwrapfigure}
commutes with gauge transformations.  So, for example, the unitary $T^{\prime}_xT^{\prime}_yT^{\prime}_z$ is gauge invariant, where
\begin{equation}
 T^{\prime}_b=e^{-iP_ba}e^{-iA_ba}\ket{\uparrow_b}\bra{\uparrow_b}+e^{iA_ba}e^{iP_ba}\ket{\downarrow_b}\bra{\downarrow_b},
\end{equation}
commutes with gauge transformations.  Again, the effect of the gauge interaction is that, whenever a particle travels along a link, it picks up a phase associated to that link.

\subsection{Fermion Doubling in Discrete Time}
\label{sec:Fermion Doubling in Discrete Time}
To understand how fermion doubling can affect discrete-time models, it helps to think of them as approximations to continuum theories.  In other words, we need to keep in mind that we intend to take the continuum limit.  As we shrink the lattice spacing $a$ to zero, the discrete-time unitary evolution should approach the continuum Hamiltonian evolution.  We say that doubling occurs if modes other than those with low momentum survive the continuum limit.  We will see that some discrete-time models that do not satisfy the hypotheses of the fermion doubling theorem still have doubler fermions.  With a little work, we can alter these models to alleviate this problem.

First, take the familiar one dimensional quantum walk, with evolution operator
\begin{equation}
 U=e^{-im\sigma_xa}e^{-iP\sigma_za}.
\end{equation}
Now, as mentioned in \cite{DdV87}, this avoids the fermion doubling problem.  It evades the consequences of the fermion doubling theorem because $H$, defined by $U=e^{-iHa}$ is nonlocal.

It is instructive to go into the details a little.  For this model, when $m=0$, there is chiral symmetry, and the evolution operator becomes just
\begin{equation}
U=e^{-iP\sigma_za}.
\end{equation}
\begin{compactwrapfigure}{r}{0.49\textwidth}
\centering
\begin{minipage}[r]{0.46\columnwidth}%
\centering
    \resizebox{6.0cm}{!}{\includegraphics{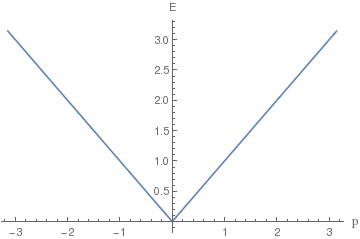}}
    \footnotesize{\caption[Energies of the Dirac quantum walk in one dimension]{Positive energies of the Dirac quantum walk in one dimension.  The $x$- and $y$-axes correspond to the momentum and energy respectively.  (After looking at the corresponding field theory, negative energy particle states are filled up, and annihilating a negative energy particle is interpreted as creating an antiparticle.)}}
    \label{fig:9}
\end{minipage}
\end{compactwrapfigure}	
So a natural candidate for the corresponding Hamiltonian is $H=P\sigma_z$.  There is some ambiguity in choosing a Hamiltonian because we are in a discrete-time picture.  That said, because high momentum ($p\simeq\f{\pi}{a}$) states are eigenvectors of $U$ with a large phase, whatever way we define a Hamiltonian, these states have high energy.  Furthermore, because these high momenta pick up large phases when we apply $U$, they do not survive the continuum limit.  So there is no doubling.

\subsection{Fermion Doubling in Two Dimensions}
\label{sec:Fermion Doubling in Two Dimensions}
Recall that the quantum walk that gives rise to the massless Dirac equation in two dimensional space is
\begin{equation}
 U=e^{-i\sigma_x P_x a}e^{-i\sigma_y P_y a}.
\end{equation}
Let us see whether there are doublers in the sense we defined in section \ref{sec:Fermion Doubling in Discrete Time}.  In other words, are there high momentum particles that remain in the continuum limit?  The answer is actually yes, even though the Hamiltonian corresponding to the quantum walk is nonlocal \cite{Short13}.  Recall also that there is no chiral symmetry in two dimensions, but that does not rule out fermion doubling.

To see that there is a problem, let us work in momentum space.  Naturally, the momentum state $\ket{0,0}$ is an eigenstate with eigenvalue one.  But there is another.  The state with momentum $\ket{\f{\pi}{a},\f{\pi}{a}}$ is also an eigenvector with eigenvalue one.  Note that the state of the coin is irrelevant, meaning, for any coin state $\ket{\phi}$, both $\ket{0,0}\ket{\phi}$ and $\ket{\f{\pi}{a},\f{\pi}{a}}\ket{\phi}$ are plus one eigenstates of $U$.

Now let $\vec{p}_1=(\f{\pi}{a},\f{\pi}{a})$ and $\vec{p}$ be a momentum vector with components $|p_b|\ll \f{\pi}{a}$.  Then we have
\begin{equation}
\begin{split}
 U\ket{\vec{p}+\vec{p}_1} & =e^{-i\sigma_x P_xa}e^{-i\sigma_y P_ya}\ket{\vec{p}+\vec{p}_1}\\
 & =e^{-i\sigma_x p_xa}e^{-i\sigma_y p_ya}\ket{\vec{p}+\vec{p}_1}\\
 & =(1-i(\sigma_x p_x+\sigma_y p_y)a+O(a^2))\ket{\vec{p}+\vec{p}_1}.
\end{split}
\end{equation}
So for small $\vec{p}$ the state $\ket{\vec{p}+\vec{p}_1}$ behaves like a massless Dirac particle too.  Therefore, states in the space spanned by $\ket{\vec{p}+\vec{p}_1}$ with $|\vec{p}|\ll\f{\pi}{a}$ act like massless Dirac particles.
\begin{compactwrapfigure}{r}{0.47\textwidth}
\centering
\begin{minipage}[r]{0.43\columnwidth}%
\centering
    \resizebox{6.0cm}{!}{\includegraphics{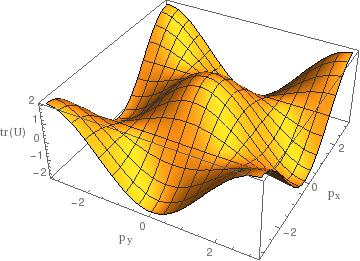}}
    \footnotesize{\caption[Fermion doubling for a quantum walk in two dimensional space]{Trace of the massless Dirac quantum walk in two dimensional space.  The $x$- and $y$-axes correspond to the $x$ and $y$ components of the particle's momentum.  The peaks at $p_x=p_y=0$ and $p_x=p_y=\pm\f{\pi}{a}$ correspond to low energies.}}
    \label{fig:8}
\end{minipage}
\end{compactwrapfigure}
Let us check that there is only one set of doublers.  In other words, we want to see when $U$ is close to the identity.  If $U=\openone$, then its trace is two.  Using
\begin{equation}
 e^{-i\theta\sigma_b}=\cos(\theta)\openone-i\sin(\theta)\sigma_b,
\end{equation}
we see that the trace of $U$ is $2\cos(p_xa)\cos(p_ya)$.  Then, given that $p_b\in(-\f{\pi}{a},\f{\pi}{a}]$, there are only two points where the trace is two.  The first is when $p_x=p_y=0$ and the second is when $p_x=p_y=\f{\pi}{a}$.

\subsubsection{Removing the Doubling Modes}
Fortunately, there is a way to remove these doubler fermions.  This works by restricting ourselves to initial states where the particle only occupies sites on a body centred cubic (b.c.c.) sublattice of the original cubic lattice.\footnote{Of course, a b.c.c.\ lattice in two dimensions is equivalent to a cubic lattice.  And the arguments in this section could be made easier by using this fact, but it will help to understand the following section to proceed in the way we do.  Part of the reason is that, in the next section, we look at a b.c.c.\ sublattice of a cubic lattice in three spatial dimensions, which is {\em not} equivalent to a cubic lattice.}  We can choose the sublattice that includes the origin $(0,0)$.  Any other site on the sublattice can be reached by adding multiples of $(e_1,e_2)$, where $e_i\in\{-1,1\}$.  Therefore, we know that the non-interacting evolution operator keeps us in this subspace because the unitary is a product of conditional shifts in the $x$ and $y$ direction.  So, if initially localized on this sublattice, a free particle does not move off it.

\begin{compactwrapfigure}{r}{0.37\textwidth}
\centering
\begin{minipage}[r]{0.34\columnwidth}%
\centering
    \resizebox{4.0cm}{!}{\begin{picture}(0,0)%
\includegraphics{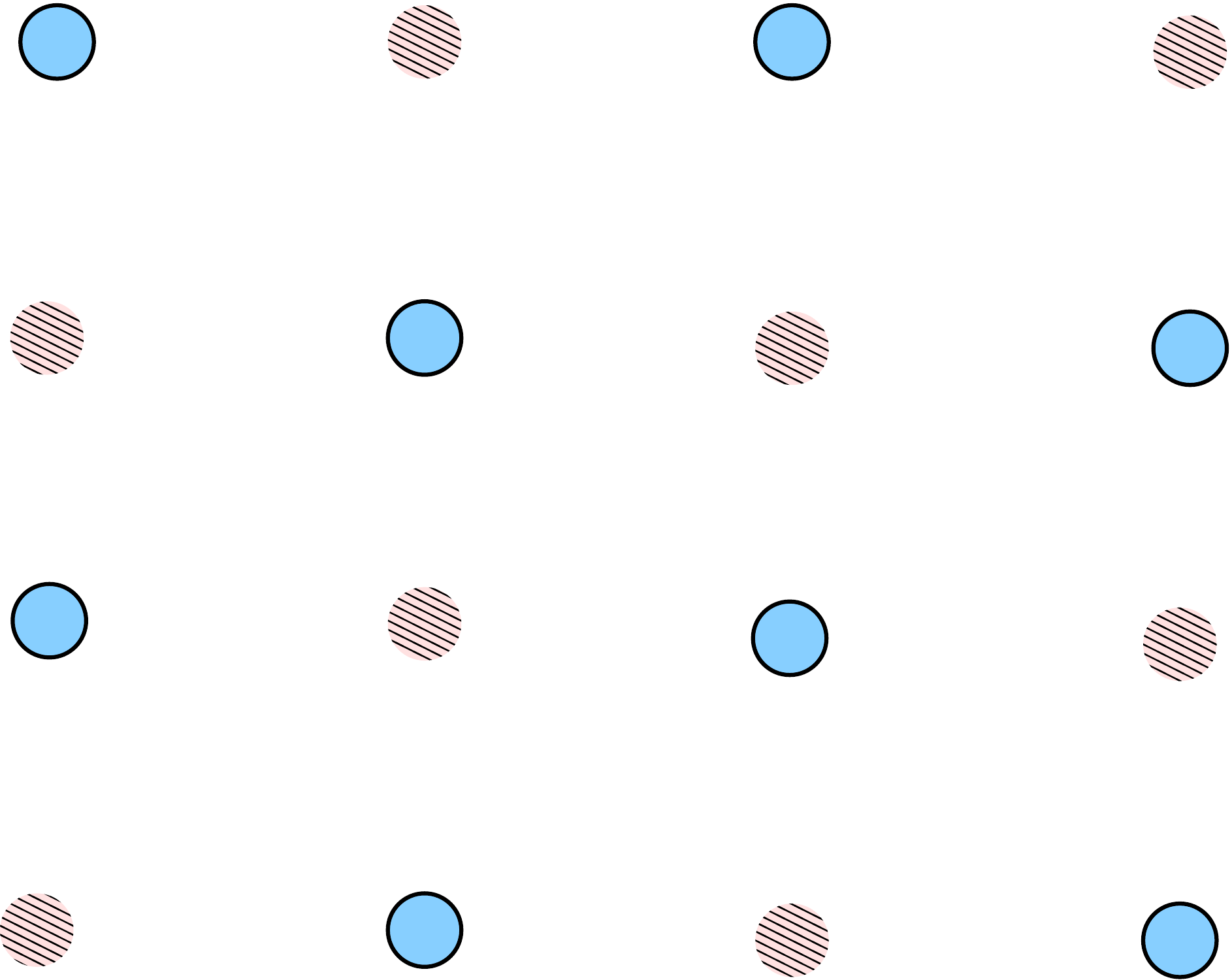}%
\end{picture}%
\setlength{\unitlength}{3947sp}%
\begingroup\makeatletter\ifx\SetFigFont\undefined%
\gdef\SetFigFont#1#2#3#4#5{%
  \reset@font\fontsize{#1}{#2pt}%
  \fontfamily{#3}\fontseries{#4}\fontshape{#5}%
  \selectfont}%
\fi\endgroup%
\begin{picture}(9038,7184)(856,-8303)
\end{picture}%
}
    \footnotesize{\caption[Body centred cubic (b.c.c.) sublattice]{Points on a b.c.c.\ sublattice of a cubic lattice, coloured blue.}}
    \label{fig:74}
\end{minipage}
\end{compactwrapfigure}
The crucial point is that a particle remains on this sublattice in the presence of gauge or on-site interactions with an external potential.  A natural way to introduce on-site interactions is by applying a position dependent phase every timestep.  For example, we could have the quantum walk with evolution operator
\begin{equation}
 U=e^{-iVa}e^{-i\sigma_x P_x a}e^{-i\sigma_y P_y a},
\end{equation}
where $V=\sum_{\vec{n}}V_{\vec{n}}\ket{\vec{n}}\bra{\vec{n}}$.  Roughly, in the continuum limit, we would expect to get evolution via the Hamiltonian
\begin{equation}
 P_x\sigma_x+P_y\sigma_y +V(\vec{x}),
\end{equation}
by choosing $V_{\vec{n}}=V(\vec{n}a)$.  So interactions of this type would not move the particle off the b.c.c.\ sublattice.  Also, a gauge interaction would not move the particle off the b.c.c.\ sublattice because the particle just picks up a phase after every conditional shift.

\begin{compactwrapfigure}{r}{0.43\textwidth}
\centering
\begin{minipage}[r]{0.40\columnwidth}%
\centering
    \resizebox{5.4cm}{!}{\input{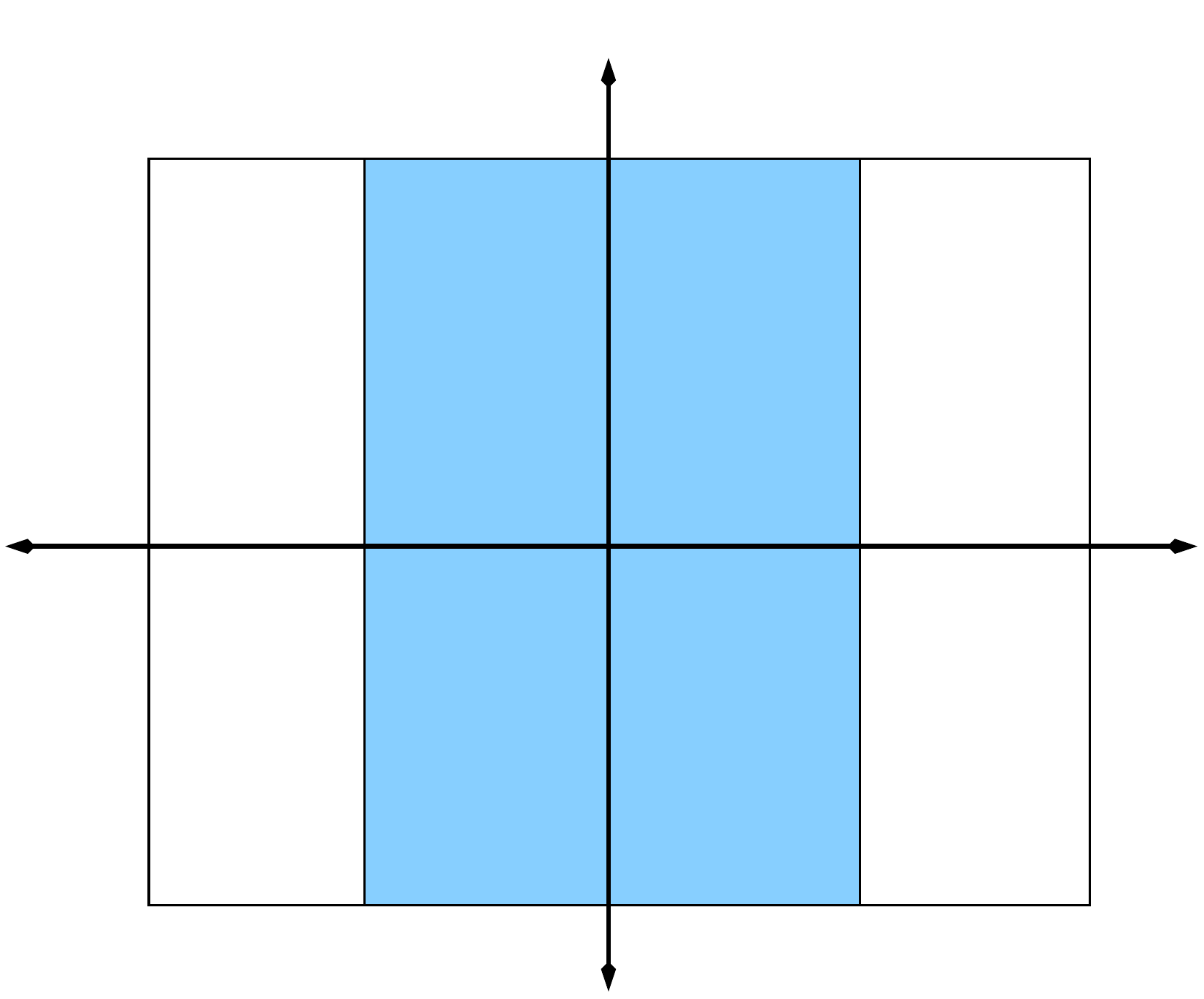_t}}
    \footnotesize{\caption[Momenta on the b.c.c.\ sublattice]{The subset $\mathcal{S}$ in blue.}}
    \label{fig:71}
\end{minipage}
\end{compactwrapfigure}

Let us switch to momentum space.  Position states are written as
\begin{equation}
 \ket{\vec{n}}=\int \frac{\textrm{d}^2p}{(2\pi)^2}e^{-i\vec{p}.\vec{n}a}\ket{\vec{p}\,},
\end{equation}
where the integral is over $(-\f{\pi}{a},\f{\pi}{a}]\times(-\f{\pi}{a},\f{\pi}{a}]$.  Now, define the set $\mathcal{S}$ by
\begin{equation}
 \mathcal{S}=(-\f{\pi}{a},\f{\pi}{a}]\times(-\f{\pi}{2a},\f{\pi}{2a}].
\end{equation}

So we can split the integral up into
\begin{equation}
\label{eq:doub1}
 \ket{\vec{n}}=\int_{\mathcal{S}} \frac{\textrm{d}^2p}{(2\pi)^2}e^{-i\vec{p}.\vec{n}a}\ket{\vec{p}\,}+\int_{\mathcal{S}^{\prime}} \frac{\textrm{d}^2p}{(2\pi)^2}e^{-i\vec{p}.\vec{n}a}\ket{\vec{p}\,},
\end{equation}
where $\mathcal{S}^{\prime}$ is the complement of $\mathcal{S}$.  Next, recall that $\vec{p}_1=(\f{\pi}{a},\f{\pi}{a})$.  And define $\vec{p}+\vec{p}_1$ such that each component is restricted to $(-\f{\pi}{a},\f{\pi}{a}]$ by subtracting $\f{2\pi}{a}$ if necessary.\footnote{Had we defined momenta originally to have components in $[0,\f{2\pi}{a})$, then $\vec{p}+\vec{p}_1$ could just be defined modulo $\f{2\pi}{a}$.  Unfortunately, this would have been awkward from the point of view of taking continuum limits.}  Now we need the fact that $\mathcal{S}^{\prime}$ can be written as
\begin{equation}
 \mathcal{S^\prime}=\{\vec{p}+\vec{p}_1|\vec{p}\in\mathcal{S}\}.
\end{equation}
This allows us to rewrite equation (\ref{eq:doub1}) as
\begin{equation}
\begin{split}
 \ket{\vec{n}} & =\int_{\mathcal{S}} \frac{\textrm{d}^2p}{(2\pi)^2}e^{-i\vec{p}.\vec{n}a}\ket{\vec{p}\,}+\int_{\mathcal{S}} \frac{\textrm{d}^2p}{(2\pi)^2}e^{-i(\vec{p}+\vec{p}_1).\vec{n}a}\ket{\vec{p}+\vec{p}_1}\\
  & =\int_{\mathcal{S}} \frac{\textrm{d}^2p}{(2\pi)^2}\left(e^{-i\vec{p}.\vec{n}a}\ket{\vec{p}\,}+e^{-i(\vec{p}+\vec{p}_1).\vec{n}a}\ket{\vec{p}+\vec{p}_1}\right).
 \end{split}
\end{equation}
If we restrict to $\vec{n}$ being a point on the b.c.c.\ sublattice containing the origin, $(\vec{p}+\vec{p}_1).\vec{n}a=\vec{p}.\vec{n}a$ modulo $2\pi$.  Then
\begin{equation}
\begin{split}
 \ket{\vec{n}} & =\int_{\mathcal{S}} \frac{\textrm{d}^2p}{(2\pi)^2}e^{-i\vec{p}.\vec{n}a}\big(\ket{\vec{p}\,}+\ket{\vec{p}+\vec{p}_1}\big)\\ & = \int_{\mathcal{S}} \frac{\textrm{d}^2p}{(2\pi)^2}e^{-i\vec{p}.\vec{n}a}\ket{\vec{\mathbf{p}}},
 \end{split}
\end{equation}
where we defined
\begin{equation}
\ket{\vec{\mathbf{p}}}=\ket{\vec{p}\,}+\ket{\vec{p}+\vec{p}_1}.
\end{equation}

Now we will see that we should treat $\ket{\vec{\mathbf{p}}}$ as the physical state of the particle with momentum $\vec{p}$.  These behave like they have momentum $\vec{p}$ under the action of the evolution operator:
\begin{equation}
  e^{-i\sigma_x P_xa}e^{-i\sigma_y P_ya}\ket{\vec{\mathbf{p}}}
  =e^{-i\sigma_x p_xa}e^{-i\sigma_y p_ya}\ket{\vec{\mathbf{p}}}.
 \end{equation}
 \begin{compactwrapfigure}{r}{0.52\textwidth}
 \centering
\begin{minipage}[r]{0.50\columnwidth}%
\centering
    \resizebox{6.5cm}{!}{\includegraphics{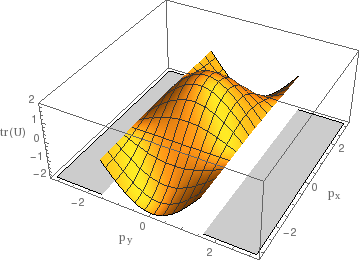}}
    \footnotesize{\caption[Trace of a quantum walk operator on the b.c.c.\ sublattice]{Trace of the massless Dirac quantum walk in two dimensional space again.  The $x$- and $y$-axes correspond to the $x$ and $y$ components of the particle's momentum.  Now momenta are restricted to $\mathcal{S}$, and only momenta close to zero have low energy.\ \\ \ }}
    \label{fig:11}
\end{minipage}
\end{compactwrapfigure}
Note that in the first line capital $P_b$ denotes the components of the momentum operator, whereas lower case $p_b$ in the second line are the components of the momentum $\vec{p}$.  So by restricting to these states, there are no doublers.  This is because the formula for the trace of $U$ remains the same, but now values of $\vec{p}$ with both components close to $\f{\pi}{a}$ never appear because $\vec{p}\in\mathcal{S}$.

Of course, there are two distinct b.c.c.\ sublattices on the cubic lattice.  Because of this, we can consider particles living on different sublattices as different particle types.

\subsection{Fermion Doubling in Three Dimensions}
\label{sec:Fermion Doubling in Three Dimensions}
The situation in three dimensional space is more complicated.  We will see that there are more doublers, though it is not clear whether they can be entirely removed in the same way.

Take the quantum walk that gives rise to the right-handed Weyl equation in the continuum limit, with evolution operator
\begin{equation}
 U_R=e^{-i\sigma_x P_xa}e^{-i\sigma_y P_ya}e^{-i\sigma_z P_za},
\end{equation}
where $P_b$ is the $b$th component of the momentum operator. 

It is claimed in \cite{Bial94} that these models evade the fermion doubling problem because one can always add $2\pi/a$ to the energies.  This does not change the evolution (since $e^{i(p+2\pi/a)a}=e^{ipa}$), and so is meaningless in a sense.  What is meaningful is to look at whether there are doublers in the sense we defined in section \ref{sec:Fermion Doubling in Discrete Time}.  Again, there are \cite{Short13}.  Plots of the dispersion relation for such quantum walks were done in \cite{DP13}, where doubling was noticed in the spectrum.

To see that there is a problem, let us again work in momentum space.  The momentum state $\ket{0,0,0}$ is an eigenstate with eigenvalue one.  Again, there are more.  One example is $\ket{0,\f{\pi}{a},\f{\pi}{a}}$.  Similarly, any momentum state with two components $\f{\pi}{a}$ and one zero will be an eigenvector with eigenvalue one.

And these doublers obey the Weyl equation in the continuum limit.  To see this, suppose that $\vec{p_i}$ is one such momentum vector and let $\vec{p}$ be a momentum vector close to zero.  Then we have
\begin{equation}
\begin{split}
 U_R\ket{\vec{p}+\vec{p}_i} & =e^{-i\sigma_x P_xa}e^{-i\sigma_y P_ya}e^{-i\sigma_z P_za}\ket{\vec{p}+\vec{p}_i}\\
 & =e^{-i\sigma_x p_xa}e^{-i\sigma_y p_ya}e^{-i\sigma_z p_za}\ket{\vec{p}+\vec{p}_i}\\
 & =(1-i\vec{p}.\vec{\sigma}a+O(a^2))\ket{\vec{p}+\vec{p}_i},
\end{split}
\end{equation}
so $\ket{\vec{p}+\vec{p}_i}$ for small $\vec{p}$ behaves like a right handed Weyl particle too.  Therefore, states in the subspace spanned by such states with $|\vec{p}|\ll\f{\pi}{a}$ behave like Weyl particles.  So, since there are four such $\vec{p}_i$ vectors (including $\vec{p}_i=\vec{0}$), there appear to be four types of fermions here.  We call these $\f{\pi}{a}$ doublers.

Now, there is a way to remove these doublers, which we will get to, but first we should ask whether we have found all of them.  In fact, the answer is no!  One way to see this is to again use $e^{-i\theta\sigma_b}=\cos(\theta)-i\sin(\theta)\sigma_b$.  Acting on a state with momentum $(\f{-\pi}{2a},\f{\pi}{2a},\f{\pi}{2a})$, the evolution operator becomes
\begin{equation}
 U_R =(i\sigma_x)(-i\sigma_y)(-i\sigma_z)=\openone.
\end{equation}
And similarly, there are three more doublers like this, with momenta given by $(\f{\pi}{2a},\f{-\pi}{2a},\f{\pi}{2a})$, $(\f{\pi}{2a},\f{\pi}{2a},\f{-\pi}{2a})$ and $(\f{-\pi}{2a},\f{-\pi}{2a},\f{-\pi}{2a})$.  We call these $\f{\pi}{2a}$ doublers.

We can use a similar argument to that used for the $\f{\pi}{a}$ doublers, by looking at the behaviour of $\vec{p}+\vec{k}_i$, where $\vec{k}_i$ is one of the momenta for which $U_R$ is the identity and $\vec{p}$ is small.  Then we see that for small $\vec{p}$ the particle behaves like a Weyl particle.

At this point, it seems like there are eight fermion species where we only wanted one.  We really should pause to verify that there are no more.  This is complicated enough to merit a lemma.
\begin{lemma}
The quantum walk with evolution operator
 \begin{equation}
 U_R=e^{-i\sigma_x P_xa}e^{-i\sigma_y P_ya}e^{-i\sigma_z P_za},
\end{equation}
has eight modes that obey the Weyl equation in the continuum limit.
\end{lemma}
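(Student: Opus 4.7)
The plan is to work in the momentum basis, where $U_R(\vec{p})=e^{-i\sigma_x p_xa}e^{-i\sigma_yp_ya}e^{-i\sigma_zp_za}$ acts as a $2\times2$ unitary at each momentum $\vec{p}\in(-\pi/a,\pi/a]^3$, and to find every $\vec{p}_i$ at which $U_R(\vec{p}_i)=+\openone$. At such a point the linearization in $\vec{k}=\vec{p}-\vec{p}_i$ has the form $\openone-iaH(\vec{k})+O(a^2)$ with $H(\vec{k})$ self-adjoint, traceless, and linear in $\vec{k}$ on the two-dimensional coin, so by theorem \ref{th:14} the continuum Hamiltonian is a massless Weyl Hamiltonian (possibly after a trivial change of coordinates). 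Momenta where $U_R(\vec{p}_i)=-\openone$ correspond, via $U=e^{-iHa}$, to a global energy offset $\pi/a$ that diverges as $a\to 0$, so those modes do not contribute low-energy particles in the continuum limit and can be discarded.

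First I would expand $U_R(\vec{p})=\alpha_0\openone+\alpha_x\sigma_x+\alpha_y\sigma_y+\alpha_z\sigma_z$ using $e^{-i\sigma_bp_ba}=c_b\openone-is_b\sigma_b$ with $c_b=\cos(p_ba)$, $s_b=\sin(p_ba)$; multiplying out gives $\alpha_0=c_xc_yc_z-s_xs_ys_z$, $\alpha_x=-i(s_xc_yc_z+c_xs_ys_z)$, $\alpha_y=-i(c_xs_yc_z-s_xc_ys_z)$, $\alpha_z=-i(c_xc_ys_z+s_xc_zs_y)$. The condition $U_R=+\openone$ becomes $\alpha_x=\alpha_y=\alpha_z=0$ together with $\alpha_0=+1$.

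Next I would run a case analysis on the $(c_b,s_b)$. If all cosines and sines are nonzero, dividing the three vanishing conditions by $c_xc_yc_z$ gives $t_x+t_yt_z=0$, $t_y-t_xt_z=0$, $t_z+t_xt_y=0$ in $t_b=\tan(p_ba)$; substituting the first into the second yields $t_y(1+t_z^2)=0$, a contradiction. If exactly one sine vanishes (say $s_y=0$ with $s_x,s_z\neq 0$), then $\alpha_x=0$ forces $c_z=0$ while $\alpha_y=0$ forces $s_z=0$, again a contradiction; the same trick disposes of a single vanishing cosine. This leaves just two clean alternatives: every $s_b=0$ (so $p_b\in\{0,\pi/a\}$) or every $c_b=0$ (so $p_b\in\{\pm\pi/(2a)\}$), producing sixteen candidates. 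Filtering by the sign of $\alpha_0$: in the first family $\alpha_0=\prod_b\cos(p_{i,b}a)$ equals $+1$ exactly when an even number of the components are $\pi/a$, giving the four $\pi/a$ doublers already identified; in the second family $\alpha_0=-\prod_b\sin(p_{i,b}a)$ equals $+1$ exactly when an odd number of the components are $-\pi/(2a)$, giving the four $\pi/(2a)$ doublers already identified. Eight momenta in total.

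Finally I would verify the Weyl dispersion explicitly at each of these eight points. Writing $U_R(\vec{p}_i+\vec{k})=V_xW_xV_yW_yV_zW_z$ with $V_b=e^{-i\sigma_bp_{i,b}a}$ and $W_b=e^{-i\sigma_bk_ba}$, and using $V_xV_yV_z=\openone$ together with $[V_b,W_b]=0$, the first-order expansion in $\vec{k}$ collapses to $\openone-ia\bigl(k_x\sigma_x+k_y(V_x\sigma_yV_x^{-1})+k_z\sigma_z\bigr)$. For the four $\pi/a$ modes one has $V_x=\pm\openone$ and the Hamiltonian is $\vec{k}\cdot\vec{\sigma}$, while for the four $\pi/(2a)$ modes $V_x=\mp i\sigma_x$ forces $V_x\sigma_yV_x^{-1}=-\sigma_y$ and the Hamiltonian is $k_x\sigma_x-k_y\sigma_y+k_z\sigma_z$, unitarily equivalent to $\vec{k}\cdot\vec{\sigma}$. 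The main obstacle will be the case analysis itself: a slicker Euler-angle argument misses the ``gimbal-locked'' family $p_b=\pm\pi/(2a)$, which is precisely where the second set of four Weyl modes lives, so the careful brute-force Pauli expansion seems to be the reliable route.
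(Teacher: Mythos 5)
Your proposal is correct and arrives at the same eight momenta, but by a genuinely more computational route than the paper. The paper never expands $U_R(\vec{p}\,)$ in the full Pauli basis: it uses the fact that a $2\times 2$ unitary equals $\openone$ exactly when its trace is $2$, computes $\tr{U_R(\vec{p}\,)}=2(c_xc_yc_z-s_xs_ys_z)$, and then shows by a nested inner-product (Cauchy--Schwarz saturation) argument that the \emph{single} scalar equation $c_xc_yc_z-s_xs_ys_z=1$ already forces $c_xc_yc_z=1$ or $s_xs_ys_z=-1$, which immediately pins down the $4+4$ momenta. That collapses your four conditions $\alpha_0=1$, $\alpha_x=\alpha_y=\alpha_z=0$ into one and avoids the case analysis entirely. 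What your route buys is the explicit linearization $U_R(\vec{p}_i+\vec{k})=\openone-ia\big(k_x\sigma_x+k_y\,V_x\sigma_yV_x^{-1}+k_z\sigma_z\big)+O(a^2)$, which makes concrete the ``similar argument'' that the paper only gestures at for the $\pi/(2a)$ doublers, where the non-commuting factors $V_b=\mp i\sigma_b$ genuinely matter. Two small repairs are needed. First, your case analysis is not exhaustive as written: mixed configurations such as $s_x=s_y=0$ with $c_z=0$, or one vanishing sine together with one vanishing cosine, fall under neither ``exactly one sine vanishes'' nor ``exactly one cosine vanishes''; each such case is killed in one line by one of the three off-diagonal conditions, but they must be listed before you can conclude that only the all-sines-zero and all-cosines-zero families survive. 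Second, at the $\pi/(2a)$ points the Hamiltonian $k_x\sigma_x-k_y\sigma_y+k_z\sigma_z$ is \emph{not} unitarily equivalent to $+\vec{k}\cdot\vec{\sigma}$: the map $(\sigma_x,\sigma_y,\sigma_z)\mapsto(\sigma_x,-\sigma_y,\sigma_z)$ is a reflection of the Pauli sphere, not a rotation, so no coin unitary implements it; conjugating by $\sigma_y$ instead gives $-\vec{k}\cdot\vec{\sigma}$, so these four doublers are \emph{left-handed} Weyl modes. They still obey the Weyl equation, so the lemma is unaffected, but the chirality claim should be corrected.
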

\begin{proof}
To see this, note that the trace of $U_R$ is equal to $2(c_xc_yc_z-s_xs_ys_z)$, where $c_b=\cos(p_ba)$ and $s_b=\sin(p_ba)$.  If $U_R=\openone$, then its trace is two.  So the only way both eigenvalues of $U_R$ for a given $\vec{p}$ can be one is if $c_xc_yc_z-s_xs_ys_z=1$, but this is only possible if $c_xc_yc_z=1$ or $s_xs_ys_z=-1$.  To see this, treat this expression as an inner product between $(c_x,-s_x)$, a normalized vector, and $(c_yc_z,s_ys_z)$, a vector with norm less than or equal to one.  Now, for this inner product to be one, the two vectors must have norm one (they must also be parallel).  Then it follows that we need $c_y^2c_z^2+s_y^2s_z^2=1$.  Now treat  $c_y^2c_z^2+s_y^2s_z^2$ as an inner product between $(c^2_y,s_y^2)$ and $(c^2_z,s^2_z)$.  Both vectors' norms are less than or equal to one, so we need $c_y^4+s_y^4=1$, which is only possible if $|c_y|=1$ or $|s_y|=1$.  So the only way to satisfy $c_xc_yc_z-s_xs_ys_z=1$ is if $c_xc_yc_z=1$ or $s_xs_ys_z=-1$.  And, given that $p_ba\in(-\pi,\pi]$, we have already found the eight possible values of $\vec{p}$ for which either of these two conditions can be satisfied.
\end{proof}

The situation for $U_L=e^{i\sigma_x P_xa}e^{i\sigma_y P_ya}e^{i\sigma_z P_za}$, which gives rise to the left-handed Weyl equation in the continuum limit, is almost identical.  The only difference is that the trace is equal to $2(c_xc_yc_z+s_xs_ys_z)$.

\subsubsection{Mitigating the Doubling problem}
We can at least remove the $\f{\pi}{a}$ doublers.  Again, this works by restricting to states that live on a b.c.c.\ sublattice of the original cubic lattice.  Position states are
\begin{equation}
 \ket{\vec{n}}=\int \frac{\textrm{d}^3p}{(2\pi)^3}e^{-i\vec{p}.\vec{n}a}\ket{\vec{p}\,}.
\end{equation}
Define the set $\mathcal{S}$ by
\begin{equation}
 \mathcal{S}_1=(-\f{\pi}{a},\f{\pi}{a}]\times(-\f{\pi}{2a},\f{\pi}{2a}]\times(-\f{\pi}{2a},\f{\pi}{2a}].
\end{equation}
Let us label the momentum vectors by
\begin{equation}
 \begin{split}
  \vec{p}_1 & = (0,0,0)\\
  \vec{p}_2 & = (\f{\pi}{a},\f{\pi}{a},0)\\
   \vec{p}_3 & = (\f{\pi}{a},0,\f{\pi}{a})\\
   \vec{p}_4 & =(0,\f{\pi}{a},\f{\pi}{a}).
 \end{split}
\end{equation}
We define $\vec{p}+\vec{p_i}$ such that each component is restricted to $(-\f{\pi}{a},\f{\pi}{a}]$ by subtracting $\f{2\pi}{a}$ if necessary.  By this definition, one can construct the disjoint sets
\begin{equation}
 \mathcal{S}_i=\{\vec{p}+\vec{p_i}|\vec{p}\in\mathcal{S}_1\}.
\end{equation}
The union of these sets contains all momentum vectors.  This allows us to write position states as
\begin{equation}
\begin{split}
 \ket{\vec{n}} & =\sum_i\int_{\mathcal{S}_i} \frac{\textrm{d}^3p}{(2\pi)^3}e^{-i\vec{p}.\vec{n}a}\ket{\vec{p}\,}\\
 & =\sum_i\int_{\mathcal{S}_1} \frac{\textrm{d}^3p}{(2\pi)^3}e^{-i(\vec{p}+\vec{p}_i).\vec{n}a}\ket{\vec{p}+\vec{p}_i}
 \end{split}
\end{equation}
If we restrict to $\vec{n}$ being a point on the b.c.c.\ sublattice containing the origin, $(\vec{p}+\vec{p}_i).\vec{n}a=\vec{p}.\vec{n}a$ modulo $2\pi$.  Then
\begin{equation}
 \ket{\vec{n}} =\int_{\mathcal{S}_1} \frac{\textrm{d}^3p}{(2\pi)^3}e^{-i\vec{p}.\vec{n}a}\sum_i\ket{\vec{p}+\vec{p}_i}
 =\int_{\mathcal{S}_1} \frac{\textrm{d}^3p}{(2\pi)^3}e^{-i\vec{p}.\vec{n}a}\ket{\vec{\bf{p}}},
\end{equation}
where we defined
\begin{equation}
\ket{\vec{\bf{p}}}=\sum_i\ket{\vec{p}+\vec{p}_i}.
\end{equation}

These behave like they have momentum $\vec{p}$ under the action of the evolution operator:
\begin{equation}
  e^{-i\sigma_x P_xa}e^{-i\sigma_y P_ya}e^{-i\sigma_z P_za}\ket{\vec{\bf{p}}}
  =e^{-i\sigma_x p_xa}e^{-i\sigma_y p_ya}e^{-i\sigma_z p_za}\ket{\vec{\bf{p}}}.
 \end{equation}
In the first line, capital $P_b$ denotes the components of the momentum operator, whereas lower case $p_b$ in the second line are the components of the momentum $\vec{p}$.  So by restricting to these states, we have none of the $\f{\pi}{a}$ type of doublers.  Again, a particle remains in this subspace even in the presence of gauge or on-site interactions.  As before, this is because on-site or gauge interactions would not move the particle off the b.c.c.\ sublattice.  

In short, the $\f{\pi}{a}$ doublers have been removed.  Furthermore, there are four distinct b.c.c.\ sublattices that are independent.  In a multiparticle picture, we could consider particles living on different sublattices as different particle types.

Because we are restricted to the space spanned by $\ket{\vec{\bf{p}}}$ with $\vec{p}\in\mathcal{S}$, there is only one doubler left.  Label each of the four $\f{\pi}{2a}$ doubler momentum vectors by $\vec{k}_i$.  Then, for a given $j$,
\begin{equation}
 \sum_i\ket{\vec{k}_j+\vec{p}_i}=\sum_i\ket{\vec{k}_i}.
\end{equation}
So after the restriction to the b.c.c.\ sublattice, there is only one doubler left.

Unfortunately, it is not immediately clear whether this $\f{\pi}{2a}$ doubler can be removed.  But here is one possible strategy.  Suppose that every second timestep, instead of implementing the evolution $T_xT_yT_z$, we implement $T_zT_yT_x$.  Acting on a doubler momentum state, $(\f{-\pi}{2a},\f{\pi}{2a},\f{\pi}{2a})$, for example, the latter operator is equal to
\begin{equation}
 (-i\sigma_z)(-i\sigma_y)(i\sigma_x)=-\openone,
\end{equation}
which should correspond to high energy.  Similarly, acting on the other $\f{\pi}{2a}$ doublers $T_zT_yT_x$ will equal $-\openone$.  Of course, $T_zT_yT_x$ also has $+1$ eigenvectors with momentum components $\pm\f{\pi}{2a}$.  But when $T_xT_yT_z$ acts on these it has eigenvalue $-1$, so again these ought to correspond to high energy.

Understanding completely how this would play out would require further work.  It is not obvious what the strategy above means for continuum limits.  Perhaps formulating a concrete procedure for simulation that takes all aspects of the process into account, like state preparation, for example, would provide the answer.  Nevertheless, it is very interesting that at this level we have managed to remove the doubler modes completely in two dimensional space and reduce the number to one in three dimensional space.  And we can take encouragement from the fact that all this was done using only simple quantum walks.  So it is entirely possible that other quantum walks are even better from this point of view.

\section{Quantum Walks on a Line:\ Abstract Theory}
\label{sec:Quantum Walks on a Line: Abstract Theory}
One reason that we could not just use the Lie-Trotter product formula to take the continuum limit in section \ref{sec:Two Dimensional Coins are Special} was that it is not known whether all quantum walk unitaries can be written as a product of conditional shifts and coin operators.  In one dimension, however, this is known to be true.

Here is a simple proof of this.  This was proved previously by different methods in \cite{Vogts09}.  It is interesting that there is an analogue of this result for quantum cellular automata, which we will prove in chapter \ref{chap:Quantum Cellular Automata and Field Theory}.

\begin{theorem}
\label{th:qwlocaldecomp13}
 Given a translationally invariant quantum walk on a line, with evolution operator $U$, there exist pairs of projectors on $\mathcal{H}_C$ denoted by $\{P_i,P_i^{\perp}\}$ with the property that $P_i+P_i^{\perp}=\openone$ such that
\begin{equation}
U=W\prod_{i=1}^{N}\left(P_i S_{q_i}+P_i^{\perp}\right),
\end{equation}
where $S_{q_i}$ are shift operators that shift by $q_i$ steps along the line and $W$ is a unitary operator on $\mathcal{H}_C$.
\end{theorem}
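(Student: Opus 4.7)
The plan is to prove this by induction on the size $N = q_{\max} - q_{\min} + 1$ of the shift neighbourhood of $U$, where by translation invariance I would write $U = \sum_{q=q_{\min}}^{q_{\max}} A_q S_q$ with $A_q$ operators on $\mathcal{H}_C$ and $A_{q_{\min}}, A_{q_{\max}} \neq 0$ (as in section \ref{sec:Form of an arbitrary Quantum Walk Unitary}). The base case $N=1$ is immediate: unitarity forces the single nonzero $A_{q_0}$ to be a unitary on $\mathcal{H}_C$, so $U = A_{q_0} S_{q_0} = A_{q_0}\bigl(\openone\, S_{q_0} + 0\bigr)$ already has the claimed form with $W = A_{q_0}$, $P_1 = \openone$ and $q_1 = q_0$.

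For the inductive step the first task is to extract a useful consequence of unitarity. Expanding $U^{\dagger}U = \openone$ as $\sum_{r,s} A_r^{\dagger} A_s S_{s-r}$ and demanding that the coefficient of $S_{q_{\max}-q_{\min}}$ vanish gives $A_{q_{\min}}^{\dagger} A_{q_{\max}} = 0$, which says that the ranges of $A_{q_{\min}}$ and $A_{q_{\max}}$ in $\mathcal{H}_C$ are orthogonal subspaces.

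The key step is then to peel off a single conditional shift so as to reduce the neighbourhood size by one. I would let $P$ be the orthogonal projector onto $\mathrm{range}(A_{q_{\max}})$, so that $P^{\perp} A_{q_{\max}} = 0$, and (using the orthogonality just established) $P A_{q_{\min}} = 0$. Then
$$U' := (P S_1^{\dagger} + P^{\perp})\,U = \sum_{r} P A_r S_{r-1} + \sum_{r} P^{\perp} A_r S_r,$$
and the choice of $P$ makes the $r = q_{\max}$ term of the second sum and the $r = q_{\min}$ term of the first sum vanish. A term-by-term inspection then shows that $U'$ has shift neighbourhood contained in $\{q_{\min},\ldots,q_{\max}-1\}$, so its size is $N-1$. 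Since $U'$ is unitary and translation invariant, the induction hypothesis gives $U' = W' \prod_{i=1}^{N-1}\bigl(P_i'\,S_{q_i'} + P_i'^{\perp}\bigr)$, and inverting the conditional shift yields $U = (P S_1 + P^{\perp})\,U'$.

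To bring this to the form claimed, I would commute $W'$ to the leftmost position using the identity $(P S_1 + P^{\perp}) W' = W' \bigl( W'^{\dagger} P W'\,S_1 + W'^{\dagger} P^{\perp} W' \bigr)$; since $W'^{\dagger} P W'$ is again a projector on $\mathcal{H}_C$, this produces the desired decomposition with $W = W'$ and one extra conditional-shift factor prepended to the existing product. The main obstacle I expect is in verifying that the neighbourhood really shrinks to $\{q_{\min},\ldots,q_{\max}-1\}$ rather than extending on the left to $S_{q_{\min}-1}$; this is precisely what the orthogonality relation $A_{q_{\min}}^{\dagger} A_{q_{\max}} = 0$ guarantees, by killing the otherwise-present $P A_{q_{\min}} S_{q_{\min}-1}$ term.
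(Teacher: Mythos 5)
Your proof is correct and follows essentially the same strategy as the paper's: derive from unitarity that the extremal coefficient operators $A_{q_{\min}}$ and $A_{q_{\max}}$ are orthogonal, use the corresponding projector to build a conditional shift that strictly shrinks the neighbourhood, and iterate until only a unitary on $\mathcal{H}_C$ remains. The only differences are mirror-image conventions (you expand $U^{\dagger}U$ and multiply on the left using the range of $A_{q_{\max}}$, where the paper expands $UU^{\dagger}$ and multiplies on the right using its support) together with the explicit induction framing and the final commutation of $W'$, which the paper leaves implicit.
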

\begin{proof}
We saw in section \ref{sec:Form of an arbitrary Quantum Walk Unitary} that $U$ always takes the form
\begin{equation}
 U=\sum_{q\in Q}A_{q}S_{q},
\end{equation}
where $A_{q}$ are operators on $\mathcal{H}_C$ and, because $U$ is causal, $A_{q}$ is only non-zero for some finite set of vectors $q$, denoted $Q$, called the neighbourhood.
 
By imposing unitarity and expanding $UU^{\dagger}=\openone$, we get 
\begin{equation}
\displaystyle\sum_{q,p\in Q} A_qA^{\dagger}_p S_qS_p^{\dagger}=\openone.
\end{equation}
So all terms with $S_qS_p^{\dagger}$ where $q\neq p$ must vanish.  Now, there is a maximum and a minimum element in $Q$, which we will denote by $q_{\max}$ and $q_{\min}$ respectively.  And there is only one term in the expansion of $UU^{\dagger}$ that shifts by $q_{max}-q_{\min}$.  Since this must vanish, we know that
\begin{equation}
  A_{q_{\max}}A^{\dagger}_{q_{\min}}=0.
\end{equation}
This means that the supports of these two operators are orthogonal.  One way to see this is to use singular value decomposition to write each operator as $a_1\ket{v_1}\bra{w_1}+a_2\ket{v_2}\bra{w_2}$, where $a_i$ are real numbers, which may be zero, and both $\ket{v_i}$ and $\ket{w_j}$ are two orthonormal bases.  The decomposition will be different in general for $A_{q_{\max}}$ and $A_{q_{\min}}$.  The support of an operator is spanned by the $\ket{w_i}$ corresponding to the nonzero $a_i$.  Then it follows that $A_{q_{\max}}A^{\dagger}_{q_{\min}}=0$ implies that the supports of both operators are orthogonal.

Let us denote the projector onto the support of $A_{q_{\max}}$ by $P_1$ and define $P_1^{\perp}=\openone-P_1$.  Then we define the operator $V$ by
\begin{equation}
 V=U\left(P_1 S_{-1}+P_1^{\perp}\right),
\end{equation}
where $S_{-1}$ shifts by $-1$.  Because $A_{q_{\min}}P_1=0$, it follows that $V$ has a neighbourhood that has a smaller range of values than that of $U$.

We can continue this process of multiplying $U$ by operators of the form $(P_i S_{-1}+P_i^{\perp})$ until the neighbourhood of the resulting operator contains only one point.  The resulting operator may be proportional to a shift, but this can be undone by multiplication by its inverse, which is a special case of an operator of the form $(P_iS_{q_i}+P_i^{\perp})$.  It follows that the resulting operator is a unitary on $\mathcal{H}_C$, which we will denote by $W$.  Inverting this expression gives
\begin{equation}
U=W\prod_{i=1}^{N}\left(P_i S_{q_i}+P_i^{\perp}\right).
\end{equation}
\end{proof}

\begin{compactwrapfigure}{r}{0.40\textwidth}
\centering
\begin{minipage}[r]{0.38\columnwidth}%
\centering
    \resizebox{5.0cm}{!}{\begin{picture}(0,0)%
\includegraphics{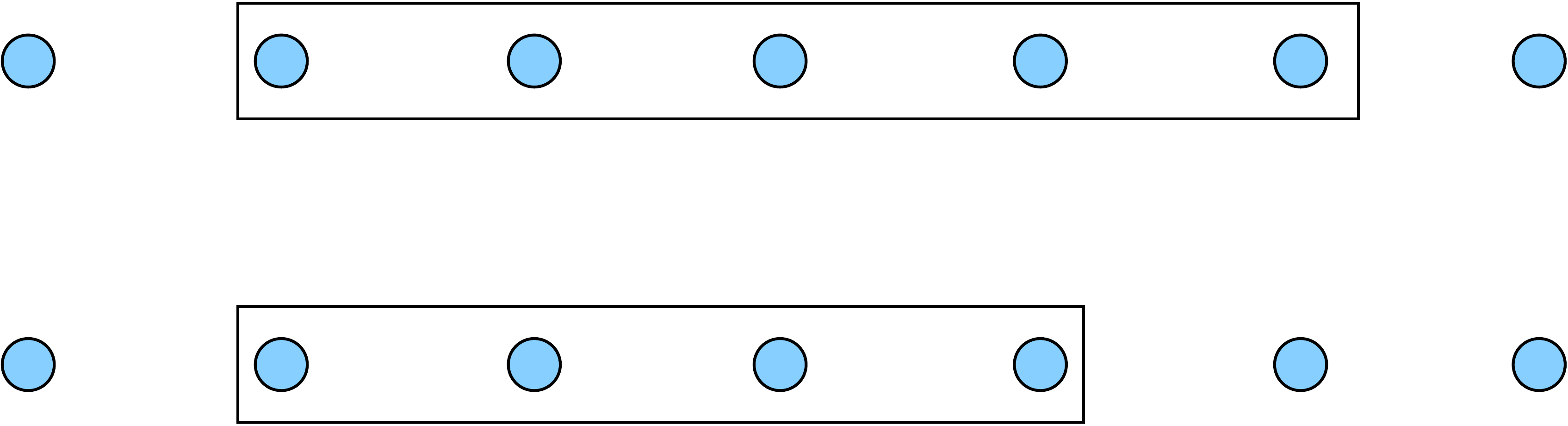}%
\end{picture}%
\setlength{\unitlength}{3947sp}%
\begingroup\makeatletter\ifx\SetFigFont\undefined%
\gdef\SetFigFont#1#2#3#4#5{%
  \reset@font\fontsize{#1}{#2pt}%
  \fontfamily{#3}\fontseries{#4}\fontshape{#5}%
  \selectfont}%
\fi\endgroup%
\begin{picture}(16261,4416)(-6817,-5194)
\end{picture}%
}
    \footnotesize{\caption[Shrinking the neighbourhood of a quantum walk]{Shrinking the neighbourhood of a quantum walk.  The top neighbourhood corresponds to $U$ and the bottom to $V$, mentioned in the proof of theorem \ref{th:qwlocaldecomp13}.}}
    \label{fig:75}
\end{minipage}
\end{compactwrapfigure}

The obvious question now is whether or not such a decomposition exists in higher dimensional spaces.  At the moment, the answer is not known.

Let us attempt to construct a counterexample using the quantum walks that we have already seen.  Of course, they have all been explicitly presented as a product of conditional shifts and coin operators.  But the two and three dimensional examples giving rise to the Dirac or Weyl equations in the continuum limit really only use a b.c.c.\ sublattice of the original cubic lattice.  So it is not inconceivable that we could construct a counterexample by {\em restriction} to these sublattices.

\begin{compactwrapfigure}{r}{0.35\textwidth}
\centering
\begin{minipage}[r]{0.34\columnwidth}%
\centering
    \resizebox{4.0cm}{!}{\input{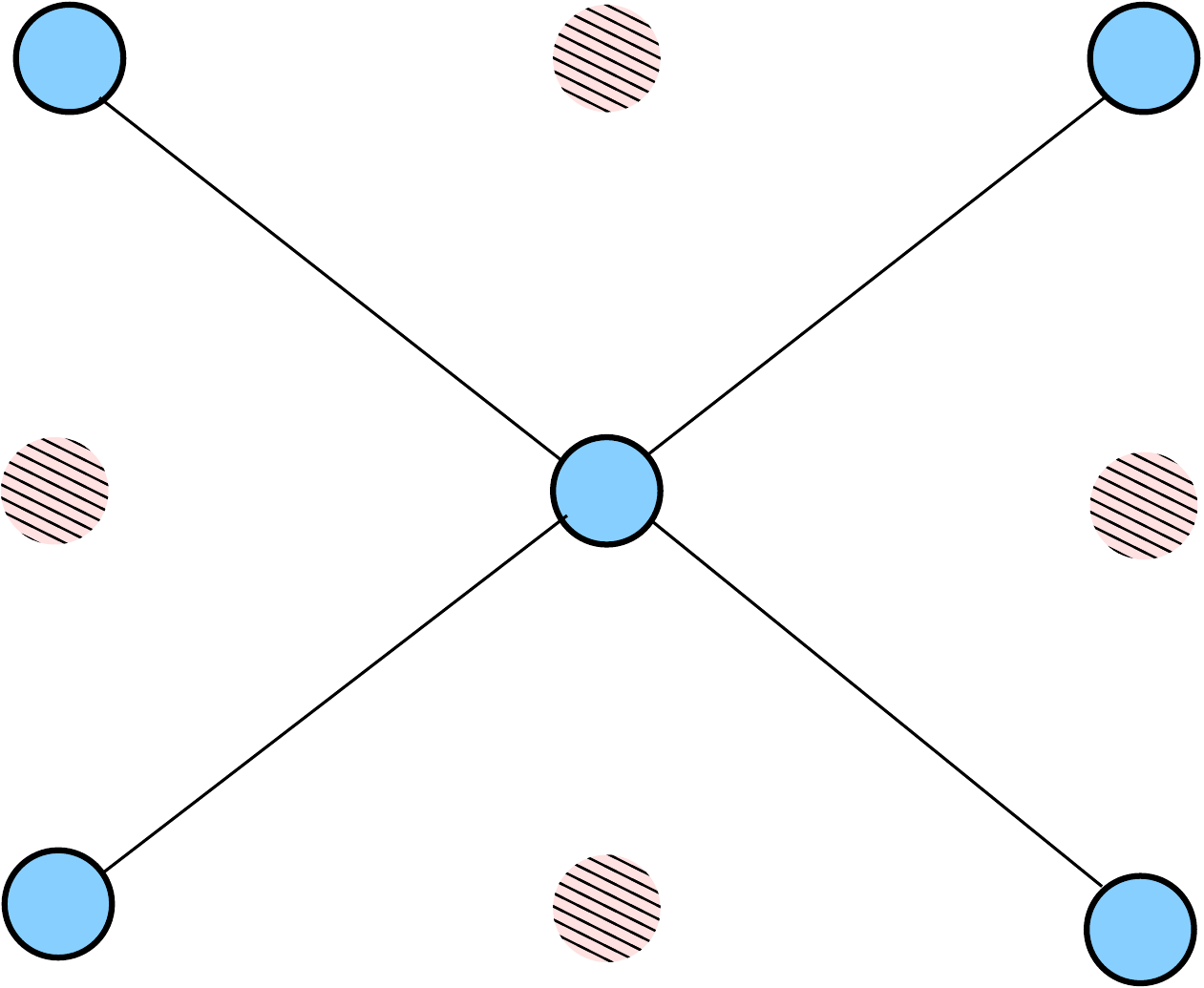_t}}
    \footnotesize{\caption[Shifts restricted to a b.c.c.\ sublattice]{Shifts by lattice vectors on the b.c.c.\ sublattice.}}
    \label{fig:90}\ \\ \ 
\end{minipage}
\end{compactwrapfigure}

Let us try the now familiar quantum walk
\begin{equation}
 U=e^{-iP_x\sigma_xa}e^{-iP_y\sigma_ya}=T_xT_y,
\end{equation}
where $T_b$, with $b\in\{x,y\}$, is a conditional shift along the $b$ direction depending on the spin along that direction, meaning
\begin{equation}
 T_b=S_b\ket{\uparrow_b}\bra{\uparrow_b}+S_b^{\dagger}\ket{\downarrow_b}\bra{\downarrow_b},
\end{equation}
where $S_b$ are shifts by one step along the $b$ direction.  If we restrict to the even b.c.c.\ sublattice, which includes the point $(0,0)$, then we still have a quantum walk with a causal unitary evolution, but now the evolution operator cannot be written as $T_xT_y$ since the walk is only defined on a lattice with sites $(n,m)$, where $n+m$ is even.  It does not make sense to talk about applying $T_x$ as there are no points $(n,m)$ with $n+m$ odd.

Still, with a little thought we can write down an evolution operator defined only on this even lattice that is a product of conditional shifts:
\begin{equation}
 \left(S_xS_y\ket{\uparrow_x}\bra{\uparrow_x}+S_x^{\dagger}S_y\ket{\downarrow_x}\bra{\downarrow_x}\right)\left(\ket{\uparrow_y}\bra{\uparrow_y}+S_y^{\dagger 2}\ket{\downarrow_y}\bra{\downarrow_y}\right).
\end{equation}
The point now is that $S_xS_y$, $S_x^{\dagger}S_y$ and $S_y^{\dagger 2}=(S_xS_y^{\dagger})(S_x^{\dagger}S_y^{\dagger})$ are all shifts by lattice vectors on the b.c.c.\ lattice.  So the evolution has indeed been recast as a product of conditional shifts.

A similar result can be obtained for the three dimensional analogue of this quantum walk, with evolution operator
\begin{equation}
 U=e^{-iP_x\sigma_xa}e^{-iP_y\sigma_ya}e^{-iP_z\sigma_za}=T_xT_yT_z.
\end{equation}
Restricted to the b.c.c.\ sublattice containing $(0,0,0)$, the evolution operator can be written as
\begin{equation}
\begin{split}
 \Big(S_1\ket{\uparrow_x}\bra{\uparrow_x}+S_2\ket{\downarrow_x}\bra{\downarrow_x}\Big) & \Big(\ket{\uparrow_y}\bra{\uparrow_y}+S_2^{\dagger}S_3\ket{\downarrow_y}\bra{\downarrow_y}\Big)\times\\
 & \left(\ket{\uparrow_z}\bra{\uparrow_z}+S_1^{\dagger}S_3^{\dagger}\ket{\downarrow_z}\bra{\downarrow_z}\right),
 \end{split}
\end{equation}
where
\begin{equation}
 \begin{split}
  S_1 & =S_xS_yS_z\\
  S_2 & =S_x^{\dagger}S_yS_z\\
  S_3 & =S_x^{\dagger}S_y^{\dagger}S_z
 \end{split}
\end{equation}
are all shifts by b.c.c.\ lattice vectors.  So again, even only defined on a b.c.c.\ sublattice, the quantum walk still has a decomposition into conditional shifts.

If there is some small moral to this, it is that finding counterexamples (if they exist) is probably going to take something more complicated than simply restricting these quantum walks to sublattices.

\chapter{Quantum Cellular Automata and Fields}
\label{chap:Quantum Cellular Automata and Field Theory}
\section{Introduction}
In the last chapter we focused on quantum walks as discretized models of relativistic particles.  In this chapter, however, we will be a bit more vague at first about how the systems fit into the category of quantum cellular automata.  What we consider first are systems that share most of the properties of quantum cellular automata, including causality and discrete spatial structure, but with fermionic modes at each site, as opposed to qubits or higher dimensional quantum systems.  Nevertheless, as we progress, we will see that these are in some sense two sides of the same coin:\ these fermionic quantum cellular automata are equivalent to regular quantum cellular automata.

Before seeing this equivalence, as it is quite abstract, we will look at using causal fermionic systems in discrete spacetime as quantum fields in discrete spacetime, something previously explored in \cite{D'Ariano12a,D'Ariano12b,BDT12,DP13}.   The systems studied in these papers were essentially free fermions that obey the Dirac or Weyl equation in the continuum limit.  The fact that no mention was made of the vacuum or that it was explicitly taken to be the state annihilated by all annihilation operators $\psi_{\vec{n}\alpha}$ means that these were not fermion fields.  The key point, which we touched on in section \ref{sec:Field Theory}, is that the vacuum is a complicated entangled state.  So this is something else that must be approximated by the discrete system.  We will see how to do this in the following section.  The fact that we can show that there is a discrete vacuum state that converges to the continuum vacuum state is very interesting.

Generally, interacting quantum field theories are defined by the continuum limits of lattice models with local Hamiltonians \cite{Creutz83}.  But we will see in section \ref{sec:Quantum Cellular Automata as Quantum Field Simulators} that we can always approximate the dynamics of these lattice models by quantum cellular automata.  So the idea that one can generally define quantum field theories by continuum limits of causal models is feasible.  Approximating quantum fields by causal discrete spacetime models is natural as quantum field theory is intrinsically causal.  Furthermore, discretizing spacetime allows us to regulate the infinities that appear in quantum field theory calculations, though there are often other ways to do this.  In some cases, these discrete models are easily simulable on quantum computers and so may be useful for simulation.  As a bonus, at least in the free field case, the evolution is very simple

In this chapter, we also explore an idea from \cite{Ball05,VC05} that says that the nonlocality due to anticommutation of fermion operators is not necessary for a description of physical systems.  A drawback is that the applications and extensions of this that we will look at do not have the same simplicity as the original description in terms of fermions.  Still, they could be useful from a simulation point of view, as the nonlocality of fermion operators increases the computational cost of simulations.  Furthermore, it is not inconceivable that, with more work, representations of fermions without anticommutation could be made simpler.  That said, our primary goal will be to use the idea from \cite{Ball05,VC05} as a tool.  Nevertheless, it is interesting to speculate on the philosophical significance of the idea.

Afterwards, we will turn our attention to the more abstract theory of causal systems in discrete spacetime.  The special case of quantum cellular automata was studied previously in \cite{SW04,ANW11,GNVW12}, for example.  We will look at the connection between causal systems of fermions and bosons on lattices and quantum cellular automata.  This will allow us to see that fermionic quantum cellular automata and regular quantum cellular automata are equivalent.

The final part of the chapter stands alone, in that it involves the abstract theory of quantum cellular automata only and uses very different tools.  The result is a structure theorem for quantum cellular automata in one dimension.

The ordering of this chapter, as with the previous one, starts with concrete ideas that are closer to things like simulation, and then progresses onto more abstract ideas.  We begin in section \ref{sec:Continuum Limits of Discrete Fermion Fields} by upgrading the quantum walks of the previous chapter into fermionic models and ultimately discrete fermion field theories.  Then we take continuum limits of these systems to recover free fermion fields in continuous spacetime.  In section \ref{sec:Lattice QFT without Anticommutation}, we look at and build upon a fascinating idea from \cite{Ball05,VC05} that allows us to represent local fermion Hamiltonians and observables by local qubit Hamiltonians and observables.  This requires the introduction of auxiliary degrees of freedom.  We incorporate these degrees of freedom into the gauge field that mediates interactions between fermions on a lattice in the hope that these seemingly redundant degrees of freedom can be given some physical interpretation.  Following this, in section \ref{sec:Quantum Cellular Automata as Quantum Field Simulators} is an argument justifying why nature could be described by a quantum cellular automaton.  Section \ref{sec:Causal Discrete-Time Models on a Lattice}, which comes next, is more abstract and looks at the general properties of causal systems in discrete spacetime.  One of the results in this section is the equivalence of fermionic and regular quantum cellular automata.  We finish the chapter with section \ref{sec:Information Flow in Quantum Cellular Automata}, which is the most abstract.  Here, analogously to the final section of chapter \ref{chap:Quantum Walks and Relativistic Particles}, we derive a decomposition of the evolution of quantum cellular automata on a line into shifts and on-site unitaries.

\section{Continuum Limits of Discrete Free Fermion Fields}
\label{sec:Continuum Limits of Discrete Fermion Fields}
In this section we look at causal systems of fermions in discrete spacetime that become free fermion fields in the continuum limit.  This construction is nontrivial:\ while going from the single particle to the fermion picture is straightforward, we must take into account that the ground state of the continuum theory is a complicated entangled state, so the discrete model has to approximate this.  Interestingly, the discrete dynamics are surprisingly simple and can be decomposed into products of swaps and local unitaries.  Put another way, the unitary evolution operator takes the form of a simple constant-depth circuit, as discussed in \cite{D'Ariano12b}.

The breakdown of this section is as follows.  In section \ref{sec:Second Quantization and Dynamics}, we take the single particle quantum walks of chapter \ref{chap:Quantum Walks and Relativistic Particles} and upgrade them to systems of fermions evolving in the same way by using second quantization.  A consequence of this is that the evolution actually has a very simple decomposition in terms of local unitaries.  Next, in section \ref{sec:Discrete Fermion Fields and the Vacuum}, we construct the vacuum for the discrete fields.  The choice of this state is justified by showing that it converges to the continuum vacuum in section \ref{sec:Continuum Limits2}, where we also take the continuum limit of the dynamics.  The end result is that in the continuum limit we get continuum fermion fields.

\subsection{Second Quantization of Quantum Walks}
\label{sec:Second Quantization and Dynamics}
Now we are going to move from the single particle quantum walk picture to systems of fermions with the same dynamics.  This works by second quantization, which we saw in section \ref{sec:Second Quantization}.  Let us start with the Dirac quantum walk on a line as an example to get the idea.  Though we will not present them here, all the quantum walks we considered in sections \ref{sec:A General Recipe} and \ref{sec:Faster Converging Quantum Walks} fit into this framework.  Something interesting we will see is that the evolution of all of these fermionic systems can be decomposed into very simple one- and two-site unitaries, much like what was seen in \cite{D'Ariano12b}.

\subsubsection{Discrete Dirac Fermions in One Dimension}
\label{sec:Discrete Dirac Fermions in One Dimension}
Recall from section \ref{sec:Continuum Limits} in chapter \ref{chap:Quantum Walks and Relativistic Particles} that the Dirac quantum walk on a line has evolution operator
\begin{equation}
 U  =e^{-im\sigma_xa}e^{-iP\sigma_za},
\end{equation}
where this evolves a single particle on a line, with position labelled by $n$ and states of the extra degree of freedom labelled by $l$ and $r$.  Now we apply second quantization to get fermions with annihilation operators $\psi_{n,\alpha}$, where $n$ labels position and $\alpha\in\{l,r\}$ labels the extra degree of freedom.  In momentum space, we have
\begin{equation}
 \psi_{p,\alpha}=\sqrt{a}\displaystyle\sum_{n}e^{-ipna}\psi_{n,\alpha},
\end{equation}
where $p\in (-\f{\pi}{a},\f{\pi}{a}]$.  The discrete momentum operator for each degree of freedom is
\begin{equation}
P_{\alpha} =\int_{-\f{\pi}{a}}^{\f{\pi}{a}}\!\frac{\textrm{d}p}{2\pi}\, p\,\psi^{\dagger}_{p,\alpha}\psi^{\ }_{p,\alpha}.
\end{equation}
So after second quantization, the single particle conditional shift $e^{-iP\sigma_z a}$ becomes
\begin{equation}
\exp(-i[P_r-P_l]a)=\exp\left(-i\int_{-\f{\pi}{a}}^{\f{\pi}{a}}\!\frac{\textrm{d}p}{2\pi}\, p\,\psi^{\dagger}_{p}\sigma_z\psi^{\ }_{p}a\right),
\end{equation}
where we use the notation
\begin{equation}
 \psi_{p}=\begin{pmatrix}
           \psi_{p,r}\\
           \psi_{p,l}
          \end{pmatrix}
\ \textrm{and} \ \psi_{n}=\begin{pmatrix}
           \psi_{n,r}\\
           \psi_{n,l}
          \end{pmatrix}.
\end{equation}
Similarly, the coin operator that models mass $e^{-im\sigma_x a}$ becomes
\begin{equation}
W=\exp\left(-im\sum_n\psi^{\dagger}_{n}\sigma_x\psi^{\ }_{n}a\right)=\exp\left(-im\int_{-\f{\pi}{a}}^{\f{\pi}{a}}\!\frac{\textrm{d}p}{2\pi}\,\psi^{\dagger}_{p}\sigma_x\psi^{\ }_{p}a\right).
\end{equation}

Let us pause to look at the continuum limit of this.  Naively using the Lie-Trotter product formula from \ref{sec:The Lie-Trotter Product Formula}, we get
\begin{equation}
\lim_{a \to 0}U^{t/a}=\exp\left(-i\int_{-\infty}^{\infty}\!\frac{\textrm{d}p}{2\pi}\,\psi^{\dagger}_{p}(p\sigma_z+ m\sigma_x)\psi_{p}\, t\right).
\end{equation}
So in the continuum limit, as expected, we recover evolution via the Dirac Hamiltonian in one dimensional space:
\begin{equation}
 H=\int_{-\infty}^{\infty}\!\frac{\textrm{d}p}{2\pi}\,\psi^{\dagger}_{p}(p\sigma_z+ m\sigma_x)\psi_{p}.
\end{equation}
We have not been too careful with the details of convergence here.  We will return to this in section \ref{sec:Continuum Limits2}.

Let us return to the discrete-time evolution.  The conditional shift component of the evolution shifts $\psi_{n,l}$ to the left and $\psi_{n,r}$ to the right.  This is equivalent to the fermionic swaps\footnote{This is actually an application of theorem \ref{th:1} that we will see in section \ref{sec:Local Decomposition}.  It is basically the statement that, if we have a system of fermions on a lattice evolving via a causal unitary $U$ and we take a copy of that system evolving via $U^{-1}$, then the joint evolution can be rewritten as a product of local unitaries.  In the example we have here, the $\psi_{n,l}$ fermions evolve via a shift to the left and the $\psi_{n,r}$ fermions evolve via the inverse unitary.}\\
\begin{equation}
\label{eq:119}
  \psi_{n,l}  \leftrightarrow \psi_{n-1,r},
\end{equation}
at each $n$, followed by
\begin{equation}
\label{eq:200}
   \psi_{n,r}  \leftrightarrow \psi_{n,l},
\end{equation}
at each $n$.  Let us write down an operator that implements a swap.  Given two fermionic modes with annihilation operators $a$ and $b$, we define the swap operator to satisfy
\begin{equation} 
S a S^{\dagger} = b, \qquad S b S^{\dagger} = a.
\end{equation}
\begin{compactwrapfigure}{r}{0.45\textwidth}
\centering
\begin{minipage}[r]{0.40\columnwidth}%
\centering
    \resizebox{6.0cm}{!}{\input{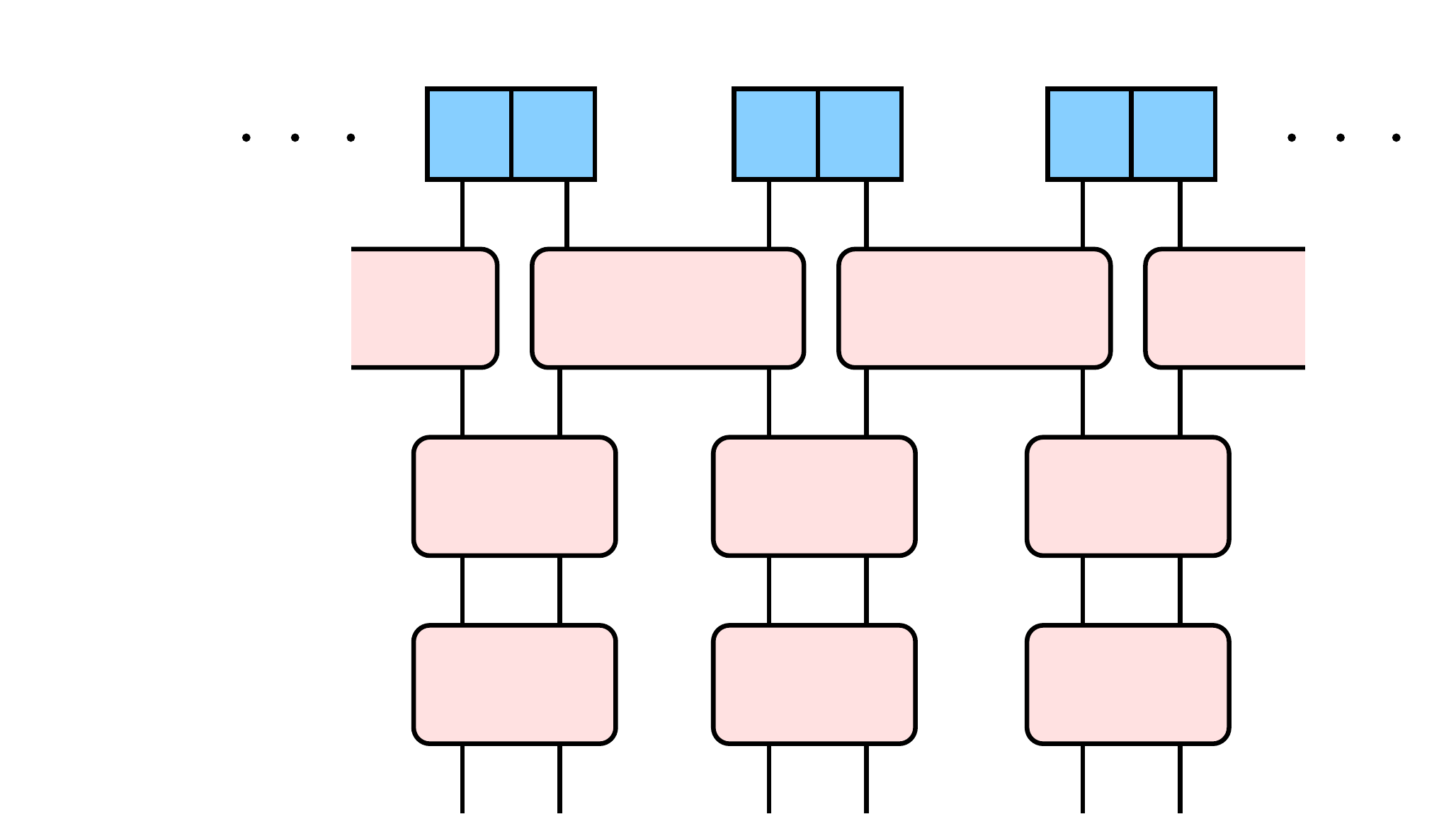_t}}
    \footnotesize{\caption[Fermion circuit for discrete Dirac fermions]{Evolution of discrete Dirac fermions over one timestep.  Blue boxes represent fermion modes, and pink boxes represent the unitaries being applied to modes.}}
    \label{fig:1dimDirac_circuit}
\end{minipage}
\end{compactwrapfigure}
A unitary that implements this is
\begin{equation} 
S= \exp[i\frac{\pi}{2}(b^{\dagger}-a^{\dagger})(b-a)].
\end{equation}
Let us verify that this does the job.  The two operators 
\begin{equation}  
c = \frac{1}{\sqrt{2}} (b - a),  \qquad d =  \frac{1}{\sqrt{2}} (b + a),
\end{equation} 
satisfy the usual canonical anticommutation relations.  Then $S =  \exp[i \pi c^{\dagger} c ]$, and 
\begin{equation}
S a S^{\dagger} =e^{i \pi c^{\dagger} c } \frac{1}{\sqrt{2}} \left(d -c   \right)e^{-i \pi c^{\dagger} c}  \nonumber \\
=  \frac{1}{\sqrt{2}}(d + c)  = b.
\end{equation} 
Here we used $(c^{\dagger} c) c= 0$ and $c (c^{\dagger} c) = c$.  Similarly, it follows that $S b S^{\dagger} = a$.

So applying the local unitaries implementing the swaps in equations \ref{eq:119} and \ref{eq:200} reproduces the conditional shift.  A similar trick will work for any conditional shift where one set of modes moves in one direction and another moves in the opposite direction.

The part of the evolution modeling the effects of mass is also a product of local fermionic unitaries because, in position space, it is
\begin{equation}
\label{eq:90}
\exp(-im\displaystyle\sum_{n}\psi^{\dagger}_n\sigma_x\psi_n\, a)=\prod_{n}\exp(-im\, \psi^{\dagger}_n\sigma_x\psi_n\, a).
\end{equation}
Therefore, the evolution operator $U$ is a product of local fermionic unitaries that form a very simple constant-depth circuit.

These discrete systems of fermions on a line first appeared in \cite{D'Ariano12a}.  The corresponding two dimensional example first appeared in \cite{D'Ariano12b}, where the three dimensional case, which we will look at next, was also briefly mentioned.  The three dimensional case was studied further in \cite{DP13}.

\subsubsection{Discrete Weyl Fermions in Three Dimensions}
\label{sec:Discrete Dirac Fermions in Three Dimensions}
Let us apply this recipe to the Weyl quantum walk, which leads to the Weyl equation in the continuum limit.  The quantum walk evolution operator is
\begin{equation}
 U =e^{-iP_x\sigma_xa}e^{-iP_y\sigma_ya}e^{-iP_z\sigma_za}.
\end{equation}
The particle lives on a three dimensional cubic lattice with integer coordinates $\vec{n}$ and has an extra degree of freedom, which we can think of as spin, with the orthonormal basis $\ket{\uparrow_z}$ and $\ket{\downarrow_z}$.  Now, after second quantization, we have annihilation operators $\psi_{\vec{n},\alpha}$ for each site $\vec{n}$, with $\alpha\in\{\uparrow_z,\downarrow_z\}$.  
In momentum space the annihilation operators become
\begin{equation}
\psi_{\vec{p}} =a^{3/2}\displaystyle\sum_{\vec{n}}e^{-i\vec{p}.\vec{n}a}\psi_{\vec{n}},
\end{equation}
where we defined
\begin{equation}
 \psi_{\vec{p}}=\begin{pmatrix}
           \psi_{\vec{p},\uparrow_z}\\
           \psi_{\vec{p},\downarrow_z}
          \end{pmatrix}
\ \textrm{and}\ \psi_{\vec{n}}=\begin{pmatrix}
           \psi_{\vec{n},\uparrow_z}\\
           \psi_{\vec{n},\downarrow_z}
          \end{pmatrix}.
\end{equation}
The evolution operator becomes a product of conditional shifts $T_xT_yT_z$, where
\begin{equation}
T_b=\exp\left(-i\int_{-\f{\pi}{a}}^{\f{\pi}{a}}\!\frac{\textrm{d}^3p}{(2\pi)^3}\,p_b\psi_{\vec{p}}^{\dagger}\, \sigma_b\psi_{\vec{p}}\, a\right),
\end{equation}
where $b\in\{x,y,z\}$.  Again, by naively applying the Lie-Trotter product formula, these become Weyl fermions in the continuum limit, meaning the continuum Hamiltonian is
\begin{equation}
 H=\int_{-\infty}^{\infty}\!\frac{\textrm{d}^3p}{(2\pi)^3}\,\psi_{\vec{p}}^{\dagger}\, (\vec{p}.\vec{\sigma})\, \psi_{\vec{p}}.
\end{equation}

As was the case in one dimensional space, each conditional shift $T_b$ can be recast as a product of local swap unitaries.  This means that again the total evolution is just a simple product of local operations.  In fact, since there is no mass unitary in this case, the evolution is just a product of swaps.  It is remarkable that something so simple can lead to fermions obeying the Weyl equation in the continuum limit.

As we saw in section \ref{sec:Fermion Doubling in Two Dimensions} in chapter \ref{chap:Quantum Walks and Relativistic Particles}, we may want to restrict our attention to particles that live on a b.c.c.\ sublattice.  In the \hyperlink{QQAppendix}{appendix}, we see that, even if the system just has fermionic modes on the b.c.c.\ sublattice, there is still a local decomposition of the dynamics. 

\subsection{Discrete Fermion Fields and the Vacuum}
\label{sec:Discrete Fermion Fields and the Vacuum}
To truly construct a discretized quantum field theory, we need to construct a discrete version of the vacuum state that becomes equivalent to the continuum vacuum state in the continuum limit.  Let us do this for the one dimensional case.

Again, the single particle evolution operator for discrete Dirac fermions on a line is
\begin{equation}
 U=e^{-im\sigma_xa}e^{-i\sigma_z Pa}.
\end{equation}
Acting on a momentum state, $U$ becomes $U(p)$, which has eigenvalues
\begin{equation}
 \lambda_{\pm}(p)=\cos(ma)\cos(pa)\pm i\sqrt{1-\cos^{2}(ma)\cos^{2}(pa)}.
\end{equation}
It is natural to think of energies as being in the interval $(-\textstyle\frac{\pi}{a},\textstyle\frac{\pi}{a}]$.  We define $\lambda_{-}(p)$ and $\lambda_{+}(p)$ to correspond to positive and negative energy respectively.  This is sensible because, when the imaginary part of $e^{-ix}=\cos(x)-i\sin(x)$ is negative, $x$ is positive for $x\in(-\pi,\pi]$.  We also denote the normalized eigenvectors of $U(p)$ by $w^+(p)$ and $w^-(p)$, which correspond to positive and negative energy respectively.

This allows us to define the creation operators that create particles with positive and negative energies to be
\begin{equation}
\begin{split}
a_{p}^{\dagger} &= \sum_{\alpha} w^{+}_{\alpha}(p)\psi^{\dagger}_{p,\alpha}\\
c_p^{\dagger} &= \sum_{\alpha} w^{-}_{\alpha}(p)\psi^{\dagger}_{p,\alpha}.
\end{split}
\end{equation}
respectively.  Now we can split momentum field operators into positive and negative energy components:
\begin{equation}
\psi_{p,\alpha} = a_{p}w^{+}_{\alpha} (p)+c_{p}w^{-}_{\alpha} (p).
\end{equation}
This used the fact that $w^{+}_{\alpha} (p)$ and $w^{-}_{\alpha}(p)$ are an orthonormal basis.

In position space, the discrete field operators become
\begin{equation}
\psi_{n,\alpha}  =\int_{-\f{\pi}{a}}^{\f{\pi}{a}}\!\frac{\textrm{d}p}{2\pi}\, \left(a_{p}w^{+}_{\alpha} (p)+c_{p}w^{-}_{\alpha} (p)\right)e^{ipna}.
\end{equation}

Now, as we saw in section \ref{sec:Field Theory} in chapter \ref{chap:Background1}, for fermion fields in continuous spacetime, the physical ground state has all the negative energy states filled up.  In analogy with this, we define the discrete ground state $\ket{\Omega_d}$ so that it is annihilated by $c_p^{\dagger}$ and $a_p$.  In other words, it has all the negative energy modes filled.

Acting on $\ket{\Omega_d}$ with the evolution operator gives a phase $e^{i\theta}=\prod_{p}\lambda_+(p)$, so instead we consider evolution via $e^{-i\theta}U$.  This is analogous to adding a constant to the Hamiltonian, as done in the continuous setting.

Finally, the discrete field operator, once rewritten in terms of particles and antiparticles, is
\begin{equation}
\psi_{n,\alpha}=\int_{-\f{\pi}{a}}^{\f{\pi}{a}}\!\frac{\textrm{d}p}{2\pi}\, \left(a_{p}u_{\alpha} (p)e^{ipna}+b^{\dagger}_{p}v_{\alpha} (p)e^{-ipna}\right),
\end{equation}
where $b^{\dagger}_p=c_{-p}$, and we define $u_{\alpha} (p)=w^{+}_{\alpha} (p)$ and  $v_{\alpha} (p)=w^{-}_{\alpha} (-p)$.

\subsection{Continuum Limits}
\label{sec:Continuum Limits2}
Taking a continuum limit now is not going to be as straightforward as it was in the previous chapter.  One reason for this is that it is not so obvious how to map the states of the discrete system into the state space of the continuum system.  After all, the discrete system's ground state only has negative energy particles with momentum components in $(-\f{\pi}{a},\f{\pi}{a}]$, whereas the continuum system's ground state is a Dirac sea of infinite depth.  Luckily, from a physical perspective, it is enough to show that expectation values of observables in the discrete picture converge to their continuum counterparts.  In other words, we need to look at
\begin{equation}
\label{eq:contlim}
 |\bra{\psi}e^{iHt}\m{M}e^{-iHt}\ket{\psi}-\bra{\psi_d}U^{-N}\m{M}_dU^N\ket{\psi_d}|,
\end{equation}
where $\m{M}$ and $\m{M}_d$ are measurement operators in the continuous and discrete case respectively.

\begin{compactwrapfigure}{r}{0.43\textwidth}
\centering
\begin{minipage}[r]{0.41\columnwidth}%
\centering
    \resizebox{5.5cm}{!}{\input{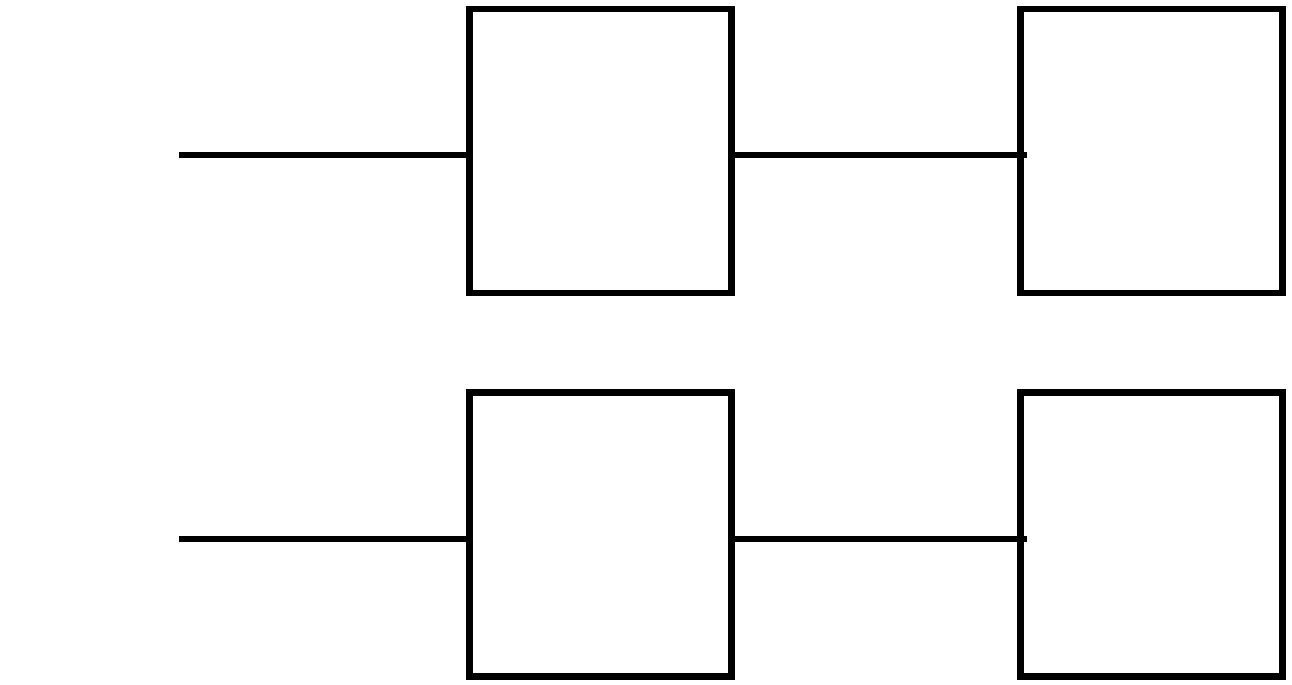_t}}
    \footnotesize{\caption[Simulation of free fields]{To see convergence, we just need the measurement results for the discrete system to approximate those of the continuum system.}}
    \label{fig:FermionFieldsSim}
\end{minipage}
\end{compactwrapfigure}

Now, from a physical point of view, we know that we can only probe energy scales below some threshold.  So let us suppose that there is a momentum cutoff $\Lambda$, or equivalently an inverse length scale, above which we cannot detect anything.  So it makes sense to rewrite the physical observable $\m{M}$ as $\m{M}_{\Lambda}$ to reflect this, and we suppose that states only have particles with momenta below $\Lambda$.  Then, because $\m{M}_{\Lambda}$ will only contain creation and annihilation operators with momentum below $\Lambda$,
\begin{equation}
 \bra{\psi}e^{iHt}\m{M}e^{-iHt}\ket{\psi}=\bra{\psi_{\Lambda}}e^{iH_{\Lambda}t}\m{M}_{\Lambda}e^{-iH_{\Lambda}t}\ket{\psi_{\Lambda}},
\end{equation}
where $\ket{\psi_{\Lambda}}$ is a state with positive and {\em negative} energy particles\footnote{It will be useful to use the positive and negative energy particle picture here, instead of referring to particles and antiparticles.} that only have momentum below $\Lambda$, and $H_{\Lambda}$ is the Hamiltonian that only counts the energy of particles with momentum less than $\Lambda$.  We impose similar restrictions on the discrete system.  Note that, because we are only dealing with states in the subspace with momentum cutoff $\Lambda$, both the operator and Hilbert space norm can be replaced by the operator and Hilbert space norm restricted to this subspace.

Analogously to the momentum mapping in section \ref{sec:Approximating Continuum Models by Discrete Ones}, we associate discrete and continuum momentum creation operators.  This will allow us to compare expectation values in the discrete and continuum cases.   Now, we can always write the continuum state as
\begin{equation}
  \ket{\psi_{\Lambda}}  =V\ket{\Omega_{\Lambda}},
\end{equation}
where $V$ is some unitary, which may create particles.  The truncated continuum ground state $\ket{\Omega_{\Lambda}}$ only has negative energy modes with momentum less than $\Lambda$ filled.  We can take the discrete state to be
\begin{equation}
  \ket{\psi_{d}}  =V\ket{\Omega_{d\Lambda}},
\end{equation}
where $\ket{\Omega_{d\Lambda}}$ is a truncated discrete ground state that only has negative energy modes with momentum less than $\Lambda$ filled.  If we are interested in simulation, $V$ should be a unitary that we can implement efficiently.

We take the discrete observable to be $\m{M}_{d}=\m{M}_{\Lambda}$.  Now, returning to our expression for the difference between discrete and continuum expectation values, 
\begin{equation}
 |\bra{\psi_{\Lambda}}e^{iH_{\Lambda}t}\m{M}_{\Lambda}e^{-iH_{\Lambda}t}\ket{\psi_{\Lambda}}-\bra{\psi_d}U^{-N}\m{M}_{\Lambda}U^N\ket{\psi_d}|,
\end{equation}
we can use the inequality
\begin{equation}
|\bra{\phi_1}A\ket{\phi_{1}} -\bra{\phi_2}A\ket{\phi_2}|\leq  2\|A\|\, \|\ket{\phi_{1}} -\ket{\phi_2}\|_2.
\end{equation}
It will also simplify things to suppose that the measurement is a projector so that $\|\m{M}_{\Lambda}\|=1$.  This leads to
\begin{equation}
  |\bra{\psi_{\Lambda}}e^{iH_{\Lambda}t}\m{M}_{\Lambda}e^{-iH_{\Lambda}t}\ket{\psi_{\Lambda}}  -\bra{\psi_d}U_{\Lambda}^{-N}\m{M}_dU_{\Lambda}^N\ket{\psi_d}|
 \leq  2\|e^{-iH_{\Lambda}t}\ket{\psi_{\Lambda}}  -U_{\Lambda}^N\ket{\psi_d}\|_2
\end{equation}
Then, after applying the triangle inequality and using the fact that $\|\cdot\|_2$ is invariant under unitaries, it follows that we just need to bound
\begin{equation}
\label{eq:4cl1}
 \|e^{-iH_{\Lambda}t}\ket{\psi_{\Lambda}} -U_{\Lambda}^N\ket{\psi_d}\|_2 \leq \|(e^{-iH_{\Lambda}t}-U_{\Lambda}^N)V\ket{\Omega_{\Lambda}}\|_2+\|\ket{\Omega_{\Lambda}} -\ket{\Omega_{d\Lambda}}\|_2.
\end{equation}
We will come back to the first term on the right hand side, which quantifies how close the discrete and continuum dynamics are later.  First, we will deal with the second term on the right hand side.  This quantifies how close the discrete and continuum vacuum states are.  It is not obvious that this term should tend to zero.  Fortunately, we can prove that it does, at least in the one dimensional case.
\begin{theorem}
\label{th:cl47}
In one dimension, the discrete vacua converge, in the sense that
\begin{equation}
\|\ket{\Omega_{\Lambda}} -\ket{\Omega_{d\Lambda}}\|_2=O(\Lambda^3 a), 
\end{equation}
which tends to zero as the lattice spacing tends to zero.
\end{theorem}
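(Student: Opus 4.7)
The plan is to exploit the free-fermion Gaussian structure of both vacua: each is a Slater determinant of filled negative-energy momentum modes, and modes at distinct momenta anticommute trivially, so the vacuum overlap factorises over $p$. The task then reduces to estimating the per-mode discrepancy between the discrete and continuum negative-energy single-particle eigenvectors.

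First, using the momentum mapping of section \ref{sec:Approximating Continuum Models by Discrete Ones}, I would identify the discrete and continuum momentum creation operators $\psi^{\dagger}_{p,\alpha}$ for $|p|<\Lambda$. Under this identification both $\ket{\Omega_\Lambda}$ and $\ket{\Omega_{d\Lambda}}$ sit in the same Fock space, each obtained by filling the modes $c^{\dagger}_{c,p}=\sum_\alpha w^-_{c,\alpha}(p)\psi^{\dagger}_{p,\alpha}$ or $c^{\dagger}_{d,p}=\sum_\alpha w^-_{d,\alpha}(p)\psi^{\dagger}_{p,\alpha}$. Since $\{c_{c,p},c^{\dagger}_{d,p'}\}=\delta_{pp'}\langle w^-_c(p)|w^-_d(p')\rangle$ is a c-number that vanishes off-diagonal, Wick's theorem collapses the overlap into a diagonal determinant,
\begin{equation}
\langle\Omega_\Lambda|\Omega_{d\Lambda}\rangle=\prod_{|p|<\Lambda}\langle w^-_c(p)|w^-_d(p)\rangle.
\end{equation}

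Next I would quantify $\|w^-_c(p)-w^-_d(p)\|_2$ by writing both $U(p)=e^{-im\sigma_xa}e^{-ip\sigma_za}$ and $e^{-ih(p)a}$ with $h(p)=p\sigma_z+m\sigma_x$ in the general $2\times2$-unitary form $\cos\phi\,\openone-i\sin\phi\,\hat{n}\cdot\vec{\sigma}$. A direct expansion in $a$, equivalently Baker--Campbell--Hausdorff through one commutator, shows that the two axes $\hat{n}_U$ and $\hat{n}_h$ agree to leading order and differ by an $O(\Lambda a)$ tilt into the $\sigma_y$ direction coming from $[\sigma_x,\sigma_z]=2i\sigma_y$. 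Since the negative-energy eigenvector is fixed by the direction of $\hat{n}$ on the Bloch sphere, this geometric picture gives $\|w^-_c(p)-w^-_d(p)\|_2=O(\Lambda a)$ uniformly in $|p|<\Lambda$.

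Finally, to assemble the bound, I would expand each factor of the overlap as $1-\epsilon_p$ with $|\epsilon_p|\leq\f{1}{2}\|w^-_c(p)-w^-_d(p)\|_2^2+O(\|w^-_c(p)-w^-_d(p)\|_2^3)=O(\Lambda^2a^2)$ and observe that, at fixed spatial volume, there are $O(\Lambda)$ filled momentum modes in $|p|<\Lambda$. This gives $|1-\langle\Omega_\Lambda|\Omega_{d\Lambda}\rangle|=O(\Lambda^3a^2)$, and converting via $\|\ket{\psi}-\ket{\phi}\|_2^2=2-2\mathrm{Re}\langle\psi|\phi\rangle$ yields the claimed $O(\Lambda^3a)$ bound. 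The hard part will be step two: in the strictly massless limit the gap of $h(p)$ closes at $p=0$, so the usual first-order perturbation-theory estimate of eigenvector errors diverges. The Bloch-sphere argument sidesteps this because the eigenvector depends smoothly on $\hat{n}$ alone, and the tilt of $\hat{n}$ is controlled by a single Baker--Campbell--Hausdorff commutator irrespective of how small the spectral gap is.
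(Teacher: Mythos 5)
Your proposal is correct, but it reaches the bound by a genuinely different route from the paper. The paper telescopes the product of filled modes using $\|\prod_iU_i\ket{0}-\prod_iV_i\ket{0}\|_2\leq N\max_i\|U_i-V_i\|$ with the unitaries $c_p^{\dagger}+c_p$, reduces the per-mode error to a distance between spectral projectors, and controls that distance with Bhatia's perturbation theorem (theorem \ref{th:Bhatia}), using $\|A-B\|=O(\Lambda^2a^2)$ and a spectral gap $\delta=O(a)$ to get a per-mode error $O(\Lambda^2a)$; with $O(\Lambda)$ modes this gives $O(\Lambda^3a)$. You instead factorise the vacuum overlap as a diagonal Slater determinant and compute the per-mode eigenvector discrepancy directly from the rotation axes of the two $2\times 2$ unitaries, obtaining $O(\Lambda a)$ per mode. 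Both ingredients of your argument check out (the off-diagonal anticommutators do vanish under the momentum identification, and the axis tilt is indeed $O(\Lambda a)$ uniformly in $|p|\leq\Lambda$ for fixed $m>0$), but note that your final conversion is understated: $|1-\langle\Omega_{\Lambda}|\Omega_{d\Lambda}\rangle|=O(\Lambda^3a^2)$ gives $\|\ket{\Omega_{\Lambda}}-\ket{\Omega_{d\Lambda}}\|_2=O(\Lambda^{3/2}a)$, which is \emph{sharper} than the stated $O(\Lambda^3a)$ rather than merely equal to it; you should say so, and you must also fix the per-mode phases so that each factor $\langle w^-_c(p)|w^-_d(p)\rangle$ is real and positive (the same freedom the paper exercises for the single-particle states). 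What each approach buys: yours exploits the Gaussian structure and the Bloch-sphere picture to get a better exponent, but it is tied to the two-component, one-negative-mode-per-momentum case; the paper's projector/Bhatia route is blunter but carries over directly to situations like the three-dimensional Dirac walk, where the coin is four-dimensional, there are two negative-energy modes per momentum, and no single rotation axis characterises the eigenvectors. Finally, your worry about the gap closing at $p=0$ in the massless limit is moot here: the relevant gap is between the positive- and negative-energy branches at fixed $p$, which is bounded below by $2\sin(ma)$, and the paper's own proof likewise requires $m\neq 0$.
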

\begin{proof}
To make sense of this expression, we are going to work on a finite line so that momenta are discrete.  Otherwise, the creation operators for negative energy particles would not be valid creation operators since they would not create normalizable states.  Let us denote the discrete negative energy annihilation operators by $C_p$ and the continuum negative energy annihilation operator by $c_p$.  Note that these operators are not the same.  We identified discrete and continuum \emph{momentum} creation operators, but that does not mean that $C_p$ is equivalent to $c_p$.  This is because the forms of the positive and negative energy eigenvectors in the single particle setting were not the same in the discrete and continuum cases.

We have that
\begin{equation}
 \begin{split}
  \ket{\Omega_{\Lambda}} & =\prod_{|p|\leq \Lambda}c^{\dagger}_p\ket{0}\\
  \ket{\Omega_{d\Lambda}} & =\prod_{|p|\leq \Lambda}C^{\dagger}_p\ket{0}.
 \end{split}
\end{equation}
Now, the action of $c^{\dagger}_p$ on $\ket{0}$ is the same as the action of the unitary $c^{\dagger}_p+c_p$ on $\ket{0}$.  So, to bound $\|\ket{\Omega_{\Lambda}} -\ket{\Omega_d}\|_2$, we can use the fact that
\begin{equation}
 \|\prod_{i=1}^{N}U_i\ket{0} -\prod_{j=1}^{N}V_j\ket{0}\|_2\leq \|\prod_{i=1}^{N}U_i -\prod_{j=1}^{N}V_j\|
 \leq  N\max_{i}\|U_i-V_i\|,
\end{equation}
when $U_i$ and $V_i$ are unitaries.  The second inequality is something we saw previously in section \ref{sec:The Lie-Trotter Product Formula}.  This means
\begin{equation}
\begin{split}
  \|\prod_{|p|\leq \Lambda}\left(c^{\dagger}_p+c_p\right)\ket{0} -\prod_{|p|\leq \Lambda}\left(C^{\dagger}_p+C_p\right)\ket{0}\|_2&\\
 \leq \, (\alpha\Lambda)\max_{|p|\leq \Lambda}\|(c^{\dagger}_p+c_p) -(C^{\dagger}_p+C_p)\|&,
 \end{split}
\end{equation}
where $\alpha$ is a constant, and the second line follows because the number of negative energy modes filled is proportional to $\Lambda$.  This is because, on a finite line, momentum is quantized.  So it remains to bound $\|(c^{\dagger}_p+c_p) -(C^{\dagger}_p+C_p)\|$.  To do this we just use the definition of the operator norm:
\begin{equation}
\label{eq:4cl2}
\begin{split}
 & \|(c^{\dagger}_p+c_p) -(C^{\dagger}_p+C_p)\|\\
 = & \max_{\ket{\psi}}\sqrt{\bra{\psi}\left((c^{\dagger}_p+c_p) -(C^{\dagger}_p+C_p)\right)\left((c^{\dagger}_p+c_p) -(C^{\dagger}_p+C_p)\right)\ket{\psi}}\\
 = & \sqrt{2-\{c^{\dagger}_p,C_p\}-\{C^{\dagger}_p,c_p\}}.
 \end{split}
\end{equation}
Now, recall that one of the properties of second quantization is that
\begin{equation}
 \{c^{\dagger}_p,C_p\}=\langle p_d | p_c\rangle,
\end{equation}
where $\ket{p_d}$ and $\ket{p_c}$ are the single particle discrete and continuous negative energy states with momentum $p$ created by $C^{\dagger}_p$ and $c^{\dagger}_p$ respectively.  So equation (\ref{eq:4cl2}) becomes
\begin{equation}
\begin{split}
 \|(c^{\dagger}_p+c_p) -(C^{\dagger}_p+C_p)\| & =\sqrt{2-\langle p_d | p_c\rangle-\langle p_c | p_d\rangle}\\
 & = \|\ket{p_c}-\ket{p_d}\|_2.
\end{split}
\end{equation}
We have reduced the problem to showing convergence of eigenvectors.  This remaining step is a little tricky.

First, note that the phase of the discrete eigenvectors is immaterial since our original concern is expectation values, like $\bra{\psi_d}U^{-N}\m{M}_dU^N\ket{\psi_d}$, as in equation (\ref{eq:contlim}).  So we can take it for granted that the discrete momentum states have phases chosen such that $\langle p_d | p_c\rangle$ is real and positive.  Once this is the case, we can use
\begin{equation}
 \|\ket{\phi}-\ket{\psi}\|_2\leq \sqrt{2}\|\big(\ket{\psi}\bra{\psi}- \ket{\phi}\bra{\phi}\big)\|,
\end{equation}
which holds if $\langle \psi|\phi\rangle$ is real and positive.  So now we just need to bound $\|\big(\ket{p_d}\bra{p_d}- \ket{p_c}\bra{p_c}\big)\|$.  To do this we need to use theorem $VII.3.1$ from \cite{Bhatia97}.
\begin{theorem}
\label{th:Bhatia}
 Let $A$ and $B$ be normal operators, and let $S_1$ and $S_2$ be two subsets of $\mathbb{C}$ separated by strip or annulus of width $\delta$.  Denote the projector onto the eigenspace corresponding to eigenvalues of the operator $A$ in the set $S$ by $P_A(S)$.  Then
\begin{equation}
 \vertiii{P_A(S_1)P_B(S_2)}\leq  \frac{1}{\delta}\vertiii{A-B},
\end{equation}
where $\vertiii{\cdot}$ is any norm that satisfies $\vertiii{X}=\vertiii{UXV}$ for unitaries $U$ and $V$.  The operator norm, for example, has this property.
\end{theorem}

For our purposes, we use this theorem with
\begin{equation}
 \begin{split}
  A & =e^{-im\sigma_xa}e^{-iP\sigma_za}\\
  B & =e^{-i(P\sigma_z+m\sigma_x)a}.
 \end{split}
\end{equation}
In our case $S_1$ and $S_2$ are the smallest sets containing the eigenvalues of $A$ and $B$ corresponding to negative energy and positive energy respectively.  Bear in mind that we are looking at a fixed value of $p$.  What we want to bound is
\begin{equation}
 \|\ket{p_c}\bra{p_c}-\ket{p_d}\bra{p_d}\|=\|P_A(S_1)-P_B(S_1)\|.
\end{equation}
Because $P_A(S_1)+P_A(S_2)=\openone$ and $P_B(S_1)+P_B(S_2)=\openone$, it follows that
\begin{equation}
\begin{split}
 \|P_A(S_1)-P_B(S_1)\| & = \|P_A(S_1)P_B(S_2)-P_A(S_2)P_B(S_1)\|\\
 & \leq \|P_A(S_1)P_B(S_2)\|+ \|P_A(S_2)P_B(S_1)\|.
 \end{split}
\end{equation}
So we can apply theorem \ref{th:Bhatia} to the two terms on the right hand side.  We know from section \ref{sec:The Lie-Trotter Product Formula} that
\begin{equation}
 \|A-B\|=\|e^{-im\sigma_xa}e^{-iP\sigma_za}-e^{-i(P\sigma_z+m\sigma_x)a}\|_{\Lambda}=O(\Lambda^2a^2).
\end{equation}
So it remains to bound $\delta$ from below.  To do this, we calculate the distance between the pairs of eigenvalues of both $e^{-im\sigma_xa}e^{-iP\sigma_za}$ and $e^{-i(P\sigma_z+m\sigma_x)a}$ corresponding to positive and negative energies.  Start with $e^{-im\sigma_xa}e^{-iP\sigma_za}$.  In section \ref{sec:Discrete Fermion Fields and the Vacuum} we found its eigenvalues for a given $p$, so we get
\begin{equation}
\begin{split}
|\lambda_+(p)-\lambda_-(p)|= & 2\sqrt{1-\cos^2(ma)\cos^2(pa)}\\
\geq & 2\sqrt{1-\cos^2(ma)}=2\sin(ma)=O(a).
\end{split}
\end{equation}
Next, look at $e^{-i(P\sigma_z+m\sigma_x)a}$.  This time, we get that the distance between the eigenvalues is
\begin{equation}
2\sin(\sqrt{p^2+m^2}a)\geq 2\sin(ma),
\end{equation}
where we assumed that $|p|,m\ll \f{1}{a}$.  The main point\footnote{Note that the distance between the discrete and continuum eigenvalues corresponding to positive energy is $O(a^3)$.  The same is true for the eigenvalues corresponding to negative energy.  So the distance between both sets is $O(a)$.} is that $\delta =O(a)$.

Putting this all together, we find that
\begin{equation}
\|\ket{p_c}-\ket{p_d}\|_2=O(\Lambda^2 a).
\end{equation}
So finally it follows that
\begin{equation}
\|\ket{\Omega_{\Lambda}} -\ket{\Omega_{d\Lambda}}\|_2=O(\Lambda^3 a).
\end{equation}
Provided the cutoff is taken to grow sufficiently slowly as $a$ goes to zero, this proves convergence of the discrete vacuum to the continuum vacuum.
\end{proof}
In this proof, we worked on a finite line so that momenta were discrete.  This is more relevant from a simulation point of view, as we cannot create systems on infinite lines in the laboratory.  Nevertheless, if we wanted to prove something analogous for an infinite line, then looking at the difference between the states in the Hilbert space norm is not a sensible thing to do.  Instead, it would make more sense to look at the difference in expectation values of physical observables constructed as smoothed-out products of the field operators.

As far as showing that our fermion fields in discrete spacetime converge to fermion fields in the continuum, all that remains is to prove convergence of the dynamics.  This means bounding the first term on the right hand side of equation (\ref{eq:4cl1}), which was
\begin{equation}
\|(e^{-iH_{\Lambda}t}-U_{\Lambda}^N)V\ket{\Omega_{\Lambda}}\|_2.
\end{equation}
We could just directly apply Trotter's theorem from section \ref{sec:The Lie-Trotter Product Formula}.  But it may be useful to have a bound on the error introduced by the approximation.  This is possible on a finite line since there is a momentum cutoff $\Lambda$.
\begin{equation}
 \|(e^{-iH_{\Lambda}t}-U_{\Lambda}^N)V\ket{\Omega_{\Lambda}}\|_2\leq \|e^{-iH_{\Lambda}t} -U_{\Lambda}^N\|_{\Lambda}.
\end{equation}
Then we can use theorem \ref{th:Lie} in section \ref{sec:The Lie-Trotter Product Formula} to get
\begin{equation}
 \|e^{-iH_{\Lambda}t} -U_{\Lambda}^N\|_{\Lambda}=O(K^2a),
\end{equation}
where $K=\max\{\|A\|,\|B\|\}$, with
\begin{equation}
\begin{split}
A & =\int_{-\Lambda}^{\Lambda}\!\frac{\textrm{d}p}{2\pi}\,\psi^{\dagger}_{p}p\sigma_z\psi_{p}\\
B & =\int_{-\Lambda}^{\Lambda}\!\frac{\textrm{d}p}{2\pi}\,\psi^{\dagger}_{p}m\sigma_x\psi_{p}.
\end{split}
\end{equation}
Therefore, as the number of occupied modes on a finite line is at most proportional to $\Lambda$, it follows that $K=O(\Lambda^2)$ and
\begin{equation}
 \|e^{-iH_{\Lambda}t} -U_{\Lambda}^N\|_{\Lambda}=O(\Lambda^4a).
\end{equation}
So the dynamics also converge.

In higher dimensions, the same results should still hold.  The only thing that needs more careful consideration is the Dirac quantum walk in three dimensions.  This has a four dimensional extra degree of freedom, so there are two positive and negative energy modes for each momentum.

\section{Lattice QFT without Anticommutation}
\label{sec:Lattice QFT without Anticommutation}
In the previous section, we took the continuum limits of free discrete fermion fields, which allowed us to recover free fermion fields in continuous spacetime.  The natural question to ask now is whether we can do anything similar to recover {\it interacting} quantum field theories in the continuum limit.  In other words, can we construct causal discrete models, perhaps even quantum cellular automata, that become interacting quantum field theories in the continuum limit?  We will discuss this in section \ref{sec:Quantum Cellular Automata as Quantum Field Simulators} and section \ref{sec:Causal Discrete-Time Models on a Lattice}.

Before tackling this, let us digress a little to look at lattice quantum field theory, particularly lattice fermions interacting with gauge fields.  This digression will also allow us to introduce some tools that will be useful in section \ref{sec:Causal Discrete-Time Models on a Lattice}.  It will turn out that in some cases lattice fermions interacting with gauge fields can be viewed as local models without the need for anticommuting fermion fields.  

Actually, this a modification of ideas from \cite{Ball05,VC05}, where it was shown that local fermion Hamiltonians can be viewed as local spin Hamiltonians by adding auxiliary fermions at each lattice site.  Here, instead of introducing redundant additional fermions, we partially represent the gauge degrees of freedom by pairs of fermions, which can be thought of as hardcore bosons.  This allows us to map some simple lattice quantum field Hamiltonians to qubit (or spin) Hamiltonians.  In fact, it is the principle of gauge symmetry that means that this can be done.  The fact that gauge symmetry prevents us from having nonlocal observables in the qubit picture is profound.  Indeed, one of the most interesting messages to take from this section and \cite{Ball05,VC05} is that fermionic models with gauge interactions in discrete space do not need anticommuting fields.

We start in \ref{sec:Representing Local Fermionic Hamiltonians by Local Qubit Hamiltonians} by introducing the protocol of \cite{Ball05,VC05} that allows us to map local fermion Hamiltonians to local spin Hamiltonians.  This relies on introducing additional redundant fermionic modes.  A result of this mapping is that in the qubit picture, there is entanglement in the state of the qubit system.  In \ref{sec:Preparing the Majorana state} we ensure that this entangled state can be efficiently prepared on a quantum computer, something that would be necessary if these ideas are to be used in simulations of physics.  Next, we introduce fermions interacting with a lattice gauge field in section \ref{sec:Lattice Gauge Theories}.  Finally, in section \ref{sec:U(1) Lattice Gauge theory as a spin model}, we show that the gauge degree of freedom on each link can be decomposed into two components, one of which we can represent by a pair of fermions.  This allows us to use ideas similar to those from \ref{sec:Representing Local Fermionic Hamiltonians by Local Qubit Hamiltonians} to map the Hamiltonian to a completely local model with no fermion anticommutation.

\subsection{Representing Fermion Models by Local Qubit Models}
\label{sec:Representing Local Fermionic Hamiltonians by Local Qubit Hamiltonians}
Fermion creation and annihilation operators anticommute regardless of the spatial separation between them.  This means there is inherent nonlocality in the description of fermionic systems.  Remarkably, this nonlocality can be completely removed by introducing redundant fermion modes\footnote{There are some seemingly contradictory notions of locality here.  In the fermion picture, operators are local if they are sums and products of creation and annihilation operators associated to some finite region.  The reason we sometimes refer to such things as nonlocal, is that they can be nonlocal when represented by operators on qubits.  Furthermore, creation and annihilation operators on different sites anticommute no matter how large the spatial separation.  (Nonlocality of operators here has nothing to do with quantum nonlocality resulting from entanglement.)} \cite{Ball05,VC05}.  Here, we will present this procedure in a slightly different way in terms of links between sites on cubic lattices.  Still, the core idea is the same as in \cite{Ball05,VC05}.  This will be useful for section \ref{sec:U(1) Lattice Gauge theory as a spin model}, where our goal will be to include these ideas into lattice gauge theories.

Let us see how this works.  As an example, first we will look at the fermion hopping Hamiltonian.
\begin{equation}
\label{eq:L2}
 H=\sum_{\langle\vec{n}\vec{m}\rangle}(\psi^{\dagger}_{\vec{n}}\psi^{\ }_{\vec{m}}+\psi^{\dagger}_{\vec{m}}\psi^{\ }_{\vec{n}}),
\end{equation}
where $\langle \vec{n}\vec{m}\rangle$ denotes nearest neighbour pairs.  So $\vec{m}=\vec{n}+\vec{e}$, where $\vec{e}$ is a lattice basis vector.

Now take the link between sites $\vec{n}$ and $\vec{m}$.  And look at the term $\psi_{\vec{n}}^{\dagger} \psi_{\vec{m}}$ in the Hamiltonian.  Let us introduce two additional fermionic modes, with annihilation operators $a_{(\vec{n},\vec{m})}$ and $a_{(\vec{m},\vec{n})}$.  The different ordering of the indices is associated to the direction along the link, and the first index denotes the site on which the fermion modes live.  We define the Majorana fermion operators
\begin{equation}
\begin{split}
 c_{(\vec{n},\vec{m})} & =a_{(\vec{n},\vec{m})}+a_{(\vec{n},\vec{m})}^{\dagger}\\
c_{(\vec{m},\vec{n})} & =a_{(\vec{m},\vec{n})}+a_{(\vec{m},\vec{n})}^{\dagger}.
\end{split}
\end{equation}

\begin{figure}
\centering
    \resizebox{10cm}{!}{\input{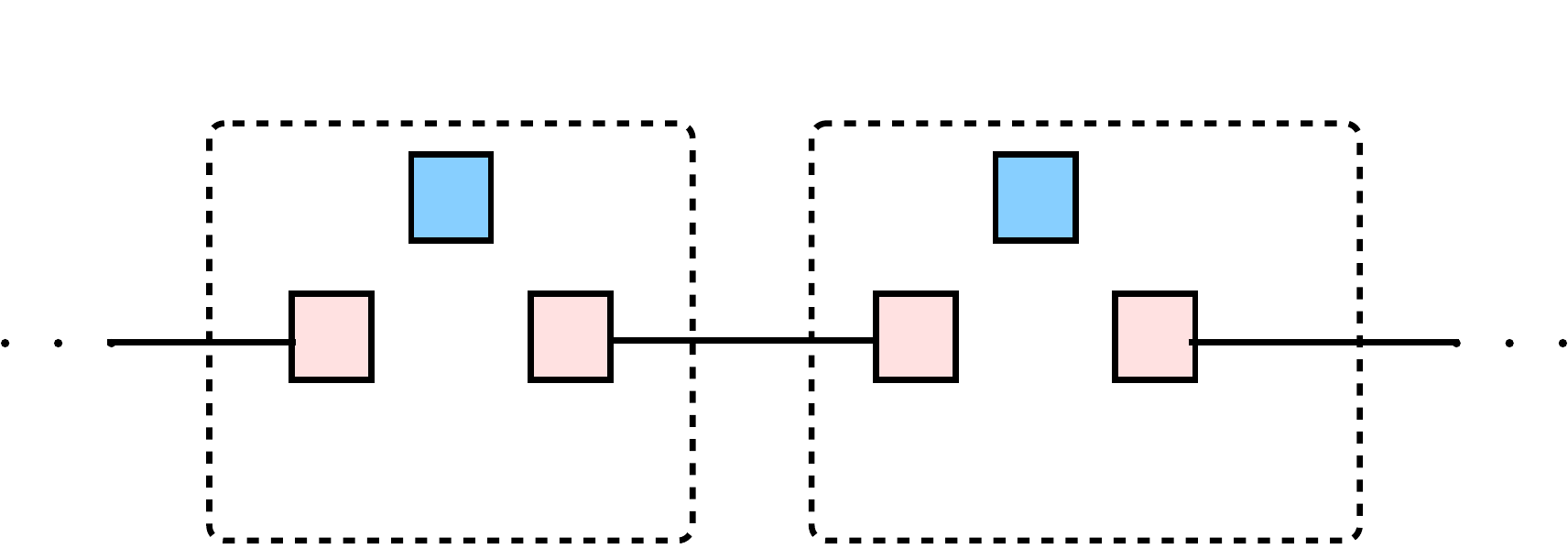_t}}
    \footnotesize{\caption[Illustration of the Majorana pairs on a line]{Illustration of the Majorana pairs on a line.  There is a pair for each link, so on a line we need two extra fermions per site.  In general, however, we only need to add these extra fermions in spatial dimensions higher than one.}}
    \label{fig:MajoranaPairs}
\end{figure}
These are self-adjoint and anticommute with all other creation and annihilation operators.  But $c_{(\vec{n},\vec{m})}$ squares to the identity.  So $c_{(\vec{n},\vec{m})}$ obey different anticommutation relations to those obeyed by creation and annihilation operators.  Now define the operator $M_{(\vec{n},\vec{m})}$ by
\begin{equation}
M_{(\vec{n},\vec{m})}=ic_{(\vec{n},\vec{m})}c_{(\vec{m},\vec{n})}. 
\end{equation}
This operator is self-adjoint and has eigenvalues $\pm1$, which follows from $M_{(\vec{n},\vec{m})}^2=1$ and $M_{(\vec{n},\vec{m})}\neq 1$.

Next, we make the transformation
\begin{equation} 
\psi_{\vec{n}}^{\dagger} \psi_{\vec{m}} \rightarrow \psi_{\vec{n}}^{\dagger} M_{(\vec{n},\vec{m})} \psi_{\vec{m}}=i\left(\psi_{\vec{n}}^{\dagger}\, c_{(\vec{n},\vec{m})}\right)\Big(c_{(\vec{m},\vec{n})} \psi_{\vec{m}}\Big).
\end{equation}
Acting on a $+1$ eigenstate of $M_{(\vec{n},\vec{m})}$, the transformed operator above is equivalent to the original operator.  And because $c_{(\vec{m},\vec{n})}$ anticommutes with other fermionic operators, the operator $c_{(\vec{m},\vec{n})} \psi_{\vec{m}}$ commutes with fermionic operators on all other sites.  Therefore, it commutes with {\it all} operators on other sites, so it can be thought of as local.  Of course, $c_{(\vec{m},\vec{n})} \psi_{\vec{m}}$ is local as a fermionic operator.  Here we really mean that, because it commutes with operators on all other sites, after the Jordan-Wigner transformation, it is local in the qubit picture.  We will see this in the following section.

We can apply the same trick on every link.  And there is a joint $+1$ eigenstate of all $M_{(\vec{n},\vec{m})}$ because they all commute.  In section \ref{sec:Preparing the Majorana state}, we will construct one such state and show that it can be efficiently prepared on a quantum computer.

Furthermore, because $[\psi^{\dagger}_{\vec{k}},M_{(\vec{n},\vec{m})}]=0$ for any $\vec{n}$, $\vec{m}$ and $\vec{k}$, it follows that we can act on a $+1$ eigenstate of $M_{(\vec{n},\vec{m})}$ with physical fermion creation operators and still have a $+1$ eigenstate.  Therefore, the original fermion Hilbert space is mapped to a subspace of the new state space.

It is crucial that self-adjointness is preserved.  This follows because $M_{(\vec{n},\vec{m})}$ are self-adjoint.  Then, for example,
\begin{equation}
 \psi_{\vec{n}}^{\dagger} \psi_{\vec{m}}+\psi_{\vec{m}}^{\dagger} \psi_{\vec{n}} \rightarrow \psi_{\vec{n}}^{\dagger} M_{(\vec{n},\vec{m})} \psi_{\vec{m}}+\psi_{\vec{m}}^{\dagger} M_{(\vec{n},\vec{m})} \psi_{\vec{n}}.
\end{equation}
This means that the Hamiltonian and all other self-adjoint operators are mapped to self-adjoint operators.  And as a result of this, the  dynamics are preserved.

\begin{compactwrapfigure}{r}{0.34\textwidth}
\centering
\begin{minipage}[r]{0.32\columnwidth}%
\centering
    \resizebox{3.2cm}{!}{\input{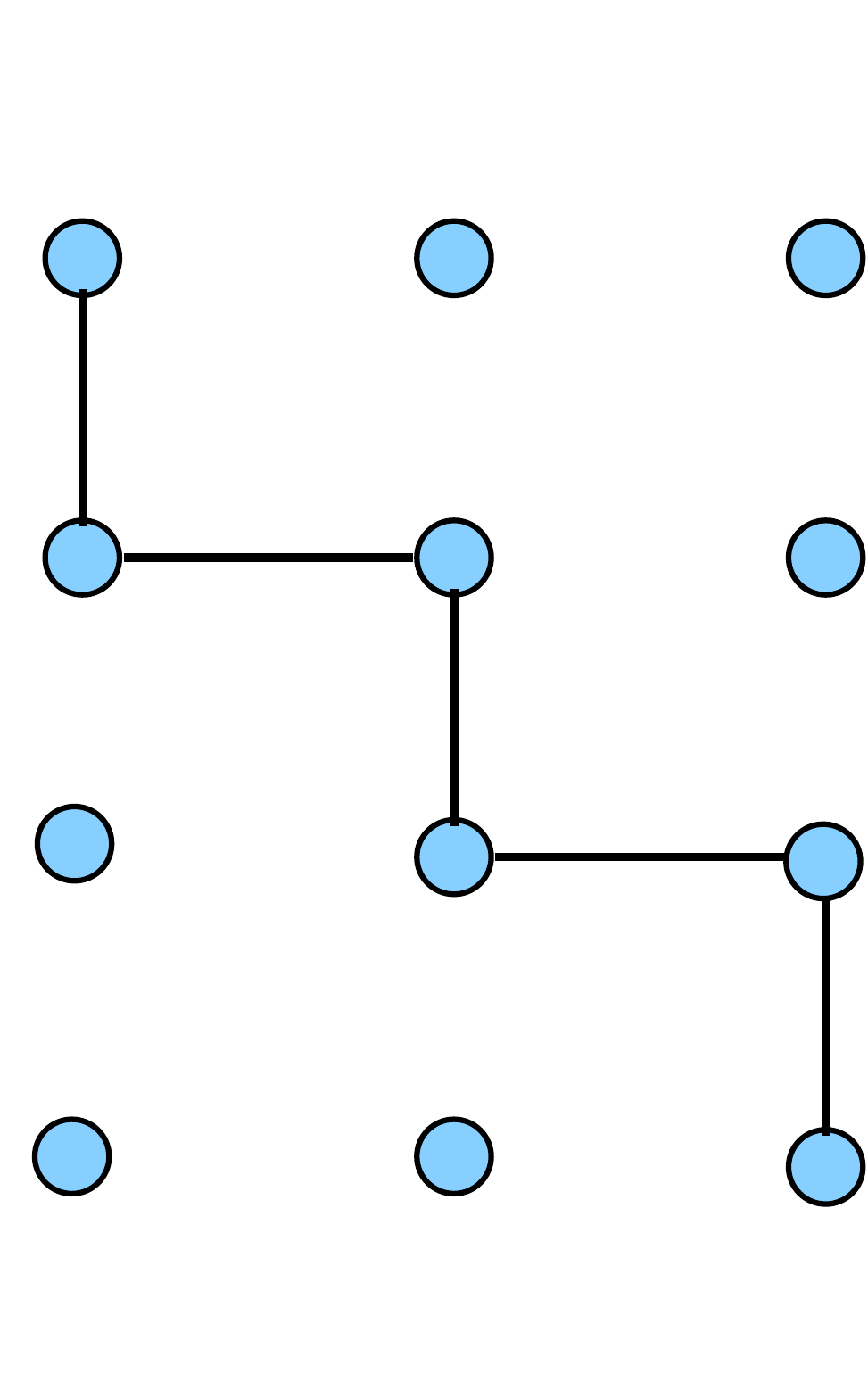_t}}
    \footnotesize{\caption{Path along links on a lattice}}
    \label{fig:MajoranaPath}
\end{minipage}
\end{compactwrapfigure}

Higher order terms can be made local in the same way.  For example, we can replace $\psi^{\dagger}_{\vec{n}}\psi_{\vec{m}}\psi_{\vec{k}}^{\dagger}\psi_{\vec{l}}$ by
\begin{equation}
\psi^{\dagger}_{\vec{n}}M_{(\vec{n},\vec{m})}\psi_{\vec{m}}\psi_{\vec{k}}^{\dagger}M_{(\vec{k},\vec{l})}\psi_{\vec{l}},
\end{equation}
where $\vec{n}$ and $\vec{m}$ are nearest neighbour pairs, as are $\vec{k}$ and $\vec{l}$.  Furthermore, if there is a product of fermion operators that are not neighbours, but are separated by a fixed distance, then we use the prescription
\begin{equation}
 \psi_{\vec{n}}^{\dagger} \psi_{\vec{m}} \rightarrow \psi_{\vec{n}}^{\dagger} \prod_i M_{(\vec{k}_i,\vec{l}_i)} \psi_{\vec{m}},
\end{equation}
where $(\vec{k}_i,\vec{l}_i)$ are links on a path from $\vec{m}$ to $\vec{n}$.  Note the similarity to making operators gauge invariant by inserting gauge field operators on links.  We will see more of this in sections \ref{sec:Lattice Gauge Theories} and \ref{sec:U(1) Lattice Gauge theory as a spin model}.

The requirement that observables are sums of even products of creation and annihilation operators was vital to be able to use this trick.

\subsubsection{Representation by Qubits}
We know from section \ref{sec:The Jordan-Wigner Transformation} that fermions can be mapped to qubits via the Jordan-Wigner transformation.  Generally, in higher than one dimensional space this transformation takes local fermion operators to nonlocal qubit operators.

We can apply the Jordan-Wigner transformation given by equation (\ref{eq:general ordering}) from section \ref{sec:The Jordan-Wigner Transformation}.  This works by choosing an ordering scheme $\pi(\vec{n},i)$, where $\vec{n}$ denotes the site and $i$ denotes the mode at that site.  Then we have the map
\begin{equation}
\psi^{\dagger}_{\vec{n}i}\equiv \sigma^-_{\vec{n},i}\hspace{-1em}\prod_{\pi(\vec{m},j)<\pi(\vec{n},i)}\hspace{-1em}Z_{\vec{m},j}.
\end{equation}
Some local fermion observables will be nonlocal after this mapping.

We know that products of two fermion operators at the same site will not be nonlocal in the qubit picture if they are consecutive in the ordering.  With the auxiliary modes included, it is natural to choose the new ordering such that fermions at the same site are still consecutive in the ordering scheme.  Because of this, $c_{(\vec{n},\vec{m})}\psi_{\vec{n}}$ is local in the qubit picture.  And, in particular, a local fermion Hamiltonian will be mapped to a local qubit Hamiltonian by making this judicious choice of the ordering for the Jordan-Wigner transformation.

Finally, if simulating physics is the goal, we may want to be more economical.  One option is to only add auxiliary modes on links that correspond to nonlocal hopping terms in the corresponding qubit Hamiltonian.  Note that the number of additional fermionic modes we need at each site does not depend on the number of physical fermionic modes; it just depends on the spatial dimension.

\subsection{Preparing the Entangled Initial State}
\label{sec:Preparing the Majorana state}
The result of the last section is remarkable.  From a conceptual point of view, it is surprising that the intrinsic nonlocality of fermionic systems can be removed.  Nevertheless, for this to be of any use for simulations of physics, it is paramount that a state of the qubits that allows this can be efficiently prepared.  Otherwise, any benefits gained by the mapping would be lost.  Let us follow the procedure presented in \cite{FS13}.

We want to prepare a state in the qubit picture that is a $+1$ eigenstate of every $M_{(\vec{n},\vec{m})}=ic_{(\vec{n},\vec{m})}c_{(\vec{m},\vec{n})}$.  To do this, we will use
\begin{equation}
M_{(\vec{n},\vec{m})}\Big(c_{(\vec{n},\vec{m})}-ic_{(\vec{m},\vec{n})}\Big)=\Big(c_{(\vec{n},\vec{m})}-ic_{(\vec{m},\vec{n})}\Big),
\end{equation}
which follows from $c_{(\vec{n},\vec{m})}^2=c_{(\vec{m},\vec{n})}^2=1$.  This means that
\begin{equation}
\label{eq:L5}
 \prod_{\langle \vec{n}\vec{m}\rangle}\frac{1}{\sqrt{2}}(c_{(\vec{n},\vec{m})}-ic_{(\vec{m},\vec{n})})\ket{0}
\end{equation}
is a normalized $+1$ eigenstate of all Majorana pairs $M_{(\vec{n},\vec{m})}$, where $\langle \vec{n}\vec{m}\rangle$ denotes pairs of sites joined by a link.  The order of the product is immaterial since any order will be a $+1$ eigenstate of the $M_{(\vec{n},\vec{m})}$ pairs.  Recall that
\begin{equation}
\label{eq:LQ12}
 c_{(\vec{n},\vec{m})}=a_{(\vec{n},\vec{m})}+a^{\dagger}_{(\vec{n},\vec{m})},
\end{equation}
where $a_{(\vec{n},\vec{m})}$ is an annihilation operator.  If follows that the unitary $c_{(\vec{n},\vec{m})}$ has the same strings of $Z$s as $a_{(\vec{n},\vec{m})}$ in the qubit picture.  In fact, in the qubit picture $c_{(\vec{n},\vec{m})}$ is just $X$ on the qubit corresponding to that mode and $Z$s on some others.

We want to create the invariant state on qubits, which means that we need to apply $c_{(\vec{n},\vec{m})}$.  Unfortunately, in the qubit representation the operators $c_{(\vec{n},\vec{m})}$ have those awkward strings of $Z$s, making it a nonlocal unitary.  Fortunately, we can handle these by using a method from \cite{AL97}.  First, consider all the qubits that the qubit representation of $c_{(\vec{n},\vec{m})}$ acts on with a $Z$.  We map the parity of these qubits to a flag qubit.  So a single $Z$ acting on the flag qubit has the exact same effect as applying the string of $Z$s to the other qubits.  For example, with $r_i\in\{0,1\}$,
\begin{equation}
\begin{split}
 Z_0...Z_n\ket{r_0...r_n} & \ket{\sum_{j=0}^n r_j\bmod 2}=\\
 (-1)^{\sum_{j=0}^n r_j\bmod 2}\ket{r_0...r_n} & \ket{\sum_{j=0}^n r_j\bmod 2}=\\
 \ket{r_0...r_n}Z & \ket{\sum_{j=0}^n r_j\bmod 2},
\end{split}
\end{equation}
where the  last qubit is the flag qubit storing the parity of the other qubits.  So after preparing flag qubits, $c_{(\vec{n},\vec{m})}$ is equivalent to a unitary on two qubits.  After applying $c_{(\vec{n},\vec{m})}$, we need to reverse the operation preparing the flag qubits, but this and the original flag preparation can be done using only $K$ two-qubit unitaries, where $K$ is the number of qubits we count the parity of.  For example, if $K=2$, we need only two steps:
\begin{equation}
 \ket{r_1r_2}\ket{0}_{f}\rightarrow \ket{r_1r_2}\ket{r_1}_{f} \rightarrow
 \ket{r_1r_2}\ket{(r_1+r_2)\bmod 2}_{f},
\end{equation}
where $r_i\in\{0,1\}$ and the subscript $f$ denotes the flag qubit.

Actually, we want to apply $(c_{(\vec{n},\vec{m})}-ic_{(\vec{m},\vec{n})})/\sqrt{2}$ to the state, which is not unitary.  But the desired state can be prepared by using an ancillary qubit first mapped to $(\ket{0}_A-i\ket{1}_A)/\sqrt{2}$.  Then, we apply a unitary controlled on the ancillary qubit.  It works by acting with the unitary $c_{(\vec{n},\vec{m})}$ if the qubit is in state $\ket{0}_A$ and the unitary $c_{(\vec{m},\vec{n})}$ if the qubit is in state $\ket{1}_A$.  Afterwards, we can disentangle the ancilla and return it to the state $\ket{0}_A$ by applying the four-qubit unitary
\begin{equation}
(\openone-a_{(\vec{m},\vec{n})}^{\dagger}a_{(\vec{m},\vec{n})})\openone_A+a_{(\vec{m},\vec{n})}^{\dagger}a_{(\vec{m},\vec{n})}X_A,
\end{equation}
where the subscript $A$ indicates that the operators act on the ancillary qubit.

Each site has a constant number of qubits associated to it because the number of Majorana fermion pairs that we introduce per site does not grow with the number of sites $\mathcal{N}$.  It just depends on the number of nearest neighbours, which depends on the lattice dimension.  So counting the parity takes $O(\mathcal{N})$ two-qubit operations.  Therefore, preparing the $+1$ eigenstate of all Majorana link terms takes $O(\mathcal{N}^2)$ operations.


Similarly, to create initial physical fermion states, we can use the same method to handle the strings of $Z$s that appear if we want to apply unitaries like $\psi^{\dagger}_{\vec{n}}+\psi_{\vec{n}}$ to create particles.

The state in equation (\ref{eq:L5}) in the qubit picture is entangled.  So, in some rough sense, the nonlocality of fermionic systems has been replaced by entanglement in the qubit system.  It is good to emphasize that the nonlocality we are removing is the nonlocality of the fermion operators when represented in the qubit picture.  The possibility of nonlocal correlations due to entanglement between systems remains.

\subsection{Lattice Gauge Theories}
\label{sec:Lattice Gauge Theories}
Earlier, in section \ref{sec:Lattice Quantum Field Theory and Fermion Doubling}, we looked at a specific model of fermions on a lattice, called naive fermions.  Here, we will be more general and look at an arbitrary local free fermion lattice Hamiltonian.  This would look like
\begin{equation}
\label{eq:L1}
H_{F}= \sum_{\left<\vec{n}\vec{m}\right>} (\psi^{\dagger}_{\vec{n}}B_{\vec{n}\vec{m}}\psi^{\ }_{\vec{m}}+\psi^{\dagger}_{\vec{m}}B^{\dagger}_{\vec{n}\vec{m}}\psi^{\ }_{\vec{n}}),
\end{equation}
where $\left<\vec{n}\vec{m}\right>$ denotes pairs of sites connected by the Hamiltonian.  These need not be nearest neighbour and may include $\vec{n}=\vec{m}$ as there may be on-site terms.  We have suppressed any extra labels that $\psi_{\vec{n}}$ may carry.  So $B_{\vec{n}\vec{m}}\equiv (B_{\vec{n}\vec{m}})_{ij}$ is a matrix coupling fermion modes.  Had we explicitly written these indices, we would have
\begin{equation}
 \psi^{\dagger}_{\vec{n}}B_{\vec{n}\vec{m}}\psi_{\vec{m}}=\sum_{ij}\psi^{\dagger}_{\vec{n}i}\left(B_{\vec{n}\vec{m}}\right)_{ij}\psi_{\vec{m}j}.
\end{equation}

Again, one example of a local fermion Hamiltonian is the naive fermion Hamiltonian on a line, which we saw in section \ref{sec:Lattice Quantum Field Theory and Fermion Doubling}.  It is
\begin{equation}
 H_{F}=\frac{i}{2a}\sum_{n}(\psi^{\dagger}_{n}\sigma_z\psi^{\ }_{n-1}-\psi^{\dagger}_{n}\sigma_z\psi^{\ }_{n+1}+m\psi^{\dagger}_{n}\sigma_x\psi^{\ }_{n}).
\end{equation}
So in this case $B_{nn}=m\sigma_x$ and $B_{n,n\pm1}=\mp \f{i}{2a}\sigma_z$.

Now let us introduce abelian gauge fields.  At the end, we will have quantum electrodynamics (QED) on a lattice.  First, under a gauge transformation, fermion fields transform like $\psi_{\vec{n}}\rightarrow V(\vec{n})\psi_{\vec{n}}$, where $V(\vec{n})$ is a phase that depends on position.  The free fermion Hamiltonians we have just seen do not have symmetry under such gauge transformations.  This is because the Hamiltonian in equation (\ref{eq:L1}) contains products of fermion operators at different sites.  So, let us replace this Hamiltonian by
\begin{equation}
H_{F}= \sum_{\left<\vec{n}\vec{m}\right>} (\psi^{\dagger}_{\vec{n}}B_{\vec{n}\vec{m}}U_{\vec{n}\vec{m}}\psi^{\ }_{\vec{m}}+\psi^{\dagger}_{\vec{m}}B^{\dagger}_{\vec{n}\vec{m}}U^{\dagger}_{\vec{n}\vec{m}}\psi^{\ }_{\vec{n}}).
\end{equation}
Then, if we also require that under a gauge transformation, $U_{\vec{n}\vec{m}}\rightarrow V(\vec{n})U_{\vec{n}\vec{m}}V(\vec{m})^{\dagger}$, this is a gauge invariant Hamiltonian.  It is natural to think of $U_{\vec{n}\vec{m}}$ as living on the link joining sites $\vec{n}$ and $\vec{m}$.  Also, we can make the connection with the continuum theory by writing $U_{\vec{n}\vec{m}}=e^{igaA_1(\vec{n}a)}$, where $g$ is the electric charge and $A_1(\vec{n}a)$ is the gauge field at the point $\vec{x}=\vec{n}a$ \cite{DGDT06}.

Currently, in the Hamiltonian above $U_{\vec{n}\vec{m}}$ is not dynamical.  Let us change this by introducing a Hamiltonian for the gauge degrees of freedom.  First, we define the self-adjoint operators $E_{\vec{n}\vec{m}}$ to be the electric field on the same link as $U_{\vec{n}\vec{m}}$.  Any two operators on different links commute, but on the same link,
\begin{align}
 [E_{\vec{n}\vec{m}},U_{\vec{n}\vec{m}}]=U_{\vec{n}\vec{m}}.
\end{align}

The generators of a gauge transformation in the absence of fermions are
\begin{equation}
 G_{\vec{n}}=\sum_{\vec{e}}\left[E_{\vec{n},\vec{n}+\vec{e}}-E_{\vec{n},\vec{n}-\vec{e}}\right],
\end{equation}
where $\vec{e}$ are lattice basis vectors.  So a simple example of a gauge invariant operator is the electric field itself $E_{\vec{n}\vec{m}}$.  Another gauge invariant quantity is a product of link operators on plaquettes:
\begin{equation}
 Z_p=\big(U_{\vec{n},\, \vec{n}+\vec{e}}\big)\big(U_{\vec{n}+\vec{e},\, \vec{n}+\vec{f}+\vec{e}}\big)\big(U_{\vec{n}+\vec{f}+\vec{e},\, \vec{n}+\vec{f}}\big)\big(U_{\vec{n}+\vec{f},\, \vec{n}}\big),
\end{equation}
where $p$ labels the plaquette.

\begin{compactwrapfigure}{r}{0.44\textwidth}
\centering
\begin{minipage}[r]{0.42\columnwidth}%
\centering
    \resizebox{3.4cm}{!}{\input{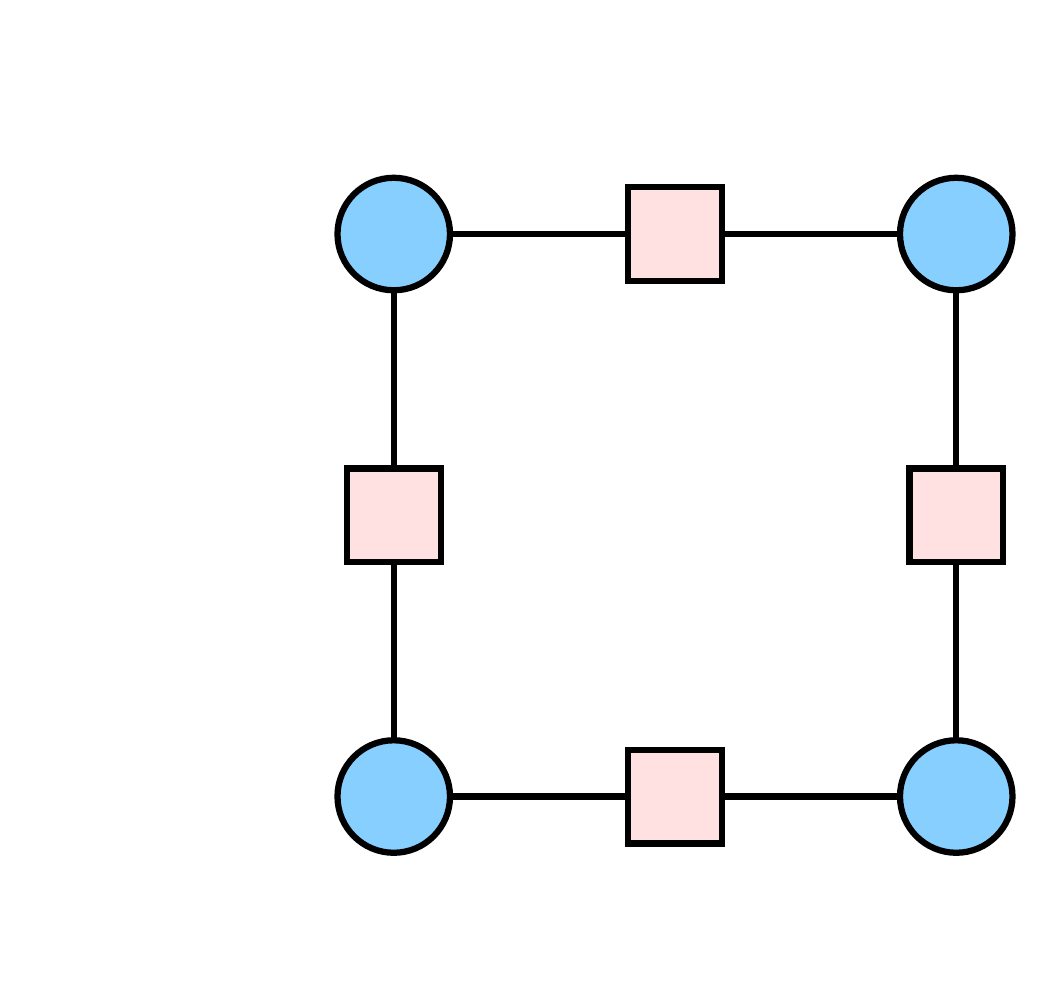_t}}
    \footnotesize{\caption[Gauge operators around a plaquette]{Gauge operators around a plaquette.}}
    \label{fig:2DimGauge_ch4}
\end{minipage}
\end{compactwrapfigure}

A lattice Hamiltonian with free electromagnetism as its continuum limit \cite{KS75} is
\begin{align}
H_{G} = \beta\sum_{\vec{n},\vec{e}}E_{\vec{n},\vec{n}+\vec{e}}^2-\gamma\sum_{p}[Z_p+Z^{\dagger}_p],
\end{align}
where $\beta$ and $\gamma$ are constants that depend on the electric charge and the lattice spacing.  So, finally, the Hamiltonian for quantum electrodynamics on a lattice is
\begin{align}
 H  =\sum_{\left<\vec{n}\vec{m}\right>} (\psi^{\dagger}_{\vec{n}}B_{\vec{n}\vec{m}}U_{\vec{m}\vec{n}}\psi^{\ }_{\vec{m}}+\psi^{\dagger}_{\vec{m}}B^{\dagger}_{\vec{n}\vec{m}}U^{\dagger}_{\vec{n}\vec{m}}\psi^{\ }_{\vec{n}}) +\beta\sum_{\vec{n},\vec{e}}E_{\vec{n},\vec{n}+\vec{e}}^2-\gamma\sum_{p}[Z_p+Z^{\dagger}_p].
\end{align}
The generators of gauge transformations in the presence of fermions are now
\begin{equation}
 G_{\vec{n}}=\psi^{\dagger}_{\vec{n}}\psi_{\vec{n}}+\sum_{\vec{e}}\left[E_{\vec{n},\vec{n}+\vec{e}}-E_{\vec{n},\vec{n}-\vec{e}}\right],
\end{equation}
which commute with the Hamiltonian.  For a fully gauge invariant theory we also need $G_{\vec{n}}$ to vanish on physical states:\ this is Gauss' law.

\subsection{Lattice QED as a Local Qubit Model}
\label{sec:U(1) Lattice Gauge theory as a spin model}
Section \ref{sec:Representing Local Fermionic Hamiltonians by Local Qubit Hamiltonians} tells us that we can always add redundant extra degrees of freedom to map a local fermion model to a local qubit model.  Now we ask the question of whether we can do better:\ can we somehow incorporate these degrees of freedom into the gauge field?  Here, we will see that, to some extent, we can.

As we saw in the previous section, the full Hamiltonian for quantum electrodynamics on a lattice is
\begin{align}
 H  =\sum_{\left<\vec{n}\vec{m}\right>} (\psi^{\dagger}_{\vec{n}}B_{\vec{n}\vec{m}}U_{\vec{m}\vec{n}}\psi^{\ }_{\vec{m}}+\psi^{\dagger}_{\vec{m}}B^{\dagger}_{\vec{n}\vec{m}}U^{\dagger}_{\vec{n}\vec{m}}\psi^{\ }_{\vec{n}}) +\beta\sum_{\vec{n},\vec{e}}E_{\vec{n},\vec{n}+\vec{e}}^2-\gamma\sum_{p}[Z_p+Z^{\dagger}_p].
\end{align}
The Hilbert space of the gauge degrees of freedom can be constructed as follows \cite{KS75,BY06}.  Looking at a single link, and dropping the link label, we have
\begin{equation}
 [E,U]=U.
\end{equation}
Because $U$ is a phase, this system is equivalent to a quantum particle that lives on a circle.  So $E$ is analogous to the angular momentum operator.  Suppose now that $\ket{0}$ is the state of the link's Hilbert space satisfying $E\ket{0}=0$.  Then it follows that
\begin{equation}
 E^2U^l\ket{0}=l^2U^l\ket{0},
\end{equation}
so we can define the orthonormal basis $\ket{l}=U^l\ket{0}$, where $l$ is an integer.  In this basis, we can think of $U$ as a shift operator.\footnote{With this state space for the gauge fields, this is known as compact quantum electrodynamics.}

For our purposes, however, it will be useful to use a different representation of these states.  We will split up the Hilbert space into a tensor product of a qubit and the remainder.  To achieve this, the simplest thing to do is to let the qubit represent the parity of $l$.  So let
\begin{equation}
 \ket{n}\ket{r}_p=\ket{l},
\end{equation}
where $r\in\{0,1\}$ represents the parity and $n$ is an even integer.  Both are chosen such that $l=n+r$.  

We need to see what $U$ looks like in this representation:
\begin{equation}
U=\big(\ket{0}_p\bra{0}S+\ket{1}_p\bra{1}\big)X_p, 
\end{equation}
where $S\ket{n}=\ket{n+2}$.  This has the same effect as $U$ but now in the $\ket{n}\ket{r}_p$ representation.  Coincidentally, in this basis $U$ is a quantum walk operator, where the parity qubit is playing the role of the coin.  In the original representation, $U$ was just a shift, so that was an even simpler quantum walk.

The next step involves representing the parity by two fermion modes with creation operators $a^{\dagger}$ and $b^{\dagger}$.  Denoting the state with no fermions present by $\ket{0_F}$, let us assign
\begin{equation}
 \begin{split}
  \ket{0}_p & = \ket{0_F},\\
  \ket{1}_p & = a^{\dagger}b^{\dagger}\ket{0_F}.
 \end{split}
\end{equation}
In a sense, this corresponds to separating the gauge degree of freedom into a hard-core boson and a single particle living on the $\ket{n}$ states.  Now, with this representation, the operator $X_p$ can be represented by
\begin{equation}
 X_p = a^{\dagger}b^{\dagger}+ba,
\end{equation}
and the projectors $\ket{0}_p\bra{0}$ and $\ket{1}_p\bra{1}$ can be replaced by
\begin{equation}
\begin{split}
 \ket{0}_p\bra{0} & = aa^{\dagger}bb^{\dagger}\\
 \ket{1}_p\bra{1} & = a^{\dagger}ab^{\dagger}b.
\end{split}
\end{equation}
This trick of representing qubits by pairs of fermions will be useful in section \ref{sec:Representation by Qubits}.  Here it allows us to represent $U$ by\footnote{This is no longer unitary since it annihilates the states $a^{\dagger}\ket{0_F}$ and $b^{\dagger}\ket{0_F}$.  We could make it unitary by choosing $X_p=a^{\dagger}b^{\dagger}+ba+b^{\dagger}a+a^{\dagger}b$, which works just as well.  But this is unnecessary and makes the formulas longer.}
\begin{equation}
U=a^{\dagger}b^{\dagger}-Sab.
\end{equation}

Now recall that the fermionic part of the Hamiltonian is
\begin{align}
H_{F} = \alpha\sum_{\left<\vec{n}\vec{m}\right>} (\psi^{\dagger}_{\vec{n}}B_{\vec{n}\vec{m}}U_{\vec{n}\vec{m}}\psi^{\ }_{\vec{m}}+\psi^{\dagger}_{\vec{m}}B^{\dagger}_{\vec{n}\vec{m}}U^{\dagger}_{\vec{n}\vec{m}}\psi^{\ }_{\vec{n}}).
\end{align}
Look at the hopping term between a given pair of neighbouring sites $\vec{n}$ and $\vec{m}$, $\psi^{\dagger}_{\vec{n}}B_{\vec{n}\vec{m}}U_{\vec{n}\vec{m}}\psi^{\ }_{\vec{m}}$.  Inserting our new expression for $U$, this becomes
\begin{equation}
 \psi^{\dagger}_{\vec{n}}B_{\vec{n}\vec{m}}\left(a^{\dagger}_{\vec{n}}b^{\dagger}_{\vec{m}}-S_{\vec{n}\vec{m}}a_{\vec{n}}b_{\vec{m}}\right)\psi^{\ }_{\vec{m}},
\end{equation}
where $a_{\vec{n}}$ is chosen to live at site $\vec{n}$ and $b_{\vec{m}}$ is chosen to live at site $\vec{m}$.  This is very similar to what we saw in section \ref{sec:Representing Local Fermionic Hamiltonians by Local Qubit Hamiltonians}.  This means that this term only has quadratic products of fermion operators on each site, like $b_{\vec{m}}\psi^{\ }_{\vec{m}}$.  The resulting quadratic operators commute with their counterparts on any other site on the lattice.

As for the electric field on the link, it becomes
\begin{equation}
 E= N+a^{\dagger}ab^{\dagger}b,
\end{equation}
where $N\ket{n}=n\ket{n}$.  The second term is also quadratic in fermion operators on a site, so this commutes with operators on other links.  The same is true for the plaquette operators $Z_p$.  And we can repeat the process on each link on the lattice.  Therefore, it follows that the Hamiltonian is composed of local commuting operators.

To make this more concrete, it is useful to map the system to the qubit picture using the Jordan-Wigner transformation.  As in section \ref{sec:Representing Local Fermionic Hamiltonians by Local Qubit Hamiltonians}, we have the convention that fermions at the same site are sequential in the ordering scheme under this mapping.  So, gauge invariant operators, such as $\psi^{\dagger}_{\vec{n}}U_{\vec{n}\vec{m}}\psi_{\vec{m}}$ are all local qubit operators.  It follows that the Hamiltonian is a local qubit Hamiltonian.

How we extracted the parity qubit can be repeated by applying the same procedure to $S$ and $\ket{n}$ that we applied to $U$ and $\ket{l}$.  This, together with a truncation scheme, which would render the number of degrees of freedom finite, like the method employed in \cite{BY06}, would allow us to represent the gauge degrees of freedom by a finite number of qubits.  Then, as long as the original parity qubit was represented by a pair of fermions in the manner we have just seen, the whole system would be equivalent to a system with a finite number of qubits at each site evolving via a local Hamiltonian.

This is the end of our digression.  Now we will return to our main interest:\ quantum cellular automata and other causal models in discrete spacetime.

\section{Nature as a Quantum Cellular Automaton}
\label{sec:Quantum Cellular Automata as Quantum Field Simulators}
The title of this section must appear ambitious, but as we will see the idea is not so far-fetched.  The question we really want to ask is whether it is possible to construct quantum cellular automata that can approximate physical interacting field theories.  After all, the best theory we currently have to describe everything in nature, excluding gravity, is the standard model, an interacting quantum field theory.  And, in general, interacting quantum field theories are defined by the continuum limits of the corresponding lattice quantum field theories \cite{Creutz83}.  Therefore, if we can approximate the lattice quantum field theory arbitrarily well with a QCA, we can approximate the corresponding continuum model arbitrarily well.  We will deal exclusively with the Hamiltonian formulation of lattice quantum field theory \cite{KS75}.  A different approach is the path integral formulation, but typically such models are not unitary, except in the continuum limit \cite{Creutz12}.

A point we will come back to is that lattice quantum field theory comes with its share of problems.  Fermion doubling is one good example.  And the causal discrete spacetime models in this section will inherit these problems.  Furthermore, these causal models may not be the most natural or efficient ways to simulate physical models.  An upshot is that this section illustrates	 that causal discrete spacetime models are more general in a sense than local Hamiltonian models.

Let us see that we can indeed approximate the evolution of quantum fields on a lattice by a quantum cellular automaton.  Provided that the Hamiltonian is local,\footnote{This is not necessarily the case in lattice quantum field theory.  For example, it is believed that, for free fermions on a lattice, having couplings between sites that fall off exponentially quickly with distance is still sufficient to get the right continuum limit \cite{DGDT06}.} we can always simulate the dynamics by using the Lie-Trotter product formula, which we saw in chapter \ref{chap:Background1}.  To reiterate, suppose $H=\sum_lH_l$, where $H_l$ are local terms.  Then, by the Lie-Trotter product formula,
\begin{equation}
\label{eq:N1}
 (\prod_le^{-iH_l t/N})^{N}\rightarrow e^{-iHt},
\end{equation}
as $N=t/\varepsilon$ goes to infinity.  Now the order in the product over $l$ is something we can choose.  As long as the Hamiltonian is local, we can divide the set of all $H_l$ into a finite number of subsets, where all elements in a given subset act only on regions that do not overlap.  When, as is typically the case, $H$ is translationally invariant, we can pick the subsets such that they are translations of each other.  Denote these subsets by $V_i$. Then we can recast equation (\ref{eq:N1}) as
\begin{equation}
 (\prod_{V_i}\prod_{l\in V_i}e^{-iH_l t/N})^{N}\rightarrow e^{-iHt}.
\end{equation}

\begin{compactwrapfigure}{r}{0.52\textwidth}
\centering
\begin{minipage}[r]{0.50\columnwidth}%
\centering
    \resizebox{7.3cm}{!}{\input{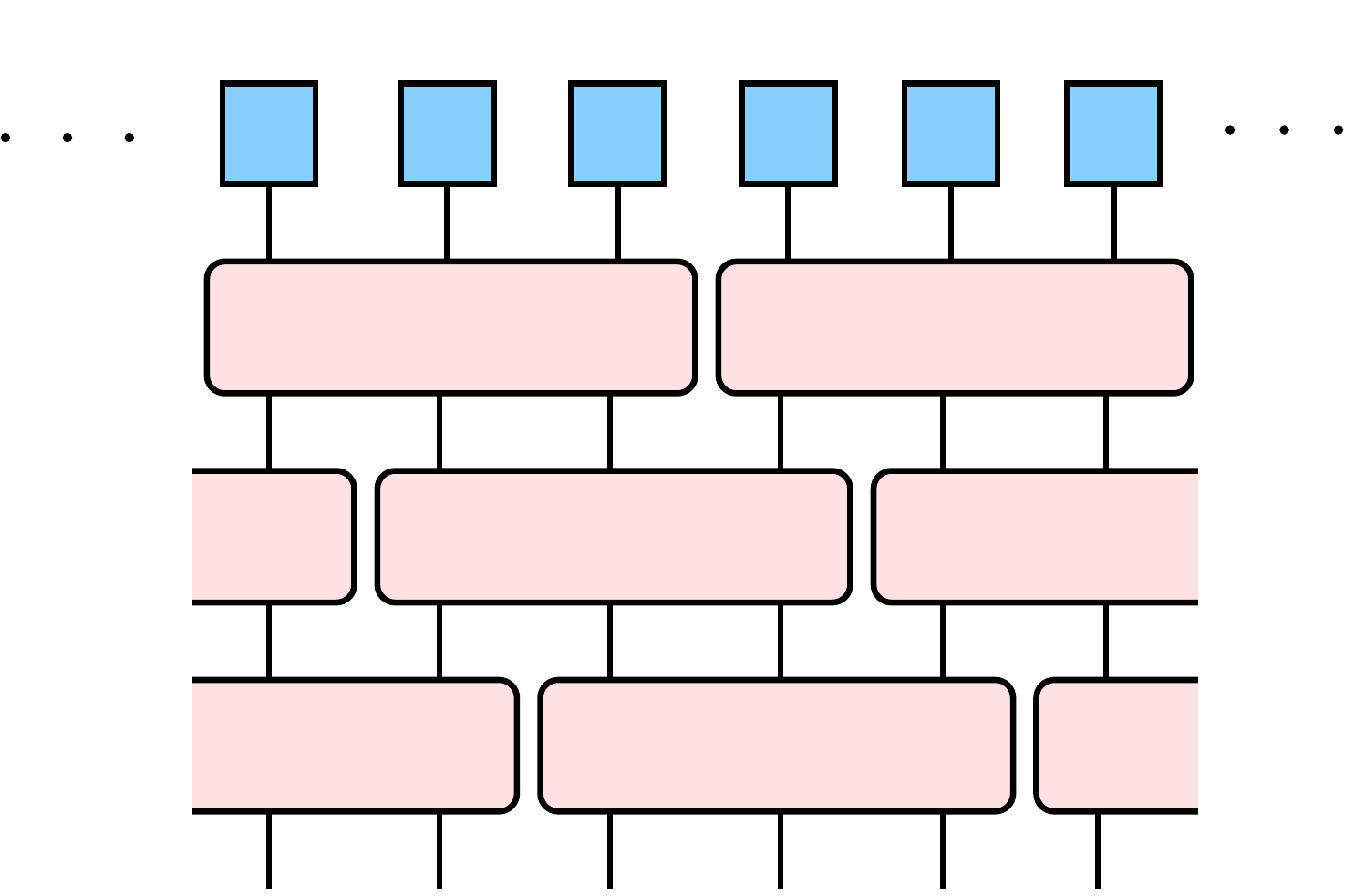_t}}
    \footnotesize{\caption[Local Hamiltonian dynamics as a quantum cellular automaton]{Evolution of the QCA over one timestep.  We have assumed that the Hamiltonian is nearest neighbour.  The term $H_n$ in the Hamiltonian acts on sites $n-1$, $n$ and $n+1$, so the circuit has depth three.}}
    \label{fig:TrotterQCA}
\end{minipage}
\end{compactwrapfigure}

Now over each timestep our lattice system evolves via the unitary
\begin{equation}
 U=\prod_{V_i}\prod_{l\in V_i}e^{-iH_l t/N},
\end{equation}
but this is causal since there is a finite number of subsets $V_i$.  So, a quantum cellular automaton with the above evolution operator approximates the evolution of quantum fields on a lattice.  Note that the QCA needed to approximate evolution via a lattice Hamiltonian may change with the lattice spacing to retain the same level of approximation.  This is because it is the size of the timestep $t/N$ that controls how accurate the approximation is.

And since the lattice quantum field theory approximates the continuum theory, we can approximate the continuum theory with a quantum cellular automaton.  A very naive argument justifying this follows from the triangle inequality:
\begin{equation}
\label{eq:N2}
 \|\left(U^N-e^{-iH_{\textrm{phys}}t}\right)\ket{\psi}\|_2\leq \|\left(U^N-e^{-iHt}\right)\ket{\psi}\|_2+\|\left(e^{-iHt}-e^{-iH_{\textrm{phys}}t}\right)\ket{\psi}\|_2,
\end{equation}
where $H$ is the lattice Hamiltonian, and $H_{\textrm{phys}}$ is the continuum physical Hamiltonian.  Because both terms on the right hand side go to zero as the lattice spacing goes to zero, the QCA dynamics should converge to the physical dynamics.

Of course, there are probably many details that would need to be ironed out, and the formula above is not likely to be the way to do this.  For one thing, we made no mention of what Hilbert space $\ket{\psi}$ lives in.  In fact, it is not clear whether $H_{\textrm{phys}}$ and its Hilbert space even exist.  The approach of equation (\ref{eq:N2}) assumes that they do.  It is probably better to argue that, provided the physical predictions of lattice QFT converge to observed values, and since we can simulate the lattice QFT arbitrarily well with a QCA, then we know that the QCA can approximate nature arbitrarily well.

Now there are some very important points to mention here.  We have made no mention of the efficiency of this approximation by QCAs.  One thing that could affect this is that the coupling constants of lattice Hamiltonians are not independent of the lattice spacing.  And finding out how they scale with $a$ is a very difficult problem \cite{JLP12}.  This would affect state preparation efficiency for one thing.  It could also affect the efficiency of the Trotter decomposition, as the error in the approximation is a function of the operator norm of $H_l$, but if the couplings grow with decreasing $a$, then so will $\|H_l\|$.  How couplings scale cannot affect whether the Trotter (and hence the QCA) approximation are valid since we can pick $t/N$ to be as small as we like, but this could affect how efficient they are as algorithms for simulation.

In fact, this problem of renormalization is one of the key obstacles to seeing whether quantum field theories can be simulated by quantum computers, something we discussed briefly in chapter \ref{chap:Background1}.  This was overcome in \cite{JLP12} for $\phi^4$ theory.\footnote{Another issue is that the lattice Hamiltonian may also have systems that have infinite degrees of freedom, like bosonic modes.  One way to take care of this is to truncate these degrees of freedom, as done in \cite{BY06}.}

To sum up, from a conceptual point of view, it is not inconceivable that nature could be a quantum cellular automaton.  But we should reiterate that this is predicated on the belief that lattice QFT does indeed describe nature, so we have ignored the role played by gravity.

A conclusion we can draw from this is that, to some extent, QCAs are more general than lattice quantum field theories (with local Hamiltonians).  This and the fact that they are causal are two good reasons to study them with applications to quantum field theory in mind.  But it is tempting to speculate about whether other QCAs we can construct may offer solutions to problems in lattice QFT.  For example, the quantum walks of section \ref{sec:Fermion Doubling in Quantum Walks} after second quantization would be good candidates.  So, with this in mind, it is interesting to look at more general quantum cellular automata and even more general causal models in discrete spacetime.  Furthermore, the prescription above does not seem so natural in comparison with the free fermion models we saw in section \ref{sec:Second Quantization and Dynamics}.  The possibility of constructing causal models in discrete spacetime that are natural looking, maybe evolving via simple circuits, that reproduce physical systems in the continuum limit is compelling.  With that in mind, we turn our attention to general causal systems in discrete spacetime in the following section.

Before we do this, let us note that there is already a precedent for the idea that more natural looking models can reproduce interesting physics in the continuum limit.  In \cite{DdV87} a causal discrete spacetime model was studied that was similar to the free fermion fields in section \ref{sec:Continuum Limits of Discrete Fermion Fields} but with the addition of local interactions.  In the continuum limit, this became fermion fields evolving via the massive Thirring Hamiltonian.  The massive Thirring model describes massive fermions that interact via a local quartic interaction with the Hamiltonian below.
\begin{equation}
 H=\int\!\frac{\textrm{d}p}{2\pi}\,\psi^{\dagger}_{p}(p\sigma_z+ m\sigma_x)\psi_{p}+2g\!\int\! \textrm{d}x\,\psi_r^{\dagger}(x)\psi_r^{\ }(x)\psi_l^{\dagger}(x)\psi_l^{\ }(x),
\end{equation}
where $g$ is a constant and the left hand term is the free Dirac Hamiltonian on a line.

\section{Causal Discrete-Time Models on a Lattice}
\label{sec:Causal Discrete-Time Models on a Lattice}
In the previous section, with a sensible choice of the Lie-Trotter decomposition, we saw that we could approximate lattice quantum field theories by quantum cellular automata.  In some sense the procedure was a little clumsy.  In contrast, the discrete models of fermions we saw in section \ref{sec:Second Quantization and Dynamics} were quite elegant and simple.  Perhaps we can construct discrete models that are more natural and still reproduce physical models in the continuum limit.  It is also conceivable that such models could circumvent problems in lattice QFT, such as the fermion doubling problem.  So in this section we will turn things around.  Now we will take causality as our starting point and see what can be said about general causal discrete-time systems on a lattice.  This will be a little more abstract than what we have seen so far in this chapter.  Still, it will lead to some interesting results, which may lead to the construction of physically relevant models.

We start in section \ref{sec:Causal Fermions} by formalizing the idea of fermionic quantum cellular automata.  Aside from a few additional constraints required because of the anticommuting nature of fermions, these are the fermionic analogues of regular quantum cellular automata.  Next, in section \ref{sec:Local Decomposition}, we will see that for causal discrete-time models of fermions and bosons on a lattice causality implies localizability.  This entails adding a copy of the original system so that any causal dynamics of fermions and bosons can be decomposed into a constant-depth circuit of local unitaries.  This is an extension of the results of \cite{ANW11,GNVW12}, which proved this result for regular quantum cellular automata.  In section \ref{sec:Representation by Qubits}, we will see that fermionic quantum cellular automata are equivalent to regular quantum cellular automata, provided we can add a constant number of ancillary systems per site (either fermionic modes or qubits).  Finally, we look at the prospect of simulating these abstract causal systems on a quantum computer in section \ref{sec:Simulating Causal Fermions}.

Most of the results in the following sections are based on \cite{FS13}.

\subsection{Fermionic Quantum Cellular Automata}
\label{sec:Causal Fermions}
Fermionic quantum cellular automata are essentially quantum cellular automata with fermionic modes at each site instead of qubits (or other finite dimensional quantum systems).  Though, to be fair, there are also a couple of extra details that are specific to fermionic systems.  Here the systems we consider have a finite number of spatial points.  See \cite{FS13} for the extension to systems with infinitely many sites.

We say that an operator is localized on a region if it can be written in terms of creation and annihilation operators on sites only from that region.  This allows us to define causality for these models, which, as for regular quantum cellular automata, is naturally defined in the Heisenberg picture.  Causality means that there is an $L\geq 0$ such that, for any annihilation operator $a_{\vec{n}}$, after one timestep the evolved operator is localized on sites $\vec{m}$ with $|\vec{n}-\vec{m}|<L$.  As operators localized on a region are sums of products of creation and annihilation operators from that region, this definition implies that any operator does not spread by more than a distance $L$ over each timestep.

Let us define the \emph{neighbourhood} of a spatial point $\vec{n}$.  This is the smallest set of points $\vec{m}$ on which all evolved creation operators from $\vec{n}$ are localized.  Localized observables on non overlapping regions of space always commute.  This is because they must be sums of even products of creation and annihilation operators.

It is useful to recall that physical fermion Hamiltonians in continuous time systems are also sums of even products of creation and annihilation operators.  An implication of this is that, if we add an ancillary mode with annihilation operator $b$, it is left invariant under evolution via $e^{-iHt}$, meaning $e^{-iHt}$ commutes with $b$.  Inspired by this, we will assume that all evolution operators for discrete-time systems of fermions have this property.  This will be useful because, given a system of fermions evolving via $U$, we can add ancillary fermionic modes that are unaffected by $U$.  We will not just assume that $U=e^{-iH}$ for some even, self-adjoint operator $H$ because this does not generalize to infinite dimensional systems.

To make the notation more compact and closer to that used for infinite systems, given the evolution operator $U$, we write $u(A)$ instead of $U^{\dagger}AU$.

The following lemma will be useful for later proofs.
\begin{lemma}
\label{lem:2}
 Given a fermionic unitary $U$ and the annihilation operator $a$, $u(a)$ is a linear combination of odd products of fermion creation and annihilation operators.
\end{lemma}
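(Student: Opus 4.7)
The plan is to introduce the fermion parity operator $P = (-1)^N$, where $N$ is the total number operator, and use it to grade the algebra of fermion operators: even operators satisfy $PXP^{-1}=X$ and are linear combinations of products of an even number of creation and annihilation operators, while odd operators satisfy $PXP^{-1}=-X$ and are linear combinations of odd products. Proving that $u(a)$ anticommutes with $P$ will then place it in the odd subspace, which is exactly what the lemma asserts.

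The first step is to show that $U$ itself commutes with $P$, which amounts to showing that the odd component of $U$ vanishes. Decompose $U = U_e + U_o$ into its even and odd parts. The assumption introduced earlier in the section lets us append a fermionic mode with annihilation operator $b$ at a new site outside the support of $U$, with $[U,b]=0$. Since $b$ is odd and lives on a disjoint site, it commutes with $U_e$ and anticommutes with $U_o$, so $[U,b] = -2 b U_o = 0$, giving $bU_o = 0$ and, by anticommutation, $U_o b = 0$. Resolving the identity as $1 = bb^\dagger + b^\dagger b$ and using that $b^\dagger b$ is even and on a disjoint site (so it commutes with $U_o$), we obtain
\[
U_o = U_o bb^\dagger + U_o b^\dagger b = -b U_o b^\dagger + b^\dagger b U_o = 0 + b^\dagger (b U_o) = 0.
\]
Hence $U = U_e$ is even and $[U,P]=0$.

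The second step is then immediate. Since $a$ is a single annihilation operator it is odd, so $PaP^{-1} = -a$, and combining with $[U,P]=0$ gives
\[
P\, u(a)\, P^{-1} = U^\dagger (PaP^{-1}) U = -U^\dagger a U = -u(a),
\]
so $u(a)$ anticommutes with $P$ and lies in the odd subspace, which is precisely the space of linear combinations of odd products of creation and annihilation operators.

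The main obstacle I anticipate is in the first step: the fermion algebra does not factorize as an ordinary tensor product across disjoint sites, so the graded commutation relations need to be handled carefully when using the ancilla $b$ to force $U_o=0$. Once $U$ is known to be even, the parity argument in the second step is essentially automatic.
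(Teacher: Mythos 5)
Your proof is correct. It rests on the same key ingredient as the paper's — the assumption that an appended ancillary mode $b$ satisfies $u(b)=b$ — but it is organized differently. The paper applies $u$ directly to the relation $\{a,b\}=0$ to get $\{u(a),b\}=0$, splits $u(a)=A_{\mathrm{odd}}+A_{\mathrm{even}}$, and kills the even part (the step "which is only possible if $A_{\mathrm{even}}=0$" is exactly the resolution-of-identity argument $A_{\mathrm{even}}=(bb^{\dagger}+b^{\dagger}b)A_{\mathrm{even}}=0$ that you spell out, applied to $A_{\mathrm{even}}$ instead of $U_o$). You instead prove the stronger intermediate statement that $U$ itself has vanishing odd part, hence commutes with the global parity $P=(-1)^N$, and then obtain the lemma by conjugating $a$ with $P$. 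What your route buys is a reusable structural fact ($[U,P]=0$ for any admissible fermionic unitary) and a cleaner derivation that needs only the single relation $[U,b]=0$ (both $U_ob=0$ and $bU_o=0$ follow from it, so you never need to invoke $b^{\dagger}$ anticommuting with $a$ separately); the cost is a slightly longer argument. One cosmetic remark: in your chain for $U_o$, the first term vanishes immediately as $(U_ob)b^{\dagger}=0$, so the intermediate rewriting as $-bU_ob^{\dagger}$ is unnecessary, though not wrong.
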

\begin{proof}
We write $u(a)=A_{odd}+A_{even}$, where $A_{odd}$ are all the terms that are products of an odd number of creation and annihilation operators and $A_{even}$ are all terms that are products of an even number of creation and annihilation operators.

The extra requirement we made above implies that we can add a fermionic mode with annihilation operator $b$, which anticommutes with all of the original creation and annihilation operators while satisfying $u(b)=b$.  But this implies that $\{b,A_{odd}+A_{even}\}=0$, which is only possible if $A_{even}=0$.
\end{proof}

We can apply this to prove the following important fact about these systems.
\begin{lemma}
\label{lem:1}
The inverse of a causal fermionic unitary is also causal.
\end{lemma}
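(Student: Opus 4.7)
The plan is to show directly that $B := u^{-1}(a_{\vec{n}}) = U a_{\vec{n}} U^{\dagger}$ is localized within distance $L$ of $\vec{n}$, where $L$ is the neighbourhood radius of $u$. First I would note that the standing assumption that $U$ commutes with ancillary fermion modes ($u(b)=b$) transfers to $U^{\dagger}$: the relation $UbU^{\dagger}=b$ gives $bU^{\dagger}=U^{\dagger}b$, hence $U^{\dagger}bU=b$. So the proof of lemma \ref{lem:2} applies verbatim to $u^{-1}$, and $B$ is a linear combination of odd products of creation and annihilation operators.

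Next, for any site $\vec{k}$ whose forward neighbourhood $N(\vec{k})$ does not contain $\vec{n}$, I would compute
\[
\{B,c_{\vec{k}}\} \;=\; \{u^{-1}(a_{\vec{n}}),c_{\vec{k}}\} \;=\; u^{-1}\bigl(\{a_{\vec{n}},u(c_{\vec{k}})\}\bigr).
\]
By causality of $u$, the operator $u(c_{\vec{k}})$ is localized on $N(\vec{k})$, and by lemma \ref{lem:2} it is an odd polynomial in creation and annihilation operators at sites other than $\vec{n}$. All such operators anticommute with $a_{\vec{n}}$, so $\{a_{\vec{n}},u(c_{\vec{k}})\}=0$ and therefore $\{B,c_{\vec{k}}\}=0$. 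The identical argument, applied to $c_{\vec{k}}^{\dagger}$, yields $\{B,c_{\vec{k}}^{\dagger}\}=0$.

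To turn this anticommutation property into genuine localization I would expand $B$ uniquely as
\[
B \;=\; B_1 + B_2\, c_{\vec{k}} + B_3\, c_{\vec{k}}^{\dagger} + B_4\, c_{\vec{k}}^{\dagger} c_{\vec{k}},
\]
with each $B_i$ supported off site $\vec{k}$. Since $B$ is odd, parity counting forces $B_1,B_4$ to be odd and $B_2,B_3$ to be even. Using the canonical anticommutation relations together with these parities, a short bookkeeping calculation gives
\[
\{B,c_{\vec{k}}\} = B_3 - B_4\, c_{\vec{k}}, \qquad \{B,c_{\vec{k}}^{\dagger}\} = B_2 + B_4\, c_{\vec{k}}^{\dagger},
\]
so the vanishing of both anticommutators forces $B_2=B_3=B_4=0$, i.e., $B$ has no support at site $\vec{k}$. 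Iterating this over every mode at every site $\vec{k}$ with $|\vec{k}-\vec{n}|>L$ (all such sites satisfy $\vec{n}\notin N(\vec{k})$ because $N(\vec{k})$ is contained in the ball of radius $L$ around $\vec{k}$) shows that $B$ is localized within distance $L$ of $\vec{n}$. This is causality of $U^{-1}$ with the same neighbourhood radius.

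The step I expect to require the most care is the last one: peeling sites off one at a time while tracking fermionic signs, and in particular handling multiple modes per site simultaneously. Once the single-mode computation above is in hand, however, the multi-mode case reduces to iterating the same decomposition mode by mode, and no new ideas should be needed.
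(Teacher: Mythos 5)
Your proposal is correct and follows essentially the same route as the paper's proof: both establish that $u^{-1}(a_{\vec{n}})$ is odd, transfer the vanishing anticommutators $\{a_{\vec{n}},u(c_{\vec{k}})\}=0$ through $u^{-1}$ using causality of $u$ and lemma \ref{lem:2}, and then conclude localization from the vanishing of $\{B,c_{\vec{k}}\}$ and $\{B,c_{\vec{k}}^{\dagger}\}$ for all modes outside the neighbourhood. The only difference is that you explicitly carry out the final decomposition $B=B_1+B_2c_{\vec{k}}+B_3c_{\vec{k}}^{\dagger}+B_4c_{\vec{k}}^{\dagger}c_{\vec{k}}$, a step the paper merely asserts.
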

\begin{proof}
Lemma \ref{lem:2} tells us that $u(a_{\vec{m}})$ must be a linear combination of odd products of creation and annihilation operators.  So, since $u$ is causal,
\begin{equation}
 \begin{split}
&\{u(a_{\vec{m}}),a_{\vec{n}}\} =0,\\
\textrm{and}\ & \{u(a^{\dagger}_{\vec{m}}),a_{\vec{n}}\}= 0
\end{split}
\end{equation}
for all $\vec{m}$ when $\vec{n}$ is not in the neighbourhood of $\vec{m}$.  Applying $u^{-1}$ to the equations above,
\begin{equation}
\begin{split}
&\{a_{\vec{m}},u^{-1}(a_{\vec{n}})\}= 0,\\
\textrm{and}\ & \{a^{\dagger}_{\vec{m}},u^{-1}(a_{\vec{n}})\}= 0
\end{split}
\end{equation}
for all $\vec{m}$ when $\vec{n}$ is not in the neighbourhood of $\vec{m}$.
It follows that $u^{-1}(a_{\vec{n}})$ is localized, so $u^{-1}$ is causal.  Here we used that, for odd $B$ if $\{a_{\vec{n}},B\}=\{a_{\vec{n}}^{\dagger},B\}=0$, then $B$ has no $a_{\vec{n}}$ or $a_{\vec{n}}^{\dagger}$ terms in its expansion in terms of fermion creation and annihilation operators.
\end{proof}

Finally, we define fermionic quantum cellular automata.
\begin{definition}
 A fermionic quantum cellular automaton consists of a discrete lattice, which may have periodic boundary conditions or be $\mathbb{Z}^d$, with the properties below.
 \begin{enumerate}
  \item Each lattice site has an associated finite number of fermionic modes.
  \item Evolution takes place over discrete timesteps via a causal unitary that is translationally invariant in space and time.  (For infinite lattices, evolution is via an automorphism of the algebra of observables.)
  \item Furthermore, if we add additional modes, the evolution leaves them invariant.
  \end{enumerate}
\end{definition}

\subsection{Local Decomposition of Causal Models on a Lattice}
\label{sec:Local Decomposition}
In this section, we show that causal models on lattices can be decomposed into constant-depth circuits of local unitaries, provided we append a copy of each mode to each site.  First, we will give the proof of this for fermions, which is similar in spirit to the analogous proof for quantum cellular automata in \cite{ANW11,GNVW12}.

Constructing the local decomposition requires us to look at the joint evolution of the system of fermions and an identical copy of that system.  We denote the annihilation operators for the original fermions by $a_{\vec{n}}$ and those for the corresponding modes of the copy by $b_{\vec{n}}$.  All of this extends to having multiple modes at each site, but it will make the notation a lot cleaner if we look at the proof with just one per site.

It will be helpful to recall the unitary from section \ref{sec:Second Quantization and Dynamics} that implements a fermionic swap:
\begin{equation} 
S= \exp[i\frac{\pi}{2}(b^{\dagger}-a^{\dagger})(b-a)].
\end{equation}
This swaps the modes $a$ and $b$.  And we denote the swap unitary between the modes $a_{\vec{n}}$ and $b_{\vec{n}}$ by $S_{\vec{n}}$.  Because it is natural for the copy modes to be at the same site as the original modes, this is a local unitary.

This allows us to derive the local decomposition from \cite{FS13}.
\begin{theorem}
\label{th:1}
Take a system of fermions with annihilation operators $a_{\vec{n}}$, that evolve via the causal unitary $U_A$.  And consider the evolution of two copies of this system via $U_AU_B^{\dagger}$, where $U_B$ is equivalent to $U_A$ but acting on the copy fermions, which have annihilation operators $b_{\vec{n}}$.  This can be decomposed into local fermionic unitaries:
\begin{equation}
 U_AU_B^{\dagger}=\prod_{\vec{n}}S_{\vec{n}}\prod_{\vec{m}}[U_B S_{\vec{m}}U_B^{\dagger}],
\end{equation}
where $U_B S_{\vec{m}}U_B^{\dagger}$ are local fermionic unitaries that commute.
\end{theorem}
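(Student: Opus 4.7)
The plan is to interpret $S := \prod_{\vec{n}} S_{\vec{n}}$ as a \emph{global} swap between the two copies and reduce the theorem to the single identity $S U_B S = U_A$. Granted that identity, the rest is algebra: the $S_{\vec{m}}$ commute pairwise (being even local unitaries at distinct sites), so
\begin{equation*}
U_B S U_B^{\dagger} \;=\; U_B\Bigl(\prod_{\vec{m}} S_{\vec{m}}\Bigr)U_B^{\dagger} \;=\; \prod_{\vec{m}} U_B S_{\vec{m}} U_B^{\dagger},
\end{equation*}
and $U_A U_B^{\dagger} = (S U_B S)\, U_B^{\dagger} = S\prod_{\vec{m}}[U_B S_{\vec{m}} U_B^{\dagger}]$, which is the claimed decomposition.

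First I would establish the properties of the local swap. Writing $c_{\vec{n}} = (b_{\vec{n}}-a_{\vec{n}})/\sqrt{2}$, one has $S_{\vec{n}} = e^{i\pi c_{\vec{n}}^{\dagger}c_{\vec{n}}} = \openone - 2 c_{\vec{n}}^{\dagger}c_{\vec{n}}$, so $S_{\vec{n}}$ is self-adjoint and involutive. Because $S_{\vec{n}}$ is an even polynomial in the fermion operators, it commutes with every fermion operator at a site other than $\vec{n}$; pushing these facts through the (well-defined, order-independent) product shows that $S$ too is self-adjoint and involutive, with $S a_{\vec{n}} S = b_{\vec{n}}$ and $S b_{\vec{n}} S = a_{\vec{n}}$ for every $\vec{n}$.

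The key step is $S U_B S = U_A$. Axiom (3) of the definition of a fermionic QCA forces $U_B$ to commute with every added mode, in particular with every $a_{\vec{n}}$ and $a_{\vec{n}}^{\dagger}$. Combined with the parity argument used in the proof of Lemma \ref{lem:2} (odd products of $b$'s would anticommute with the $a_{\vec{n}}$'s), this pins $U_B$ down as a sum of \emph{even} products of the $b_{\vec{n}}$'s and $b_{\vec{n}}^{\dagger}$'s alone. Since by construction $U_B$ is the formal copy of $U_A$ under $a_{\vec{n}} \leftrightarrow b_{\vec{n}}$, and conjugation by $S$ implements exactly that substitution term by term, the identity $S U_B S = U_A$ follows. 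For the side claims, each $U_B S_{\vec{m}} U_B^{\dagger}$ is localized because $U_B$ and $U_B^{\dagger}$ are both causal (Lemma \ref{lem:1}), and the factors mutually commute because the $S_{\vec{m}}$ do and conjugation by a common unitary preserves commutation.

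I anticipate the main obstacle to be making the step $S U_B S = U_A$ fully rigorous: one must confirm that the evenness of $U_B$ in the $b$-subalgebra is genuine, and that the formal substitution $b\mapsto a$ is compatible with the operator ordering without stray anticommutation signs. Both points rest on the evenness conclusion from Lemma \ref{lem:2}; once that ingredient is in place, the telescoping identity above closes the argument.
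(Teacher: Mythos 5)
Your proof is correct and takes essentially the same route as the paper's: rewrite the claimed product as $S U_B S U_B^{\dagger}$ with $S=\prod_{\vec{n}}S_{\vec{n}}$ the global swap, invoke $S U_B S=U_A$, and verify locality of $U_B S_{\vec{m}} U_B^{\dagger}$ from causality and their commutation from $[S_{\vec{n}},S_{\vec{m}}]=0$. You merely supply more detail than the paper on the key identity $S U_B S = U_A$ (the evenness of $U_B$ and the order-compatibility of the substitution $b_{\vec{n}}\mapsto a_{\vec{n}}$ under conjugation), which the paper dispatches with the remark that $S$ swaps all modes.
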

\begin{proof}
First,
\begin{equation}
 \prod_{\vec{n}}S_{\vec{n}}\prod_{\vec{m}}[U_B S_{\vec{m}}U_B^{\dagger}]=S U_B S U_B^{\dagger},
\end{equation}
where
\begin{equation}
S=\prod_{\vec{n}}S_{\vec{n}}
\end{equation}
swaps all modes.  And, because $S$ swaps all modes, $S U_B S=U_A$.  It follows that
\begin{equation}
 \prod_{\vec{n}} S_{\vec{n}}\prod_{\vec{m}}[U_B S_{\vec{m}}U_B^{\dagger}]=U_AU_B^{\dagger}.
\end{equation}
Furthermore, $U_B S_{\vec{n}}U_B^{\dagger}$ is a local fermionic unitary because $S_{\vec{n}}$ is
\begin{equation}
 \exp[i\frac{\pi}{2}(b_{\vec{n}}^{\dagger}-a_{\vec{n}}^{\dagger})(b_{\vec{n}}-a_{\vec{n}})].
\end{equation}
So
\begin{equation}
 U_B S_{\vec{n}}U_B^{\dagger}=\exp[i\frac{\pi}{2}(b_{\vec{n}}^{\prime\dagger}-a_{\vec{n}}^{\dagger})(b^{\prime}_{\vec{n}}-a_{\vec{n}})],
\end{equation}
where $b^{\prime}_{\vec{n}}=U_Bb_{\vec{n}}U_B^{\dagger}$, which is localized because $U_B$ is causal. Therefore, $ U_B S_{\vec{n}}U_B^{\dagger}$ is also localized.  The unitaries $ U_B S_{\vec{n}}U_B^{\dagger}$ commute because $[S_{\vec{n}}, S_{\vec{m}}]=0$. 
\end{proof}
What this theorem tells us is that the causal evolution of two copies of a system of fermions can be rewritten as a product of local unitaries.  Furthermore, given any state $\ket{\psi}$ of the original system of fermions and its copy, then, for any measurement operator $\mathcal{M}_A$ on the original fermions,
\begin{equation}
 \bra{\psi}U_BU_A^{\dagger}\mathcal{M}_AU_AU_B^{\dagger}\ket{\psi}=\bra{\psi}U_A^{\dagger}\mathcal{M}_AU_A\ket{\psi}.
\end{equation}
This means that the joint evolution of both the original and auxiliary fermions reproduces the original evolution.  For example, $\ket{\psi}$ could be any state of the original fermions with all of the copy modes empty, like $\textstyle{\frac{1}{\sqrt{2}}(a^{\dagger}_{\vec{n}}+a^{\dagger}_{\vec{m}}})\ket{0}$, where $\ket{0}$ is the state annihilated by all annihilation operators.

In some cases there is already a constant depth local unitary decomposition of the evolution without having to add a copy of the system.  This was the case for the free fermion systems in \ref{sec:Discrete Dirac Fermions in One Dimension}, for example.  This is not true in general.  A good counterexample is the evolution where everything is shifted one step to the right every timestep.

Before proceeding, it will be useful to note that the analogous result from \cite{ANW11,GNVW12} for quantum cellular automata can be proved in a very similar fashion.  The only differences are that at each site there are finite dimensional quantum systems, such as qubits, and that the swaps become swaps between these finite dimensional systems and their copies.

We can extend theorem \ref{th:1} to systems of bosons and fermions, which may interact.  To do this, we define $\mathbb{S}_{\vec{n}}$ to be the bosonic swap operator between the mode at $\vec{n}$ with annihilation operator $c_{\vec{n}}$ and its copy with annihilation operator $d_{\vec{n}}$.
\begin{equation}
\mathbb{S}_{\vec{n}}= \exp[i\frac{\pi}{2}(d^{\dagger}_{\vec{n}}-c^{\dagger}_{\vec{n}})(d_{\vec{n}}-c_{\vec{n}})].
\end{equation}
Then we have the following corollary \cite{FS13}.
\begin{corollary}
\label{cor:1}
 Take a finite system of fermions and bosons evolving via the causal unitary $U_A$.  Then, the evolution of two copies of the system via $U_AU_B^{\dagger}$, where $U_B$ is equivalent to $U_A$ but acting on the copy system, can be decomposed into local unitaries:
\begin{equation}
 U_AU_B^{\dagger} =\prod_{\vec{k}}S_{\vec{k}}\prod_{\vec{n}}\mathbb{S}_{\vec{n}}\prod_{\vec{m}}[U_B S_{\vec{m}}U_B^{\dagger}]\prod_{\vec{l}}[U_B\mathbb{S}_{\vec{l}\,}U_B^{\dagger}],
\end{equation}
where $U_B S_{\vec{n}}U_B^{\dagger}$ and $U_B\mathbb{S}_{\vec{n}}U_B^{\dagger}$ are commuting local unitaries.
\end{corollary}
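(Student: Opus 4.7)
The plan is to mirror the proof of theorem \ref{th:1} essentially verbatim, with the sole new ingredient being that the total swap between the system and its copy now factorizes into two commuting pieces: a fermionic swap acting on the fermion modes and a bosonic swap acting on the boson modes.

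First I would define the total swap
\begin{equation}
\Sigma = \prod_{\vec{k}} S_{\vec{k}} \prod_{\vec{n}} \mathbb{S}_{\vec{n}},
\end{equation}
and observe that $S_{\vec{k}}$ and $\mathbb{S}_{\vec{n}}$ commute trivially because they act on disjoint sets of modes (the fermionic swap is built from $a_{\vec{k}}, b_{\vec{k}}$ while the bosonic swap is built from $c_{\vec{n}}, d_{\vec{n}}$, and moreover the fermionic swap is an even operator, so there are no sign subtleties). The ordering of the two products in the proposed decomposition is therefore immaterial. By exactly the same computation as in theorem \ref{th:1}, conjugation by $\Sigma$ interchanges the labels $A\leftrightarrow B$ on every mode, so $\Sigma\, U_B\,\Sigma = U_A$, which yields
\begin{equation}
U_A U_B^\dagger = \Sigma\, U_B\, \Sigma\, U_B^\dagger = \prod_{\vec{k}} S_{\vec{k}} \prod_{\vec{n}} \mathbb{S}_{\vec{n}} \prod_{\vec{m}} [U_B S_{\vec{m}} U_B^\dagger] \prod_{\vec{l}} [U_B \mathbb{S}_{\vec{l}} U_B^\dagger],
\end{equation}
where I have inserted $U_B U_B^\dagger = \openone$ between every adjacent pair of factors of $\Sigma$ and collected the resulting conjugates.

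Next I would verify the two claims about the conjugated swaps. Locality follows because $S_{\vec{m}}$ and $\mathbb{S}_{\vec{l}}$ are exponentials of operators built from $a_{\vec{m}}, b_{\vec{m}}$ and $c_{\vec{l}}, d_{\vec{l}}$ respectively; conjugating by $U_B$ leaves the $A$-mode operators untouched and replaces the $B$-mode operators by $U_B b_{\vec{m}} U_B^\dagger$, $U_B d_{\vec{l}} U_B^\dagger$, which are localized in a bounded neighbourhood by causality of $U_B$ (which for the fermionic part uses lemma \ref{lem:1}, and for the bosonic part is immediate from the Heisenberg-picture causality condition). Commutativity of the conjugated swaps follows from commutativity of the unconjugated ones: all $S_{\vec{m}}$ and $\mathbb{S}_{\vec{l}}$ commute among themselves (disjoint sites for like-statistics pairs, disjoint mode types for unlike pairs), so their $U_B$-conjugates do as well.

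The only step that requires any real thought is confirming that the fermionic and bosonic swaps genuinely commute and that the ``add a copy'' construction is well-defined in the mixed setting; here the assumption built into the definition of causal fermionic evolution (that adjoining extra modes leaves them invariant) is essential, as is the analogous property for bosonic modes, which I would invoke at the start so that $U_A$ and $U_B$ act trivially on each other's Hilbert spaces and hence commute. Everything else is a direct transcription of theorem \ref{th:1}, and no additional analytic obstacle appears.
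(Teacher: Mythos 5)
Your proposal is correct and follows essentially the same route as the paper, which proves the corollary implicitly by the identical argument used for theorem \ref{th:1}: the total swap $\Sigma$ factorizes into commuting fermionic and bosonic swaps, $\Sigma U_B \Sigma = U_A$, and the conjugated swaps are local and commuting by causality of $U_B$ (and of its inverse, via lemma \ref{lem:1}). Your explicit attention to the evenness of the fermionic swap and to the invariance of adjoined copy modes is exactly the right bookkeeping and matches the paper's setup.
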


\subsection{Representation by Quantum Cellular Automata}
\label{sec:Representation by Qubits}
So we now have a decomposition of causal evolution into products of local unitaries.  It is interesting to see whether we can represent these systems by quantum cellular automata.  An issue that could cause problems for us is that bosonic systems involve infinite dimensional Hilbert spaces:\ the Hilbert space of even a single bosonic mode is infinite dimensional because we can have an arbitrary number of particles of a given type.  So to represent bosonic modes by finite dimensional systems (such as a few qubits), we would need some form of truncation procedure.  In free models, where the number of bosons does not change over time, this is straightforward, as the dynamics ensures that we only use a finite dimensional subspace of the whole Hilbert space.  In general, however, it is not clear how to do this.  It is probably the case that we would need to look at specific models before anything concrete could be said about truncating the bosonic degrees of freedom.  So from here on we will focus purely on causal systems of fermions.

As we saw in section \ref{sec:The Jordan-Wigner Transformation}, the Jordan-Wigner transformation allows us to represent fermions in the qubit picture.  So, we assign one qubit to the fermionic mode at $\vec{n}$ and one to its copy.  For now, suppose we just have fermions on a line.  With modes on the same site chosen to be consecutive in the Jordan-Wigner transformation ordering, the operators $S_{\vec{n}}$ are local unitaries in the qubit representation.  Next, recall that
\begin{equation}
U_B S_{\vec{n}}U_B^{\dagger}=\exp[i\frac{\pi}{2}(b^{\prime\dagger}_{\vec{n}}-a^{\dagger}_{\vec{n}})(b^{\prime}_{\vec{n}}-a_{\vec{n}})].
\end{equation}
Now note that $b^{\prime}_{\vec{n}}=U_Bb_{\vec{n}}U_B^{\dagger}$ must be a linear combination of {\it odd} powers of creation and annihilation operators on the neighbourhood of $\vec{n}$.  We proved this in lemma \ref{lem:2} in the previous section.  This means that $U_B S_{\vec{n}}U_B^{\dagger}$ has the form $e^{-iH_{\vec{n}}}$, where $H_{\vec{n}}$ is self-adjoint and localized.  Furthermore, it is a sum of even products of creation and annihilation operators.  Then, with the natural ordering for the Jordan-Wigner transformation, $H_{\vec{n}}$ is localized in the qubit representation also.  Therefore, $U_B S_{\vec{n}}U_B^{\dagger}$ is also localized in the qubit representation.  So causally evolving fermions on a line can be represented by causally evolving qubits on a lattice.

We will need to work harder in higher spatial dimensions (or on lines with periodic boundary conditions).  This is because, as we have seen, in these cases the Jordan-Wigner transformation is nonlocal in general.  We still choose the Jordan-Wigner ordering so that fermionic modes and their copies are consecutive.  Then each fermionic swap $S_{\vec{n}}$ is local in the qubit representation.

The $U_B S_{\vec{n}}U_B^{\dagger}$ terms in the decomposition are more problematic because they may not be localized unitaries in the qubit picture.  Fortunately, we saw in section \ref{sec:Representing Local Fermionic Hamiltonians by Local Qubit Hamiltonians} how to circumvent this problem.  The price is that we must add more auxiliary fermions.  And this means that we need more qubits to represent this larger system of fermions.

As a fermionic unitary, $U_B S_{\vec{n}}U_B^{\dagger}$ is local, and it has the form $e^{-iH_{\vec{n}}}$, where $H_{\vec{n}}$ is self-adjoint and localized.  So, by adding pairs of fermions for each link, $H_{\vec{n}}$ is equivalent to a self-adjoint even operator on a larger system that is also local in the qubit picture.  We saw how this worked in section \ref{sec:Representing Local Fermionic Hamiltonians by Local Qubit Hamiltonians}.  Therefore, $U_B S_{\vec{n}}U_B^{\dagger}$ is equivalent to a local fermionic unitary on a larger system that is local in the qubit picture.

The fermionic unitaries $U_B S_{\vec{n}}U_B^{\dagger}$ commute.  After adding additional fermions, the resulting unitaries $V_{\vec{n}}$ implementing $U_B S_{\vec{n}}U_B^{\dagger}$ in the qubit picture may not commute if the neighbourhoods on which they are localized overlap.  Fortunately, the order in which they are applied is irrelevant when acting on a $+1$ eigenstate of the auxiliary fermion operators $ic_{(\vec{n},\vec{m})}c_{(\vec{m},\vec{n})}$.  And we can apply many $V_{\vec{n}}$ simultaneously, provided they act on regions that do not overlap.  For example, on a line, where evolution is nearest neighbour, we can apply all $V_n$ with $n\bmod 3=0$ first, followed by all $V_n$ with $n\bmod 3=1$, followed by all $V_n$ with $n\bmod 3=2$.  In this case, we need only three steps to implement every $V_n$.  So the circuit is constant depth.  The procedure is similar in higher dimensions and for different neighbourhoods.  So the qubit evolution is always causal.

It is important, particularly from a simulation perspective and for the extension to infinite lattices, that the number of additional fermions we add per site is constant, meaning it is independent of $\mathcal{N}$, the number of sites.

The end result is the following theorem \cite{FS13}.
\begin{theorem}
\label{th:CS1}
 Any causal fermionic evolution in discrete spacetime is equivalent to a subsector of the causal evolution of a system of qubits.  This may require the addition of a constant number of qubits per site.
\end{theorem}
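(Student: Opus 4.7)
The plan is to combine three ingredients that have already been developed in the excerpt: the local decomposition of corollary \ref{cor:1} (really theorem \ref{th:1} for the purely fermionic case), the Jordan--Wigner transformation, and the Majorana-pair trick from section \ref{sec:Representing Local Fermionic Hamiltonians by Local Qubit Hamiltonians} for removing the residual nonlocality of local fermionic unitaries in the qubit picture. Starting from a causal fermionic evolution $U_A$, I would first append a ``copy'' fermionic system with modes $b_{\vec{n}}$ at every site and consider the joint evolution $U_AU_B^{\dagger}$. Theorem \ref{th:1} then rewrites this as a constant-depth product of commuting local fermionic unitaries, namely the on-site fermionic swaps $S_{\vec{n}}$ and the ``dressed swaps'' $U_BS_{\vec{n}}U_B^{\dagger}$, each of which is localized on the neighbourhood of $\vec{n}$ together with its copy. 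Crucially, expectation values of observables on the original modes, evaluated on any state with empty $b$-modes, reproduce the expectation values of the original evolution $U_A$, so that the subsector singled out by ``copy modes empty'' recovers the original dynamics exactly.

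Next, I would pass to qubits via a Jordan--Wigner transformation with an ordering $\pi$ in which the original mode at $\vec{n}$ and its copy at $\vec{n}$ are assigned consecutive labels. With this choice every on-site swap $S_{\vec{n}}$ becomes a local qubit unitary. The dressed swap $U_BS_{\vec{n}}U_B^{\dagger}=\exp[i\tfrac{\pi}{2}(b'^{\dagger}_{\vec{n}}-a^{\dagger}_{\vec{n}})(b'_{\vec{n}}-a_{\vec{n}})]$ with $b'_{\vec{n}}=U_Bb_{\vec{n}}U_B^{\dagger}$ is a sum of even products of creation and annihilation operators on the finite neighbourhood of $\vec{n}$, so as a fermionic operator it is local; in one spatial dimension (open line) this is automatically local in the qubit picture and we are done.

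The main obstacle is higher-dimensional lattices (and the line with periodic boundaries), where the Jordan--Wigner strings make $U_BS_{\vec{n}}U_B^{\dagger}$ generically nonlocal on qubits. Here I would invoke the Ball--Verstraete--Cirac construction: attach a pair of auxiliary fermionic modes to each link of the lattice, form the Majorana bilinears $M_{(\vec{n},\vec{m})}=ic_{(\vec{n},\vec{m})}c_{(\vec{m},\vec{n})}$, and insert $M_{(\vec{n},\vec{m})}$ factors along a short path between any two fermionic operators whose product would otherwise cross many Jordan--Wigner strings. Since $U_BS_{\vec{n}}U_B^{\dagger}$ is an even polynomial with bounded support, only a bounded number of auxiliary modes per site are required (depending only on the neighbourhood size, not on the lattice volume), and restricting to the common $+1$ eigenspace of all $M_{(\vec{n},\vec{m})}$ leaves the action on physical observables unchanged, as shown in section \ref{sec:Representing Local Fermionic Hamiltonians by Local Qubit Hamiltonians}. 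After this dressing, each $U_BS_{\vec{n}}U_B^{\dagger}$ is implemented by a local qubit unitary $V_{\vec{n}}$.

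Finally, I would verify causality of the resulting qubit evolution. The $V_{\vec{n}}$ are local on overlapping neighbourhoods and need not commute as qubit operators (they only commute on the relevant $M$-invariant subspace), but applying them in a fixed finite number of sweeps, grouped so that each sweep consists of $V_{\vec{n}}$ acting on disjoint regions (as already sketched in the excerpt for the line and as generalizes straightforwardly via a finite colouring of the lattice), yields a constant-depth local qubit circuit. The distinguished subsector of the qubit system is then the set of states with empty copy modes and lying in the joint $+1$ eigenspace of all Majorana link pairs, together with the Jordan--Wigner image of the original fermionic Fock space; on this subsector the qubit dynamics reproduces $U_A$ exactly, proving the theorem. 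The number of added qubits per site is bounded by (original modes) $+$ (copy modes) $+$ (twice the link coordination number), which is independent of the lattice size.
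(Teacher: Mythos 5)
Your proposal is correct and follows essentially the same route as the paper's own argument: theorem \ref{th:1} for the constant-depth fermionic decomposition, Jordan--Wigner with mode and copy consecutive so that the swaps $S_{\vec{n}}$ are local, the Majorana-pair construction of section \ref{sec:Representing Local Fermionic Hamiltonians by Local Qubit Hamiltonians} to localize the dressed swaps $U_BS_{\vec{n}}U_B^{\dagger}$ in higher dimensions, and a finite colouring of the lattice to handle the fact that the resulting qubit unitaries $V_{\vec{n}}$ only commute on the $+1$ eigenspace of the link operators. No gaps; this is the paper's proof.
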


Furthermore, this leads to another interesting theorem.
\begin{theorem}
\label{cor:2}
 Quantum cellular automata and fermionic quantum cellular automata are equivalent, in the sense that, by adding a constant number of systems to each site, one can simulate the other.
\end{theorem}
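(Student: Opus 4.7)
The forward direction, that every fermionic quantum cellular automaton can be simulated by a regular quantum cellular automaton with at most a constant number of additional qubits per site, is already the content of Theorem \ref{th:CS1}. So what remains is to prove the converse: a quantum cellular automaton with $d$-dimensional systems per site can be simulated by a fermionic quantum cellular automaton with only a constant (in the lattice size) number of fermionic modes per site.

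My plan is to adapt the hardcore-boson-style encoding used in section \ref{sec:U(1) Lattice Gauge theory as a spin model}. At each site $\vec{n}$, I will replace the $d$-dimensional system by $d$ fermionic modes $a_{\vec{n},0},\ldots,a_{\vec{n},d-1}$ and encode the basis state $\ket{k}$ as $a^{\dagger}_{\vec{n},k}\ket{0}_F$. Under this identification, single-site operators $\ket{k}\bra{l}$ become bilinears $a^{\dagger}_{\vec{n},k}a_{\vec{n},l}$, which are even; hence any single-site operator on the encoded subspace is an even fermionic operator, and local observables on disjoint regions of sites commute, just as they do in the qubit picture.

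To realise the dynamics, I will invoke the quantum cellular automaton analogue of Theorem \ref{th:1} from \cite{ANW11,GNVW12}: after adjoining an ancillary copy of the whole system, the joint evolution $U_A U_B^{\dagger}$ decomposes as a constant-depth product of commuting local unitaries $V_{\vec{n}}$, each supported on a fixed neighbourhood around $\vec{n}$. Replacing every $\ket{k}\bra{l}$ inside each $V_{\vec{n}}$ with the bilinear $a^{\dagger}_{\vec{m},k}a_{\vec{m},l}$ produces an operator $\tilde V_{\vec{n}}$ which is an even polynomial in fermionic operators of that same neighbourhood. Because even operators on disjoint regions commute, the $\tilde V_{\vec{n}}$ retain the same constant-depth, translation-invariant structure as the $V_{\vec{n}}$, and their product is a causal fermionic unitary reproducing the action of $U_A U_B^{\dagger}$ on the encoded subspace. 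Appending an ancillary ``copy'' sector in the fermionic picture mirrors the ancillary copy used on the qudit side, so the total number of added fermionic modes per site is constant.

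The main subtlety will be in verifying that $\tilde V_{\vec{n}}$ extends to a \emph{unitary} on the full Fock space (not merely on the single-particle-per-site encoded subspace) while preserving locality, translation invariance, and the condition that unrelated ancilla modes be left invariant. I plan to address this by extending each $\tilde V_{\vec{n}}$ to act as the identity on the complement of the encoded subspace restricted to its neighbourhood; since the encoded subspace is spanned by vectors with fixed even parity per site, this extension stays even and remains localised, giving a legitimate fermionic local unitary. Alternatively, the hardcore-boson encoding of section \ref{sec:U(1) Lattice Gauge theory as a spin model} (encoding $\ket{0}$ as the empty pair and $\ket{1}$ as a doubly occupied pair) keeps the construction automatically inside the even sector, which may give a cleaner route. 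Either way, the resulting fermionic automaton is causal, translation invariant, and compatible with the invariance-under-ancilla axiom required by our definition of a fermionic quantum cellular automaton, completing the equivalence.
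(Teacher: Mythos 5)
Your proposal is correct and follows essentially the same route as the paper: the forward direction is Theorem \ref{th:CS1}, and the converse uses the QCA analogue of Theorem \ref{th:1} to reduce the dynamics (after adjoining an inverse-evolving copy) to a constant-depth circuit of local unitaries, which are then transported into the fermionic picture via an encoding under which single-site operators become \emph{even} fermionic operators, so that locality and causality survive. The only substantive difference is the encoding: you use a one-hot encoding with $d$ modes per site and bilinears $a^{\dagger}_{\vec{n},k}a_{\vec{n},l}$, which forces the explicit unitary-extension step you describe, whereas the paper encodes each qubit in a \emph{pair} of modes so that $X$ and $Z$ map directly to even operators that are already unitary on the full Fock space. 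One small slip: your encoded states carry \emph{odd}, not even, fermion parity per site; this is harmless, since the argument only needs the operators and the projector onto the encoded subspace to be even, which they are.
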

That quantum cellular automata can simulate fermionic quantum cellular automata follows from theorem \ref{th:CS1}.  The other direction follows from a useful fact:\ we can represent a qubit by a pair of fermions without any nonlocality.  We saw how to do this in section \ref{sec:U(1) Lattice Gauge theory as a spin model}.  Here is how it works.  For a single qubit with basis states $\ket{0}_Q$ and $\ket{1}_Q$ we introduce a pair of fermion modes with annihilation operators $a$ and $b$.  Then we represent the qubit states by
\begin{equation}
 \begin{split}
  \ket{0}_Q & =\ket{0_F}\\
  \ket{1}_Q & =a^{\dagger}b^{\dagger}\ket{0_F},
 \end{split}
\end{equation}
where $\ket{0}_F$ is the state with no fermions present.  Qubit operators are represented by\footnote{The choices here differ from those in section \ref{sec:U(1) Lattice Gauge theory as a spin model}.  This is because we want the operators to be unitary.}
\begin{equation}
 \begin{split}
  \openone & =\openone\\
  X & =(a^{\dagger}-a)(b^{\dagger}+b)\\
  Z & =aa^{\dagger}-a^{\dagger}a.
  \end{split}
\end{equation}
The representation of any other qubit operator can be gotten by taking linear combinations and products of these operators.
Now, suppose that we have a lattice of qubits.  To represent this by fermions, we take the same lattice but now with two fermionic modes at each site for every qubit.  Then we can represent the qubit states as above.  Note that all fermion representations of qubit operators are quadratic, and so any two on different sites commute.

We must ensure that the dynamics in the fermion representation is a valid causal fermionic unitary.  One way to see that this is possible is to replace the original QCA by two copies, where the second copy evolves via the inverse unitary.  This allows us to use the analogous result of theorem \ref{th:1} for regular QCAs that we mentioned in the previous section.  It says that the joint evolution of the system and its copy is equivalent to a constant-depth circuit of local unitaries.  Suppose we represent this joint system by fermions in the manner described above.  The local unitaries in the decomposition of the qubit evolution can be written in terms of $X$ and $Z$ operators on a finite region.  And we know how to map these into the fermion picture.  Furthermore, as they are unitary and local in the fermion picture, we get a constant-depth circuit of local unitaries in the fermion picture.  Therefore, the result is a bona fide fermionic quantum cellular automaton.

It was shown in \cite{BK02} that, in the circuit model, conventional quantum computers (those with information stored by qubits) could efficiently simulate quantum computers composed of fermions but with slowdown logarithmic in the number of fermionic modes $\mathcal{N}$.  This means that in the worst case scenario $O(\log(\mathcal{N}))$ qubit gates would be necessary to simulate a single fermionic gate.  In contrast to this, the slowdown is only constant in the converse direction:\ to simulate a qubit gate with the fermionic system, only $O(1)$ gates are needed.  Here we have an analogue of \cite{BK02} but for quantum cellular automata.  It is interesting that in our case the situation is more symmetric.

\subsection{Simulating Causal Systems on a Quantum Computer}
\label{sec:Simulating Causal Fermions}
Let us turn to the problem of simulating causal systems by quantum computers.  Again, the situation for bosons is not clear, which is because it is not obvious when systems with bosons can be approximately represented on qubits.  At least for causal fermions we now know how to simulate the dynamics by local unitaries applied to a lattice of qubits.  This tells us how to simulate the dynamics of causal fermions on a quantum computer.  Additionally, we saw in section \ref{sec:Preparing the Majorana state} how to prepare the initial state of the auxiliary fermions needed to make the mapping to qubits local.

From a complexity point of view, this can be done efficiently.  This is because, for $\mathcal{N}$ fermionic modes, we only have to apply $O(\mathcal{N})$ local unitaries:\ first the $\mathcal{N}$ unitaries $V_{\vec{n}}$ (the unitaries implementing $U_B S_{\vec{n}}U_B^{\dagger}$ on the qubits), followed by the qubit representation of the $\mathcal{N}$ fermionic swap operators.  As we saw in section \ref{sec:Representation by Qubits}, we may require additional qubits to ensure that these operators are still local in the qubit picture.

Many of the unitaries can be implemented in parallel:\ all swap operations can be done simultaneously, and many $V_{\vec{n}}$ operations can be done at the same time, as long as the areas on which they are localized do not overlap.  As each $V_{\vec{n}}$ is localized on the neighbourhood of $\vec{n}$, which contains at most all the points in a hypercube with length of side $2L$ (because the evolution is causal), the time needed for one step of the evolution is $O\big((2L)^d\big)$, where $d$ is the lattice dimension.  Therefore, the time does not depend on $\mathcal{N}$.

So we know how to simulate the evolution on a quantum computer.  On its own, simulating arbitrary causal models seems uninteresting unless there is some physical significance to the models.  Still, as a proof of principle, this is another aspect of the problem of whether quantum computers can efficiently simulate quantum physics, which we discussed in chapter \ref{chap:Background1}.  The answer was shown to be yes in the case of local Hamiltonians on a lattice in \cite{Lloyd96,AL97,NC00,OGKL01,SOGKL02}.  Here we saw that the same holds for {\it causal} dynamics of fermions on lattices.

Aside from the dynamics, there is the question of state preparation.  One thing we already know is that a $+1$ eigenstate of the pairs of Majorana fermions can be prepared efficiently on a quantum computer, which we saw how to do in section \ref{sec:Preparing the Majorana state}.  Initial state preparation for fermion systems is discussed in \cite{AL97}, for example.

\section{Information Flow in QCAs}
\label{sec:Information Flow in Quantum Cellular Automata}
This section stands alone from the general narrative of this chapter since it is a study of the abstract theory of QCAs.  Nevertheless, the structure theorem for QCAs on a line that we will prove here may be useful for future work.  Here, we will look at the evolution operator of a QCA on a line and prove a theorem that describes how information moves along the line under the evolution of the  QCA.  Information flow for QCAs on a line was quantified previously in \cite{GNVW12} by a locally computable number called the index.  Here, we provide a general structure theorem for QCAs on a line that is analogous to the result for quantum walks in section \ref{sec:Quantum Walks on a Line: Abstract Theory}.

First, in section \ref{sec:Spreading of Information}, we reproduce some tools used in \cite{SW04,GNVW12} that allow us to quantify how information spreads locally over one timestep.  This allows us to prove the main result in section \ref{sec:Structure of the evolution}, which is the statement that, possibly after regrouping sites, any QCA on a line can be written as a product of shifts and on-site unitaries.

\subsection{Spreading of Information}
\label{sec:Spreading of Information}
Before getting to the results, it will be useful to have a more concrete understanding of how information spreads locally on a QCA.  This requires the introduction of support algebras.
\begin{Definition}
 Take the algebras $\m{A}$, $\m{B}_1$ and $\m{B}_2$, with $\m{A}\subseteq \m{B}_1\otimes \m{B}_2$.  We define $S(\m{A},\m{B}_1)$ to be the support of $\m{A}$ on $\m{B}_1$.  This is the smallest subalgebra contained in $\m{B}_1$ that is needed to create the elements of $\m{A}$.  In other words, writing $a\in\m{A}$ as $a=\sum_{ij}b^1_i\otimes b^2_j$, with $b_i^1\in \m{B}_1$ and $b_j^2\in \m{B}_2$, the support of $\m{A}$ on $\m{B}_1$ is generated by all such $b_i^1$.
\end{Definition}
The intuition behind this definition is that, in a sense, the support describes how much $\m{A}$ overlaps with $\m{B}_1$.

We say two algebras $\m{A}$ and $\m{B}$ commute, if, for every $a\in \m{A}$ and $b\in \m{B}$, $[a,b]=0$, and we write this as $[\m{A},\m{B}]=0$.  Then, we have the following useful lemma proved in \cite{GNVW12}.
\begin{lemma}
 Take the algebras $\m{B}_1$, $\m{B}_2$ and $\m{B}_3$.  And suppose that $\m{A}_L$ and $\m{A}_R$ are subalgebras with
 \begin{equation}
  \begin{split}
   \m{A}_L & \subseteq \m{B}_1\otimes \m{B}_2\\
   \m{A}_R & \subseteq \m{B}_2\otimes \m{B}_3.
  \end{split}
 \end{equation}
 Then if $[\m{A}_L,\m{A}_R]=0$,
\begin{equation}
 [\g{S}(\m{A}_L,\m{B}_2),\g{S}(\m{A}_R,\m{B}_2)]=0.
\end{equation}
\end{lemma}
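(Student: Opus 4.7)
The plan is to prove the lemma by fixing convenient bases of $\mathcal{B}_1$ and $\mathcal{B}_3$ and using them to isolate the ``$\mathcal{B}_2$-components'' of elements of $\mathcal{A}_L$ and $\mathcal{A}_R$ in a canonical way. First I would pick a linear basis $\{e_i\}$ of $\mathcal{B}_1$ and a linear basis $\{f_l\}$ of $\mathcal{B}_3$. Any $a_L\in\mathcal{A}_L\subseteq\mathcal{B}_1\otimes\mathcal{B}_2$ then has a unique expansion $a_L=\sum_i e_i\otimes x_i$ with $x_i\in\mathcal{B}_2$, and similarly any $a_R\in\mathcal{A}_R\subseteq\mathcal{B}_2\otimes\mathcal{B}_3$ has a unique expansion $a_R=\sum_l y_l\otimes f_l$ with $y_l\in\mathcal{B}_2$. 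By the definition of the support, $S(\mathcal{A}_L,\mathcal{B}_2)$ is generated by the operators $x_i$ obtained this way as $a_L$ ranges over $\mathcal{A}_L$, and likewise $S(\mathcal{A}_R,\mathcal{B}_2)$ is generated by the $y_l$.

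Next I would embed $\mathcal{A}_L$ and $\mathcal{A}_R$ into the common algebra $\mathcal{B}_1\otimes\mathcal{B}_2\otimes\mathcal{B}_3$ by tensoring with the appropriate identities, so that the hypothesis $[\mathcal{A}_L,\mathcal{A}_R]=0$ can be applied. Computing the commutator directly gives
\begin{equation}
[a_L,a_R]=\sum_{i,l} e_i\otimes [x_i,y_l]\otimes f_l.
\end{equation}
Since the elementary tensors $\{e_i\otimes f_l\}$ are linearly independent in $\mathcal{B}_1\otimes\mathcal{B}_3$ and the tensor decomposition $\mathcal{B}_1\otimes\mathcal{B}_2\otimes\mathcal{B}_3$ is unique, the vanishing of $[a_L,a_R]$ forces $[x_i,y_l]=0$ for every pair $(i,l)$.

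Finally, since every generator of $S(\mathcal{A}_L,\mathcal{B}_2)$ commutes with every generator of $S(\mathcal{A}_R,\mathcal{B}_2)$, and the commutator is bilinear and satisfies the Leibniz rule with respect to products, the two algebras commute as claimed. The only subtle step is the second one: one must be careful that the bases of $\mathcal{B}_1$ and $\mathcal{B}_3$ really do make the $x_i$ and $y_l$ the canonical support operators, rather than some basis-dependent proxies, and that embedding both subalgebras into the triple tensor product before taking the commutator is legitimate. Both points are routine once one writes out the uniqueness of the tensor decomposition, so I do not expect any genuine obstacle; the lemma is essentially a linear-algebraic bookkeeping statement dressed up in algebra language.
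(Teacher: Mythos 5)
Your proof is correct; the one subtlety you flag (that the coefficients $x_i$ obtained from a fixed basis expansion really do generate the support algebra) is indeed harmless, since the span of those coefficients is the image of the slice maps $\phi\otimes\mathrm{id}$ and hence basis-independent. Note that the paper itself gives no proof of this lemma, citing \cite{GNVW12} instead, and your argument — expand in bases of $\m{B}_1$ and $\m{B}_3$, read off $\sum_{i,l}e_i\otimes[x_i,y_l]\otimes f_l=0$, and invoke linear independence of the $e_i\otimes f_l$ — is precisely the argument given in that reference.
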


\begin{compactwrapfigure}{r}{0.49\textwidth}
\centering
\begin{minipage}[r]{0.47\columnwidth}%
\centering
    \resizebox{6.5cm}{!}{\input{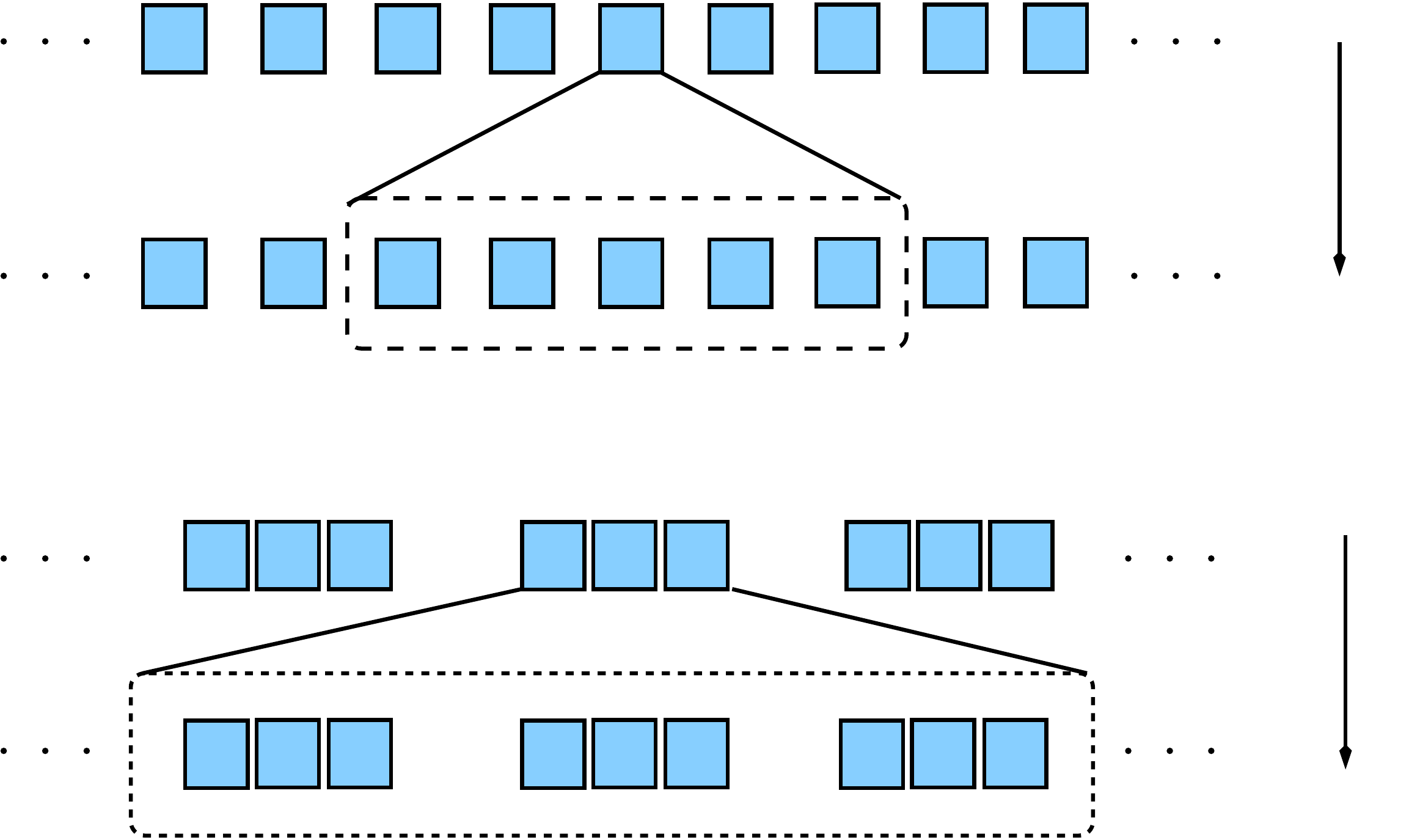_t}}
    \footnotesize{\caption[Regrouping of sites to get a nearest neighbour QCA]{After regrouping of sites, the next nearest neighbour QCA becomes a nearest neighbour QCA.\ \\ \ }}
    \label{fig:Regrouping}
\end{minipage}
\end{compactwrapfigure}

This lemma and the notion of support algebras defined above are particularly useful for understanding the dynamics of QCAs.  The algebra of a QCA is essentially composed of sums of products of elements corresponding to single sites.  Because of this, to analyse the dynamics it is enough to look at how the algebras of one or two sites evolve.  It will help to regroup sites together in such a way that the evolution becomes nearest neighbour.  Then the new sites will have higher dimensional quantum systems.

As in section \ref{sec:Causal Discrete-Time Models on a Lattice}, we will write $u(A)$ to denote $U^{\dagger}AU$.  And we denote the algebra of operators at a site by $\m{A}_{n}$, where $n$ denotes the site.  It will help to visualize the sites grouped into two-site blocks, with algebras $\m{A}_{2n}\otimes\m{A}_{2n+1}$.  We define the even and odd algebras to be
\begin{equation}
\begin{split}
 \m{R}_{2n} & =\g{S}\left(u\left(\m{A}_{2n}\otimes\m{A}_{2n+1}\right),\left(\m{A}_{2n-1}\otimes\m{A}_{2n}\right)\right)\\
 \m{R}_{2n+1} & =\g{S}\left(u\left(\m{A}_{2n}\otimes\m{A}_{2n+1}\right),\left(\m{A}_{2n+1}\otimes\m{A}_{2n+2}\right)\right).
\end{split}
\end{equation}
\begin{figure}[!ht]
{\centering
\resizebox{9cm}{!}{\input{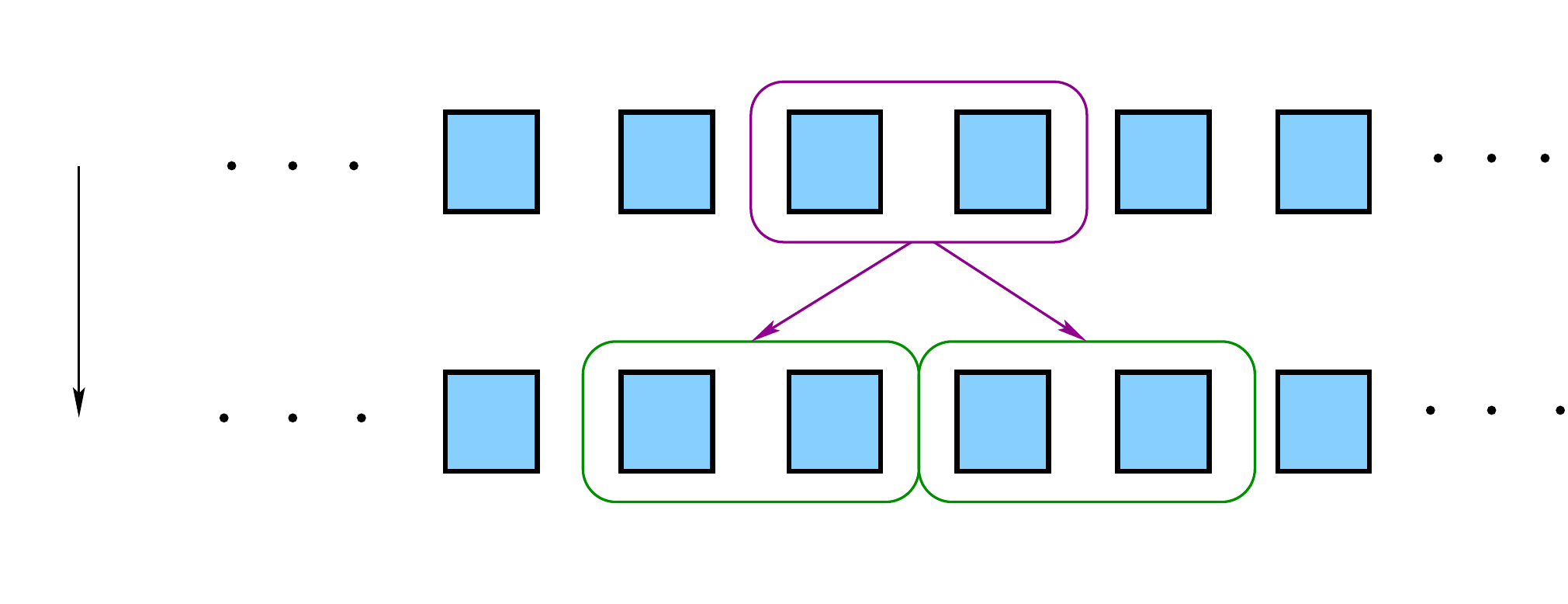_t}} \caption[Illustration of the even and odd algebras]{Illustration of the even and odd algebras.  Note that $\m{R}_{2n}$ does not in general equal $\m{A}_{2n-1}\otimes\m{A}_{2n}$.\label{fig:EvenOdd}}
}
\end{figure}

So the image of the algebra $\m{A}_{2n}\otimes\m{A}_{2n+1}$ under $u$ is contained in $\m{R}_{2n}\otimes\m{R}_{2n+1}$.  The even/odd algebras roughly encapsulate what information goes to the right/left.  This construction appeared, for example, in \cite{GNVW12}.

Now we will see that these even and odd algebras generate the QCA's algebra.  First, the definition of $\m{R}_{2n+1}$ and $\m{R}_{2n}$ ensures that $u(\m{A}_{2n}\otimes\m{A}_{2n+1})\subseteq \m{R}_{2n}\otimes\m{R}_{2n+1}$.  Then because $u$ is unitary, the algebra generated by all $\m{R}_m$ is equal to $\m{A}$.

The entire algebra has trivial center, meaning that the only elements of the algebra that commute with everything are multiples of the identity.  This means that each $\m{R}_m$ must have trivial center too since all $\m{R}_m$ commute with each other.  This means that these $\m{R}_m$ are subsystem algebras \cite{ANW08}.  In other words, each $\m{R}_m$ is isomorphic to the algebra of $n\times n$ complex matrices, and the entire algebra can be thought of as a tensor product of the $\m{R}_m$.

Finally, $\m{R}_{2n}\otimes\m{R}_{2n+1}=u(\m{A}_{2n}\otimes\m{A}_{2n+1})$.  This follows because $u(\m{A}_{2n}\otimes\m{A}_{2n+1})\subseteq \m{R}_{2n}\otimes\m{R}_{2n+1}$.  If this were strict, the evolution would not be invertible.

\subsection{Structure of QCA Evolution on a Line}
\label{sec:Structure of the evolution}
Using the tools introduced in the previous section, we are now able to show that the evolution of a QCA on a line has a relatively simple decomposition in terms of shifts and local unitaries.  This is an analogue for QCAs of the result for quantum walks on a line that we saw in section \ref{sec:Quantum Walks on a Line: Abstract Theory}.  It said that the quantum walk evolution operator could always be decomposed into a product of conditional shifts and coin operations \cite{GNVW12,Vogts09}.

The proof of the following theorem roughly works by reverse engineering the dynamics.  After one timestep, we apply partial shifts in such a way that the net result is that no information has been transferred along the line.  This is then equivalent to a product of local unitaries.  Inverting this means that the evolution operator is a product of local unitaries and partial shifts.

\begin{theorem}
\label{th:32}
 After regrouping sites, the evolution of a one dimensional QCA can be written as a product of on-site unitaries and partial shifts.
\end{theorem}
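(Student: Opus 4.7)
\begin{Proof}[Proof proposal:]
My plan is to mirror the structure of the one-dimensional quantum walk proof (theorem \ref{th:qwlocaldecomp13}) but now working at the level of algebras rather than fixed shift amplitudes. After regrouping neighbouring sites (as illustrated in figure \ref{fig:Regrouping}), I may assume the QCA is nearest-neighbour, and I will use the even/odd support algebras $\m{R}_{2n},\m{R}_{2n+1}$ constructed in section \ref{sec:Spreading of Information}. Recall that these are subsystem algebras (trivial centre, so they are each isomorphic to a full matrix algebra) and that $u(\m{A}_{2n}\otimes\m{A}_{2n+1})=\m{R}_{2n}\otimes\m{R}_{2n+1}$. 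The strategy is to show that the $\m{R}_m$ provide an \emph{alternative} tensor product decomposition of the global algebra, with respect to which the evolution is simply a shift; reverting to the original decomposition then costs only on-site unitaries.

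First I would establish the alternative decomposition. Since all $\m{R}_m$ pairwise commute (by the commutation lemma from section \ref{sec:Spreading of Information}), have trivial centre, and their joint image under $u$ recovers the full algebra, a dimension-counting argument forces
\begin{equation}
\m{R}_{2n+1}\otimes\m{R}_{2n+2}=\m{A}_{2n+1}\otimes\m{A}_{2n+2},
\end{equation}
so on each half-site pair straddling a block boundary we have two distinct tensor-factorisations of the same algebra. By translation invariance the dimensions $\dim\m{R}_{2n}$ and $\dim\m{R}_{2n+1}$ are independent of $n$, so there exists a single local unitary $V$ (acting on the pair of half-sites $(2n+1,2n+2)$) which maps $\m{A}_{2n+1}\otimes\m{A}_{2n+2}$ to $\m{R}_{2n+1}\otimes\m{R}_{2n+2}$. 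Let $W=\prod_n V_n$ be the resulting on-site unitary (on-site with respect to a shift of the block regrouping by a half-block).

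Next I would exhibit the shift. With respect to the $\m{R}$-decomposition the action of $u$ sends the block algebra $\m{A}_{2n}\otimes\m{A}_{2n+1}$ (identified via $W$ with $\m{R}_{2n-1}\otimes\m{R}_{2n}$ on the previous half-block pair) to $\m{R}_{2n}\otimes\m{R}_{2n+1}$; this is literally a shift of the $\m{R}$-labels by one. Writing it as a product of a left-partial shift on the $\m{R}_{\mathrm{even}}$ factors and a right-partial shift on the $\m{R}_{\mathrm{odd}}$ factors, and conjugating back by $W$ to return to the original on-site decomposition, yields the decomposition $U=W^{\dagger}\, T_R\, T_L\, W'$ where $T_R,T_L$ are partial shifts and $W,W'$ are on-site unitaries after one further half-block regrouping. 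The main obstacle will be the dimensional argument underpinning $\m{R}_{2n+1}\otimes\m{R}_{2n+2}=\m{A}_{2n+1}\otimes\m{A}_{2n+2}$ and the subtle matter of handling the $\mathbb{Z}$-labelled product of local unitaries $W$ (well-defined because each $V_n$ is local and they commute on disjoint regions), as well as verifying that translation invariance really lets us choose the $V_n$ uniformly; once these are in hand the remaining book-keeping is routine.
\end{Proof}
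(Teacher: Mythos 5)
Your overall strategy is the same as the paper's: regroup to nearest-neighbour form, build the even/odd support algebras $\m{R}_{2n}$, $\m{R}_{2n+1}$, use the fact that the staggered pairs satisfy $\m{R}_{2n+1}\otimes\m{R}_{2n+2}=\m{A}_{2n+1}\otimes\m{A}_{2n+2}$, and read off the evolution as a shift of the $\m{R}$-factors followed by block unitaries. That much is sound. The gap is in the construction of $W=\prod_n V_n$. A unitary $V$ on the two-site Hilbert space that carries the factorization $(\m{A}_{2n+1},\m{A}_{2n+2})$ onto $(\m{R}_{2n+1},\m{R}_{2n+2})$ \emph{factor by factor} can exist only if $\dim\m{R}_{2n+1}=\dim\m{A}_{2n+1}$ (and likewise for the second factor). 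Translation invariance gives you no control over this: it tells you the dimensions are independent of $n$, not that they agree with the original site dimensions. The shift QCA itself is a counterexample — there one of $\m{R}_{2n}$, $\m{R}_{2n+1}$ is the full two-site algebra and the other is trivial, so no such on-site $V$ exists. This dimension mismatch is exactly the index of \cite{GNVW12}, and it is invariant under regrouping, so no further blocking can remove the obstruction.

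The paper's proof sidesteps this by never attempting to convert the $\m{R}$-factorization into the $\m{A}$-factorization with an on-site unitary. Instead it defines the partial shift $t$ (and its conjugate $s_2$) directly as an automorphism taking $\m{R}_{2n}\to\m{R}_{2n+2}$ while fixing $\m{R}_{2n+1}$ — legitimate because the $\m{R}_m$ are mutually commuting subsystem algebras, all isomorphic by translation invariance — and then observes that $s_2^{-1}\circ s_1^{-1}\circ u$ maps each block algebra $\m{A}_{2n}\otimes\m{A}_{2n+1}$ into itself, hence is a product of block unitaries. If you let your partial shifts $T_R$, $T_L$ act on the $\m{R}$-factors of each regrouped block rather than on the original $\m{A}$-factors, your argument goes through and the intertwiner $W$ is not needed at all.
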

\begin{proof}
First, let $s_1$ denote a shift to the left by one lattice site.  The effect of $s_1^{-1}\circ u$ is to move the even and odd algebras one step to the left.

Now, because of translational invariance, it makes sense to define a shift of just the even or just the odd algebras.  For example, the unitary $t$, taking $\m{R}_{2n}\rightarrow\m{R}_{2n+2}$ and $\m{R}_{2n+1}\rightarrow\m{R}_{2n+1}$ is a type of shift.  Instead of $t$, however, we are interested in the unitary that takes
\begin{equation}
 \begin{split}
  s_1^{-1}(\m{R}_{2n}) & \rightarrow s_1^{-1}(\m{R}_{2n+2})\\
  s_1^{-1}(\m{R}_{2n+1}) & \rightarrow s_1^{-1}(\m{R}_{2n+1}).
 \end{split}
\end{equation}
This can be written as $s_2^{-1}=s_{1}^{-1}\circ t\circ s_{1}$.  

Then the evolution $s_2^{-1}\circ s_1^{-1}\circ u$ has the property that elements localized on $\m{A}_{2n}\otimes\m{A}_{2n+1}$ are still localized on $\m{A}_{2n}\otimes\m{A}_{2n+1}$ after applying this unitary.  See figure \ref{fig:QCALineDecomp}.
\begin{figure}[!ht]
{\centering
\resizebox{9cm}{!}{\input{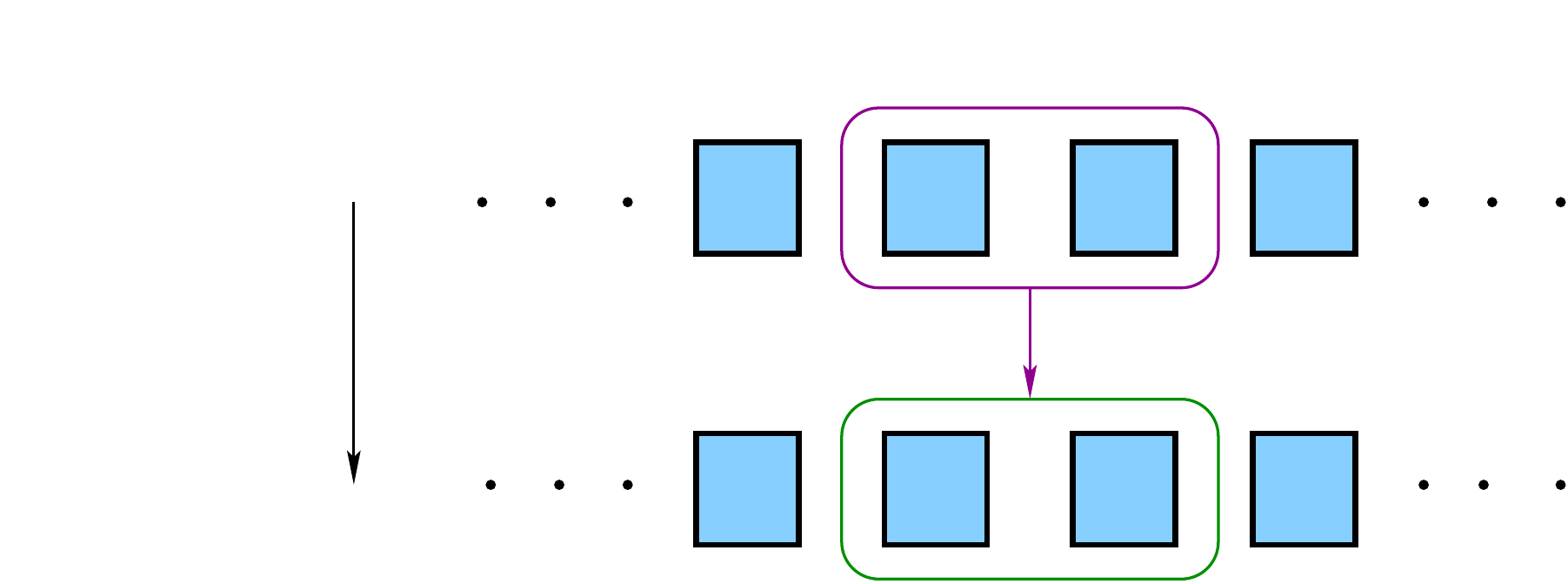_t}} \caption[Deconstructing the evolution of a quantum cellular automaton]{The effect of $s_{2}^{-1}\circ s_{1}^{-1}\circ u$ is to take the algebra $\m{A}_{2n}\otimes\m{A}_{2n+1}$ to itself.  (This does not necessarily mean that this is the identity map.)\label{fig:QCALineDecomp}}
}
\end{figure}
Put another way, this implies that the unitary $s_{2}^{-1}\circ s_{1}^{-1}\circ u$ is just a product of local unitaries.  Therefore,
\begin{equation}
u=s_{1}\circ s_{2}\circ v,
\end{equation}
where $v$ is a product of identical local (two-site) unitaries.

Next, we regroup pairs of sites into single sites, with algebras $\m{B}_n=\m{A}_{2n}\otimes\m{A}_{2n+1}$.  So after this regrouping $v$ is a product of single site unitaries and $s_{2}$ is a partial shift, which we will relabel $s_r$.  Now $s_1$ is equivalent to the product of on-site unitaries, denoted by $w$, swapping $\m{A}_{2n}$ and $\m{A}_{2n+1}$, followed by the partial shift, with $\m{A}_{2n}\rightarrow\m{A}_{2n+2}$ and $\m{A}_{2n+1}\rightarrow\m{A}_{2n+1}$, denoted by $s_l$.  It follows from this that one dimensional QCAs in this regrouped form can be decomposed into on-site unitaries and partial shifts:
\begin{equation}
 u=s_l\circ w\circ s_r\circ v.
\end{equation}
\end{proof}
Incidentally, while this result extends to infinite systems, for the case of a QCA on a finite line with $\mathcal{N}$ sites (and periodic boundary conditions) the evolution can be decomposed into $O(\m{N})$ local unitaries.  This works because a shift on a finite line is equivalent to a product of swaps.  Defining $S_{n,n+1}$ to be the swap between systems at sites $n$ and $n+1$, the unitary that shifts by one step to the left $S_l$ can be decomposed into a product of swaps $S_l=S_{1,2}S_{2,3}...S_{\m{N}\!,\,\m{N}+1}$, where periodic boundary conditions mean $\m{N}+1\equiv 1$.  It is significant that we cannot do these swaps in parallel:\ the order matters.  So the circuit is not constant depth.  This contrasts with the infinite case studied in \cite{ANW11,GNVW12}, where local unitary decompositions that are constant depth can be found but only by appending a copy of the original system.

\chapter{Conclusions and Open Problems}
\label{chap:Conclusions and Open Problems 1}
\begin{center}
{\em Ah, but if less is more, just think how much more more will be.}\\
- Frasier Crane, Frasier
\end{center}
We have seen some compelling arguments for believing that quantum cellular automata and quantum walks could be useful tools for discretizing relativistic physics.  Let us go over these.  First, the generality with which quantum walks with a two dimensional coin become relativistic particles in the continuum limit, which we saw in chapter \ref{chap:Quantum Walks and Relativistic Particles}, was very encouraging.  Part of how we proved this required showing that Hamiltonians for a particle with a two dimensional extra degree of freedom that are {\it linear} in the momentum operator are equivalent to relativistic Hamiltonians.  But it was by taking the continuum limits of quantum walks that we could naturally recover Hamiltonians linear in the momentum operator.

Perhaps more compelling is the simplicity of many of the quantum walks and quantum cellular automata that became Dirac particles and fields in the continuum limit.  Whether this simplicity would be especially useful for simulating physics is unclear.  After all, most of the continuum limits involved Lie-Trotter type arguments, but generally when simulating Hamiltonian evolution, one makes recourse to higher order Suzuki-Trotter decompositions since these converge faster.  Nevertheless, from a foundational point of view, it is an attractive feature of these models that the evolution is so simple.  This prompts the question of whether this simplicity extends to quantum cellular automata (or other causal models in discrete spacetime) that approximate interacting models.

Even if the goal is simulating physics, the possibility that nature itself could fundamentally involve discrete spacetime is hard to ignore.  Could physics ultimately be viewed as a quantum cellular automaton that is a simple circuit of local unitaries?  One thing we learned is that, with a few caveats, it should be possible to approximate interacting quantum field theories arbitrarily well by quantum cellular automata.  This is encouraging, though we did not account for gravity.

We can take further encouragement from the arguments of chapter \ref{chap:Quantum Walks and Relativistic Particles} showing that there may be a simple resolution of the fermion doubling problem for these discrete models.  The question is still open, but the initial results give cause for optimism.  Further results on quantum walks included a discussion of lattice symmetries and the construction of quantum walks that converge to their continuum limits faster than the usual examples.

Another useful result was that we could construct a vacuum state for the discrete Dirac fields that converged to its continuum counterpart.  This allowed us to prove, at least for the one dimensional case, that the causal fermionic models in discrete spacetime from section \ref{sec:Second Quantization and Dynamics} become free Dirac fields in the continuum limit.  Note that in interacting models, the vacuum is almost certainly an even more complicated state than in the free case.  So it is not quite so clear how to proceed on this front.

In chapter \ref{chap:Quantum Cellular Automata and Field Theory}, we also saw some tricks for removing the nonlocality of physical fermion observables when represented in the qubit picture, which resulted in an ancilla in an entangled state.  Although this was not motivated towards saying something fundamental about the nature of fermions, it is very interesting that the inherent nonlocality of fermions is not necessary to describe these systems.  In section \ref{sec:U(1) Lattice Gauge theory as a spin model}, we were able to modify these tricks to view lattice fermions interacting with gauge fields as local qubit models.

As an aside, it is amusing how often adding ancillary fermionic modes or extra qubits helped us to prove results.  The addition of ancillas is something that is often done in computation.  And it is interesting to wonder where else we can apply such tricks to simplify or better understand physical models.  One example we already saw was that we could use this to show that causal dynamics on a lattice can always be rewritten as products of local unitaries, which makes it easier in a sense to understand precisely what causal evolution operators are.

On the more abstract side, we saw that fermionic quantum cellular automata are equivalent to regular quantum cellular automata.  And, in the process, we were able to show that causal models of fermions in discrete spacetime can be efficiently simulated by a quantum computer, which is similar in spirit to the result that local Hamiltonian models are efficiently simulable by quantum computers.  We also proved a structure theorem for quantum cellular automata on a line, analogous to a result for quantum walks on a line, discussed in section \ref{sec:Quantum Walks on a Line: Abstract Theory}.

From a practical point of view, it makes sense to ask whether we can use fermionic systems as the building blocks of quantum computers, instead of qubits.  This is simply because matter is made of fermions, though using photons to store information is certainly a possibility \cite{NC00}.  It is prudent to explore all physical possibilities for computation in case some systems may be superior to others.  That said, fermionic quantum computation is known to be equivalent to quantum computation with qubits \cite{BK02}.

There are some fascinating open questions that remain unanswered.  Let us take a look at some of them.

\section{Quantum Walks}
A natural next step for further study is to look at interacting quantum walks, like the examples in \cite{AAMSWW12}.  In particular, we could study the continuum limit of two-particle quantum walks with local interactions.\footnote{This is work in progress, and the idea was proposed by A.\ H.\ Werner.}  For models with particle number conservation, this may be a simpler way to study interactions than proceeding straight to quantum cellular automata.

Another option is to look at quantum walks in a possibly time dependent potential.  For example, it would be very interesting to take continuum limits of the quantum walks in an external gauge field that we saw in section \ref{sec:Gauge Interactions for Quantum Walks}.  Doing this might be useful from the point of view of taking continuum limits of discrete field models to recover continuum field theories.  This is because gauge interactions preserve particle number in the positive/negative energy particle setting.  While this interpretation of Dirac particles is not believed to be fundamental, it may still be helpful from a simulation point of view.  In fact, we used this old fashioned interpretation to take the continuum limit of the discrete vacuum state in section \ref{sec:Continuum Limits2}.

Directly applying the Lie-Trotter product formula with a momentum cutoff to take continuum limits of interacting models would probably not work because the interaction with the gauge field can change the particle's momentum.  We could try Trotter's theorem for unbounded operators, which we saw, together with the Lie-Trotter theorem, in section \ref{sec:The Lie-Trotter Product Formula}.  With some technical conditions on the self-adjoint operators $B_1$ and $B_2$, the theorem states that
\begin{equation}
 \|\left((e^{iB_1/N}e^{iB_2/N})^N-e^{i(B_1+B_2)}\right)\ket{\psi}\|_2\rightarrow 0.
\end{equation}
But the gauge interaction term that would appear in one of the exponents, $B_1=\sum_nA_n\ket{n}\bra{n}a$, changes with the lattice spacing, as we would expect $A_n=A(na)$, where $A(x)$ is the continuum field.  So, for this approach to have any hope of success, the theorem may need to be modified.

The question of whether there are structure theorems for quantum walks in higher spatial dimensions still remains open.  We know that in one dimension every quantum walk can be written as a product of shifts and coin operations \cite{Vogts09}, which we discussed in section \ref{sec:Quantum Walks on a Line: Abstract Theory}.  It is still unknown whether this or something similar holds in higher dimensions.

\section{Quantum Cellular Automata}
Probably the most interesting open questions regarding quantum cellular automata involve simulating or approximating interacting physical models.  From this point of view, understanding the continuum limit is absolutely crucial.  And, as is generally the case in physics, once interactions are introduced the complexity of the problem increases greatly.  It seems that, to take continuum limits of interacting models, it is probably the case that more sophisticated ideas will be needed, like those in \cite{Osb14}.

Another interesting possibility is to look at quantum cellular automata or even quantum walks in a path integral picture.  This has been done for free models in one and two spatial dimensions in \cite{DMPT14,DMPT14b}.  It may be useful to extend this to interacting models to take continuum limits or to recover perturbative models when the interaction is weak.

We briefly looked at the vacuum state for the discrete fermionic system that reproduces Dirac fields in the continuum limit in section \ref{sec:Discrete Fermion Fields and the Vacuum}.  It would be interesting to study the properties of such discrete vacuum states in more detail \cite{BDFPST15}.  In particular, they should provide simple mathematical test grounds for questions about the entanglement properties of continuum vacuum states.  For example, in \cite{SR07} the authors ask whether Bell inequalities can be violated by doing measurements on two finite spacetime regions of the Dirac field vacuum.  A first step towards answering this question could be to look at simpler vacuum states that have nicer mathematical properties.  And, as the discrete Dirac vacuum we have seen converges to the continuum Dirac field vacuum, it may even be possible to prove results about the continuum vacuum using its discrete counterpart.

We saw in section \ref{sec:Representation by Qubits} that fermionic quantum cellular automata and regular quantum cellular automata are equivalent.  This looked like an analogue for QCAs of the result of \cite{BK02}, where it was shown that fermionic and regular quantum computers in the circuit model are equivalent.  It would be interesting to study this equivalence of fermionic and regular QCAs further in light of the fact that QCAs are themselves universal for quantum computation \cite{Watrous95}.

Many other results of chapter \ref{chap:Quantum Cellular Automata and Field Theory} involved representations of fermionic systems in the qubit picture and vice versa.\footnote{Fermionic systems have also been viewed in terms of operational theories in \cite{DMPT14a}, where the similarities and differences between qubit and fermion theories were studied further.}  Along these lines, we saw that, with a little trickery, the nonlocality that inevitably comes with fermions can be removed \cite{Ball05,VC05}.  But we could do a little better and remove some of the redundancy of the description:\ in section \ref{sec:U(1) Lattice Gauge theory as a spin model}, we replaced a model with fermions interacting via abelian gauge fields by a completely local model.

It would be fascinating if this were generally true, particularly if we could extend the ideas to nonabelian gauge fields.  After all, the standard model involves interactions mediated by nonabelian fields.  To facilitate this, the truncation scheme for gauge fields described in \cite{BY06} ought to be useful. 

Furthermore, from a quantum simulation point of view, this could reduce the complexity of simulating fermions interacting with gauge fields.  That said, as we saw in chapter \ref{chap:Background1}, the nonlocality of fermions can already be dealt with efficiently in discrete quantum simulations of quantum physics \cite{AL97,OGKL01,SOGKL02,PBE10}.

The prescription we used is a little artificial.  But perhaps it is possible to construct something that seems more natural.  It may even be possible to reduce the number of redundant degrees of freedom further.  It is worth noting, however, that gauge theory itself is built upon introducing redundant degrees of freedom.  Still, it would be remarkable if there were a choice of gauge in which we could use this trick to get a more natural and possibly more economical model.  With this in mind, it may help to note that we only need to reproduce a theory with states satisfying Gauss' law.

There does not appear to exist a direct analogue of this result for gauge theories in the continuum.  Still, we should keep in mind that continuum Hamiltonians are generally only meaningful after regularization, for example, on a lattice \cite{Creutz83}.

To simulate lattice gauge models on quantum computers, some truncation scheme is needed to render the number of degrees of freedom finite.
In light of this, another possibility is to look at quantum link models, which have finite dimensional gauge degrees of freedom on each link \cite{Wiese13}.  For these models there is an exact gauge symmetry.  Incidentally, it is known that the link degrees of freedom could be represented by fermions \cite{Wiese13}.  So it seems that it ought to be straightforward to view these models as local models in the qubit picture also.

This idea of fermions plus gauge interactions leading to local qubit models seems intimately connected to the ideas of \cite{LW05a,LW05}, where the starting point is local bosonic models.  The aim is then to recover fermions interacting with gauge fields from this.  We could think of the results we have seen here as going in the opposite direction, in that they verify the idea that fermions and gauge fields can lead to local qubit models.  It would be good to pursue this connection further.

\section{Quantum Computation and Relativistic Physics}
The primary justification for studying quantum computing is the idea that it should allow us to efficiently solve some problems that classical computing cannot.  Simulating physics is one example of such a problem.  But it is plausible that quantum computers could make it possible to deal with other obstacles involved in simulation.  One obstacle that we have already seen is the problem of fermion doubling.  While there are many proposed solutions, each has different drawbacks.  As these solutions are geared towards classical simulations, they involve local Hamiltonians, a requirement that is important from a practical point of view but also a physical one.  The possibilities change if we look at solutions geared towards quantum simulations of physics.  These need not involve local Hamiltonians.  For example, in a discrete time picture it may be possible to simulate chiral fermions without doubling, while also maintaining strict causality.  We should ask the question of whether there are any other advantages to quantum simulations of high energy physics aside from a potential speedup.

In \cite{JLP12,JLP14} two quantum field theories were shown to be efficiently simulable on a quantum computer.  One aspect of this that made things difficult was the calculation of coupling constants.  Figuring out how these change with the lattice spacing is crucial to doing a simulation.  With this in mind, it is natural to ask ourselves whether quantum computation can help with this.  Does it offer any potential benefits towards calculating how couplings scale and how to better understand renormalization?  In principle, we could represent the coupling constant by a quantum system so that superpositions of systems evolving with different coupling constants could be created.  Whether or not this would be of any use is not clear.  To make any headway towards applying quantum computing to understand renormalization better, we would need to formulate a more concrete computational problem.

More generally, quantum information may provide further tools to improve our understanding of high energy physics.  Already, ideas from quantum information have been applied to study the renormalization group from an information theoretic perspective \cite{BO14}.  So we end part one with the broader question:\ can quantum computation, and more generally quantum information, help us to do more than simulate physics?  What else can we learn about nature?

\part{Quantum Information and Statistical Physics}
\chapter{Introduction}
\begin{center}
{\em Those are my principles, and if you don't like them... well, I have others.}\\
- unknown
\end{center}
Statistical physics is a vast and beautiful branch of physics.  It provides us with a framework for understanding physical processes involving far too many degrees of freedom to simply write down an equation of motion and even approximately solve it.  The basis of statistical physics is the dynamical laws of the constituent particles (including whether they are classical or quantum, fermions or bosons), as well as some additional general principles.  It is these principles that will concern us in this, the second part of this thesis.  We will ask whether we can do better:\ can we justify these principles or even remove them entirely?  It would be fascinating if we could derive these principles as consequences of the framework, which is quantum theory in our case.

One of these principles is the assumption of equal a priori probability.  The content of the assumption is to take every accessible microstate of a system in equilibrium to be equally likely.  Its main use is in deriving a simple form for the equilibrium state of a subsystem in contact with an environment.  A reason we may be unsatisfied with this principle is that it relies on subjective ignorance to derive physical results.  Fortunately, in the quantum case, there are good reasons to believe that this assumption can be circumvented \cite{GLTZ06,PSW06}.  In both of these papers it was proved that, provided the environment is sufficiently big, for most pure states of the subsystem and environment, the state of the subsystem is very close to the state resulting from the assumption of equal a priori probability.  This is a result of entanglement between the subsystem and the environment.  Therefore, any uncertainty in the subsystem's state is objective uncertainty.  This result requires statements about most states with respect to a particular measure on states,\footnote{This is called the Haar measure.  In the following chapter we will define this and other terms, as well as providing a physical definition of equilibration.} so it is not clear whether this is the whole story, but it does provide cause for optimism.  Here, the fact that nature is quantum appears to offer more than in the classical case, as such a result is not possible classically.  This prompts the question of whether quantum theory tells us more about the other postulates.

Over the past ten years, advances have been made towards justifying another of the foundational postulates of statistical physics:\ the assumption that systems equilibrate.  Unlike the example of equally a priori probability, which is an intermediate step in deriving the equilibrium state, the assumption that systems equilibrate is backed up by experience.  Still, it would be interesting to try to understand why this occurs and to see how generic the behaviour is.  It is just as interesting to ask if there are situations when equilibration does not occur.  In fact, engineering systems in a laboratory that do not equilibrate over reasonable timescales could have practical uses.

A lot of progress has been made towards proving that equilibration occurs for quantum systems (including results reported in this thesis), often by applying methods from quantum information \cite{Tasaki98,GMM04,CDEO08,Reimann08,LPSW09,Short10,LPSW10,Reimann10,CE10,RGE11,SF12,MGLFS14}.  It was shown in \cite{Reimann08,LPSW09,Short10,Reimann10,LPSW10} that, in a few different settings but with great generality, quantum systems reach equilibrium.  In these cases equilibration was shown to occur over infinite timescales.  As interesting as this is, for results about equilibration to have any physical importance, it is vital that the timescale over which it occurs is realistic.  Stronger results have been obtained for free bosons on a lattice, where equilibration over finite intervals was proved in \cite{CDEO08,CE10}.  This included calculations of both the time taken to reach equilibrium and the equilibrium state itself.  The approach used Lieb-Robinson bounds, which effectively bound the rate of propagation of information on a lattice.  

Equilibration of quantum systems has also been studied by averaging over initial states, Hamiltonians or measurements.  Historically, one of the first analyses of equilibration of quantum systems was done in \cite{BL59}, where the initial state was averaged over.  The main result was that most of the time (over an infinite time interval) most states have expectation values for coarse-grained measurements that are close to those resulting from the assumption of equal a priori probability.

The idea of averaging over Hamiltonians to make statements about equilibration for most Hamiltonians was employed in \cite{BCHHKM12,VZ12,Cramer12,MRA13}.  Furthermore, in \cite{MGLFS14}, it was shown that fast equilibration occurs for most measurements.  There and in \cite{GHT13} it was also shown that one can construct examples where the equilibration time is incredibly long.  We will not look at averaging over Hamiltonians, states or measurements here, rather we will assume a fixed but very general Hamiltonian, measurement set and initial state.  This is the main strength of our approach.  It is also the main weakness, as the bounds on the equilibration time are quite big.

Once equilibration has occurred, the next issue to deal with is the form of the equilibrium state.  The majority of statistical physics is built upon the idea that the subsystem's equilibrium state is a Gibbs state.  The usual justification for this requires the assumption of equal a priori probability, weakness of the interaction between subsystem and environment, and exponential growth of the density of energy levels of the environment.  To date, very little has been justified rigorously regarding this, with the exception of \cite{RGE11}, which requires very weak interactions.  Other interesting results regarding thermalization, which is the process by which a subsystem reaches the Gibbs state, were obtained in \cite{MAMW13}.  There it was shown that, with translationally invariant dynamics and some other caveats, subsystems equilibrate to the reduced state of the Gibbs state of a larger subsystem.  These results ought to be particularly useful in the regime of strong interactions.

A nice, succinct review of equilibration and thermalization, focusing on many-body systems, is given in \cite{EFG14}, while \cite{Gogolin14} provides a more detailed and general discussion.

The breakdown of this part of the thesis is as follows.  In chapter \ref{chap:Background}, we will see the necessary concepts from quantum information and statistical physics that will play a role in discussing equilibration and thermalization.  The main results will be presented in chapter \ref{chap:Equilibration}, where we discuss equilibration over finite intervals and bounds on the time taken for equilibration to occur.  We also look at the nature of the equilibrium state and discuss the related question of when a subsystem retains information about its initial state, something that precludes thermalization.  We end with a discussion of the results, as well as open questions, in chapter \ref{chap:Conclusions2}.

\chapter{Background}
\label{chap:Background}
\section{Quantum Information}
Inspired by the knowledge that our world is fundamentally quantum, the subject of quantum information is dedicated to understanding how to process and manipulate information as stored by quantum systems.  It incorporates a large body of ideas and tools, though we will only need a few of them directly here.  But as the connection between quantum information and many-body physics grows, it is likely that further insight into equilibration and thermalization could be found by applying other ideas from quantum information.

In the previous chapter, we had cause to mention a few tools from quantum information.  Introducing these properly is a bit of a digression, but it is useful to understand the main ideas behind them.

Many recent results in the foundations of quantum statistical physics make statements that are of a statistical nature about a system's properties.  Often these involve averaging over states, as in \cite{BL59}.  And in quantum theory there is a mathematically natural measure on the set of pure states, called the Haar measure.  To be completely accurate, what happens is we use the Haar measure on the unitary group $\textrm{U}(d)$ to define a measure on the state space.  Let us see how this works.  There is a measure $\mu$ on the group with the property that subsets of the group that are related by the group action have same size.  In other words, given a subset of the group $G\subseteq \textrm{U}(d)$ and an element of the group $U$, we have $\mu(UG)=\mu(GU)=\mu(G)$.  The corresponding measure on states is constructed by applying unitaries from the group onto states and then averaging over the group:\ 
\begin{equation}
 \Big\langle f(\ket{\psi})\Big\rangle_{\ket{\psi}}=\int_{\textrm{U}(d)}\!\textrm{d}\mu(U)\, f(U\ket{\psi}),
\end{equation}
where $f$ is some function of vectors in the Hilbert space.  A useful example is to take $f(\ket{\psi})=\ket{\psi}\bra{\psi}$, which gives us the average state of a system,
\begin{equation}
 \int_{\textrm{U}(d)}\textrm{d}\mu(U)\, U\ket{\psi}\bra{\psi}U^{\dagger}=\frac{\openone}{d}.
\end{equation}
To see that this must be the case, we use the invariance of the measure under the group action.  This means that whatever state we get must be invariant under transformations of the form $\rho\rightarrow V\rho V^{\dagger}$ for any unitary $V$.  The maximally mixed state is the only one with this property.  To ensure that the result was a state, we also needed to use the fact that the measure is normalized, meaning
\begin{equation}
 \int_{\textrm{U}(d)}\textrm{d}\mu(U)=1.
\end{equation}
From the point of view of equilibration or thermalization, it often helps to use the Haar measure to prove results because of its nice mathematical properties.

Another useful set of results are Lieb-Robinson bounds \cite{LR72}.  These bound how far information can travel on a lattice in a given time under evolution via a local Hamiltonian.  The physical intuition is that there is a speed of sound, which is a property of the type of material and is determined by the Hamiltonian.  The most basic form of the bound is the following.  Given two operators $A$ and $B$ localized on regions $X$ and $Y$ of a quantum lattice system, then, with $A(t)$ the evolved operator in the Heisenberg picture,
\begin{equation}
 \|[A(t),B]\|\leq c\, e^{-b\big(d(X,Y)-v|t|\big)},
\end{equation}
where $b$, $c$ and $v$ are constants, and $d(X,Y)$ is the distance between the regions $X$ and $Y$.  The constant $v$ depends on the Hamiltonian, and $c$ is a function of $\|A\|$, $\|B\|$, the sizes of the regions $X$ and $Y$ and the Hamiltonian.  This inequality is useful for approximating evolved observables by observables on finite regions of space.  This works because, once the commutator of $A(t)$ with all other operators on a region is small, then the action of $A(t)$ on that region must be close to the identity.  When trying to understand the evolution of an observable over time, these inequalities allow us to effectively restrict our attention to finite regions.

What will be most useful from quantum information for our purposes will be distance measures between states, which we will discuss in section \ref{sec:Distinguishing States}.  Before that, in section \ref{sec:Time Averages and Energy Filtering} it will be useful to look at averaging expectation values and other quantities over time.

As is often the case in quantum information, we will focus on proving results, such as bounds on equilibration times, for finite dimensional quantum systems.  That said, it is reassuring that this assumption is not always necessary:\ the main results in section \ref{sec:Timescales for Equilibration of Expectation Values} were extended to infinite dimensional Hilbert spaces in \cite{RK12}.  To be precise, the results extend with the condition that the Hilbert space be separable, meaning there is a countable basis, and that the Hamiltonian has pure point spectrum, meaning that it has a discrete spectrum and the eigenvectors are elements of the Hilbert space.  For us, it will be enough to take our quantum system to be $d$-dimensional.

\subsection{Time Averages and Energy Filtering}
\label{sec:Time Averages and Energy Filtering}
To find bounds on the equilibration time, we will need to take time averages of quantities, like expectation values.  A prime example is the average state of a system over time.  We expect that, if a quantum system equilibrates, it will equilibrate to its time average state.  We will define what we mean by equilibration in section \ref{sec:Equilibration}.  Here, we will look at time averages over intervals.  We denote the time average over the interval $[0,T]$ by $\langle \cdot \rangle_{T}$.  This is defined to be
\begin{equation}
\langle a(t)\rangle_{T} = \frac{1}{T}\int_{0}^{T}\! a(t) \textrm{d}t,
\end{equation}
where $a(t)$ is some function of $t$.  A special case that is often relevant is the average over the interval $[0,\infty)$, which is defined by taking the limit:\ 
\begin{equation}
\langle a(t)\rangle_{\infty} = \lim_{T\rightarrow \infty}\frac{1}{T}\int_{0}^{T}\! a(t) \textrm{d}t.
\end{equation}
For a closed system evolving unitarily via a time independent Hamiltonian with energies $E_i$, we have
\begin{equation}
 \rho(t) = e^{-iHt} \rho(0) e^{iHt} = \sum_{i,j}e^{-i(E_{i}-E_{j})t}\rho_{ij}|i\rangle\langle j|.
\end{equation}
So the time average state is
\begin{equation}
\av{\rho(t)}_{T} = \sum_{i,j}\av{e^{-i(E_{i}-E_{j})t}}_{T}\rho_{ij}|i\rangle\langle j|.
\end{equation}
If the eigenvalues are non degenerate, the infinite time average is simply
\begin{equation}
\omega=\langle \rho(t)\rangle_{\infty} = \sum_{i}\rho_{ii}|i\rangle\langle i|,
\end{equation}
where we use $\omega$ to denote the infinite time average state.  This is a special case of the dephasing channel.  A quantum channel is just a map that takes quantum states to quantum states.\footnote{To be completely accurate, it is a completely positive trace preserving map.  Complete positivity ensures that, acting on a subsystem of a larger system, the map takes the quantum state of the entire system to another legitimate quantum state.}  The behaviour of the state over time is clearly determined by the energy gaps,\footnote{The gaps we are discussing are the gaps between energy levels of a finite dimensional quantum system.  This is not to be confused with the question of whether a Hamiltonian is gapped or gapless, which requires going to the thermodynamic limit.} which is something we will look at more in section \ref{sec:Distribution of Energy Gaps}.

\begin{compactwrapfigure}{r}{0.47\textwidth}
\centering
\begin{minipage}[r]{0.43\columnwidth}%
\centering
    \resizebox{6.5cm}{!}{\includegraphics{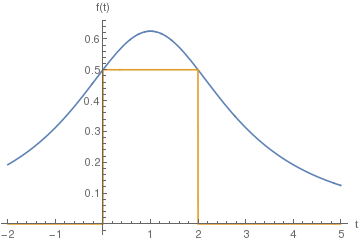}}
    \footnotesize{\caption[Plot of a Lorentzian upper bounding a top hat function]{Plot of a Lorentzian (blue) and a top hat function (yellow).  The Lorentzian is chosen to be greater than or equal to the top hat function at every point.}}
    \label{fig:Lorentzian}
\end{minipage}
\end{compactwrapfigure}

Generally, our goal will be to upper bound time averaged quantities, like $\langle\tr{\rho(t)A}\rangle_T$.  Sometimes, however, it is easier to get upper bounds for averages of positive functions by weighting the time average differently.  This works as follows.  The time average over $[0,T]$ corresponds to integrating over all time with the weight $h(t)$, given by
\begin{equation}
\label{eq:tophat}
h(t) =\begin{cases} \frac{1}{T}\ \ \textrm{if}\ t\in[0,T]\\
0\ \ \textrm{otherwise.}
\end{cases}
\end{equation}
Suppose we have a function $f(t)$ with $f(t)\geq h(t)$ for all $t$.  Then we define the weighted average
\begin{equation}
\langle a(t)\rangle_{f} = \int_{\mathbb{R}}f(t)a(t) \textrm{d}t,
\end{equation}
where $a(t)\geq 0$.  It follows that $\langle a(t)\rangle_{f}\geq \langle a(t)\rangle_{T}$.  There are some useful possibilities.  For example, we could choose $f(t)$ to be a Gaussian or Lorentzian function.  The latter of these, which we will use later, is defined to be
\begin{equation}
\label{eq:Lorentzian}
 f(t)=\frac{\f{5}{4}T}{T^2+(t-\f{T}{2})^2},
\end{equation}
where the constants are chosen to ensure that this is greater than the top hat function defined in equation (\ref{eq:tophat}).

In some scenarios, we can think of these weighted averages in a different way:\ they are sometimes equivalent to energy filtering, which is used in \cite{FWBSE14}, for example.  The idea there is to average a local observable in the energy basis weighted in a similar way. 
\begin{equation}
 I^H_f(A)=\int_{\mathbb{R}} f(t) A(t)\textrm{d}t,
\end{equation}
where $A(t)=e^{iHt}Ae^{-iHt}$ is the evolved observable in the Heisenberg picture, and $f(t)$ is a positive function.  The idea is to choose $f(t)$ in such a way as to optimize the trade-off between spatial locality and narrowness in the energy basis.  For example, if we use the top hat function $h(t)$ for some very large $T$, then
\begin{equation}
 I^H_h(A)=\frac{1}{T}\! \int_{0}^{T}\! A(t)\textrm{d}t \simeq \sum_{i}A_{ii}\ket{i}\bra{i},
\end{equation}
where we assumed that the energies were not degenerate.  This is very narrow in the energy basis.  But it is averaged over a very long time, so we would expect that it is not a local observable in general.

\subsection{Distinguishing States}
\label{sec:Distinguishing States}
What we are looking for here is a notion of distance between quantum states that has a good physical interpretation.  For pure states, it makes sense to use the metric induced by the inner product on the Hilbert space, $\|\cdot\|_2$.

To get a physical distance measure, we look at metrics arising in quantum information theory.  A natural choice then is the trace distance between two states.  This is defined as follows.  Given two quantum states $\rho$ and $\sigma$, the trace distance between them is
\begin{equation}
 D(\rho,\sigma) = \f{1}{2}\|\rho-\sigma\|_1=\f{1}{2}\textrm{tr}|\rho-\sigma|,
\end{equation}
where $\|\cdot\|_1$ is the trace norm.  This gives the maximum difference in probability arising from doing a measurement on the two states.\footnote{For pure states, $\ket{\psi}$ and $\ket{\phi}$, the trace distance is equal to $\sqrt{1-|\langle \psi|\phi\rangle|^2}$.}  In other words, this formula is equivalent to
\begin{equation}
 D(\rho,\sigma) = \max_{P}\tr{P(\rho-\sigma)},
\end{equation}
where $P$ is a projector, and the right hand side is maximized over all projectors.  There is a nice discussion of distance measures, including the trace distance in \cite{NC00}.  More generally, the trace distance allows us to bound the difference in expectation values on two states via
\begin{equation}
|\tr{\rho A}-\tr{\sigma A}| \leq \left(a_{\textrm{max}}-a_{\textrm{min}}\right)D(\rho,\sigma),
\end{equation}
where $a_{\textrm{max}}$ and $a_{\textrm{min}}$ are the maximum and minimum eigenvalues of $A$.

A final point about the trace distance is that an equivalent definition is
\begin{equation}
 D(\rho,\sigma)=\max_{\{P_i\}}\frac{1}{2}\sum_{i}|\tr{P_i\rho}-\tr{P_i\sigma}|.
\end{equation}
The maximum is over all complete sets of projectors, meaning they sum to the identity.

It is good to keep in mind that we want to look at systems with many degrees of freedom or many particles.  So a valid objection to raise here is that we may not be able to do any measurement we want on the system.  After all, the idea that we could do {\em any} measurement we want on $10^{23}$ particles is fanciful, to put it mildly.  Of course, in some situations we could be studying the state of a very small system; maybe we are interested in a few spins on a long spin chain.  In this case, the trace distance is a sensible measure of distance between states.

Nevertheless, this line of reasoning leads us to a more reasonable notion of distance between states.  By simply restricting the set of measurements that we maximize over to a set of measurements we can do in practice, we arrive at what we will call the distinguishability \cite{Short10}, defined by
\begin{equation}
D_{\mathcal{M}}(\rho,\sigma)=\max_{\{P_i\}\in\mathcal{M}}\frac{1}{2}\sum_{i}|\tr{P_i\rho}-\tr{P_i\sigma}|.
\end{equation}
Notice that, if the set of measurements $\mathcal{M}$ includes all projective measurements, then we recover the trace distance.

As an aside, here and in the definition of the trace distance we could make a generalization to include all POVM (Positive Operator Valued Measure) measurements.  In that case, each measurement outcome $i$ has a corresponding positive operator $M_i$.  Then the probability of getting that outcome if the system is in the state $\rho$ is $\tr{\rho M_i}$.  For the probabilities to add to one, we require $\textstyle\sum_i M_i = \openone$.  So a complete set of projectors is an example of a POVM.

The distinguishability between two states has a nice interpretation in terms of a game where we try to distinguish between two states $\rho$ and $\sigma$.  The best strategy is to do a measurement and guess that we have whichever state is more likely to give the observed outcome.  Then the probability of success is
\begin{equation}
p_{\rm success} = \frac{1}{2} +\frac{1}{2}D_{\mathcal{M}}(\rho,\sigma).
\end{equation}

Some further comments on the distinguishability may be useful.  The first is that, if $\rho=\sigma$, then it is zero.  Also, it is symmetric and obeys the triangle inequality.  But there may exist states $\rho$ and $\sigma$ for which the distinguishability is small despite the two states being very different, meaning the trace distance between them is large.  This is a consequence of the fact that we observed before:\ it may not be possible to do any measurement on a system.  As a result of this last property, in general the distinguishability is not a metric.  This is because a metric $d(x,y)$ is zero if and only if $x=y$.

\section{Statistical Physics}
Our aims in deriving results about equilibration are quite general, which will turn out to be both a strength and a weakness.  Because of this, we do not need to be very specific about the nature of the quantum systems we will study.  That said, there are a few very weak conditions on the systems that we need for equilibration, which we will introduce in the following sections.  Also, there are a few basic physical principles that ought to play a role in future work.

Our main interest is in systems that are macroscopic, much like those typically encountered in statistical physics, where there is a relatively small subsystem in contact with a much larger environment.  Still, the formulas and bounds can be applied to small systems.  This opens up the possibility of testing ideas out on relatively small systems in the lab, like trapped ions, for example.

\begin{compactwrapfigure}{r}{0.41\textwidth}
\centering
\begin{minipage}[r]{0.39\columnwidth}%
\centering
    \resizebox{4.5cm}{!}{\input{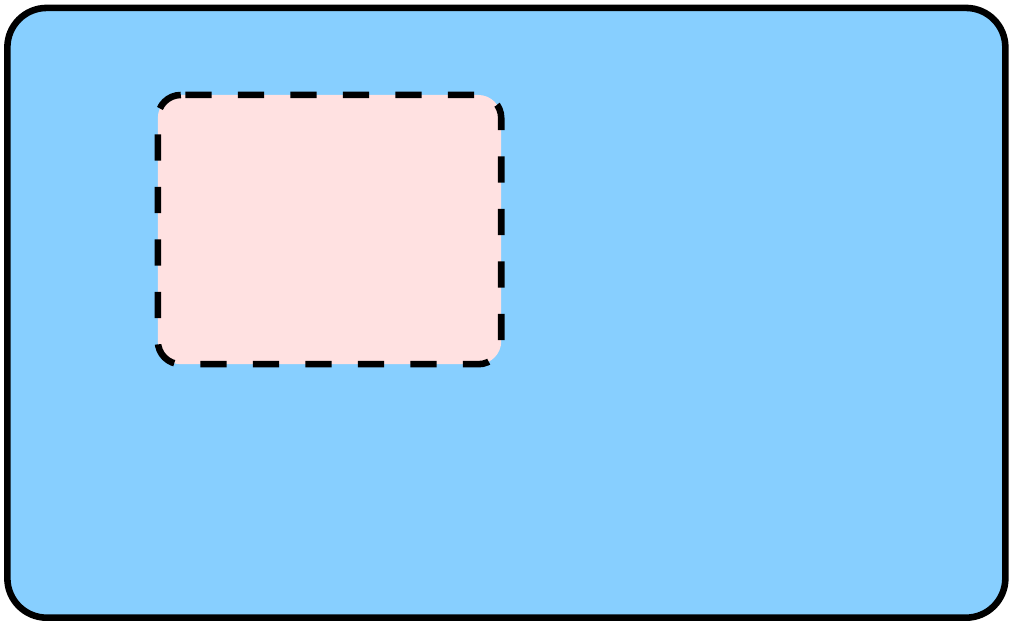_t}}
    \footnotesize{\caption[A subsystem and its environment]{Subsystem and environment.  The boundary of the subsystem $\partial A$ is marked by a dashed line.  With local interactions, the interaction Hamiltonian couples sites across the boundary.\ \\ \ }}
    \label{fig:Subsystem_and_environment}
\end{minipage}
\end{compactwrapfigure}

For equilibration to occur when the system comprises a subsystem and an environment, we expect that there must be some interaction between them.  Indeed, some interaction should be necessary to result in equilibration to a thermal state, as the subsystem must forget its initial state.  We will assume, however, that the system (meaning the subsystem plus environment) itself is isolated and evolves via a time independent Hamiltonian $H=\sum_{n} E_n P_n$, where $E_n$ are energies, and $P_n$ is the projector onto the eigenspace of $H$ corresponding to energy $E_n$.  We denote the number of distinct energies by $d_E$.

In being as general as possible, we will not specify the type of system, but we will require that its Hamiltonian obeys the very weak requirement that there are few energy gaps that have the same size.  Another requirement that is crucial is that the system has a high effective dimension, which means that the state of the system is well spread over many energy eigenstates.  We will discuss these conditions further in the following sections.

In our approach, we consider the unitary evolution of a closed system.  There is another option, where the subsystem is still in contact with an environment but we do not work with the full Hamiltonian.  The subsystem Hamiltonian, together with some assumptions about the interaction with the environment, are used to derive an equation describing the evolution of the subsystem, called a Lindblad equation \cite{NC00}.  We will not consider this approximation here.  Because the memory effects of the environment are not taken into account, this approach, like any approximate method, becomes more inaccurate as the timescale grows.  As we deal with the exact Hamiltonian, this is not a problem here.

It is possible to use the Haar measure to average over Hamiltonians, leading to results about equilibration for most Hamiltonians \cite{BCHHKM12,VZ12,Cramer12}.  This works by fixing a Hamiltonian $H$, calculating quantities (such as the equilibration time) in terms of the Hamiltonian $UHU^{\dagger}$, and then averaging over $U$ via the Haar measure.  While the Haar measure makes calculations feasible, we should not forget that we expect physical Hamiltonians to have properties like locality and maybe translational invariance.  It seems likely that most of the Hamiltonians resulting from the Haar measure average include interaction terms coupling many sites over large distances.  So, naively, we would expect that physical Hamiltonians make up only a small subset of all Hamiltonians according to the Haar measure.  It would be more natural but almost certainly more complicated to work with a measure on local Hamiltonians.

Along similar lines, one can say something about equilibration times with respect to most measurements \cite{MGLFS14}.  The situation is analogous, but now we fix a projector $P$ and average quantities evaluated in terms of $UPU^{\dagger}$ over $U$.  It was shown in \cite{MGLFS14}, that most measurements lead to very fast equilibration times even when the initial state is far from equilibrium.  Again, this raises a similar point:\ physical observables, like $\sigma_x$ on one spin, would seem to constitute a very small portion of the set of all observables.

The breakdown of this section is as follows.  We will give a sensible definition of equilibration in section \ref{sec:Equilibration}.  In section \ref{sec:Macroscopic Systems}, we define the effective dimension, particularly in light of the systems we consider being macroscopic.  And, finally, in section \ref{sec:Distribution of Energy Gaps}, we discuss the non degenerate energy gaps condition, as well as a weaker condition required in later proofs.  We also look at how the energy gaps of physical systems are distributed.  This is useful because fluctuations from the equilibrium state are controlled by the energy gaps.

\subsection{Equilibration}
\label{sec:Equilibration}
All the systems we will look at are closed and evolve via time independent Hamiltonians.  A consequence of this is that it is not possible for a system to evolve to a steady state if it did not start in one.  Furthermore, it is known for systems that have discrete spectra that there are recurrences \cite{Bocchieri57,Schulman78}.  This means that, if we wait sufficiently long, the system will return arbitrarily close to its initial state.  So the naive notion of equilibration as relaxation to a steady state is not appropriate in our setting.

We need a more realistic definition of equilibration.  We say that a system equilibrates over some interval $[0,T]$ if, for most times in that interval, the system is close to a fixed state.  An interesting aspect of this definition is the idea that equilibration may depend on the interval of time we are looking at.  This is not too surprising.  For example, a fresh cup of tea will cool to room temperature in about half an hour.  So over the course of a few hours we would say that the equilibrium state is cold tea.  Left in the corner for two or three months, the tea will evaporate.\footnote{This experiment was accidentally performed by the author.}  So over this longer period the equilibrium state is evaporated tea.  The idea that the equilibrium state can depend on the timescales involved is something we will return to later in chapter \ref{chap:Conclusions2}.

How close two states are depends on what measurements we can do, which is something we discussed in \ref{sec:Distinguishing States}.  And quantifying closeness by the distinguishability allows us to define equilibration.  Because the distinguishability is positive, if its average over some interval is small, then it must be small most of the time.  This leads us to the following definition.
\begin{definition}
\label{def:eq}
A system in state $\rho(t)$ equilibrates to the state $\sigma$ over the interval $[0,T]$ with respect to the set of measurements $\mathcal{M}$ if and only if
\begin{equation}
 \left<D_{\mathcal{M}}\big(\rho(t),\sigma\big)\right>_{T} \leq \epsilon,
\end{equation}
where $\epsilon$ is much less than one.
\end{definition}

In other words, if $\left<D_{\mathcal{M}}\big(\rho(t),\sigma\big)\right>_{T}\leq \epsilon$, then we cannot distinguish $\rho(t)$ from $\sigma$ with the measurements in $\mathcal{M}$ for most times in $[0,T]$.

We have been a little vague about how small $\epsilon$ ought to be.  One option is to choose it in an operational sense.  We can choose some $\delta>0$ such that, if $D_{\mathcal{M}}\big(\rho(t),\sigma\big)\leq \delta$, then we say the states are indistinguishable.  Suppose that, averaged over $[0,T]$, the distinguishability is $\epsilon$.  Then the maximum amount of time that the distinguishability can be greater than $\delta$ is given by $\Delta T =\f{\epsilon}{\delta}T$.  For equilibration to occur, we want the system to spend most of its time close to $\sigma$.  So we want $\f{\Delta T}{T}$ to be sufficiently small, which gives us $\epsilon$.

\begin{compactwrapfigure}{r}{0.45\textwidth}
\centering
\begin{minipage}[r]{0.44\columnwidth}%
\centering
    \resizebox{6.5cm}{!}{\input{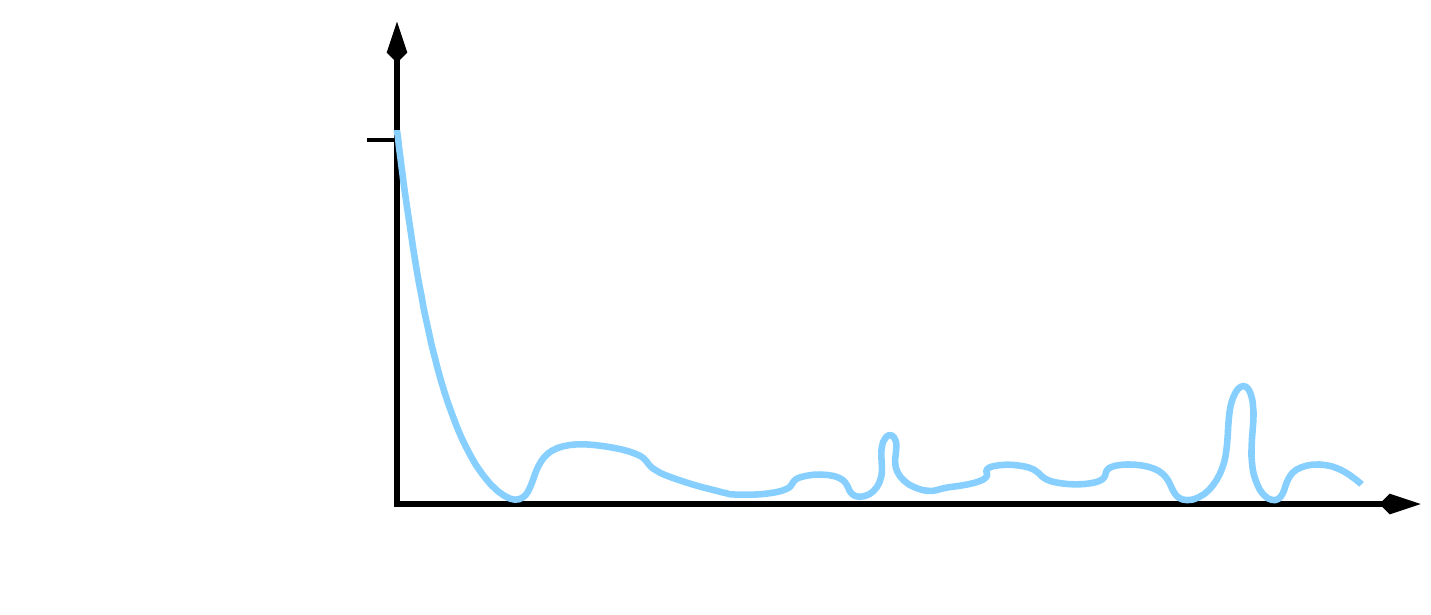_t}}
    \footnotesize{\caption[Equilibration with respect to a set of measurements]{Equilibration with respect to a set of measurements.  There will always be fluctuations and even recurrences, but these will be rare.\ \\ \ }}
    \label{fig:Equilibration}
\end{minipage}
\end{compactwrapfigure}

In a sense, we have just replaced one problem with another:\ we need to fix a $\delta$ and a $\Delta T$.  Luckily, the bounds we will encounter for $\left<D_{\mathcal{M}}\big(\rho(t),\sigma\big)\right>_{T}$ decrease with the size of the interval $T$.  Indeed, this is how we will define the equilibration time.  We choose it to be a $T$ that is sufficiently big that the time average of the distinguishability is less than $\epsilon$.  So, for whatever $\epsilon$ we choose, we can always achieve this by picking a large enough $T$.  Now, the crucial point is that we really care about how the equilibration time grows with the system size.  Because of this, the actual value of $\epsilon$ is immaterial as long as it is constant.

Additionally, in some of the infinite time bounds, $\av{D_{\mathcal{M}}\big(\rho(t),\sigma\big)}_{\infty}$ is bounded above by something exponentially small in the system size, so there is no need to worry about choosing an $\epsilon$.

Another definition of equilibration was used in \cite{Reimann08}.  This is essentially equilibration of expectation values.
\begin{definition}
We say that the expectation value of $A$ equilibrates if
\begin{equation}
\frac{\av{|\tr{\rho(t)A}-\tr{\omega A}|^2}_{T}}{\|A\|^2}\leq \epsilon_A^2,
\end{equation}
where  $\epsilon_A$ is much less than one.
\end{definition}
This definition requires that the average fluctuations of the expectation value over time are small relative to their maximum possible value, $\|A\|$.  Again, the value of the right hand side, $\epsilon_A$, is something that is operationally motivated.  In \cite{Reimann08}, the discussion is framed in terms of the resolution of the experimental apparatus.  The idea is that, if $\|A\|\, \epsilon_A$ is sufficiently small compared to the resolution, we cannot detect these fluctuations most of the time.

That said, we do not measure expectation values.  And even if the expectation value of a particular observable equilibrates, this does not necessarily mean that it is impossible to distinguish $\rho(t)$ from its time average by measuring that observable \cite{Short10}.  It also does not tell us that the system has equilibrated.  For an example in terms of equilibration over the interval $[0,\infty)$, suppose the initial state is
\begin{equation}
 \ket{\psi(0)}=\frac{1}{\sqrt{d}}\sum_i\ket{i},
\end{equation}
where $\ket{i}$ are energy eigenstates.  For any time $t$, there exists a measurement that can distinguish between the system state $\rho(t)$ and its time average with very high probability.  In fact, the trace distance is
\begin{equation}
D(\rho(t),\omega) = \f{1}{2}\textrm{tr}|\rho(t)-\omega|=1-\frac{1}{d}.
\end{equation}
To see this, use $\omega=\openone/d$, assuming the energies are not degenerate.  Because the Hilbert space dimension $d$ is typically very large, this tells us that the state of the system is always far from the time average state.  In spite of this, the expectation values of observables do equilibrate in this situation.  This is a consequence of $\tr{\omega^2}$ being very small, as we will see in chapter \ref{chap:Equilibration}.  Incidentally, a two-outcome measurement that allows us to distinguish these states well has projectors $\rho(t)$ and $\openone-\rho(t)$.

Results about equilibration of expectation values will be useful as a stepping stone to prove results about equilibration defined in terms of the distinguishability, as in definition \ref{def:eq}.

\subsection{Macroscopic Systems}
\label{sec:Macroscopic Systems}
We will define macroscopic systems to be systems with particle numbers of the order of $10^{23}$ or greater.  While the results in this part hold for smaller systems, it is good to keep in mind that the main goal is to understand the foundations of statistical physics, so the formulas we obtain need to be applicable to macroscopic systems.

The number of energy levels increases exponentially with the number of particles (or degrees of freedom).\footnote{We are assuming that the Hamiltonian does not have too many degeneracies.  This would rule out something like $H=\openone$, which only has one energy level.}  This is because the Hilbert space dimension increases exponentially with the number of subsystems.  And because the energy range only increases linearly with the number of particles, this means that the number of energy levels in a fixed energy window will increase exponentially.

For equilibration to occur, we expect that recurrences will need to be rare.  So we need an assumption about how much of the Hilbert space the system explores over time.  We can quantify this by the effective dimension, which is defined to be
\begin{equation}
d_{\eff}=\frac{1}{\displaystyle\sum_{n}\left(\textrm{tr}\left[P_{n}\rho(0)\right]\right)^2}.
\end{equation}
This is an indicator of how well a system's state is spread over energy eigenstates.  For example, an energy eigenstate has $d_{\eff}=1$.  Also, a system with equal probability to be in any energy eigenstate has $d_{\eff}=d_E$.

For a pure state, the effective dimension tells us how much the system changes over time.  This is because the state is a superposition of energy eigenstates, and the phase of each of these terms oscillates at a different rate, depending on the energy.

The main point, which holds for pure or mixed states, is that a large effective dimension means that the system's state is spread over many energy eigenstates.  And it is practically impossible to prepare a macroscopic system in a state with a low effective dimension \cite{Reimann08}.  This is because the exponential energy level density implies that the effective dimension should increase exponentially with the system size.  But macroscopic systems have $O(10^{23})$ particles, which makes the number of energy levels in a fixed window incredibly dense.  So we should expect the effective dimension to be extremely large.

\subsection{Distribution of Energy Gaps}
\label{sec:Distribution of Energy Gaps}
Unsurprisingly, the bounds on equilibration times we will see in chapter \ref{chap:Equilibration} depend strongly on the energy gaps.  Things like how dense the spectrum is and how many very small gaps there are play a huge role, as we will see.  This seems sensible.  After all, the elements of the density matrix in the energy basis are
\begin{equation}
\label{eq:matel}
 \rho_{ij}e^{-i(E_i-E_j)t},
\end{equation}
which oscillate with frequency determined by the energy gaps.  So answers to questions regarding how the state changes over time will need to involve the energy gaps.  Because of (\ref{eq:matel}), we should expect that the more small gaps there are, the slower the state changes over time.  This is something we will see more concretely in the following chapter:\ very small gaps imply very long equilibration times.

It will be useful to choose some notation for the gaps.  Given the energy gap $E_i-E_j$, with $i\neq j$, we define $G_{\alpha}=E_i-E_j$, where $\alpha=(i,j)$.  We also define
\begin{equation}
\label{eq:Nlevel}
N(\varepsilon) = \max_E |\{ \alpha | G_\alpha \in [E,E+\varepsilon) \}|.
\end{equation}

\begin{compactwrapfigure}{r}{0.37\textwidth}
\centering
\begin{minipage}[r]{0.35\columnwidth}%
\centering
    \resizebox{2.5cm}{!}{\input{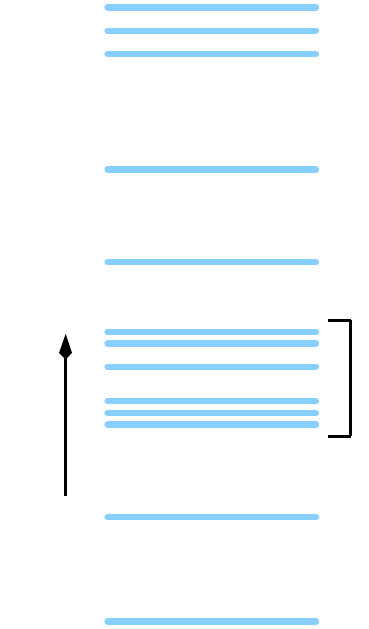_t}}
    \footnotesize{\caption[$N(\varepsilon)$, a measure of the density of energy gaps]{$N(\varepsilon)$, a measure of the density of energy gaps.\ \\ \ }}
    \label{fig:Energy_gap_density}
\end{minipage}
\end{compactwrapfigure}

This counts the maximum number of energy gaps in a window of size $\varepsilon$.  So $N(\varepsilon)$ is a measure of how dense the spectrum is.  The maximum degeneracy of any energy gap is determined by
\begin{equation}
 D_G = \lim_{\varepsilon \rightarrow 0^+} N(\varepsilon).
\end{equation}
In \cite{Reimann08,LPSW09}, it was necessary to assume that $D_G=1$ in order to prove equilibration.  This is the assumption of non degenerate energy gaps.  Put another way, it means that
\begin{equation}
 E_i-E_j=E_k-E_l
\end{equation}
is never satisfied, unless $i=j$ and $k=l$ or $i=k$ and $j=l$.  The results we will see in the next chapter weaken this assumption.  It will be sufficient for us to assume that $D_G$ is small.

There is some intuition behind why we need the degeneracy of the most degenerate energy gap $D_G$ to be small.  Take the expectation value of the operator $A$ in the state $\rho(t)$,
\begin{equation}
\tr{\rho(t)A}=\displaystyle\sum_{ij}\rho_{ij}A_{ji}e^{-i(E_i-E_j)t}.
\end{equation}
\begin{compactwrapfigure}{r}{0.49\textwidth}
\centering
\begin{minipage}[r]{0.47\columnwidth}%
\centering
    \resizebox{7.0cm}{!}{\includegraphics{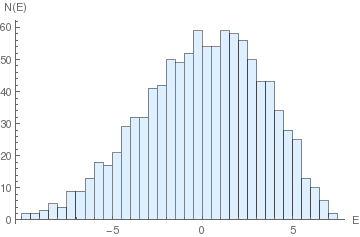}}
    \footnotesize{\caption[Energy distribution of the Heisenberg Hamiltonian with random couplings]{Histogram showing the energy level distribution of the Heisenberg Hamiltonian with random couplings restricted to $[0,1]$.  Even for $10$ spins, the approximately Gaussian shape is noticeable.}}
    \label{fig:Random_Heisenberg_Eigenvalues}
\end{minipage}
\end{compactwrapfigure}
If there are many degenerate energy gaps, then there are many terms that oscillate with the same frequency.  So it is conceivable that, with the right choice of projector $P$,
\begin{equation}
|\tr{P\rho(t)}-\tr{P\av{\rho(t)}_T}|
\end{equation}
could be large for a lot of the time in $[0,T]$, meaning equilibration cannot occur.

The requirement that there be few degenerate energy gaps has another justification, which is that it rules out non interacting systems.  Imagine we have a subsystem and environment that do not interact.  Their energies would look like
\begin{equation}
E_{ij} = E_i^S + E_j^B,
\end{equation}
where $E_i^S$ and $E_j^B$ are the subsystem and environment energies respectively.  But then there are many degenerate energy gaps.  In fact, $D_G$ would be greater than the dimension of the environment's state space, assuming the spectra are non degenerate.

There are some things we can say about the distribution of energy levels and energy gaps.  For a start, it is generally believed, and can be proved in some cases \cite{KLW14}, that the distribution of energies in quantum systems composed of many small subsystems is a Gaussian.  So we have that the probability (according to the continuum distribution) of finding an energy level between $E$ and $E+\textrm{d}E$ is
\begin{equation}
 p(E)=\frac{1}{\sqrt{2\pi}\sigma}e^{-\frac{E^2}{2\sigma^2}},
\end{equation}
where $\sigma$ is the width.  Here, a constant has been subtracted from the Hamiltonian so that the average energy is zero.

\begin{compactwrapfigure}{r}{0.49\textwidth}
\centering
\begin{minipage}[r]{0.47\columnwidth}%
\centering
    \resizebox{7.0cm}{!}{\includegraphics{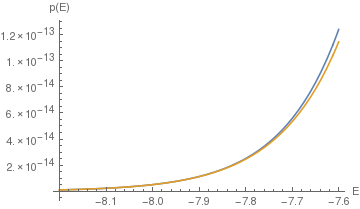}}
    \footnotesize{\caption[Gaussian distribution restricted to a small window]{Gaussian energy distribution restricted to a small window (blue) compared to the exponential approximation (yellow).  Note that we are far from the centre of the Gaussian.}}
    \label{fig:Zoom_into_Gaussian}
\end{minipage}
\end{compactwrapfigure}

On physical grounds, however, we know that states are effectively restricted to some relatively small energy window.  This results in the distribution of energy levels for physical systems effectively being exponential.  To see this is straightforward.  Suppose the system is in a small energy window, centred on $-\overline{E}$, where $\overline{E}$ is positive.  We assume this window is far from the peak of the Gaussian.  Far means that $|\overline{E}|$ is large compared to the width $\sigma$.  Then we have that
\begin{equation}
\begin{split}
 p(E) & =\frac{1}{\sqrt{2\pi}\sigma}e^{-\frac{(E-\overline{E})^2}{2\sigma^2}}\\
  & =\frac{1}{\sqrt{2\pi}\sigma}e^{-\frac{\overline{E}^2}{2\sigma^2}}e^{\frac{2\overline{E}E}{2\sigma^2}}e^{-\frac{E^2}{2\sigma^2}}.
 \end{split}
\end{equation}

Now, as $E$ is small compared to $|\overline{E}|$ and hence $\sigma$, the last term in the product on the last line is approximately one.  Therefore,
\begin{equation}
 p(E)\simeq \alpha e^{\beta E},
\end{equation}
where $\alpha$ and $\beta$ are constants that depend on $\overline{E}$ and $\sigma$.  Since $\alpha$ can be fixed by normalization, the only free constant is $\beta=\frac{1}{k_B\mathbf{T}}$, where $k_B$ is Boltzmann's constant and $\bf{T}$ is the temperature.

We can always subtract a constant from the Hamiltonian, so that the lowest energy is $0$, and we can write the energy range as $[0,\Delta]$.  It follows that $\alpha=\beta/(e^{\beta\Delta}-1)$.

Since gaps are so important, it is interesting to look at the properties of the distribution of gap sizes.  First, let us calculate the average gap size $\av{|E_i-E_j|}$.
\begin{equation}
\begin{split}
 \av{|E_i-E_j|} & = \sum_{ij}p_ip_j|E_i-E_j| = 2\! \sum_{E_i\geq E_j}p_ip_j(E_i-E_j)\\
& \simeq 2\int_{0}^{\Delta}\!\textrm{d}E\int_{0}^{E}\!\textrm{d}E^{\prime}p(E)p(E^{\prime})(E-E^{\prime})\\
&= 2\alpha^2\int_{0}^{\Delta}\!\textrm{d}E\,e^{\beta E}\int_{0}^{E}\!\textrm{d}E^{\prime}e^{\beta E^{\prime}}(E-E^{\prime})\\
& = 2\alpha^2\int_{0}^{\Delta}\!\textrm{d}E\,e^{\beta E}\left(\frac{e^{\beta E}}{\beta^2}-\frac{E}{\beta}-\frac{1}{\beta^2}\right)= 2\frac{\alpha^2}{\beta^2}\left(\frac{e^{2\beta\Delta}}{2\beta}-\Delta e^{\beta \Delta}-\frac{1}{2\beta}\right)\\
& = 2\frac{e^{2\beta\Delta}-2\beta\Delta e^{\beta \Delta}-1}{2\beta(e^{\beta\Delta}-1)^2} \simeq\frac{1}{\beta}.
\end{split}
\end{equation}
In the last line, we assumed that $\beta\Delta=\textstyle\frac{\Delta}{k_B\mathbf{T}}\gg 1$.  This is reasonable for macroscopic systems since $k_B \bf{T}$ is roughly the energy of a single particle, whereas $\Delta$ is the spread of energies the system can occupy.  So the average size of an energy gap is $1/\beta=k_B\mathbf{T}$.  Next, we can find the second moments.
\begin{equation}
 \begin{split}
  \av{|E_i-E_j|^2} & =\sum_{ij}p_ip_j(E_i-E_j)^2\\
 &=2\sum_{i}p_iE_i^2-2\left(\sum_ip_iE_i\right)^2\\
 &=2\sigma_E^2,
 \end{split}
\end{equation}
where $\sigma_E$ is the standard deviation of the energy (in the exponential distribution).  So we have $\av{|E_i-E_j|^2}=2\sigma_E^2$.  Now it is not hard to show that the width of the distribution of energy levels is $\sigma_E\simeq 1/\beta$.  This allows us to calculate the width of the gap distribution $\sigma_{\textrm{gap}}$, which turns out to be
\begin{equation}
 \sigma_{\rm{gap}}= \sqrt{\av{|E_i-E_j|^2}-\av{|E_i-E_j|}^2} \simeq \frac{1}{\beta}.
\end{equation}
By Chebyshev's inequality we know that most of the gaps are comparable to $1/\beta$.  Now, $k_B\bf{T}$ may seem small, but it is effectively constant.  In contrast to this, we would expect the {\it smallest} gap to be exponentially small in the number of particles.

\chapter{Equilibration}
\label{chap:Equilibration}
\section{Introduction}
Equilibration is ubiquitous in nature, yet the mechanism behind it is not fully understood.  In both quantum and classical dynamics, the fundamental equations of motion are time reversible,\footnote{Incidentally, equilibration in classical and quantum systems has been compared in \cite{RE13,MS14}.} and, given unlimited measurement capabilities, it is clear that we would never see equilibration occur in either case.  If we could resolve individual particles in a classical gas, we would be able to track density fluctuations and see that, at that level, the state of the gas is constantly changing.  In this part of the thesis we will look at equilibration with physical restrictions on the measurements we can do, which we discussed in the previous chapter.  This is also typically the case for studies of equilibration in classical systems, where it is necessary to coarse-grain observables in order to prove that equilibration occurs.

It is natural to wonder whether equilibration relies upon specific properties of physical systems:\ is translational invariance of Hamiltonians, for example, necessary for equilibration to occur?  As we will see, if we are not concerned about the timescale, then almost no assumptions are necessary to see that equilibration occurs.  But if we do care about the timescale, then the question is open.  It is not known in general whether fast equilibration follows naturally from some reasonable restrictions on the dynamics.  It is known to be true in some specific cases, such as free bosons in \cite{CDEO08,CE10}.

In this chapter, we will study equilibration and the corresponding timescales in a very general setting.  We will prove our results while assuming as little as possible about the dynamics.  A strength of such a general approach is that the formulas and bounds are applicable to all finite dimensional systems.  A weakness is that, as we will see, the upper bounds on the equilibration time are far too large for systems with physical spectra.  A sensible aim for future work is to add further assumptions to see if stronger results can be found.  Obtaining realistic estimates for equilibration times will certainly require further assumptions, as it is possible to construct example systems that equilibrate very slowly, something we will see in section \ref{sec:Slow Equilibration}.

As far as thermalization goes, proving that equilibration occurs is only half the battle.  For example, there is still the issue of independence of the equilibrium state of the subsystem on its initial state.  To see that this is necessary, suppose a small subsystem in contact with a large environment thermalizes, then information about the initial state of the subsystem must be lost.\footnote{Of course, the information is not really lost, as the dynamics are unitary.  Instead, information about the subsystem's initial state is disseminated throughout the environment.  For the example of a spin chain, we would expect information to spread out and travel along the chain.  There are recurrences, but typically these are incredibly rare.}  It is crucial, therefore, in understanding thermalization that we understand when the information about the subsystem's initial state is lost in the environment.

The breakdown of this chapter is as follows.  In section \ref{sec:Equilibration Times}, we look at equilibration over finite times and use this to upper bound the equilibration time.  In the process, we show equilibration of expectation values of observables over finite timescales.  Then we use this as a tool to prove that equilibration occurs in finite time with respect to two important examples of measurement sets:\ the set of all measurements on a subsystem (assumed to be small compared to the environment) and a finite set of measurements.  We also look at an example of a system that equilibrates very slowly, which implies that our upper bounds on the equilibration time cannot be made much better without incorporating further assumptions.  In section \ref{sec:The Equilibrium state}, we look at the nature of the equilibrium state itself.  In this case, we focus on the subsystem's equilibrium state and the question of its independence on its initial state.

\section{Equilibration Times}
\label{sec:Equilibration Times}
Our first goal is to prove that equilibration happens over finite intervals.  Afterwards, we will use this to calculate upper bounds on the equilibration time.

We defined equilibration in section \ref{sec:Equilibration} to account for restrictions on the set of measurements that we can do.  One special case that is particularly important for us is the restriction to all measurements on a subsystem.  The distinguishability with respect to these measurements is equivalent to the trace distance on the subsystem.  Equilibration in this sense over infinite timescales was proved in \cite{LPSW09}.  The other setting that we will look at considers measurements restricted to some finite set, which takes account of the fact that measurement precision is bounded.  That systems equilibrate over infinite timescales in this sense was proved in \cite{Short10}.

\begin{compactwrapfigure}{r}{0.52\textwidth}
\centering
\begin{minipage}[r]{0.50\columnwidth}%
\centering
    \resizebox{6.5cm}{!}{\includegraphics{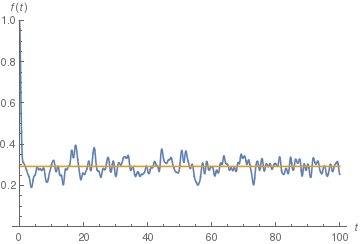}}
    \footnotesize{\caption[Equilibration on a spin chain]{Equilibration on a spin chain with seven sites evolving via the Heisenberg Hamiltonian with random couplings.  The observable is $P=\ket{\uparrow_x}\bra{\uparrow_x\!}\otimes \ket{\uparrow_x}\bra{\uparrow_x\!}\otimes \openone$.  The initial state is $P$ normalized to have trace one.  The blue curve is $\tr{P\rho(t)}$ and the yellow line is the time average over $[0,100]$.}}
    \label{fig:009}
\end{minipage}
\end{compactwrapfigure}

In \cite{Short10} it was also proved that equilibration over infinite timescales in both senses follows from the equilibration of expectation values, which is the form of equilibration analysed in \cite{Reimann08}.  Equilibration of expectation values over infinite intervals can be seen from the following formula \cite{Reimann08}.  Given any operator $A$ and the assumption of non-degenerate energy gaps,
\begin{equation}
\label{eq:Rei}
\frac{\left<|\textrm{tr}[\rho(t)A]-\textrm{tr}[\omega A]|^2\right>_{\infty}}{\|A\|^2}\leq \frac{1}{d_{\eff}}.
\end{equation}
As we saw in section \ref{sec:Macroscopic Systems}, we would expect $d_{\eff}$ to be exponentially large in the system size. Therefore, for most of the time in $[0,\infty)$, the expectation value $\textrm{tr}[\rho(t)A]$ will be close to its time average $\textrm{tr}[\omega A]$ relative to the scale set by $\|A\|$.

By extending this bound to finite intervals, we will be able to prove equilibration over finite intervals with respect to restricted measurement sets and measurements on subsystems.  This will in turn lead us to a bound on the equilibration time, which is our main goal.

\subsection{Finite Time Equilibration of Expectation Values}
\label{sec:Timescales for Equilibration of Expectation Values}
Our aim in this section is to generalize equation (\ref{eq:Rei}) for equilibration of expectation values to arbitrary time intervals, which will eventually allow us to bound the equilibration time.  In the process, we will be able to replace the assumption of nondegenerate energy gaps by the requirement that the degeneracy of the most degenerate gap is not too big.  The main content of this section is the following theorem \cite{SF12,MGLFS14}.  This result is essentially the same as the bound in \cite{MGLFS14}, which improved on the bound in \cite{SF12}, but with slightly better constants.

\begin{theorem}
\label{th:exptval}
Take a system evolving via a time independent Hamiltonian in the state $\rho(t)$ with $d_E$ energy levels.  For any
operator $A$, any $\varepsilon >0$ and time $T >0 $
\begin{equation}
\label{eq:eqofexpct}
\frac{\av{ |\tr{\rho(t)A} - \tr{ \omega A }|^2 }_T}{\norm{A}^2}
\leq  \frac{N(\varepsilon)}{d_{\eff}}\frac{5\pi}{2}\bigg[\frac{3}{4} + \frac{1}{\varepsilon T} \bigg].
\end{equation}
Recall that $N(\varepsilon)$ was defined in equation (\ref{eq:Nlevel}).
\end{theorem}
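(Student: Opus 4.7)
The plan is to expand $X(t):=\tr{\rho(t)A}-\tr{\omega A}$ in the energy eigenbasis, replace the time average over $[0,T]$ by a weighted average against the Lorentzian majorant $f$ from (\ref{eq:Lorentzian}), and then use the exponential decay of the Fourier transform of $f$ together with the binning function $N(\varepsilon)$ to tame the near-degenerate gaps. With $H=\sum_n E_n P_n$ and $\omega=\sum_n P_n\rho(0)P_n$, the diagonal blocks of $\rho(t)$ and $\omega$ coincide, so
\begin{equation*}
X(t)=\sum_{n\neq m}e^{-i(E_n-E_m)t}\,\tr{P_n\rho(0)P_m A}=:\sum_\alpha v_\alpha\, e^{-iG_\alpha t},
\end{equation*}
the sum running over ordered pairs $\alpha=(n,m)$ with $E_n\neq E_m$.

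Since $|X(t)|^2\geq 0$ and $f(t)\geq 1/T$ on $[0,T]$, one has $\av{|X|^2}_T\leq \int_{\mathbb{R}} f(t)|X(t)|^2\,dt=\sum_{\alpha\beta}v_\alpha v_\beta^*\,M_{\alpha\beta}$, where a direct residue calculation for the Fourier transform of $f$ gives
\begin{equation*}
M_{\alpha\beta}=\tfrac{5\pi}{4}\,e^{-i(G_\alpha-G_\beta)T/2}\,e^{-T|G_\alpha-G_\beta|}.
\end{equation*}
The identity $\sum_{\alpha\beta}v_\alpha v_\beta^* M_{\alpha\beta}=\int f(t)|X(t)|^2\,dt\geq 0$ for arbitrary coefficients identifies $M$ as positive semidefinite (equivalently, a Gram matrix of the exponentials $\phi_\alpha(t)=e^{-iG_\alpha t}$ under the $f$-weighted inner product), so the quadratic form is bounded by $\|M\|\sum_\alpha|v_\alpha|^2$, and $\|M\|$ in turn by its Gershgorin row sum $\tfrac{5\pi}{4}\max_\alpha\sum_\beta e^{-T|G_\alpha-G_\beta|}$.

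To control that row sum, I would partition the gaps $\{G_\beta\}$ by their distance from $G_\alpha$ into half-open intervals of width $\varepsilon$; by the very definition of $N(\varepsilon)$ in (\ref{eq:Nlevel}) each such bin contains at most $N(\varepsilon)$ gaps. Summing a geometric series for the exponentially decaying contributions from distant bins, using the inequality $1/(e^{T\varepsilon}-1)\leq 1/(T\varepsilon)$, and separately accounting for the bins immediately to the left and right of $G_\alpha$ (where the exponential factor is $O(1)$) produces the required bound of the form $N(\varepsilon)[c_0+c_1/(T\varepsilon)]$.

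The remaining ingredient is the amplitude bound $\sum_\alpha|v_\alpha|^2\leq \|A\|^2/d_{\eff}$. Working in the eigenbasis of $\omega$, one recognises $\sum_{n\neq m}|v_\alpha|^2\leq \tr{\omega A\omega A^\dagger}$, and an AM-GM argument on the eigenvalues of $\omega$ applied to the matrix elements of $A$ in that basis gives $\tr{\omega A\omega A^\dagger}\leq \|A\|^2\tr{\omega^2}$; finally $\tr{\omega^2}\leq \sum_n(\tr{P_n\rho(0)})^2=1/d_{\eff}$ follows because the spectrum of each block $P_n\rho(0)P_n$ satisfies $\sum_i\lambda_i^2\leq(\sum_i\lambda_i)^2$. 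Combining the three ingredients yields an inequality of the claimed shape. The step I expect to be the main obstacle, and the one that fixes the precise prefactor $\tfrac{5\pi}{2}[\tfrac{3}{4}+1/(\varepsilon T)]$ rather than a weaker variant, is the bookkeeping in the binning argument for the Gershgorin row sum: one must split the sum into the diagonal bin, the two immediately adjacent bins (where no exponential suppression is available), and the geometric tail on either side in just the right way to recover the constants $3/4$ and $1$ rather than some larger multiple.
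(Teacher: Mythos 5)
Your overall strategy is the same as the paper's: majorise the $[0,T]$ average by the Lorentzian-weighted average, compute $M_{\alpha\beta}=\tfrac{5\pi}{4}e^{-i(G_\alpha-G_\beta)T/2}e^{-T|G_\alpha-G_\beta|}$ explicitly, bound $\|M\|$ by its maximal row sum (for Hermitian $M$ this is exactly the paper's $\|M\|\le\sqrt{\vertiii{M}_1\vertiii{M}_\infty}$), and control the row sum by binning the gaps into $\varepsilon$-windows, each containing at most $N(\varepsilon)$ gaps, plus a geometric tail. The binning step you flag as the main obstacle is in fact routine: the paper takes $|M_{\alpha\beta}|\le\tfrac{5\pi}{4}$ on the central bin, $|M_{\alpha\beta}|\le\tfrac{5\pi}{4}e^{-(|k|-1/2)\varepsilon T}$ on the $k$th bin, sums the geometric series, and uses $\tfrac{2e^{-x/2}}{1-e^{-x}}\le\tfrac{1}{2}+\tfrac{2}{x}$ to obtain the constants $\tfrac34$ and $1$.

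The genuine gap is elsewhere, in your amplitude bound. By working directly with a mixed $\rho(0)$ and degenerate projectors $P_n$, you assert $\sum_{n\neq m}|\tr{P_n\rho(0)P_mA}|^2\le\tr{\omega A\omega A^{\dagger}}$, but this inequality is false in that generality. Take two distinct energies, each doubly degenerate, with $P_1$ projecting onto $\mathrm{span}\{\ket{1},\ket{2}\}$ and $P_2$ onto $\mathrm{span}\{\ket{3},\ket{4}\}$; let $\rho(0)=\tfrac12\proj{\phi_1}+\tfrac12\proj{\phi_2}$ with $\ket{\phi_1}=\tfrac{1}{\sqrt2}(\ket{1}+\ket{3})$, $\ket{\phi_2}=\tfrac{1}{\sqrt2}(\ket{2}+\ket{4})$, and let $A=\ket{3}\bra{1}+\ket{4}\bra{2}$, so $\norm{A}=1$. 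Then $\omega=\tfrac14\openone$ and $\tr{\omega A\omega A^{\dagger}}=\tfrac{1}{16}\tr{AA^{\dagger}}=\tfrac18$, whereas $\tr{P_1\rho(0)P_2A}=\tfrac12$, so the left-hand side equals $\tfrac14>\tfrac18$. (The theorem itself survives, since $\norm{A}^2/d_{\eff}=\tfrac12$ here; it is your intermediate link that breaks, because $\tr{\omega^2}$ can be strictly smaller than $\sum_n(\tr{P_n\rho(0)})^2=1/d_{\eff}$ and the off-diagonal blocks $P_n\rho P_m$ are not controlled by $\omega$ in the way the pure-state computation suggests.) Your step is valid when the $P_n$ are rank one, via $|\rho_{ij}|^2\le\rho_{ii}\rho_{jj}$, and for pure states after choosing a basis in which the state overlaps a single eigenvector per distinct energy — which is precisely the reduction the paper performs, extending to mixed states only at the very end by purification under $H\otimes I$ (which leaves $N(\varepsilon)$, $\norm{A\otimes I}$ and $d_{\eff}$ unchanged). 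You need that reduction, or an independent proof of $\sum_{\alpha}|v_{\alpha}|^2\le\norm{A}^2/d_{\eff}$ for mixed states, to close the argument.
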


\begin{proof}
We start by assuming the system is in a pure state $\ket{\psi(t)}$, and at the end we will extend the result to mixed states by purification.

To deal with any degenerate energy levels, we choose an eigenbasis of $H$ so that $\ket{\psi(t)}$ only overlaps one eigenstate $\ket{n}$ for each distinct energy.  Then the state at time $t$ is
\begin{equation}
\ket{\psi(t)} = \sum_{n} c_n e^{-i E_n t} \ket{n},
\end{equation}
where $c_n = \braket{n}{\! \psi(0)}$.  Because $\ket{\psi(t)}$ lives in the subspace spanned by the states $\ket{n}$, its evolution is equivalent to evolution via the non-degenerate Hamiltonian $H'=\sum_n E_n \proj{n}$.  Therefore, the equilibrium state is $\omega = \sum_n |c_n|^2 \proj{n}$ and the effective dimension is given by $d_{\eff} = \frac{1}{\sum_n |c_n|^4} = \frac{1}{\tr{\omega^2}}$.

Writing the matrix elements of $A$ as $A_{ij}= \bra{i} A \ket{j}$,
\begin{align}
\label{eq:eqexpt1}
\langle |&\tr{\rho(t)A}  -\tr{\omega A}|^2 \rangle_T  = \Big\langle \Big|\sum_{i \neq j}  (c_j^*A_{ji} c_i) e^{-i (E_i-E_j)t} \Big|^2\Big\rangle_T \nonumber \\
 & \leq  \sum_{\scriptsize \begin{array}{c} i \neq j \\ k\neq l \end{array}} \!\!\! \! (c_j^* A_{ji} c_i)  (c_l^* A_{lk} c_k)^* \av{e^{i[(E_k-E_l) - (E_i - E_j)]t} }_T.
\end{align}
It will make the following formulas more succinct if we rewrite this expression in terms of energy gaps $G_{\beta}=E_i-E_j$, with $\beta = (i,j)$ and $\alpha = (k,l)$.  We also define the vector
\begin{equation}
v_{\beta}=v_{(i,j)} =  c_j^* A_{ji} c_i
\end{equation}
and the Hermitian matrix
 \begin{equation}
 M_{\alpha\beta} = \av{ e^{i(G_{\alpha} - G_{\beta})t} }_T.
\end{equation}
Equation (\ref{eq:eqexpt1}) becomes
\begin{align}
\av{ |\tr{\rho(t)A} - \tr{\omega A}|^2 }_T &=  \sum_{\alpha, \beta} v_{\alpha}^* M_{\alpha\beta}  v_{\beta}.
\end{align}
Then we have
\begin{equation}
 \begin{split}
 \av{ |\tr{\rho(t)A} - \tr{\omega A}|^2 }_T & \leq  \norm{M} \sum_{\alpha}  |v_{\alpha}|^2   \\
 & =\norm{M} \sum_{i\neq j}  |c_i|^2 |c_j|^2 |A_{ji}|^2  \\
 & \leq \norm{M} \sum_{i, j}  |c_i|^2 |c_j|^2 |A_{ji}|^2 \\
 & = \norm{M} \tr{A \omega A^{\dag} \omega }.
 \end{split}
 \end{equation}
Now we use the Cauchy-Schwartz inequality for operators with the inner product $\tr{A^{\dag} B}$, as well as the inequality for positive operators $P, Q$, $\tr{PQ} \leq \|P\| \tr{Q} $.  Then we get
\begin{equation}
 \begin{split}
 \av{ |\tr{\rho(t)A} - \tr{\omega A}|^2 }_T & \leq  \norm{M}\sqrt{\tr{A^{\dag}\!A\, \omega^2} \tr {A A^{\dag} \omega^2}} \\
 &\leq \norm{M}\norm{A}^2 \tr{\omega^2} \\
&=\frac{\norm{M}\norm{A}^2}{d_{\eff}}.
 \end{split}
\end{equation}
In the special case where the Hamiltonian has no degenerate energy gaps, we can take the infinite-time limit $T \rightarrow \infty$ so that $M$ becomes the identity matrix, and hence $\norm{M}=1$. Then we recover the previous bound given by equation (\ref{eq:Rei}).

Our interest is in the general case, with $T$ finite, and we will not assume that the energy gaps are not degenerate. Because $M$ is Hermitian, we can use the matrix norm bound
\begin{equation}
\label{eqn:norm}
\norm{M} \leq \max_{\beta} \sum_{\alpha} |  M_{\alpha\beta}|.
\end{equation}
This follows from $\norm{M}^2 \leq \vertiii{M}_1 \vertiii{M}_{\infty}$, where $\vertiii{M}_{\infty}$ and $\vertiii{M}_{1}$ are the row and column matrix norms \cite{Horn85}.  Since $M$ is hermitian, these are the same so that
\begin{equation}
  \norm{M}\leq \sqrt{\vertiii{M}_{\infty}\vertiii{M}_{1}} =\max_{\beta} \sum_{\alpha} |  M_{\alpha\beta}|.
\end{equation}
For more details, see \cite{Horn85}.

Instead of averaging over the interval $[0,T]$, we can upper bound equation (\ref{eq:eqexpt1}) by using a weighted average, as we saw in section \ref{sec:Time Averages and Energy Filtering}.  If we choose the weight to be the Lorentzian given in equation (\ref{eq:Lorentzian}) in section \ref{sec:Time Averages and Energy Filtering}, then the matrix elements of M are
\begin{equation}
\label{eqM}
M_{\alpha\beta} = \frac{5}{4}\int_{\mathbb{R}} \frac{T}{T^2+(t-\f{T}{2})^2}\, e^{i(G_{\alpha} - G_{\beta})t}  \textrm{d}t.
\end{equation}
Now we just need to use the formula for the Fourier transform of a Lorentzian, given by
\begin{equation}
 \int_{\mathbb{R}} \frac{T}{T^2+t^2}\, e^{ipt}  \textrm{d}t = \pi e^{-|p|T}.
\end{equation}
So we get
\begin{equation}
\label{eqM1}
M_{\alpha\beta}= \frac{5\pi}{4} e^{-|G_{\alpha} - G_{\beta}|T}e^{-i(G_{\alpha} - G_{\beta})T/2}.
\end{equation}

We want to bound the sum in equation (\ref{eqn:norm}).  To do this, we break it up into intervals of width $\varepsilon$, centered on a given energy gap $G_\beta$. There can be at most $N(\varepsilon)$ gaps $G_{\alpha}$ that satisfy $(k+\frac{1}{2}) \varepsilon >  G_{\alpha} - G_{\beta} \geq (k-\frac{1}{2}) \varepsilon$ for each $k$.  In the $k=0$ interval, we take $|M_{\alpha
\beta}|\leq 1$.  When $k$ is not zero, $|G_{\alpha} - G_{\beta}| \geq (|k|-\frac{1}{2}) \varepsilon$, and so from equation (\ref{eqM1}),
\begin{equation}
|M_{\alpha \beta}|\leq \frac{5\pi}{4} e^{-(|k|-1/2)\varepsilon T}.
\end{equation}

The sum $\sum_{\alpha} |M_{\alpha\beta}|$ contains $d_E(d_E-1)$ terms, and it is maximized by having as many terms with
small values of $|k|$ as possible.  Then we get
\begin{equation}
\sum_{\alpha} |  M_{\alpha\beta}| \leq \frac{5\pi}{4}N(\varepsilon)\left(1  +  2e^{\varepsilon/2 T}\! \sum_{k=1}^{\frac{d_E(d_E-1)}{2}} e^{-k\varepsilon T} \right).
\end{equation}
The first term comes from the $k=0$ interval, and the second term comes from the intervals with non zero $k$.  The next step is to bound the sum by the same sum but with $k$ running from one to infinity, which leads to a geometric series.  Then
\begin{equation}
\begin{split}
\sum_{\alpha} |  M_{\alpha\beta}| & \leq \frac{5\pi}{4}N(\varepsilon)\left(1  +  \frac{2e^{\varepsilon/2 T}}{e^{\varepsilon T}-1} \right)\\
& = \frac{5\pi}{4}N(\varepsilon)\left(1  +  \frac{2e^{-\varepsilon/2 T}}{1-e^{-\varepsilon T}} \right).
\end{split}
\end{equation}
To simplify this, we use
\begin{equation}
\begin{split}
 \frac{2e^{-x/2}}{1-e^{-x}} & =\frac{1}{1-e^{-x/2}}-\frac{1}{1+e^{-x/2}}\\
 & \leq \frac{1}{1-e^{-x/2}} -\frac{1}{2} \leq \frac{1}{2}+\frac{2}{x},
 \end{split}
\end{equation}
where we used $\f{1}{1-e^{-z}}\leq 1+\f{1}{z}$ in the last step.  This leads to the bound
\begin{equation}
\sum_{\alpha} |  M_{\alpha\beta}|  \leq \frac{5\pi}{2}N(\varepsilon)\left[\frac{3}{4}  +  \frac{1}{\varepsilon T} \right].
\end{equation}
Putting this all together, we get that
\begin{equation}
\label{eq:alpha}
 \frac{\langle|\tr{\rho(t)A}- \tr{\omega A}|^2\rangle_T}{\|A\|^2}\leq \frac{5\pi}{2}\frac{N(\varepsilon)}{d_{\eff}} \left[\frac{3}{4}  +  \frac{1}{\varepsilon T} \right].
\end{equation}
This is an improvement over the bound in \cite{SF12}, which was
\begin{equation}
\frac{\langle|\tr{\rho(t)A}- \tr{\omega A}|^2\rangle_T}{\norm{A}^2} \leq \frac{N(\varepsilon)}{d_{\eff}} \bigg[1 + \frac{8 \log_2 d_E}{\varepsilon T} \bigg].
\end{equation}
So the improvement is $O(\log_2 d_E)$.  Actually, equation (\ref{eq:alpha}) is also a slight improvement over the bound in \cite{MGLFS14}, in which the factor on the right was $(\f{3}{2}+\f{1}{\varepsilon T})$.

So much for pure states. The final step is to extend the result to mixed states by doing a purification, as in \cite{Short10}.  For any initial state $\rho(0)$ on $\mH$, we define a pure state $\ket{\psi(0)}$ on $\mH \otimes \mH$ such that the reduced state on the first system is $\rho(0)$.  We recover the original evolution $\rho(t)$ of the first system by evolving $\ket{\psi(t)}$ under the joint Hamiltonian $H \otimes I$.  The expectation value of an operator $A$ in the state $\rho(t)$ is the same as the expectation value of
$A \otimes I$ on the joint system.  Furthermore, $\|A\|=\|A\otimes I\|$.  We also have that, crucially, $N(\varepsilon)$ is the same for $H\otimes I$ as it is for $H$.  Also, the effective dimension of the purified system is the same as the effective dimension of the original system, which follows from $\textrm{tr}[P_E\rho(0)]=\textrm{tr}[P_E\otimes I \proj{\psi(0)}]$.
\end{proof}

\subsection{Finite Time Equilibration for Systems and Subsystems}
\label{sec:Finite Time Equilibration for Systems and Subsystems}
\begin{compactwrapfigure}{r}{0.4\textwidth}
\centering
\begin{minipage}[r]{0.39\columnwidth}%
\centering
    \resizebox{5.5cm}{!}{\input{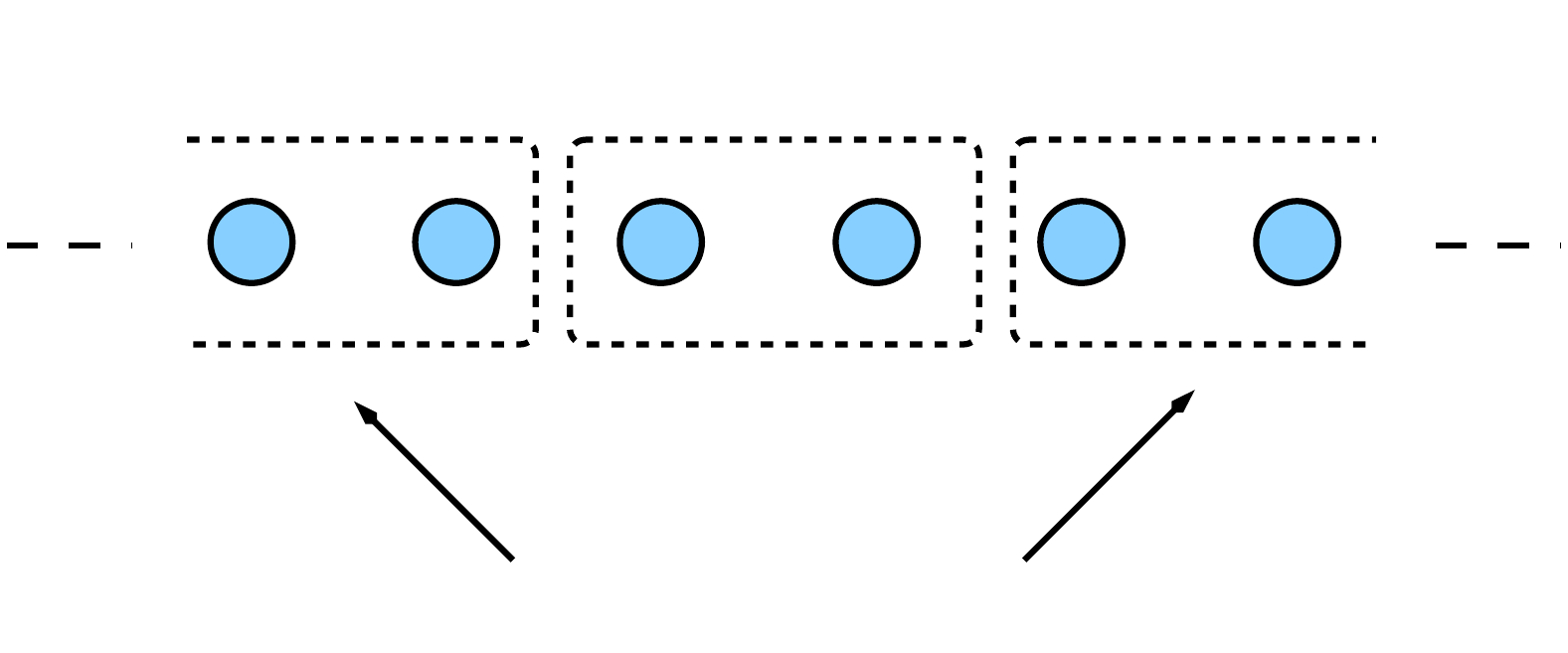_t}}
    \footnotesize{\caption[Subsystem of a spin chain]{A subsystem of a spin chain consisting of the two sites in the middle.  The rest of the chain forms the environment.\ \\ \ }}
    \label{fig:Spin_Chain_Subsystem}
\end{minipage}
\end{compactwrapfigure}

We would like to use the result in the previous section, which proves equilibration of expectation values, to say something about equilibration with respect to a set of measurements.  The two cases that will concern us are when the measurement set consists of all measurements on a subsystem or a finite set of possible measurements on the whole system.

First, following \cite{Short10}, we will use the equilibration of expectation values to prove equilibration of a closed system, when there are realistic constraints on the set of possible measurements $\cM$.  The infinite time result from \cite{Short10}, with the assumption of non degenerate energy gaps, is
\begin{equation}
\label{eq:E1}
\left<D_{\mathcal{M}}(\rho(t),\omega)\right>_{\infty}\leq \frac{\mathcal{S}(\mathcal{M})}{4\sqrt{d_{\eff}}},
\end{equation}
where $\mathcal{S}(\mathcal{M})$ denotes the number of {\it outcomes} of all the measurements that we can do.  At first glance, this does not look useful, as $\mathcal{S}(\mathcal{M})$ is extremely large.  Typically, however, we expect it to be insignificant compared to $\sqrt{d_{\eff}}$ \cite{Short10}.  Remember that we expect $d_{\eff}$ to be exponentially large with the system size, whereas $\mathcal{S}(\mathcal{M})$ is determined by the measurements we can do in practice.  For macroscopic systems, $d_{\eff}$ is exponential in $O(10^{23})$.  On the other hand, measurement settings and results have to be recorded; even with a computer that has a petabyte of storage space on the hard disk, we only have about $10^{16}$ bits of data.\footnote{For comparison, the large hadron collider in CERN generates about $30$ petabytes of data each year \cite{CERN14}.}  At best this would allow us to record $10^{16}$ possible measurement outcomes.

Therefore, with the assumption of non-degenerate energy gaps, we see equilibration for realistic measurements on large systems over the interval $[0,\infty)$.

We can extend this to finite time intervals and Hamiltonians that may have degenerate energy gaps, provided there are not too many.  To prove this, we need to incorporate the results of theorem \ref{th:exptval}, as done in \cite{SF12,MGLFS14}.

\begin{theorem}
\label{th:syseqmeas}
Take a system in the state state $\rho(t)$ that evolves via a Hamiltonian with $d_E$ distinct energies.  For any $\varepsilon >0$ and time $T >0 $, the average distinguishability of $\rho(t)$ from $\omega$ over the interval $[0,T]$ using measurements in the set $\cM$ is bounded by
\begin{equation}
\label{eq:C1}
\left<D_{\mathcal{M}}(\rho(t),\omega)\right>_{T}\leq \frac{\mathcal{S}(\mathcal{M})}{4\sqrt{d_{\eff}}}\sqrt{\frac{5\pi}{2}N(\varepsilon) \left[\frac{3}{4}  +  \frac{1}{\varepsilon T} \right]},
\end{equation}
where $\mathcal{S}(\mathcal{M})$ is the number of measurement outcomes in $\cM$.
\end{theorem}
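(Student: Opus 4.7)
The plan is to reduce the claim, via the operational definition of distinguishability, to a statement about fluctuations of expectation values of projectors, which we can then bound with Theorem \ref{th:exptval}. Recall that
\begin{equation}
D_{\mathcal{M}}(\rho(t),\omega) = \max_{\{P_j\}\in\mathcal{M}} \tfrac{1}{2}\sum_j \bigl|\tr{P_j\rho(t)}-\tr{P_j\omega}\bigr|.
\end{equation}
Since every term in the inner sum is non-negative and each measurement in $\mathcal{M}$ contributes only its own outcomes, the $\max$ is bounded by the single sum over \emph{all} measurement outcomes, giving
\begin{equation}
D_{\mathcal{M}}(\rho(t),\omega) \leq \tfrac{1}{2}\sum_{P} \bigl|\tr{P(\rho(t)-\omega)}\bigr|,
\end{equation}
where $P$ now runs over the $\mathcal{S}(\mathcal{M})$ outcome projectors in $\mathcal{M}$. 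This linearises the quantity we need to time-average, at the cost of a union-bound-style factor of $\mathcal{S}(\mathcal{M})$.

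Next, I would time-average both sides and use the Cauchy--Schwarz (Jensen) inequality $\langle |x(t)|\rangle_T \leq \sqrt{\langle |x(t)|^2\rangle_T}$, outcome-by-outcome, to pass from averaged absolute fluctuations to the square-root of averaged squared fluctuations:
\begin{equation}
\langle D_{\mathcal{M}}(\rho(t),\omega)\rangle_T \leq \tfrac{1}{2}\sum_P \sqrt{\bigl\langle |\tr{P(\rho(t)-\omega)}|^2\bigr\rangle_T}.
\end{equation}
Now each summand sits in exactly the form controlled by Theorem \ref{th:exptval}. Applying that theorem with $A=P$ and using $\|P\|\leq 1$ for a projector, every term inside the square root is bounded by $\frac{N(\varepsilon)}{d_{\eff}}\cdot \frac{5\pi}{2}\bigl[\tfrac{3}{4}+\tfrac{1}{\varepsilon T}\bigr]$. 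Pulling this outside the sum gives a factor of $\mathcal{S}(\mathcal{M})$ from counting the outcomes, together with the square-root factor appearing in the theorem statement.

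The main obstacle is bookkeeping: one wants to bound the $\max$ over measurements without paying more than $\mathcal{S}(\mathcal{M})$, and simultaneously exchange the order of the $\max$, the sum over outcomes, the absolute value, and the time average $\langle\cdot\rangle_T$ while preserving a clean quadratic quantity on which Theorem \ref{th:exptval} can be applied. The sum-of-outcomes bound above is the crudest way to do this; a more delicate use of Cauchy--Schwarz within each individual measurement (exploiting $\sum_j \tr{P_j(\rho(t)-\omega)} = 0$, so that the $\ell^1$ norm over outcomes is controlled by $\sqrt{k}$ times the $\ell^2$ norm, where $k$ is the number of outcomes of the particular measurement) should recover the tighter constant of $1/4$ in front in \eqref{eq:C1}, matching the infinite-time bound \eqref{eq:E1} of \cite{Short10} in the large-$T$, non-degenerate-gap limit. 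Once the constant is tightened, the extension from pure to mixed initial states proceeds exactly as in Theorem \ref{th:exptval}, by purifying $\rho(0)$ on $\mathcal{H}\otimes\mathcal{H}$, evolving with $H\otimes I$, and observing that $\mathcal{S}(\mathcal{M})$, $d_{\eff}$ and $N(\varepsilon)$ are all unchanged under this dilation.
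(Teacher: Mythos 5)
Your argument follows essentially the same route as the paper's proof: bound the maximum over measurements by the sum over all outcomes, move the time average inside via Jensen's inequality to obtain root-mean-square fluctuations, and apply Theorem \ref{th:exptval} outcome by outcome. However, as written it only delivers the prefactor $\mathcal{S}(\mathcal{M})/2$, not the claimed $\mathcal{S}(\mathcal{M})/4$, because you bound each outcome operator by $\|P\|\leq 1$. The missing ingredient is a centering trick: replace each POVM element $M_a$ by $\tilde{M}_a = M_a - \frac{1}{2}\openone$. This leaves the fluctuation $\tr{M_a\rho(t)}-\tr{M_a\omega}$ unchanged, since $\tr{\rho(t)}=\tr{\omega}=1$, but reduces the operator norm to $\|\tilde{M}_a\|\leq\frac{1}{2}$; feeding $\|\tilde{M}_a\|^2\leq\frac{1}{4}$ into Theorem \ref{th:exptval} yields the extra factor of $\frac{1}{2}$ after the square root and hence the stated $\mathcal{S}(\mathcal{M})/4$.

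Your proposed alternative repair does not close this gap. The zero-sum constraint $\sum_j\tr{P_j(\rho(t)-\omega)}=0$ does not improve the $\ell^1$--$\ell^2$ comparison over outcomes: the extremal configuration for $\|x\|_1\leq\sqrt{k}\,\|x\|_2$, namely entries of equal magnitude, can be realized with alternating signs and so already satisfies the constraint. Carrying out that Cauchy--Schwarz step and then bounding $\sum_a\langle x_a^2\rangle_T$ term by term returns exactly the $\mathcal{S}(\mathcal{M})/2$ you already had. A minor further point: Theorem \ref{th:exptval} is already stated for arbitrary mixed $\rho(t)$ (the purification is internal to its proof), so no additional dilation argument is needed at the end of this proof.
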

\begin{proof}
This proof follows similar steps to a proof in \cite{Short10}, but incorporates the improved bound from theorem \ref{th:exptval}.

First, 
\begin{equation}
\begin{split}
\av{D_\cM (\rho(t), \omega)}_T &=  \av{ \max_{M(t) \in \cM} D_{M(t)} (\rho(t), \omega)}_T\\
&\leq  \sum_{M \in \cM} \av{ D_M (\rho(t), \omega)}_T\\
&=\frac{1}{2}  \sum_{M \in \cM}  \sum_a \av{ | \tr{M_a \rho(t) }  - \tr{M_a \omega} | }_T\\
&\leq \!\frac{1}{2} \! \sum_{M \in \cM}\! \sum_a\! \sqrt{ \av{ \big( \tr{M_a \rho(t) }  \!-\! \tr{M_a \omega} \big)^2 } }_T.
\end{split}
\end{equation}
Next, define $\tilde{M}_a= M_a - \frac{1}{2}I$ for all POVM elements $M_a$ so that $\|\tilde{M}_a\| \leq \frac{1}{2}$.  Then
\begin{equation}
 \begin{split}
\av{D_\cM (\rho(t), \omega)}_T &\leq \frac{1}{2}\!  \sum_{M \in \cM}\! \sum_a\! \sqrt{ \av{ \big( \tr{\tilde{M}_a \rho(t)}  \!-\! \tr{\tilde{M}_a \omega} \big)^2 } }_T\\
&\leq  \frac{1}{2}\sum_{M \in \cM} \sum_a \sqrt{\frac{5\pi}{2}\frac{ \|
\tilde{M}_a \|^2 N(\varepsilon)}{d_{\eff}}\left[\frac{3}{4}  +  \frac{1}{\varepsilon T}\right]}\\
&\leq  \frac{{\cal S}(\mM)}{4\sqrt{d_{\eff}}}\sqrt{\frac{5\pi N(\varepsilon)}{2}\left[\frac{3}{4}+\frac{1}{\varepsilon T}\right]},
\end{split}
\end{equation}
where ${\cal S}(\mM)$ is the sum of the number of outcomes for all measurements in $\mM$.
\end{proof}

\begin{compactwrapfigure}{r}{0.49\textwidth}
\centering
\begin{minipage}[r]{0.47\columnwidth}%
\centering
    \resizebox{6.5cm}{!}{\includegraphics{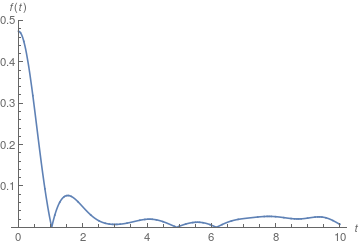}}
    \footnotesize{\caption[Subsystem equilibration on a spin chain]{Equilibration on a spin chain with seven sites evolving via the Heisenberg Hamiltonian with random couplings.  The initial state is $\ket{\uparrow_x}\bra{\uparrow_x\!}\otimes \ket{\uparrow_x}\bra{\uparrow_x\!}\otimes \openone$ normalized to have trace one.  The curve shows the trace distance between the state of the first spin at time $t$ and its time average.}}
    \label{fig:0090}
\end{minipage}
\end{compactwrapfigure}

Similarly, we can employ the bound in theorem \ref{th:exptval} to derive a finite time analogue of the main theorem in \cite{LPSW09}, which shows that over infinite timescales subsystems equilibrate, provided they are small compared to the environment.  The main analysis in \cite{LPSW09} culminates in the formula
\begin{equation}
\label{eq:LPSW}
\left<D\left(\rho_{S}(t),\omega_S\right)\right>_{\infty}\leq\frac{1}{2}\sqrt{\frac{d_{S}^2}{d_{\eff}}},
\end{equation}
where $d_S$ denotes the subsystem dimension, $\rho_S(t)$ is the subsystem state at time $t$ and $\omega_S$ is its average over $[0,\infty)$.  This bound tells us that subsystems equilibrate with respect to all measurements over the interval $[0,\infty)$.  Notice that there is no assumption on the number of measurements we can do; the result is that, allowing any measurement on the subsystem, its state is indistinguishable from $\omega_S$ for most of $[0,\infty)$.  For this bound to be useful, we need the subsystem dimension to be small compared to $\sqrt{d_{\eff}}$, which requires that the dimension of the environment's state space $d_B$ be much larger than the subsystem's.  This is necessary because $d_{\eff}\leq d_S d_B$.  The assumption of nondegenerate energy gaps is also necessary.

Now let us extend this theorem to finite times and replace the assumption of nondegenerate energy gaps by the requirement that the degeneracy of the most degenerate gap is not too big.  This is based on a theorem in \cite{SF12}.
\begin{theorem}
\label{th:subeq}
Take a system evolving via a Hamiltonian that has $d_E$ distinct energies.  For any $\varepsilon >0$ and time $T >0 $,
the trace distance between the state of the subsystem at time $t$, given by $\rho_S(t)$, and its infinite-time average $\omega_S = \av{\rho_S(t)}_{\infty}$ averaged over the interval $[0,T]$ is bounded above by
\begin{equation} \label{eq:C2}
\left<D\left(\rho_S(t),\omega_S\right)\right>_{T}
\leq \frac{1}{2}\sqrt{\frac{d_{S}^2}{d_{\eff}}\frac{5\pi N(\varepsilon)}{2}\left[\frac{3}{4}+\frac{1}{\varepsilon T}\right]},
\end{equation}
where $d_S$ is the dimension of the subsystem.
\end{theorem}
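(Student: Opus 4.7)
The plan is to extend the infinite-time subsystem equilibration bound of \cite{LPSW09} to finite intervals by substituting the finite-time expectation-value bound of Theorem \ref{th:exptval} for Reimann's infinite-time bound. The first step is to reduce the trace distance to the Hilbert--Schmidt norm: since $\rho_S(t) - \omega_S$ is Hermitian on a $d_S$-dimensional space, Cauchy--Schwarz gives $D(\rho_S(t), \omega_S) \leq \tfrac{1}{2}\sqrt{d_S}\,\|\rho_S(t) - \omega_S\|_2$, and Jensen's inequality then yields
\begin{equation}
\av{D(\rho_S(t), \omega_S)}_T \leq \tfrac{1}{2}\sqrt{d_S}\,\sqrt{\av{\|\rho_S(t) - \omega_S\|_2^2}_T}.
\end{equation}
The whole task therefore reduces to bounding $\av{\|\rho_S(t) - \omega_S\|_2^2}_T$ by $(d_S/d_{\eff})\cdot\tfrac{5\pi N(\varepsilon)}{2}[\tfrac{3}{4}+\tfrac{1}{\varepsilon T}]$.

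Next I would fix a Hilbert--Schmidt orthonormal basis $\{ F_k \}_{k=1}^{d_S^2}$ of Hermitian operators on $\mH_S$ (so $\tr{F_k F_l} = \delta_{kl}$) and expand $\|\rho_S(t) - \omega_S\|_2^2 = \sum_k |c_k(t)|^2$, with $c_k(t) = \tr{(F_k \otimes I_B)(\rho(t) - \omega)}$. A naive term-by-term application of Theorem \ref{th:exptval} using $\|F_k \otimes I_B\| \leq \|F_k\|_2 = 1$ and summing the $d_S^2$ contributions produces a bound of order $d_S^2/d_{\eff}$ on the sum, i.e.\ $\sqrt{d_S^3/d_{\eff}}$ in the final answer --- too weak by a factor $\sqrt{d_S}$. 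The crucial observation is to halt the proof of Theorem \ref{th:exptval} one step earlier and use the intermediate inequality
\begin{equation}
\av{|c_k(t)|^2}_T \leq \|M\|\,\tr{(F_k \otimes I_B)\,\omega\,(F_k \otimes I_B)\,\omega},
\end{equation}
rather than the weakening $\tr{A\omega A^\dagger \omega} \leq \|A\|^2 \tr{\omega^2}$. The prefactor $\|M\|$ still obeys the bound $\tfrac{5\pi}{2} N(\varepsilon)[\tfrac{3}{4}+\tfrac{1}{\varepsilon T}]$ that is established inside the proof of Theorem \ref{th:exptval}.

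Summing over the basis is then the heart of the argument. Using the completeness identity $\sum_k (F_k)_{ab}(F_k)_{cd} = \delta_{ad}\delta_{bc}$, which follows from HS-orthonormality together with Hermiticity of the $F_k$, a short index calculation collapses the sum to
\begin{equation}
\sum_k \tr{(F_k \otimes I_B)\,\omega\,(F_k \otimes I_B)\,\omega} = \mathrm{tr}\!\left[\omega_B^2\right],
\end{equation}
where $\omega_B = \mathrm{tr}_S[\omega]$ is the environment-reduced time-average state. Applying Cauchy--Schwarz to the matrix elements $(\omega_B)_{jl} = \sum_a \omega_{(aj)(al)}$ gives $\mathrm{tr}[\omega_B^2] \leq d_S\,\mathrm{tr}[\omega^2] = d_S/d_{\eff}$. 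Feeding this back into the Hilbert--Schmidt estimate yields precisely the claimed inequality, and the extension from pure to mixed initial states is handled by purification exactly as at the end of the proof of Theorem \ref{th:exptval}.

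The main obstacle is resisting the temptation to apply Theorem \ref{th:exptval} as a black box to each basis coefficient, which loses a spurious $\sqrt{d_S}$. What saves the $d_S^2$ scaling inside the square root is that the sum $\sum_k F_k\,(\cdot)\,F_k$ acts as a partial trace in disguise, so the aggregated bound collapses not to $d_S^2\,\tr{\omega^2}$ but to $\mathrm{tr}[\omega_B^2]$, which is small for precisely the same reason that $\tr{\omega^2}$ is small: a high effective dimension.
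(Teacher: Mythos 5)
Your proof is correct, but it takes a genuinely different route from the paper's at the decisive step. The paper also expands $\rho_S(t)-\omega_S$ in a Hilbert--Schmidt orthonormal operator basis and uses $\|B\|_1\leq\sqrt{d_S}\|B\|_2$, but it then applies theorem \ref{th:exptval} as a black box to every coefficient. What makes that work is the specific choice of basis: the Schwinger operators $F_{(k,l)}=\tfrac{1}{\sqrt{d_S}}\sum_n e^{2\pi i nk/d_S}\ket{(n+l)\bmod d_S}\bra{n}$ are unitaries divided by $\sqrt{d_S}$, so each has \emph{operator} norm $1/\sqrt{d_S}$ while remaining HS-normalized, and $\sum_{k,l}\|F_{(k,l)}\|^2=d_S^2\cdot\tfrac{1}{d_S}=d_S$ gives the $d_S^2/d_{\eff}$ scaling directly. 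So your diagnosis that the black-box application inevitably loses a $\sqrt{d_S}$ is an artifact of insisting on a Hermitian basis, for which one can only bound $\|F_k\|\leq\|F_k\|_2=1$; it is not a flaw of the modular approach itself. Your fix --- halting the proof of theorem \ref{th:exptval} at the intermediate inequality $\av{|c_k|^2}_T\leq\|M\|\tr{A\omega A^{\dagger}\omega}$, collapsing the sum via the completeness relation to $\tr{\omega_B^2}$, and then using $\tr{\omega_B^2}\leq d_S\tr{\omega^2}$ --- is sound (the index computation and the Cauchy--Schwarz step both check out, and purification still preserves $d_{\eff}$ so the mixed-state extension goes through). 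It is in fact closer in spirit to the original argument of \cite{LPSW09}, and it buys you something the paper's version does not state: the sharper intermediate bound $\av{D}_T\leq\tfrac{1}{2}\sqrt{d_S\,\|M\|\,\tr{\omega_B^2}}$ in terms of the environment's effective dimension $1/\tr{\omega_B^2}$. The price is losing modularity --- you must reopen the proof of theorem \ref{th:exptval} --- which the paper avoids entirely by a cleverer basis.
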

\begin{proof}
Again, this proof uses the basic argument from a proof in \cite{Short10}, but takes into account the improved results of theorem \ref{th:exptval}.

There is a particularly useful orthonormal basis\footnote{By saying that these are orthonormal, we mean with respect to the inner product $(A,B)=\tr{A^{\dagger}B}$.} of operators on the subsystem's Hilbert space from \cite{Schwinger60}.  These are the $d_S^2$ operators
\begin{equation}
F_{(k,l)} =\frac{1}{\sqrt{d_S}} \sum_{n} e^{\frac{2 \pi i n k}{d_S}} \ket{(n+l)\, \textrm{mod}\, d_S} \bra{n},
\end{equation}
where $l,k \in \{0,1,\ldots d_S-1 \}$, and the states $\ket{n}$ are an arbitrary orthonormal basis on the subsystem. 

We can expand the subsystem states in this operator basis so that $(\rho_S(t) - \omega_S) = \sum_{k,l} c_{(k,l)}(t) F_{(k,l)}$.  Now recall that the trace distance is $D(\rho, \sigma) = \f{1}{2}\textrm{tr} | \rho - \sigma |$.  Then
\begin{eqnarray}
\av{D(\rho_S(t), \omega_S)}_T &=& \frac{1}{2} \av{\textrm{tr} \big|\sum_{k,l} c_{(k,l)}(t) F_{(k,l)} \big|}_T  \nonumber \\
&\leq& \frac{1}{2} \av{ \sqrt{ d_S\,  \textrm{tr} \big|\sum_{k,l} c_{(k,l)}(t) F_{(k,l)} \big|^2}}_T  \nonumber \\
&\leq& \frac{1}{2}  \sqrt{ d_S \sum_{k_i,l_i}  \av{c_{(k_2,l_2)}(t) c^*_{(k_1,l_1)}(t) }_T \tr{F_{(k_1,l_1)}^{\dag} F_{(k_2,l_2)} }}  \nonumber \\
&=& \frac{1}{2} \sqrt{ d_S \sum_{k,l} \av{|c_{(k,l)}(t)|^2}_T} \nonumber \\
&=& \frac{1}{2} \sqrt{ d_S \sum_{k,l} \av{|\tr{(\rho_S(t) - \omega_S) F^{\dag}_{(k,l)}}|^2}_T} \nonumber \\
\end{eqnarray}
The second line followed by using the inequality for matrix norms $\|B\|_1\leq \sqrt{N}\|B\|_2$, where $B$ is an $N\times N$ matrix, $\|B\|_1=\textrm{tr}|B|$ and $\|B\|_2=\sqrt{\tr{B^{\dagger}B}}$ \cite{Horn85}.  By applying theorem \ref{th:exptval} we get
\begin{eqnarray}
\av{D(\rho_S(t), \omega_S)}_T &\leq& \frac{1}{2} \sqrt{ d_S \sum_k \|F^{\dag}_k \|^2\frac{5\pi N(\varepsilon)}{2 d_{\eff}}\left[\frac{3}{4}+\frac{1}{\varepsilon T}\right]} \nonumber \\
&\leq& \frac{1}{2} \sqrt{\frac{d_S^2}{d_{\eff}}\frac{5\pi N(\varepsilon)}{2}\left[\frac{3}{4}+\frac{1}{\varepsilon T}\right]}.
\end{eqnarray}
\end{proof}

We can obtain a comparison to the previous infinite time results of equations (\ref{eq:E1}) and (\ref{eq:LPSW}) by choosing $\varepsilon$ equal to the minimum spacing between (non degenerate) energy gaps.  This means
\begin{equation}
\varepsilon = \varepsilon_{\min} = \min_{\alpha, \beta} \{ |G_{\alpha} - G_{\beta}| : G_{\alpha} \neq  G_{\beta} \}.
\end{equation}
Setting $\varepsilon= \varepsilon_{\min}$, it follows that $N(\varepsilon_{\min})$ is equal to the degeneracy of the most degenerate energy gap $D_G$.  Substituting this into equation (\ref{eq:C2}),
\begin{equation}
\left<D\left(\rho_S(t),\omega_S\right)\right>_{T} \leq \frac{1}{2}\sqrt{\frac{d_{S}^2}{d_{\eff}}\frac{5\pi D_G}{2}\left[\frac{3}{4}+\frac{1}{\varepsilon_{\min} T}\right]}.
\end{equation}
Taking the limit as $T \rightarrow \infty$, we recover a version of equation (\ref{eq:LPSW}) that applies to any Hamiltonian.
\begin{equation}
\left<D\left(\rho_S(t),\omega_S\right)\right>_{\infty} \leq \frac{1}{2}\sqrt{\frac{d_{S}^2}{d_{\eff}}\frac{5\pi D_G}{2}}.
\end{equation}
This is more general than equation (\ref{eq:LPSW}), in that it allows some degenerate energy gaps, provided $D_G$ is not too big.  Still, if the non degenerate gaps condition is satisfied, then $D_G=1$, and we get something weaker by a constant factor than the previous result in equation (\ref{eq:LPSW}).  Had we used a straightforward time average in the proof of theorem \ref{th:exptval}, instead of a weighted average, we would have recovered (\ref{eq:LPSW}) exactly as $T\rightarrow \infty$.  The disadvantage of this is that, for finite $T$, the time dependent term would be much larger (by a factor of $\log_2 d_E$).

So the end result is a bound showing that subsystems equilibrate over finite intervals even with degenerate energy gaps, provided there are not too many.  We know that $D_G < d_E$, with the maximal value $D_G=d_E-1$ occurring when consecutive energy levels are all equally spaced, which would be the case for a harmonic oscillator with an energy cut-off.  But it is more than reasonable to expect that almost any nontrivial Hamiltonian will have a much smaller value of $D_G$.

\subsection{Equilibration Times with Physical Spectra}
\label{sec:Equilibration Times with Physical Spectra}
The next step is to look at equilibration times.  Equilibration has occurred when
\begin{equation}
\label{eq:eqtime32}
 \frac{N(\varepsilon)}{d_{\eff}}\frac{5\pi}{2}\bigg[\frac{3}{4} + \frac{1}{\varepsilon T} \bigg]\ll 1.
\end{equation}
For equilibration of subsystems, as in theorem \ref{th:subeq}, we need the product of this with $d_S^2$ to be small.  For equilibration of systems with a finite number of measurements, as in theorem \ref{th:syseqmeas}, we need the product of this with $\m{S}(\m{M})^2$ to be small.

Let us assume that both $\m{S}(\m{M})^2$ and $d_S^2$ are fixed, as our measurement capabilities will not grow with the system size for large systems.  Because these are fixed, we need not worry about how they scale with the system size.  Other quantities will scale, however.  We expect the effective dimension to grow exponentially with system size, which would make the bounds better, but we also expect $\varepsilon_{\min}$ to become exponentially smaller as the system size grows.  Additionally, for fixed $\varepsilon$, $N(\varepsilon)$ should grow exponentially with increasing system size.  To obtain the optimal bound, some balance must be found when choosing $\varepsilon$.

And to prove equilibration occurs at all, then the left hand term in equation (\ref{eq:eqtime32}),
\begin{equation}
\label{eq:inftime}
 \frac{N(\varepsilon)}{d_{\eff}}\frac{15\pi}{8},
\end{equation}
must be small.  Then the equilibration time goes like
\begin{equation}
 T\simeq \frac{1}{\varepsilon}.
\end{equation}
It is hard to be to precise here, but we have some freedom, because of how large $d_{\eff}$ is, to choose $\varepsilon$ bigger to make the timescale bound shorter, while keeping (\ref{eq:inftime}) small enough to ensure equilibration occurs.

Let us estimate $N(\varepsilon)$ for a system with an exponential distribution of energies.  In this case, the probability of having a gap of size $G$ is
\begin{equation}
\begin{split}
 P(G)=\int_G^{\Delta} \textrm{d}E\, p(E)p(E-G) & =\alpha^2 \int_G^{\Delta} \textrm{d}E\, e^{\beta E}e^{\beta (E-G)}\\
 & = \alpha^2\frac{e^{2\beta\Delta-\beta G}-e^{\beta G}}{2\beta}\\
 & \simeq \frac{\beta}{2}e^{-\beta G},
 \end{split}
\end{equation}
where the last step follows because $\alpha= \f{\beta}{e^{\beta\Delta}-1}$ and $e^{\beta\Delta}\gg 1$.  Then, because there are $d_E(d_E-1)$ gaps in total, $N(\varepsilon)$ is given by
\begin{equation}
\begin{split}
\label{eq:Nepa}
 N(\varepsilon) & \simeq d_E(d_E-1)\int_{-\varepsilon/2}^{\varepsilon/2}\textrm{d}G\, \frac{\beta e^{-\beta G}}{2}\\
 & \simeq d_E^2 \frac{e^{\beta \varepsilon/2}-e^{-\beta \varepsilon/2}}{2}\simeq d_E^2\frac{\beta\varepsilon}{2},
 \end{split}
\end{equation}
where the last step follows by assuming $\beta\varepsilon\ll 1$.  For equilibration to occur, we need (\ref{eq:inftime}) to be small, but the effective dimension can be at most $d_E$.  This means that we need to ensure $N(\varepsilon)$ is smaller than $d_E$ by choosing $\varepsilon$ to be sufficiently small.  But according to equation (\ref{eq:Nepa}), this is only possible if $\varepsilon\leq \f{1}{d_E}$.  This is disastrous because this is exponentially small in the system size.  The problem is that, even though most gaps are not exponentially small, there are still exponentially many of them that are.  Therefore, the equilibration time goes like $\f{1}{d_E}$ at best, which is extraordinarily long.

We should not be too pessimistic, however.  Obviously, the more general the hypotheses, the bigger the timescale bounds are going to be.  And, as far as scaling goes, these bounds are about as good as we can find without extra assumptions.  The example in the following section justifies this.

\subsection{Slow Equilibration}
\label{sec:Slow Equilibration}
In this section, we will reproduce an example of a system that equilibrates very slowly from \cite{MGLFS14}.  This demonstrates that quantum systems that equilibrate do not all do so quickly, so further restrictions will be needed to arrive at more realistic bounds on the equilibration time for physical systems.
\begin{theorem}
We can construct a pure state and a measurement consisting of two projectors $P$ and $\openone -P$ such that the system equilibrates extremely slowly.  Quantitatively, for any positive integer $K$ and $\epsilon>0$ and all times $t$ satisfying
 \begin{equation}
  t\in[0,\frac{2\epsilon K}{\sigma_E}],
 \end{equation}
the system is far from its infinite-time average state, meaning
 \begin{equation}
 D_{\m{M}}\left(\rho(t),\omega\right) \geq 1-\epsilon^2-\sqrt{\frac{K}{d_{\eff}}},
 \end{equation}
 where $\sigma_E$ is the standard deviation of the energy in the state $\rho(t)$ and $\m{M}$ denotes the measurement consisting of $P$ and $\openone-P$.  But the system still equilibrates over a long enough timescale, meaning
\begin{equation}
 \av{D_{\m{M}}\left(\rho(t),\omega\right)}_{\infty} \leq 2\sqrt{\frac{K}{d_{\eff}}}.
\end{equation}
\end{theorem}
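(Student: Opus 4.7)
The plan is to construct an explicit example (Hamiltonian, initial state, and measurement) that demonstrates slow equilibration by exploiting a rank-$K$ projector $P$ whose range almost contains the trajectory $|\psi(t)\rangle$ for times up to $\sim K/\sigma_E$. The key observation is that if $|\psi(0)\rangle$ lies in the range of $P$ and leaks out only slowly, then $\tr{P\rho(t)}$ stays close to one while $\tr{P\omega}$ is forced to be small by a Cauchy--Schwarz argument; since the measurement $\m{M} = \{P,\openone - P\}$ satisfies $D_\m{M}(\rho(t),\omega) = |\tr{P\rho(t)} - \tr{P\omega}|$, both bounds in the theorem will follow by combining these two facts.

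Concretely, I would take $|\psi(0)\rangle = d_{\eff}^{-1/2}\sum_{n\in S}|E_n\rangle$, a uniform superposition of $d_{\eff}$ energy eigenstates whose eigenvalues are clustered in a window of width $O(\sigma_E)$ around the mean energy $\langle H\rangle$. Writing $\bar H = H - \langle H\rangle\openone$, let $P$ be the projector onto the Krylov subspace $\m{V}_K = \mathrm{span}\{|\psi(0)\rangle, \bar H|\psi(0)\rangle,\ldots,\bar H^{K-1}|\psi(0)\rangle\}$, which has rank at most $K$. The first bound,
\begin{equation*}
\tr{P\omega} \leq \sqrt{\tr{P^2}\,\tr{\omega^2}} \leq \sqrt{K/d_{\eff}},
\end{equation*}
follows from Cauchy--Schwarz in the Hilbert--Schmidt inner product, using $\tr{P^2} = \tr{P} \leq K$ and $\tr{\omega^2} = \sum_n |c_n|^4 = 1/d_{\eff}$.

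For the short-time bound, Taylor-expand $|\psi(t)\rangle = e^{-i\langle H\rangle t}\,e^{-i\bar H t}|\psi(0)\rangle$: the first $K$ terms lie in $\m{V}_K$ by construction, so
\begin{equation*}
\|(\openone - P)|\psi(t)\rangle\| \;\leq\; \sum_{k=K}^{\infty}\frac{\|\bar H^k|\psi(0)\rangle\|\,t^k}{k!} \;\leq\; \sum_{k=K}^{\infty}\frac{(Rt)^k}{k!},
\end{equation*}
where $R = \max_{n\in S}|E_n - \langle H\rangle| = O(\sigma_E)$ by the choice of $S$. Bounding the tail via $k! \geq (k/e)^k$ and summing the resulting geometric series shows that for $t \leq 2\epsilon K/\sigma_E$ (and $\epsilon$ small enough that the series converges) the leakage is at most $\epsilon$, giving $\tr{P\rho(t)} = 1 - \|(\openone - P)|\psi(t)\rangle\|^2 \geq 1 - \epsilon^2$. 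Combining with the bound on $\tr{P\omega}$ yields $D_\m{M}(\rho(t),\omega) \geq 1 - \epsilon^2 - \sqrt{K/d_{\eff}}$. The infinite-time statement then follows by applying Theorem \ref{th:exptval} with $A = P$ in the limit $T \to \infty$, combined with Jensen's inequality $\av{|X|}_\infty \leq \sqrt{\av{|X|^2}_\infty}$, which yields $\av{D_\m{M}(\rho(t),\omega)}_\infty$ of the required order.

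The main obstacle is getting the Taylor remainder to decay on the correct time scale $K/\sigma_E$. The naive estimate $\|\bar H^k|\psi(0)\rangle\| \leq \|\bar H\|^k$ is far too crude, because $\|\bar H\|$ can be enormously larger than $\sigma_E$; replacing it by $R^k$ is only legitimate because $|\psi(0)\rangle$ was chosen with spectral support in an $O(\sigma_E)$ window, so only the restriction of $\bar H$ to $\mathrm{span}\{|E_n\rangle : n \in S\}$ acts nontrivially. Refining the crude geometric-series tail to give exactly the constants stated in the theorem may require replacing the Taylor bound with an optimal polynomial (Chebyshev) approximation of $e^{-i\bar H t}$ on the spectrum of $\bar H$ restricted to $S$, or alternatively choosing the energies inside $S$ to be equally spaced so that evolution within $\m{V}_K$ can be analysed exactly as a finite Fourier sum.
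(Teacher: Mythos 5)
Your construction is a genuinely different route to the same end. Where the paper spans the trapping subspace by $K$ \emph{snapshots} of the trajectory, $\mathcal{H}_T=\mathrm{span}\{\ket{\psi([2j+1]\tau)}\}_{j=0}^{K-1}$ with $\tau=\epsilon/\sigma_E$, you span it by the Krylov vectors $\{\bar H^k\ket{\psi(0)}\}_{k<K}$. Both give a rank-$\le K$ projector, and the $\tr{P\omega}\le\sqrt{K/d_{\eff}}$ step is identical. The paper's snapshot version buys two things your version does not. First, it works for an \emph{arbitrary} pure initial state: the only estimate needed is $|\langle\psi(t)|\psi(t+\delta t)\rangle|^2\ge 1-\sigma_E^2\,\delta t^2$, which follows from $\cos x\ge 1-x^2/2$ and involves only the variance of $H$, never the spectral diameter. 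Your Taylor-tail bound needs $R=\max_n|E_n-\langle H\rangle|=O(\sigma_E)$, i.e.\ you must engineer the state to have spectral support in an $O(\sigma_E)$ window; for a generic state $R/\sigma_E$ is unbounded and the Krylov argument fails on the claimed timescale. Second, the snapshot argument delivers the clean constants $t\in[0,2\epsilon K/\sigma_E]$ and $1-\epsilon^2$ directly (each $t$ is within $\tau$ of some snapshot), whereas your geometric-series tail only gives the right order and, as you note, needs $\epsilon$ small enough for convergence — the theorem as stated allows any $\epsilon>0$.

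The infinite-time bound is where your argument has a genuine gap. Routing it through Theorem \ref{th:exptval} plus Jensen gives something like $\sqrt{N(\varepsilon)/d_{\eff}}$ up to constants, which (i) is not the stated $2\sqrt{K/d_{\eff}}$, and (ii) silently imports a hypothesis on the degeneracy of energy gaps that the theorem does not assume. No machinery is needed here: since $0\le\tr{P\rho(t)}$ and time-averaging commutes with the trace against $P$, one has
\begin{equation}
\av{D_{\m{M}}(\rho(t),\omega)}_{\infty}\le\av{\tr{P\rho(t)}}_{\infty}+\tr{P\omega}=2\tr{P\omega}\le 2\sqrt{\frac{K}{d_{\eff}}},
\end{equation}
which is exactly the paper's one-line argument and holds with no spectral assumptions. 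You should replace your final step with this.
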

\begin{proof}
Since the initial state is pure, we can choose the energy eigenbasis such that the state only has overlap with one eigenstate corresponding to a given energy.  The state can then be written as
 \begin{equation}
  \ket{\psi(t)}=\sum_n c_n e^{-iE_nt}\ket{n}.
 \end{equation}
Now, for some $\delta t$, which may be negative, we have
\begin{equation}
 \begin{split}
  |\langle\psi(t) |\psi(t+\delta t)\rangle|^2 & =\abs{\sum_{n}^{d_E}|c_n|^2e^{-iE_n\delta t}}^2\\
  & =\sum_{n,m}^{d_E}|c_n|^2|c_m|^2\cos([E_n-E_m]\delta t)\\
  & \geq 1-\frac{\delta t^2}{2}\sum_{n,m}^{d_E}|c_n|^2|c_m|^2(E_n-E_m)^2\\
  & = 1-\delta t^2\left(\av{E^2}-\av{E}^2\right)=1-\sigma_E^2\delta t^2.
 \end{split}
\end{equation}
To get the third line, we needed the inequality $\cos(x)\geq 1-\f{x^2}{2}$.  The result is that, for any $\epsilon$,
\begin{equation}
|\langle\psi(t) |\psi(t+\delta t)\rangle|^2\geq 1-\epsilon^2, 
\end{equation}
provided $|\delta t|\leq\f{\epsilon}{\sigma_E}=\tau$.  The trick now is to define the subspace
\begin{equation}
 \mathcal{H}_{T}=\textrm{span}\left\{\ket{\psi\big([2j+1]\tau\big)}|j\in\{0,..,K-1 \}\right\}.
\end{equation}
This is the span of $K$ snapshots of the state at different times.  Roughly, the idea is that we are tracking the state as it evolves over time.  For any time $t\in[0,2K\tau]$, the state $\ket{\psi(t)}$ is close to a state in this subspace.  So for each $j\in\{0,..,K-1 \}$, we can write the projector onto this subspace as
\begin{equation}
 P=\ket{\psi([2j+1]\tau)}\bra{\psi([2j+1]\tau)}+P^{\prime},
\end{equation}
where $P^{\prime}$ is the projector onto the orthogonal complement of $\ket{\psi([2j+1]\tau)}$ in $\mathcal{H}_{T}$.  Then, for $j$ such that $2j\tau\leq t\leq 2(j+1)\tau$,
\begin{equation}
\begin{split}
 \tr{P\rho(t)} & =\bra{\psi(t)}\left(\ket{\psi([2j+1]\tau)}\bra{\psi([2j+1]\tau)}+P^{\prime}\right)\ket{\psi(t)}\\
 & \geq \abs{\langle\psi(t)|\psi([2j+1]\tau)\rangle}^2\geq 1-\epsilon^2.
 \end{split}
\end{equation}
Then it follows that, with the measurement set $\m{M}=\{P,\openone-P\}$,
\begin{equation}
\begin{split}
 D_{\m{M}}\left(\rho(t),\omega\right) & =\abs{\tr{P(\rho(t)-\omega)}}\\
 & \geq \tr{P\rho(t)}-\tr{P\omega}\\
 & \geq \tr{P\rho(t)}-\sqrt{\tr{P}\tr{\omega^2}}\\
 & \geq 1-\epsilon^2-\sqrt{\frac{K}{d_{\eff}}}.
 \end{split}
\end{equation}
The third line follows from the Cauchy-Schwarz inequality for the Hilbert-Schmidt inner product.  The last line follows because $\tr{\omega^2}\leq\f{1}{d_{\eff}}$ and the rank of $P$ is at most $K$ since $\m{H}_T$ is the span of $K$ vectors.

Furthermore,
\begin{equation}
\begin{split}
 \av{D_{\m{M}}\left(\rho(t),\omega\right)}_{\infty} & =\av{\abs{\tr{P(\rho(t)-\omega)}}}_{\infty}\\
 & \leq \av{\tr{P\rho(t)}+\tr{P\omega}}_{\infty}\\
 & = 2\tr{P\omega}\leq 2\sqrt{\frac{K}{d_{\eff}}}.
 \end{split}
\end{equation}
\end{proof}
We can choose $K=\epsilon^4d_{\eff}$, with $\epsilon$ chosen to be small but such that $K$ is an integer.  This gives a more readable form of the theorem:\ for all times $t$ satisfying
 \begin{equation}
\label{eq:longti}
  t\in[0,\frac{c d_{\eff}}{\sigma_E}],
 \end{equation}
 where $c$ is a constant, the system is far from its infinite-time average state, meaning
 \begin{equation}
 D_{\m{M}}\left(\rho(t),\omega\right) \geq 1-2\epsilon^2,
 \end{equation}
but it still equilibrates over long times since
\begin{equation}
 \av{D_{\m{M}}\left(\rho(t),\omega\right)}_{\infty} \leq 2\epsilon^2.
\end{equation}

So, since the timescale in equation (\ref{eq:longti}) is dominated by $d_{\eff}$, equilibration to $\omega$ takes time exponential in the system size.  So general bounds on the equilibration time will have to take this into account.  Whether or not there is a local Hamiltonian and local measurement that can achieve timescales like this is not clear.

Another way to interpret this theorem is that there are systems that do not equilibrate to their infinite-time average states over physically reasonable timescales.  It is clearly in equilibrium over timescales that are small compared to $\f{c d_{\eff}}{\sigma_E}$ with respect to the measurement in the theorem.  We will revisit this type of temporary equilibration in chapter \ref{chap:Conclusions2}.

\section{The Equilibrium state}
\label{sec:The Equilibrium state}
Even if realistic estimates on the equilibration time are achieved, there is still the question of what the equilibrium state actually is.  Ideally, we would like to understand the conditions under which thermalization occurs, meaning the subsystem has equilibrated to a Gibbs state.  The Gibbs state is
\begin{equation}
 \frac{e^{-\beta H_S}}{Z},
\end{equation}
where $H_S$ is the subsystem Hamiltonian, $\beta$ is the inverse temperature and $Z=\tr{e^{-\beta H_S}}$ normalizes the state.  The importance of the Gibbs state in statistical physics cannot be understated.

The standard setting for studying thermalization consists of a subsystem and environment that have been brought into contact at $t=0$.  So the overall initial state is a product state.  These interact over time and the subsystem is presumed to equilibrate.  There has been some success towards rigorously proving that the equilibrium state in this scenario is indeed the Gibbs state.  In \cite{RGE11}, this was proved, provided the interaction strength satisfied
\begin{equation}
\label{eq:py}
 \|V\|\ll 1/\beta=k_B\bf{T},
\end{equation}
and some other conditions on the initial state were met.\footnote{Incidentally, the proof uses theorem \ref{th:Bhatia} from chapter \ref{chap:Quantum Cellular Automata and Field Theory} to analyze the effects of weak interactions on projectors onto energy subspaces.}  We expect $k_B\bf{T}$ to be approximately the energy per particle.

The condition in equation (\ref{eq:py}) only holds when the interaction strength does not grow with the size of the boundary between the subsystem and the environment, which for local Hamiltonians, is where the interaction term lives.  This suggests that the result of \cite{RGE11} is most useful for one dimensional systems, as in this case the boundary between the subsystem and environment is constant size.  Then it is reasonable for the interaction strength to be bounded above by something that does not scale with the size of the boundary.

There are two important necessary requirements for the Gibbs state to be reached.  Both of these involve some form of initial state independence of the equilibrium state.  The first, which will not concern us here but was studied in \cite{LPSW09}, is independence of the subsystem's equilibrium state on the precise details of the environment's initial state.  The idea is that only coarse grained variables like the temperature of the environment should play a role.  The second requirement, which will concern us here, is independence of the equilibrium state of the subsystem on its initial state.  For the rest of this section, we will just refer to this as initial state independence.

A concept that appears often in discussions of thermalization is integrability.  This is a property of the system's Hamiltonian, though there is no consensus on a precise definition.  There is a clear discussion of the various definitions in \cite{GME11}.  The rough idea, however, is that the model is simple in some sense:\ one definition has that there are locally conserved quantities (arising from local observables that commute with the Hamiltonian).  Another definition sometimes used is that the model is solvable in some sense.  For our purposes, however, it will be enough that all non interacting models are certainly integrable.

The main result of \cite{GME11} is that models that are far from being integrable may not have initial state independence.  In a similar vein, it is interesting to ask whether models close to integrable necessarily always retain some memory of their initial conditions.  Here we will see a simple example where an arbitrarily small interaction removes all memory effects of the subsystem's initial state.

Suppose we have a qubit with energies $0$ and $\nu$, corresponding to the states $\ket{0}$ and $\ket{\nu}$ respectively.  Suppose also that the environment is a harmonic oscillator with energies $E_n=n\nu$.  The non-interacting Hamiltonian is
\begin{equation}
\begin{split}
H_0  =\nu\ket{\nu}\bra{\nu}\otimes\openone_B+\openone_S\otimes\sum_{n=0}^{\infty}n\nu\ket{n}\bra{n}.
\end{split}
\end{equation}
We can rewrite this in a useful way:\ 
\begin{equation}
\begin{split}
 H_0 &  =\ket{0}\bra{0}\otimes\sum_{n=0}^{\infty}n\nu\ket{n}\bra{n}+\ket{\nu}\bra{\nu}\otimes\sum_{n=0}^{\infty}(n+1)\nu\ket{n}\bra{n}\\
 & =\sum_{n=0}^{\infty}(n+1)\nu\bigg(\ket{0}\bra{0}\otimes\ket{n+1}\bra{n+1}+\ket{\nu}\bra{\nu}\otimes\ket{n}\bra{n}\bigg).
\end{split}
\end{equation}
Consider the interaction
\begin{equation}
 V=\lambda\sum_{n=0}^{\infty}\bigg(\ket{0}\bra{\nu}\otimes\ket{n+1}\bra{n}+\ket{\nu}\bra{0}\otimes\ket{n}\bra{n+1}\bigg).
\end{equation}

\begin{compactwrapfigure}{r}{0.33\textwidth}
\centering
\begin{minipage}[r]{0.32\columnwidth}%
\centering
    \resizebox{4cm}{!}{\input{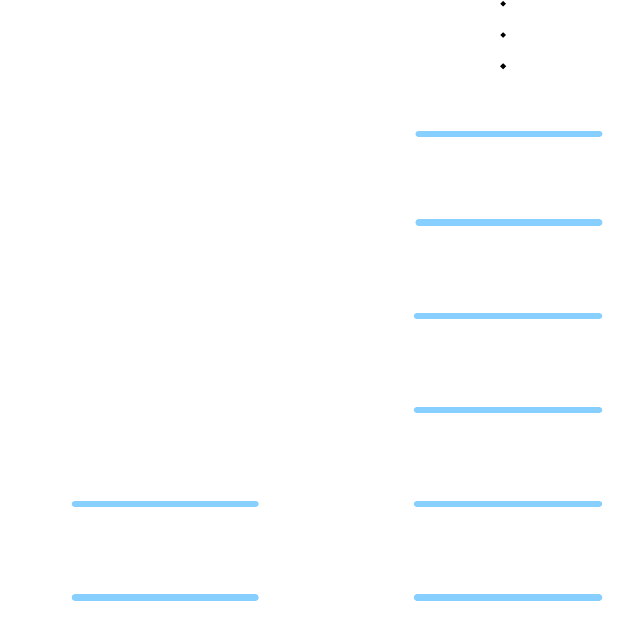_t}}
    \footnotesize{\caption[Energy levels of a qubit and simple harmonic oscillator]{Energy levels of a qubit on the left and a simple harmonic oscillator on the right.}}
    \label{fig:Qubit_Oscillator}
\end{minipage}
\end{compactwrapfigure}

So the new Hamiltonian is block diagonal, with blocks like
\begin{equation}
 \begin{pmatrix}
            (n+1)\nu & \lambda\\
            \lambda & (n+1)\nu\\
           \end{pmatrix},
\end{equation}
when written in the $\ket{0}\ket{n+1}$ and $\ket{\nu}\ket{n}$ basis.  Each term in the sum in $V$ only couples $\ket{0}\ket{n+1}$ and $\ket{\nu}\ket{n}$, so the eigenvectors of $H_0+V$ are superpositions of the eigenvectors of $H_0$ with the same energy.  The new eigenvectors are $\textstyle{\frac{1}{\sqrt{2}}}\big(\ket{0}\ket{n+1}\pm\ket{\nu}\ket{n}\big)$, which have eigenvalues $(n+1)\nu\pm\lambda$.  So now all energy eigenstates are maximally entangled, except the ground state $\ket{0}\ket{0}$.  Therefore, the time average state of the subsystem over $[0,\infty)$ is maximally mixed, unless the initial state has overlap with the ground state.  But this is all independent of $\|V\|=|\lambda|$, provided it is non-zero.  Thus, an arbitrarily small interaction strength leads to initial state independence.

Regardless of how we define integrability, the non-interacting model here is definitely integrable.  And it would be reasonable to guess that, if integrable models do not typically have initial state independence, then neither would models that are close to integrable.  Here, however, we can tune $\lambda$ to make this model arbitrarily close to being integrable, yet the subsystem's equilibrium state is independent of its initial state.  That said, it does not thermalize in general.

There are a few issues about this model that we should take note of.  One is that, while $\|V\|$ does not affect subsystem state independence, it ought to affect the equilibration {\it time} because it affects the gaps between energy levels.

Also, the Hamiltonian of the total subsystem plus environment, even after adding the interaction, has many degenerate energy gaps.  We could get around this by having $\lambda$ depend on $n$.  Then the whole system equilibrates, and the results about initial state independence still hold.

This toy model is a little unnatural.  It would be interesting to see if similar behaviour occurs by locally perturbing spin models.  This would be particularly useful as integrability is generally framed in terms of these models.

\chapter{Conclusions and Open Problems}
\label{chap:Conclusions2}
\begin{center}
{\em I don’t understand the question, and I won’t respond to it.}\\
- Lucille Bluth, Arrested Development
\end{center}
To sum up, by using tools from quantum information, including purification and various distance measures between states, we were able to bound the equilibration time with great generality.  One significant aspect of these bounds was that they were dominated by the small energy gaps.  In fact, a feature of the bounds was that gaps that were exponentially small in the system size dominated the formulas, despite the fact that most gaps are not so small, as we saw in chapter \ref{chap:Background}.  Nevertheless, while the estimates of the equilibration time for systems with physical spectra were long, this is still a valuable first step.  It is clear that the next step should be to examine these bounds with more assumptions on the measurement set and Hamiltonian. 

So at the heart of the issue is the following question:\ what do local observables look like in the energy eigenbasis of a local Hamiltonian?  All the proofs in the previous chapter required us to work in the energy eigenbasis.  Indeed, it is hard to see how time averaging could be done in a local basis, without an entirely different approach as in \cite{CDEO08,CE10}.  That said, understanding the properties of eigenstates of local Hamiltonians is an incredibly difficult question.  One possibility is to use the results of \cite{KLW14}.  There it is proved that almost all eigenstates of local spin Hamiltonians are maximally mixed on small subsystems in the limit as the number of spins goes to infinity.  In this context, however, it is important to bear in mind that physical states are not spread out over all energies.  Instead, they have energy restricted to some small window.  So it is not yet clear how useful the results of \cite{KLW14} are from our point of view.

Understanding how local observables look in the energy eigenbasis of local Hamiltonians should also be useful in order to see when the equilibrium state is the Gibbs state.  After all, the issue here is finding out what the equilibrium state of the subsystem is with respect to measurements on the subsystem itself.

Another possibility with potentially general applicability is to look for a good physically motivated measure on states, Hamiltonians or measurements.  In \cite{MGLFS14}, it was shown that equilibration occurs quickly for a two outcome measurement corresponding to the projectors $P$ and $\openone-P$ when $P$ has low rank compared to $d_{\eff}$.  And one of the take-home messages from this paper was that these fast equilibration results, while interesting, do not appear to apply to physically useful observables:\ for example, a projector onto spin up along the $z$-axis on the first spin on a spin chain has high rank because it acts like the identity on the rest of the system.  So a measure that takes the locality of physical observables into account, if mathematically convenient, should lead to useful results.

\section{Equilibration}
In order to obtain stronger results regarding equilibration, it might also be useful to take more of the practical details of experiments into account.  For one thing, measurements take a finite amount of time, so it seems natural that quickly oscillating terms in the density matrix will average to zero over the course of a measurement.  Of course, it is not these terms that cause the problems in our bounds.  It is the small gaps, which correspond to slowly oscillating terms in the density matrix.  With these in mind, it may help to take the finite duration of experiments into account.

For example, when we look at expectation values, the terms corresponding to the small gaps will change very slowly over the timescales of realistic experiments.\footnote{This really defines what it means for a gap to be small here.}  Let us look at the average of $\exp(-i(E_n-E_m)t)=\exp(-iG_{nm}t)$ over the interval $[0,\tau]$ corresponding to the duration of the experiment.  The time $\tau$ is fixed but need not be the same as the equilibration time.
\begin{equation}
  \frac{1}{\tau}\int_0^{\tau}\!\textrm{d}t\,e^{-iG_{nm}t}=\frac{e^{-iG_{nm}\tau}-1}{-iG_{nm}\tau}
\end{equation}
For all gaps where $|G_{nm}|\tau\ll 1$, this will be very close to one since
\begin{equation}
 \frac{e^{-iG_{nm}\tau}-1}{-iG_{nm}\tau}=1+O(|G_{nm}|\tau).
\end{equation}
Now, it is sufficient to show that the system equilibrates to {\it some} state; it does not have to be the infinite time average state $\omega$.  In fact, over a finite interval $[0,\tau]$ one would expect that, if the system equilibrates, it will do so to the state resulting from averaging over that interval $\omega_{\tau}$.  So it is natural to look at the distance from this state to see equilibration.  This means that we are concerned with the quantity
\begin{equation}
\label{eq:u}
 \tr{\rho(t)A}-\tr{\omega_{\tau}A}=\sum_{n,m}a_{mn}\rho_{nm}\left(e^{-iG_{nm}t}-\frac{e^{-iG_{nm}\tau}-1}{-iG_{nm}\tau}\right).
\end{equation}
Here, because we are only concerned with times during the experiment, $t\in[0,\tau]$, we know that for the small gaps $e^{-iG_{nm}t}=1+O(|G_{nm}|t)$, with $t\leq\tau$.  Then
\begin{equation}
e^{-iG_{nm}t}-\frac{e^{-iG_{nm}\tau}-1}{-iG_{nm}\tau}=O(|G_{nm}\tau|).
\end{equation}
So {\it individual} terms corresponding to small gaps in equation (\ref{eq:u}) are small.  It is certainly true that this idea would lead to better bounds on the equilibration time.  The big question is whether it could lead to bounds that are not exponential in the system size for physical systems.

A related issue is the nature of the equilibrium state over different timescales.  In the previous chapter, our focus was on equilibration to the infinite-time average state $\omega$.  In reality, however, systems may evolve through a series of temporary equilibrium states over time.  Recall the example of the cup of tea from chapter \ref{chap:Background}.

Defining the time average of the state $\rho(t)$ over the interval $[0,T]$ by $\omega_T = \av{\rho(t)}_T$, the timescales over which equilibration occurs to temporary equilibrium states will correspond to values of $T$ with
\begin{equation}
 \frac{\av{ |\tr{\rho(t)A} - \tr{ \omega_T A }|^2 }_T}{\norm{A}^2}\ll 1.
\end{equation}
Now suppose a system has a set of different time intervals $[0,T_i]$ over which it equilibrates to distinguishable equilibrium states $\omega_{T_i}$.  Then the timescale for each equilibration must be much longer than the previous one. This follows because a state cannot be close to $\omega_{T_i}$ for most of $[0,T_i]$ and $\omega_{T_{i+1}}$ for most of $[0,T_{i+1}]$ unless $T_{i+1}$ is much greater than $T_i$.  In particular, this means that, if a system equilibrates to its infinite-time average state, then the timescale for any temporary equilibration to a very different state must be much less than the timescale for equilibration to the infinite-time average state.

In \cite{Kastner11} a lattice model is presented that has an equilibration time that diverges with the number of sites.  The main result is summed up by a theorem.
\begin{theorem}
\label{th:Kastner}
 Take a Hamiltonian on $N$ spins with on-site $Z_n$ terms corresponding to an external magnetic field and interaction terms $Z_nZ_m$ with coupling coefficients decaying like $|n-m|^{-\alpha}$, where $0\leq \alpha < 1$.  Let $A$ be an observable of the form $A=\sum_n a_n X_n$ and let the initial state $\rho(0)$ be diagonal in the local $X_n$ eigenbasis.  It follows that, for any $T$ and small $\delta$, there exists an $N_T$ such that
 \begin{equation}
  |\tr{\rho(t) A}-\tr{\rho(0)A}|<\delta
 \end{equation}
 for all $t\leq T$ and $N>N_T$.  It is important to mention that the $Z_nZ_m$ terms in the Hamiltonian are rescaled for each $N$ in order to make the energy per spin finite.  This rescaling plays a crucial role in the resulting equilibration timescale.
\end{theorem}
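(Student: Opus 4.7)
The plan is to reduce the problem to a single-site calculation by exploiting that $H$ is diagonal in the product $Z$-basis and that $A$ is a sum of single-site $X$ operators. For each $n$, split
\[
H = C_n + Z_n B_n, \qquad B_n = -h - \frac{1}{\mathcal{N}}\sum_{m\neq n} J_{nm} Z_m,
\]
where $C_n$ collects all terms not involving site $n$. Then $C_n$ and $B_n$ commute with each other and with $X_n$, while $\{X_n, Z_n\} = 0$, so a short calculation gives the closed-form Heisenberg evolution
\[
X_n(t) = e^{iHt}X_n e^{-iHt} = \cos(2B_n t)\,X_n - \sin(2B_n t)\,Y_n,
\]
with the trigonometric functions being operators supported on sites $m \neq n$. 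All nontrivial dynamics is generated by the mutually commuting family $\{Z_m: m\neq n\}$, which is precisely what the Kac rescaling will weaken.

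Next I would take the expectation value in $\rho(0)$. Because $\rho(0)$ is diagonal in the local $X$-basis, any nontrivial polynomial in $\{Z_m\}$ has vanishing expectation and $\tr{\rho(0)(\cdot)Y_n}=0$. Expanding $e^{\pm 2iB_n t} = e^{\mp 2iht}\prod_{m\neq n}\bigl[\cos(2tJ_{nm}/\mathcal{N}) \mp iZ_m\sin(2tJ_{nm}/\mathcal{N})\bigr]$ and retaining only the identity component on each site yields
\[
\tr{\rho(t)A} = \sum_n a_n\,\tr{\rho(0)X_n}\,\cos(2ht)\!\prod_{m\neq n}\cos\!\Bigl(\tfrac{2tJ_{nm}}{\mathcal{N}}\Bigr).
\]
An on-site field merely produces trivial Larmor precession; the content of Kastner's theorem is that the $Z_nZ_m$ interactions fail to induce any dephasing on top of this trivial evolution in the $N\to\infty$ limit, so I will work in the natural setting $h=0$ (equivalently, in the rotating frame that absorbs the precession).

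The remaining task is to show the product of cosines tends to $1$ uniformly for $t\in[0,T]$ as $N\to\infty$. Using $0 \leq -\log\cos(x) \leq x^2$ for $|x|\leq 1$, which applies for large $N$ since every $tJ_{nm}/\mathcal{N}\to 0$,
\[
0 \;\leq\; -\log\!\prod_{m\neq n}\cos\!\Bigl(\tfrac{2tJ_{nm}}{\mathcal{N}}\Bigr) \;\leq\; \frac{4t^2}{\mathcal{N}^2}\sum_{m\neq n}|n-m|^{-2\alpha}.
\]
With the Kac prescription $\mathcal{N}\sim N^{1-\alpha}$, and the standard asymptotics for $\sum_{m\neq n}|n-m|^{-2\alpha}$ of order $N^{1-2\alpha}$, $\log N$, and $O(1)$ for $\alpha<1/2$, $\alpha=1/2$, and $1/2<\alpha<1$ respectively, the right-hand side scales as $t^2/N$, $t^2\log N/N$, or $t^2/N^{2-2\alpha}$. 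In every case it vanishes as $N\to\infty$ uniformly for $t\in[0,T]$, which yields the required $N_T$ after combining with $|\tr{\rho(t)A}-\tr{\rho(0)A}| \leq \|a\|_1\bigl(1-\min_n\prod_{m\neq n}\cos(\cdot)\bigr)$.

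The main obstacle I anticipate is uniform control across the three regimes of $\alpha$: the small-$\alpha$ case is tightest because $\sum_m|n-m|^{-2\alpha}$ itself grows with $N$, and only the Kac prefactor $\mathcal{N}^{-2}$ beats it, by exactly one factor of $N$. A secondary subtlety is the mild $n$-dependence of $\sum_{m\neq n}|n-m|^{-2\alpha}$ near the boundaries, which should be absorbed into a uniform-in-$n$ constant so that it does not affect the scaling conclusion. It is essential that the Kac rescaling enters quadratically through $\mathcal{N}^{-2}$ rather than linearly; without the rescaling the same calculation would instead send the product to zero, producing equilibration on an $O(1)$ timescale and negating the theorem.
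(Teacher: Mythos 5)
Your argument is correct, and it is essentially the original proof from \cite{Kastner11}; the thesis only quotes this theorem and does not reproduce a proof, so there is no internal argument to compare against. The exact Heisenberg solution $X_n(t)=\cos(2B_nt)X_n-\sin(2B_nt)Y_n$, the collapse of the expectation value to $\tr{\rho(0)X_n}\,\cos(2ht)\prod_{m\neq n}\cos(2tJ_{nm}/\mathcal{N})$ for initial states diagonal in the local $X$ basis (which, as you implicitly use, holds even for classically correlated diagonal states, since the factor $\langle\cos(2B_nt)\rangle$ is the same for every product $X$-eigenstate), and the uniform-in-$t$ control of the product via $1-\prod\cos\leq-\log\prod\cos\leq (4t^2/\mathcal{N}^2)\sum_{m\neq n}|n-m|^{-2\alpha}$ together with the Kac scaling $\mathcal{N}\sim N^{1-\alpha}$ is exactly the mechanism in Kastner's paper. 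Your case analysis in $\alpha$ and the remark that the rescaling must enter quadratically are both sound.

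Two caveats are worth making explicit. First, as literally stated the theorem is false for $h\neq 0$ and $\tr{\rho(0)A}\neq 0$: already the noninteracting evolution gives $\tr{\rho(t)A}=\cos(2ht)\,\tr{\rho(0)A}$, which is not within $\delta$ of $\tr{\rho(0)A}$ throughout $[0,T]$. Your choice to set $h=0$ (equivalently, to pass to the rotating frame) is the right reading — the content is that the \emph{interaction-induced} dephasing is suppressed — but you should say explicitly that for $h\neq 0$ the correct conclusion is $\tr{\rho(t)A}\rightarrow\cos(2ht)\,\tr{\rho(0)A}$, which is how the original theorem is phrased; the version quoted in the thesis is a loose paraphrase. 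Second, your closing bound $\|a\|_1\bigl(1-\min_n\prod_{m\neq n}\cos(\cdot)\bigr)$ only yields the stated conclusion if $\|a\|_1$ is bounded uniformly in $N$ (single-site observables, or intensive averages with $a_n=1/N$). For the extensive choice $a_n=1$ the bound is $O(T^2)$ when $\alpha<1/2$ and grows like $N^{2\alpha-1}$ when $\alpha>1/2$, so the implicit normalization of $A$ needs to be stated as a hypothesis.
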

In words, for any $T$ we can choose a sufficiently big system such that the state stays close to its initial state over the interval $[0,T]$.  In \cite{Kastner11}, this is interpreted to mean that the equilibration time diverges in the thermodynamic limit (as $N$ tends to infinity).  And it is suggested that, because of this, equilibration would not occur in practice if a big enough system with this class of Hamiltonian could be prepared in the lab.\footnote{It is good to point out that, if the system stays close to its initial state for the duration of an experiment, then it {\it is} in equilibrium.}  We can think about this in a different way:\ it is just an excellent example of a system that has different equilibration states over different timescales.  The system is actually in equilibrium over the timescale defined in the theorem, and we can have equilibration to a different state over longer timescales.

This poses the question of whether there are cases where the physically observed process of equilibration is not to the infinite-time average state but to some short term equilibrium state.  The example with the cup of tea suggests that this is possible.  But if that is so, does that mean that attempts to find reasonable upper bounds on the equilibration time are doomed to failure if we only consider equilibration to the infinite-time average state?  Working with the infinite-time average state is the mathematically convenient option, but it may not always be the right one.

It would also be interesting to compare the systems of theorem \ref{th:Kastner} with the slowly equilibrating system in section \ref{sec:Slow Equilibration}.  In the latter, the equilibration time grows exponentially with the system size.  It is not clear, however, whether such behaviour is possible with a local Hamiltonian and local measurements.

This brings us to the question of the role played by locality in equilibration.  For example, is it true that interacting local Hamiltonians generally equilibrate quickly?  With this in mind, it would be interesting to average over Hamiltonians with local interaction terms.  Perhaps, instead of using local Haar averages (which may be feasible too), it could be an option to look at Hamiltonians with $X_nX_{n+1}$, $Y_nY_{n+1}$ and $Z_nZ_{n+1}$ interactions terms between sites with couplings that are independent and identically distributed random variables.  These Hamiltonians do not generally satisfy translational invariance, but, while locality is physically indispensable, translational invariance can be broken easily.  For example, if the interaction strength depended on distance in a lattice of trapped ions, then variations in the distance would lead to inhomogeneous interaction strengths.

\section{Thermalization}
The arguments of \cite{RGE11}, proving that in some cases the equilibrium state is a Gibbs state, go via the trace distance.  Their main theorem results in a bound on the trace distance between the equilibrium state and the Gibbs state.  A step in the argument bounds this from above by an expression involving $\|P -P_0\|_1$, where $P$ and $P_0$ denote the projectors onto the energy eigenspaces corresponding to the energy range $[E,E+\Delta]$ in the interacting and non interacting cases respectively.  A problem with this is that this distance $\|P -P_0\|_1$ is sensitive to global changes in energy projectors, whereas in practice we only have access to some restricted measurement set, which may just correspond to measurements on the subsystem in question.  So it may be possible to extend the results of \cite{RGE11} by using a distance measure other than the trace distance.  A good next step would then be to look at the distinguishability with a restricted measurement set, possibly with some assumptions on how well the measurements could distinguish energy projectors corresponding to the interacting and non interacting Hamiltonians.  Hopefully, with some sensible assumptions, the results of \cite{RGE11} could be extended to systems where the interaction strength is extensive.

An option that we have ignored so far is the eigenstate thermalization hypothesis \cite{RS12}, which, roughly speaking, postulates that the energy eigenstates of the interacting Hamiltonian look like a Gibbs state on the subsystem.  This essentially transfers all the mathematical difficulty of proving thermalization onto proving that the energy eigenstates have this form.  The jury is still out on this, but it does emphasize the point that more information is needed regarding the local properties of energy eigenstates of physical Hamiltonians.

\bookmarksetup{startatroot}
\addtocontents{toc}{\bigskip}

\markboth{}{}
\cleardoublepage
\phantomsection
\thispagestyle{plain}
\addcontentsline{toc}{chapter}{Epilogue}
\printindex

\begin{center}
    
    \vspace{9cm}
    \Large
    \textbf{Epilogue}
\end{center}
In the first part of this thesis, we explored the possibility of using discrete-time models from quantum computation as discretized models of relativistic quantum systems.  Much of what we saw gave us cause for optimism that this idea could lead to new insights into relativistic physics or even offer new possibilities for quantum simulation.  Even more, it prompted the question of whether nature could be a quantum cellular automaton.  There was some evidence in this direction, which provides compelling fodder for speculation.  After all, if we can simulate physical systems arbitrarily well by discretized systems, then it is natural to ask whether the laws of nature are indeed discrete.

In the second part of this thesis, we studied fundamental questions in statistical physics in an abstract manner.  The motivation was that, by making the fewest possible assumptions, we could make the most general statements about things like the equilibration time of a quantum system.  Perhaps one of the most interesting offshoots of this approach is the idea that whether or not equilibration occurs, as well as the form of the equilibrium state, depends crucially on the timescale.

Summed up in a sentence, the central theme in this thesis was the application of ideas from quantum information to fundamental physics.  To be fair, this idea is not a new one.  As we saw in the introductions to both parts one and two, quantum information has been applied with great success to high energy physics \cite{BHVVV14,TCL14,CMP10,CLEGRGS11,CMLS12} and the foundations of statistical physics \cite{GLTZ06,PSW06,Tasaki98,GMM04,CDEO08,Reimann08,LPSW09,Short10,LPSW10,Reimann10,CE10,RGE11}.  Nevertheless, if there is a message to take home here, it is that there is a great deal more to learn by applying quantum information to problems in physics in general.  It is exciting to wonder what new discoveries lie around the corner.

\cleardoublepage
\phantomsection
\thispagestyle{plain}
\addcontentsline{toc}{chapter}{Bibliography}
\printindex

\bibliographystyle{unsrt}

\markboth{}{}
\cleardoublepage
\phantomsection
\thispagestyle{plain}
\addcontentsline{toc}{chapter}{Appendix:\ Dynamics on the b.c.c.\ Sublattice}
\printindex
\begin{center}
    
    \vspace{9cm}
    \Large
    \textbf{Appendix:\ Dynamics on the b.c.c.\ Sublattice}
\end{center}
\label{sec:Dynamics on the b.c.c. Sublattice}
\hypertarget{QQAppendix}{As} we saw in section \ref{sec:Fermion Doubling in Two Dimensions} in chapter \ref{chap:Quantum Walks and Relativistic Particles}, we may want to restrict our attention to particles that live on a b.c.c.\ sublattice.  Here we will show that, even if the system just has fermionic modes on the b.c.c.\ sublattice there is still a local decomposition of the dynamics.  It is far more convenient to work in the single particle quantum walk picture, which extends to the fermionic picture by second quantization.  In the quantum walk picture, an example of a local unitary is the swap that, for {\it fixed} coordinates $n$ and $m$, implements
\begin{equation}
 \ket{n,m}\leftrightarrow \ket{n+1,m+1}
\end{equation}
and acts like the identity on all other states.  After second quantization, the corresponding unitary would swap the fermion modes at sites $(n,m)$ and $(n+1,m+1)$, while leaving all other modes unaffected.

So let us look at the dynamics restricted to a b.c.c.\ sublattice for the two dimensional massless Dirac quantum walk.  Before restricting to the sublattice, the evolution operator is the product of two conditional shifts:
\begin{equation}
 U =e^{-iP_x\sigma_xa}e^{-iP_z\sigma_za}.
\end{equation}
Notice that instead of taking the second conditional shift to be $e^{-iP_y\sigma_ya}$, as it was in chapter \ref{chap:Quantum Walks and Relativistic Particles}, we take take it to be $e^{-iP_z\sigma_za}$.  This just makes the formulas simpler as complex numbers do not appear.

Let us look at the b.c.c.\ sublattice that includes the point $(0,0)$.  A particle on that sublattice will remain on it under the evolution above.  Restricted to that sublattice, this defines a quantum walk with unitary $V$.  Though it is not immediately clear that $V$ can be implemented by products of local unitaries, we will now show that this is the case.  The notation is easier to read if we relabel
\begin{equation}
 \begin{split}
  \ket{z} & =\ket{\uparrow_z}\ \textrm{and}\ \ket{z^{\perp}} =\ket{\downarrow_z}\\
  \ket{x} & =\ket{\uparrow_x}\ \textrm{and}\  \ket{x^{\perp}}  =\ket{\downarrow_x}.
 \end{split}
\end{equation}

First, we apply the conditional shift
\begin{equation}
 \begin{split}
  R_1\ket{z}\ket{n_x,n_z} & =\ket{z}\ket{n_x+1,n_z+1}\\
  R_1\ket{z^{\perp}}\ket{n_x,n_z} & =\ket{z^{\perp}}\ket{n_x-1,n_z-1}
 \end{split}
\end{equation}
for all $(n_x,n_z)$ on the b.c.c.\ sublattice.
These can be implemented by products of local unitaries.  This works in the following way.  We apply the swaps
\begin{equation}
 \ket{z}\ket{n_x,n_z}\leftrightarrow \ket{z^{\perp}}\ket{n_x+1,n_z+1}
\end{equation}
for each $(n_x,n_z)$ on the b.c.c.\ sublattice, and then we apply the $X$ operator to the coin degree of freedom.  Next, for each $(n_x,n_z)$, we apply a local unitary that superposes the states $R_1\ket{z}\ket{n_x,n_z}$ and $R_1\ket{z^{\perp}}\ket{n_x,n_z+2}$.  Denoting the product of these local unitaries by $R_2$, we get
\begin{equation}
 \begin{split}
  R_2R_1\ket{z}\ket{n_x,n_z} & = \frac{1}{\sqrt{2}}\left(\ket{z}\ket{n_x+1,n_z+1} +\ket{z^{\perp}}\ket{{n_x-1},n_z+1} \right)\\
  R_2R_1\ket{z^{\perp}}\ket{n_x,n_z+2} & =\frac{1}{\sqrt{2}}\left(\ket{z}\ket{n_x+1,n_z+1} -\ket{z^{\perp}}\ket{{n_x-1},n_z+1}\right).
 \end{split}
\end{equation}
Finally, we apply the Hadamard unitary $H$, which takes $\ket{z}$ to $\ket{x}$ and $\ket{z^{\perp}}$ to $\ket{x^{\perp}}$.  This results in
\begin{equation}
 \begin{split}
  HR_2R_1\ket{z}\ket{n_x,n_z} & = \frac{1}{\sqrt{2}}\left(\ket{x}\ket{n_x+1,n_z+1} +\ket{x^{\perp}}\ket{{n_x-1},n_z+1} \right)\\
  HR_2R_1\ket{z^{\perp}}\ket{n_x,n_z+2} & =\frac{1}{\sqrt{2}}\left(\ket{x}\ket{n_x+1,n_z+1} -\ket{x^{\perp}}\ket{{n_x-1},n_z+1}\right),
 \end{split}
\end{equation}
which is exactly what we would get by applying the original unitary $V$.  Therefore, $V$ is locally implementable.  An analogous but more complicated process works for the Weyl quantum walk in three dimensional space.

\end{document}